\newtheorem{definition}{Definition}
\newtheorem{theorem}{Theorem}
\newtheorem{lemma}{Lemma}
\newtheorem{example}{Example}
\setlist[itemize,1]{label=$\bullet$}
\setlist[itemize,2]{label=$\bullet$}
\setlist[itemize,3]{label=$\bullet$}
\setlist[itemize,4]{label=$\bullet$}
\setlist[itemize,5]{label=$\bullet$}
\setlist[itemize,6]{label=$\bullet$}
\setlist[itemize,7]{label=$\bullet$}
\setlist[itemize,8]{label=$\bullet$}
\setlist[itemize,9]{label=$\bullet$}
\definecolor{linkcolor}{rgb}{0,0,0.5}
\begin{document}
\overfullrule=5pt
\hyphenation{Brow-serID}
\hyphenation{in-fra-struc-ture}
\hyphenation{brow-ser}
\hyphenation{doc-u-ment}
\hyphenation{Chro-mi-um}
\hyphenation{meth-od}

\title{An Expressive Model for the Web
  Infrastructure:\\Definition and Application to the
  BrowserID SSO System\footnote{An abridged version appears
    in S\&P 2014 \cite{FettKuestersSchmitz-SP-2014}.}}
\author{Daniel Fett, Ralf Küsters, Guido Schmitz\\
        University of Trier, Germany\\
        \{fett,kuesters,schmitzg\}@uni-trier.de}
\date{}

\maketitle

\begin{abstract}
  The web constitutes a complex infrastructure and as
demonstrated by numerous attacks, rigorous analysis of
standards and web applications is indispensable.

Inspired by successful prior work, in particular the work
by Akhawe et al.~as well as Bansal et al., in this work we
propose a formal model for the web infrastructure. While
unlike prior works, which aim at automatic analysis, our
model so far is not directly amenable to automation, it is
much more comprehensive and accurate with respect to the
standards and specifications. As such, it can serve as a
solid basis for the analysis of a broad range of standards
and applications.

As a case study and another important contribution of our
work, we use our model to carry out the first rigorous
analysis of the BrowserID system (a.k.a.~Mozilla Persona),
a recently developed complex real-world single sign-on
system that employs technologies such as AJAX,
cross-document messaging, and HTML5 web storage. Our
analysis revealed a number of very critical flaws that
could not have been captured in prior models. We propose
fixes for the flaws, formally state relevant security
properties, and prove that the fixed system in a setting
with a so-called secondary identity provider satisfies
these security properties in our model. The fixes for the
most critical flaws have already been adopted by Mozilla
and our findings have been rewarded by the Mozilla Security
Bug Bounty Program.

\end{abstract}

\vfill
\pagebreak{}
\pagestyle{headings}

\tableofcontents

\newpage{}

\section{Introduction}
The World Wide Web is a complex infrastructure, with a rich
set of security requirements and entities, such as DNS
servers, web servers, and web browsers, interacting using
diverse technologies. New technologies and standards (for
example, HTML5 and related technologies) introduce even
more complexity and security issues. As illustrated by
numerous attacks (see, e.g.,
\cite{WangCheckWangQuadeer-SP-2011-shop-free,SantsaiBeznosov-CCS-2012-OAuth,KarlofShankarTygarWagner-CCS-2007-dynamic-pharming,AkhawBarthLamMitchellSong-CSF-2010,BansalBhargavanMaffeis-CSF-2012}),
rigorous analysis of the web infrastructure and web
applications is indispensable.

Inspired by successful prior work, in particular the work
by Akhawe et al.\cite{AkhawBarthLamMitchellSong-CSF-2010}
and Bansal et
al.\cite{BansalBhargavanetal-POST-2013-WebSpi,BansalBhargavanMaffeis-CSF-2012},
one goal of our work is to develop an expressive formal
model that precisely captures core security aspects of the
web infrastructure, where we intend to stay as close to the
standards as possible, with a level of abstraction that is
suitable for precise formal analysis. As further discussed
in Section~\ref{sec:relatedwork}, while prior work aimed at
automatic analysis, here our main focus is to obtain a
comprehensive and more accurate model with respect to the
standards and specifications. As such, our model
constitutes a solid basis for the analysis of a broad range
of standards and applications.

The standards and specifications that define the web are
spread across many documents, including the HTTP standard
RFC2616 (with its successor HTTPbis) and the HTML5
specification \cite{html5}, with certain aspects covered in
related documents, such as RFC6265, RFC6797, RFC6454, the
WHATWG Fetch living standard \cite{whatwg/fetch}, the W3C
Web Storage specification \cite{w3c/webstorage}, and the
W3C Cross-Origin Resource Sharing specification
\cite{w3c/cors}, to name just a few. Specifications for the
DNS system and communication protocols, such as TCP, are
relevant as well.  The documents often build upon each
other, replace older versions or other documents, and
sometimes different versions coexist. Some details or
behaviors are not specified at all and are only documented
in the form of the source code of a web browser.

Coming up with an accurate formal model is, hence, very
valuable not only because it is required as a basis to
precisely state security properties and perform formal
analysis, but also because it summarizes and condenses
important aspects in several specifications that are
otherwise spread across different documents.

Another goal and important contribution of our work is to
apply our model to the BrowserID system (also known under
the marketing name \emph{Mozilla Persona}), a complex
real-world single sign-on system developed by
Mozilla. BrowserID makes heavy use of several web
technologies, including AJAX, cross-document messaging
(postMessages), and HTML5 web storage, and as such, is a
very suitable and practically relevant target to
demonstrate the importance of a comprehensive and accurate
model.

More precisely, the main contributions of our work can be
summarized as follows.

\myparagraph{Web model.} We propose a formal model of the
web infrastructure and web applications. Our model is based
on a general Dolev-Yao-style communication model, in which
processes have addresses (modeling IP addresses) and, as
usual in Dolev-Yao-style models for cryptographic protocols
(see, e.g., \cite{AbadiFournet-POPL-2001}), messages are
modeled as formal terms, with properties of cryptographic
primitives, such as encryption and digital signatures,
expressed as equational theories on terms.

As mentioned before, our model is intended to be expressive
and close to the standards and specifications, while
providing a suitable level of abstraction. Our model
includes web servers, web browsers, and DNS servers. We
model HTTP(S) requests and responses, including several
headers, such as host, cookie, location,
strict-transport-security (STS), and origin headers. Our
model of web browsers captures the concepts of windows,
documents, and iframes as well as new technologies, such as
web storage and cross-document messaging. It takes into
account the complex security restrictions that are applied
when accessing or navigating other windows.  JavaScript is
modeled in an abstract way by what we call scripting
processes. These processes can be sent around and, among
others, they can create iframes and initiate
XMLHTTPRequests
(XHRs). %
We also consider two ways of dynamically corrupting
browsers. Altogether, our model is the most comprehensive
model for the web infrastructure to date (see also
Section~\ref{sec:relatedwork}).

\myparagraph{Analysis of the BrowserID system.}
We use our model to perform the first rigorous security
analysis of the BrowserID system, which supports both
so-called primary and secondary identity providers. Our
security analysis reveals a number of very critical
and previously unknown flaws, most of which cannot be
captured by previous models (see
Section~\ref{sec:relatedwork}). The most severe attack
allows an adversary to login to any service that supports
authentication via BrowserID with the email address of
\emph{any} Gmail and Yahoo user (without knowing the
Gmail/Yahoo credentials of these users), hence, breaking
the system completely. Another critical attack allows an
attacker to force a user to login with the attacker's
identity.  We confirmed that the attacks work on the actual
BrowserID implementation.  We propose fixes and formulate
relevant security properties. For the BrowserID system with
a secondary identity provider, we prove that
the fixed system satisfies these properties in our
model. By this, we provide the first rigorous formal
analysis of the BrowserID system.  Our attacks have been
acknowledged by Mozilla, with the fixes for the most severe
problems having been adopted by Mozilla already and other
fixes being under discussion. Our findings have been
rewarded by the Mozilla Security Bug Bounty Program.

\myparagraph{Structure of this Paper.} 
In Section~\ref{sec:commmunicationmodel}, we present the
basic communication model. Our web model is introduced in
Section~\ref{sec:webmodel}. For our case study, we first,
in Section~\ref{sec:browserid}, provide a description of
the BrowserID system. We then, in
Section~\ref{sec:analysisbrowserid}, present the analysis
of BrowserID using our model. Related work is discussed in
Section~\ref{sec:relatedwork}. We conclude in
Section~\ref{sec:conclusion}. Full details are provided in
the appendix, including some notational conventions.

\section{Communication
  Model}\label{sec:commmunicationmodel}

We now present a generic Dolev-Yao-style communication
model on which our web model (see
Section~\ref{sec:webmodel}) is based. While the model is
stated in a concise mathematical fashion, instantiations,
for example, using the applied pi-calculus
\cite{AbadiFournet-POPL-2001} or multi-set rewriting
\cite{DurginLincolnMitchellScedrov-JCS-2004}, are
conceivable. 

The main entities in the communication model are what we
call \emph{atomic processes}, which in
Section~\ref{sec:webmodel} are used to model web browsers,
web servers, DNS servers as well as web and network
attackers. Each \ap has a list of addresses (representing
IP addresses) it listens to.  A set of \aps forms what we
call a \emph{system}. The different atomic processes in
such a system can communicate via events, which consist of
a message as well as a receiver and a sender address.  In
every step of a run one event is chosen
non-deterministically from the current ``pool'' of events
and is delivered to an \ap that listens to the receiver
address of that event; if different atomic processes can
listen to the same address, the atomic process to which the
event is delivered is chosen non-deterministically among
the possible processes. The (chosen) atomic process can
then process the event and output new events, which are
added to the pool of events, and so on. (In our web model,
presented in Section~\ref{sec:webmodel}, only network
attackers may listen to addresses of other atomic
processes.)

\subsection{Terms, Messages and Events} 
To define the communication model just sketched, we first
define, as usual in Dolev-Yao models, messages, such as
HTTP messages, as formal terms over a signature, and based
on this notion of messages, we introduce events.

The signature $\Sigma$ for the terms and messages
considered in this work is the union of the following
pairwise disjoint sets of function symbols: 
\begin{itemize}
\item constants $C = \addresses\,\cup\, \mathbb{S}\cup
  \{\True,\bot,\notdef\}$ where the three sets are pairwise
  disjoint, $\mathbb{S}$ is interpreted to be the set of
  ASCII strings (including the empty string $\varepsilon$),
  and $\addresses$ is interpreted to be a set of (IP)
  addresses,
\item function symbols for public keys,
  asymmetric/symmetric encryption/decryption, and
  signatures: $\mathsf{pub}(\cdot)$, $\enc{\cdot}{\cdot}$,
  $\dec{\cdot}{\cdot}$, $\encs{\cdot}{\cdot}$,
  $\decs{\cdot}{\cdot}$, $\sig{\cdot}{\cdot}$,
  $\checksig{\cdot}{\cdot}$, $\unsig{\cdot}$,
\item $n$-ary
  sequences $\an{}, \an{\cdot}, \an{\cdot,\cdot},
  \an{\cdot,\cdot,\cdot},$ etc., and
\item projection symbols
  $\pi_i(\cdot)$ for all $i \in \mathbb{N}$.
\end{itemize}

Let $X=\{x_0,x_1,\dots\}$ be a set of variables and
$\nonces$ be an infinite set of constants (\emph{nonces})
such that $\Sigma$, $X$, and $\nonces$ are pairwise
disjoint. For $N\subseteq\nonces$, we define the set
$\gterms_N(X)$ of \emph{terms} over $\Sigma\cup N\cup X$
inductively as usual: (1) If $t\in N\cup X$, then $t$ is a
term. (2) If $f\in \Sigma$ is an $n$-ary function symbol in
$\Sigma$ for some $n\ge 0$ and $t_1,\ldots,t_n$ are terms,
then $f(t_1,\ldots,t_n)$ is a term. By
$\gterms_N=\gterms_N(\emptyset)$, we denote the set of all
terms over $\Sigma\cup N$ without variables, called
\emph{ground terms}. The set $\messages$ of messages (over
$\nonces$) is defined to be the set of ground terms
$\gterms_{\nonces}$. For example, $k\in \nonces$ and
$\pub(k)$ are messages, where $k$ typically models a
private key and $\pub(k)$ the corresponding public key. For
constants $a$, $b$, $c$ and the nonce $k\in \nonces$, the
message $\enc{\an{a,b,c}}{\pub(k)}$ is interpreted to be
the message $\an{a,b,c}$ (the sequence of constants $a$,
$b$, $c$) encrypted by the public key $\pub(k)$.

For strings, i.e., elements in $\mathbb{S}$, we use a
specific font. For example, $\cHttpReq$ and $\cHttpResp$
are strings. We denote by $\dns\subseteq \mathbb{S}$ the
set of domains, e.g., $\str{www.example.com}\in \dns$.  We
denote by $\methods\subseteq \mathbb{S}$ the set of methods
used in HTTP requests, e.g., $\mGet$, $\mPost\in \methods$.

The equational theory associated with the signature
$\Sigma$ is given in Figure~\ref{fig:equational-theory}.

\begin{figure}
\begin{align}
\dec{\enc x{\pub(y)}}{y} &= x\\
\decs{\encs x{y}}{y} &= x\\
\unsig{\sig{x}{y}} &= x\\
\checksig{\sig{x}{y}}{\pub(y)} &= \True\\
\pi_i(\an{x_1,\dots,x_n}) &= x_i \text{\;\;if\ } 1 \leq i \leq n \\
\proj{j}{\an{x_1,\dots,x_n}} &= \notdef \text{\;\;if\ } j
\not\in \{1,\dots,n\}\\
\proj{j}{t} &= \notdef \text{\;\;if $t$ is not a sequence}
\end{align}
\caption{Equational theory for $\Sigma$.}\label{fig:equational-theory}
\end{figure}

 By $\equiv$ we denote the congruence relation on
$\terms(X)$ induced by this theory. For example, we have
that
$\pi_1(\dec{\enc{\an{\str{a},\str{b}}}{\pub(k)}}{k})\equiv
\str{a}$.

An \emph{event (over $\addresses$ and $\messages$)} is of
the form $(a{:}f{:}m)$, for $a, f\in \addresses$ and $m \in
\messages$, where $a$ is interpreted to be the receiver
address and $f$ is the sender address.  We denote by
$\events$ the set of all events.

\subsection{Atomic Processes, Systems and Runs} We now
define atomic processes, systems, and runs of systems.  

An atomic process takes its current state and an
event as input, and then (non-deterministi\-cally) outputs a new state
and a set of events.
\begin{definition}\label{def:atomic-process-and-process}
  A \emph{(generic) \ap} is a tuple $p = (I^p, Z^p, R^p,
  s^p_0)$ where $I^p \subseteq \addresses$, $Z^p$ is a set
  of states, $R^p\subseteq (\events \times Z^p) \times
  (2^\events \times Z^p)$, and $s^p_0\in Z^p$ is the
  initial state of $p$.  We write $(e,z)R(E,z')$ instead of
  $((e,z),(E,z'))\in R$.

  A \emph{system} $\process$ is a (possibly infinite) set
  of \aps.
\end{definition}
In order to define a run of a system, we first define
configurations and processing steps.

A \emph{configuration of a system $\process$} is a
tuple $(S, E)$ where $S$ maps every atomic process $p\in
\process$ to its current state $S(p)\in Z^p$ and $E$ is a
(possibly infinite) multi-set of events waiting to be
delivered.

A \emph{processing step of the system $\process$} is of the
form $(S,E) \xrightarrow[p \rightarrow E_{\text{out}}]{e
  \rightarrow p} (S', E')$ such that
\begin{itemize}
\item there exist $e = (a{:}f{:}m) \in E$, $E_\text{out}
  \subseteq E'$, and $p \in \process$ with $(e,
  S(p))R^p(E_\text{out}, S'(p))$ and $a \in I^p$,
\item $S'(p') = S(p')$ for all $p' \neq p$, and
\item $E' = (E\setminus \{e\}) \cup E_\text{out}$
  (multi-set operations).
\end{itemize}
 We may omit the superscript
  and/or subscript of the arrow.
\begin{definition}
  Let $\process$ be a system and $E_0$ be a multi-set of
  events. A \emph{run $\rho$ of a system $\process$
    initiated by $E_0$} is a finite sequence of
  configurations $(S_0, E_0),\dots,(S_n, E_n)$ or an
  infinite sequence of configurations $(S_0, E_0),\dots$
  such that $S_0(p) = s_0^p$ for all $p \in \process$ and
  $(S_i, E_i) \xrightarrow{} (S_{i+1}, E_{i+1})$ for all $0
  \leq i < n$ (finite run) or for all $i \geq 0$ (infinite
  run).
\end{definition}
\subsection{Atomic Dolev-Yao Processes}  We next define
atomic Dolev-Yao processes, for which we require that the
messages and states that they output can be computed (more
formally, derived) from the current input event and
state. For this purpose, we first define what it means to
derive a message from given messages.

Let $N\subseteq \nonces$, $\tau \in
\gterms_N(\{x_1,\ldots,x_n\})$, and $t_1,\ldots,t_n\in
\gterms_N$. Then, by $\tau[t_1/x_1,\ldots,t_n/x_n]$ we
denote the (ground) term obtained from $\tau$ by replacing
all occurrences of $x_i$ in $\tau$ by $t_i$, for all $i\in
\{1,\ldots,n\}$.  Let $M\subseteq \messages$ be a set of
messages. We say that \emph{a message $m$ can be derived
  from $M$ with nonces $N$} if there exist $n\ge 0$,
$m_1,\ldots,m_n\in M$, and $\tau\in
\gterms_N(\{x_1,\ldots,x_n\})$ such that $m\equiv
\tau[m_1/x_1,\ldots,m_n/x_n]$. We denote by $d_N(M)$ the
set of all messages that can be derived from $M$ with
nonces $N$. For example, $a\in
d_{\{k\}}(\{\enc{\an{a,b,c}}{\pub(k)}\})$.

\begin{definition} \label{def:adyp} An \emph{atomic
    Dolev-Yao process (or simply, a DY process)} is a tuple $p = (I^p, Z^p, R^p,$
  $s^p_0, N^p)$ such that $(I^p, Z^p, R^p, s^p_0)$ is an
  atomic process and (1) $N^p\subseteq\nonces$ is an
  (initial) set of nonces, (2) $Z^p \subseteq
  \gterms_{\nonces}$ (and hence, $s^p_0\in
  \gterms_{\nonces}$), and (3) for all $a, a', f, f'\in
  \addresses$, $m, m', s, s'\in \gterms_{\nonces}$, set of
  events $E$ with $((a{:}f{:}m), s)R(E, s')$ and $(a'{:}f'{:}m')
  \in E$ it holds true that $m',s' \in d_N(\{m,s\})$. (Note that
  $a',f'\in d_N(\{m,s\})$.)
\end{definition}
In the rest of this paper, we will only consider 
DY processes and assume different DY processes to
have disjoint initial sets of nonces. 

We define a specific  DY process, called an attacker
process, which records all messages it receives and outputs
all messages it can possibly derive from its recorded
messages. Hence, an attacker process is the maximally
powerful DY process. It can carry out all attacks any DY
process could possibly perform. The attacker process is
parametrized by the set of sender addresses it may use.
\begin{definition}\label{def:atomicattacker}
  An \emph{(atomic) attacker process for a set of sender
    addresses $A\subseteq \addresses$} is an atomic DY
  process $p = (I, Z, R, s_0, N)$ such that for all $a,
  f\in \addresses$, $m\in \gterms_{\nonces}$, and $s\in
  Z$ we have that $((a{:}f{:}m), s)R(E,s')$ iff $s'=\an{\an{a,
      f, m}, s}$ and $E=\{(a'{:}f'{:}m')\mid a'\in \addresses$,
  $f'\in A$, $m'\in d_N(\{m,s\})\}$.
\end{definition}

\section{Our Web Model}\label{sec:webmodel}

We now present our web model. We formalize the web
infrastructure and web applications by what we call a web
system. A web system, among others, contains a (possibly
infinite) set of  DY processes, which model web
browsers, web servers, DNS servers as well as web and
network attackers.

As already mentioned in the introduction, the model has
been carefully designed, closely following published
(de-facto) standards, for instance, the HTTP/1.1 standard,
associated (proposed) standards (main\-ly RFCs), and the
HTML5 W3C candidate recommendation. We also checked these
standards against the actual implementations (primarily,
Chromium and Firefox).

\subsection{Web System}\label{sec:websystem}

Before we can define a web system, we define scripting
processes, which model client-side scripting technologies,
such as JavaScript, in our browser model. Scripting
processes are defined similarly to DY processes.

\begin{definition} \label{def:sp} A \emph{scripting
    process} (or simply, a \emph{script}) is a relation
  $R\subseteq (\terms \times 2^{\nonces})\times \terms$
  such that for all $s, s' \in \terms$ and $N\subseteq
  \nonces$ with $(s,N)\,R\,s'$ it follows that $s'\in
  d_N(s)$.
\end{definition}
A script is called by the browser which provides it with a
(fresh, infinite) set $N$ of nonces and state information
$s$. The script then outputs a term $s'$, which represents
the new internal state and some command which
is interpreted by the browser (see
Section~\ref{sec:web-browsers} for details).

Similarly to an attacker process, we define the
\emph{attacker script} $\Rasp$. This script outputs
everything that is derivable from the input, i.e.,
$\Rasp=\{((s,N),s')\mid s\in \terms, N\subseteq \nonces,
s'\in d_N(s)\}$.

We can now define web systems, where we distinguish between
web and network attackers. Unlike web attackers, network
attackers can listen to addresses of other parties and can
spoof the sender address, i.e., they can control the
network. Typically, a web system has either one network
attacker or one or more web attackers, as network attackers
subsume all web attackers. As we will see later, web and
network attacks may corrupt other entities, such as
browsers.

\begin{definition}\label{def:websystem}
  A \emph{web system $\completewebsystem=(\websystem,
    \scriptset,\mathsf{script}, E_0)$} is a tuple with its
  components defined as follows:

  The first component, $\websystem$, denotes a system
  (a set of DY processes) and is partitioned into the
  sets $\mathsf{Hon}$, $\mathsf{Web}$, and $\mathsf{Net}$
  of honest, web attacker, and network attacker processes,
  respectively.  We require that all DY processes in
  $\websystem$ have disjoint sets of nonces, i.e., $N^p\cap
  N^{p'}=\emptyset$ for every distinct $p,p'\in
  \websystem$.

  Every $p\in \mathsf{Web} \cup \mathsf{Net}$ is an
  attacker process for some set of sender addresses
  $A\subseteq \addresses$. For a web attacker $p\in
  \mathsf{Web}$, we require its set of addresses $I^p$ to
  be disjoint from the set of addresses of all other web
  attackers and honest processes, i.e., $I^p\cap I^{p'} =
  \emptyset$ for all $p' \in \mathsf{Hon} \cup
  \mathsf{Web}$. Hence, a web attacker cannot listen to
  traffic intended for other processes. Also, we require
  that $A=I^p$, i.e., a web attacker can only use sender
  addresses it owns. Conversely, a network attacker may
  listen to all addresses (i.e., no restrictions on $I^p$)
  and may spoof all addresses (i.e., the set $A$ may be
  $\addresses$).

  Every $p \in \mathsf{Hon}$ is a DY process which
  models either a \emph{web server}, a \emph{web browser},
  or a \emph{DNS server}, as further described in the
  following subsections. Just as for web attackers, we
  require that $p$ does not spoof sender addresses and that
  its set of addresses $I^p$ is disjoint from those of
  other honest processes and the web attackers. 

  The second component, $\scriptset$, is a finite set of
  scripts such that $\Rasp\in \scriptset$. The third
  component, $\mathsf{script}$, is an injective mapping
  from $\scriptset$ to $\mathbb{S}$, i.e., by
  $\mathsf{script}$ every $s\in \scriptset$ is assigned its
  string representation $\mathsf{script}(s)$. 

  Finally, $E_0$ is a multi-set of events, containing an
  infinite number of events of the form $(a{:}a{:}\trigger)$
  for every $a \in \bigcup_{p\in \websystem} I^p$.

  A \emph{run} of $\completewebsystem$ is a run of
  $\websystem$ initiated by $E_0$.
\end{definition}
In the definition above, the multi-set $E_0$ of initial
events contains for every process and address an infinite
number of $\trigger$ messages in order to make sure that
every process in $\websystem$ can be triggered arbitrarily
often. In particular, by this it is guaranteed that an
adversary (a dishonest server/browser) can be triggered
arbitrarily often. Also, we use trigger events to model
that an honest browser takes an action triggered by a user,
who might, for example, enter a URL or click on some link.

The set $\scriptset\setminus\{\Rasp\}$ specified in a web
system as defined above is meant to describe the set of
honest scripts used in the considered web
application. These scripts are those sent out by an honest
web server to a browser as part of a web application. In
real web applications, possibly several dynamically loaded
scripts may run in one document. However, if these scripts
originate from honest sites, their composition can be
considered to be one honest script (which is loaded right
from the start into the document). In this sense, every
script in $\scriptset\setminus\{\Rasp\}$ models an honest
script or a combination of such scripts in a web
application. (In our case study, the combination is
illustrated by the script running in RP-Doc.)

We model the situation where some malicious script was
loaded into a document by the ``worst-case'' scenario,
i.e., we allow such a script to be the script $\Rasp$. This
script subsumes \emph{everything} any malicious (and
honest) script can do.

We emphasize that script representations being modeled as
strings are public information, i.e., any server or
attacker is free to send out the string representation for
any script.

Since we do not model client-side or server-side language
details, and hence details such as correct escaping of user
input, we cannot analyze whether a server application (say,
written in PHP) is vulnerable to
Cross-Site-Scripting. However, we can model the effects of
Cross-Site-Scripting by letting the (model of the) server
output the script $\Rasp$, say, if it receives certain
malicious input.

In the following subsections, (honest) DNS servers and web
browsers are modeled as  DY processes, including the
modeling of HTTP messages. We also discuss the modeling of
web servers.

\subsection{DNS Servers}\label{sec:DNSservers}

For the sake of brevity, in this paper we consider a flat
DNS model in which DNS queries are answered directly by one
DNS server and always with the same address for a domain. A
full (hierarchical) DNS system with recursive DNS
resolution, DNS caches, etc.~could also be modeled to cover
certain attacks on the DNS system itself.

A \emph{DNS server} $d$ (in a flat DNS model) is modeled in
a straightforward way as a DY process $(I^d,
\{s^d_0\}, R^d, s^d_0, N^d)$. It has a finite set of 
addresses $I^d$ and its initial (and only) state $s^d_0$
encodes a mapping from domain names to addresses of the
form
$$s^d_0=\langle\an{\str{domain}_1,a_1},\an{\str{domain}_2, a_2}, \ldots\rangle \ .$$ DNS
queries are answered according to this table. DNS queries
have the following form, illustrated by an example:
\begin{align*}
  \an{\cDNSresolve,\, \str{example.com},\, n}
\end{align*}
where $\str{example.com}$ is the domain name to be resolved
and $n$ is a nonce representing the random query ID and UDP
source port number selected by the sender of the query. The
corresponding response is of the form 
\begin{align*}
\an{\cDNSresolved,  a, n}
\end{align*}
where $a\in \addresses$ is the IP address of the queried
domain name and $n$ is the nonce from the query.

The precise message format of DNS messages is provided in
Appendix~\ref{sec:dns-messages}, the full formal
specification for DNS servers can be found in
Appendix~\ref{app:dns-servers}. 

\subsection{HTTP Messages}\label{sec:http-messages}

In order to model web browsers and servers, we first need
to model HTTP requests and responses. The formal
specification of HTTP messages can be found in
Appendix~\ref{sec:http-messages-full}. Here we provide a
more informal presentation.

HTTP requests and responses are encoded as messages (ground
terms).  An HTTP request (modeled as a message) contains a
nonce, a method (e.g., $\mGet$ or $\mPost$), a domain name,
a path, URL parameters, request headers (such as
$\str{Cookie}$ or $\str{Origin}$), and a message body. For
example, an HTTP $\mGet$ request for the URL
\url{http://example.com/show?page=1} is modeled as the
term 
\begin{align*}
  \mi{r} := \hreq{ nonce=\linebreak[2]n_1, method=\mGet,
    host=example.com, path=/show,
    parameters=\an{\an{\str{page},1}}, headers=\an{},
    body=\an{}}
\end{align*}
where body and headers are empty.  A web server that
responds to this request is supposed to include the nonce
$n_1$ contained in $r$ in the response so that the browser
can match the request to the corresponding response.  More
specifically, an HTTP response (modeled as a message)
contains a nonce (matching the request), a status code
(e.g., $\str{200}$ for a normal successful response),
response headers (such as $\str{Set{\mhyphen}Cookie}$ and
$\str{Location}$), and a body. For example, a response to
$r$ could be 
\begin{align*}
  \mi{s} := \hresp{ nonce=n_1, status=200,
    headers=\an{\an{\str{Set{\mhyphen}Cookie},
        \an{\str{SID},\an{n_2,\bot,\True,\bot}}}},xbody=\an{\str{script1},n_3}}
\end{align*}
where $s$ contains (1) in the headers section, a cookie with
the name $\str{SID}$, the value $n_2$, and the attributes
$\str{secure}$ and $\str{httpOnly}$ not set but the
attribute $\str{session}$ set (see
Section~\ref{sec:web-browsers} for details on cookies) and
(2) in the body section, the string representation
$\str{script1}$ of the scripting process
$\mathsf{script}^{-1}(\str{script1})$ (which should be an
element of $\scriptset$) and its initial state $n_3$.

For the HTTP request and response in the above examples,
the corresponding HTTPS request would be of the form
$\ehreqWithVariable{r}{k'}{\pub(k_\text{example.com})}$ and
the response of the form $\ehrespWithVariable{s}{k'}$ where
$k'$ is a fresh symmetric key (a nonce) which is typically
generated by the sender of the request. The responder is
supposed to use this key to encrypt the response.

\subsection{Web Browsers}\label{sec:web-browsers}

We think of an honest browser to be used by one honest
user. However, we also allow browsers to be taken over by
attackers. The honest user is modeled as part of the web
browser model. Actions a user takes are modeled as
non-deterministic actions of the web browser. For example,
the web browser itself can non-deterministically follow the
links provided by a web page. Secrets, such as passwords,
typically provided by the user are stored in the initial
state of a browser and are given to a web page when needed,
similar to the AutoFill function in browsers (see below).

A web browser $p$ is modeled as a DY process $(I^p, Z^p,
R^p, s^p_0, N^p)$ where $I^p\subseteq \addresses$ is a
finite set and $N^p\subseteq \nonces$ is an infinite
set. The set of states $Z^p$, the initial state $s^p_0$,
and the relation $R^p$ are defined below
(Sections~\ref{sec:browserstate}
and~\ref{sec:browserrelation}), with a full formal
specification provided in
Appendix~\ref{sec:deta-descr-brows}.

\subsubsection{Browser State: $Z^p$ and
  $s^p_0$}\label{sec:browserstate}
The set $Z^p$ of states of a browser 
consists of terms of the form
\begin{align*}
\an{
  \mi{windows}, 
  \mi{secrets}, 
  \mi{cookies}, 
  \mi{localStorage}, 
  \mi{sessionStorage},
  \mi{keyMapping}, \\
  \mi{sts},  
  \mi{DNSaddress},
  \mi{nonces}, 
  \mi{pendingDNS},\allowbreak
  \mi{pendingRequests}, \allowbreak
  \mi{isCorrupted}
}
\end{align*}
\myparagraph{Windows and documents.} The most important
part of the state are windows and documents, both stored in
the subterm $\mathit{windows}$.  A browser may have a
number of windows open at any time (resembling the tabs in
a real browser). Each window contains a list of documents
of which one is ``active''. Being active means that this
document is currently presented to the user and is
available for interaction, similarly to the definition of
active documents in the HTML5 specification~\cite{html5}. The document list of a window represents the
history of visited web pages in that window. A window may
be navigated forward and backward (modeling forward and
back buttons). This deactivates one document and activates
its successor or predecessor.

A document is specified by a term which essentially
contains (the string representing) a script, the current
state of the script, the input that the script obtained so
far (from \xhrs and \pms), the origin (domain name plus
HTTP or HTTPS, see Appendix~\ref{sec:origins} for details)
of the document, and a list of windows (called
\emph{subwindows}), which correspond to iframes embedded in
the document, resulting in a tree of windows and
documents.  The (single) script is
meant to model the static HTML code, including, for
example, links and forms, and possibly multiple JavaScript
code parts. When called by the browser, a script
essentially outputs a command which is then interpreted by
the browser, such as following a link, creating an iframe,
or issuing an \xhr. In particular, a script can represent a
plain HTML document consisting merely of links, say: when
called by the browser such a script would
non-deterministically choose such a link and output it to
the browser, which would then load the corresponding web
page (see below for details).

We use the terms \emph{top-level window} (a window which is
not a subwindow itself), \emph{parent window} (the window
of which the current window is a direct subwindow) and
\emph{ancestor window} (some window of which the current
window is a not necessarily direct subwindow) to describe
the relationships in a tree of windows and documents.

A term describing a window or a document also contains a
unique nonce, which we refer to by \emph{reference}. This
reference is used to match HTTP responses to the
corresponding windows and documents from which they
originate (see below). 

Top-level windows may have been opened by
another window. In this case, the term of the opened window
contains a reference to the window by which it was
opened (the \emph{opener}). Following the HTML5 standard, we call such a window
an \emph{auxiliary window}. Note that auxiliary windows are
always top-level windows.

We call a window \emph{active} if it is a top-level window
or if it is a subwindow of an active document in an active
window. Note that the active documents in all active
windows are exactly those documents a user can currently
see/interact with in the browser.

\begin{example}\label{ex:window}
  The following is an example of a window term with reference
  $n_1$, two documents, and an opener ($n_4$):
  \vspace{-1.3ex}
  \begin{align*}
    \an{n_1, &\an{\an{n_2,\! \an{\str{example.com}, \http},
          \str{script1}, \an{}, \an{}, \an{}, \bot},
        \\ &\,\,\an{n_3,\! \an{\str{example.com}, \https},
          \str{script2}, \an{}, \an{}, \an{}, \True}}, n_4}
  \end{align*}\vspace{-3.6ex}\\
  The first document has reference $n_2$. It was loaded
  from the origin $\an{\str{example.com}, \http}$, which
  translates into \url{http://example.com}. Its scripting
  process has the string representation $\str{script1}$,
  the last state and the input history of this process are
  empty. The document does not have subwindows and is
  inactive ($\bot$). The second document has the reference
  $n_3$, its origin corresponds to
  \url{https://example.com}, the scripting process is
  represented by $\str{script2}$, and the document is
  active ($\top$).  All other components are empty.
\end{example}

\myparagraph{Secrets.} This subterm of the state term of a
browser holds the secrets of the user of the web
browser. Secrets (such as passwords) are modeled as nonces
and they are indexed by origins. Secrets are only released
to documents (scripts) with the corresponding origin,
similarly to the AutoFill mechanism in browsers.

\myparagraph{Cookies, localStorage, and sessionStorage.}
These subterms contain the cookies (indexed by domains),
\ls data (indexed by origins), and sessionStorage data
(indexed by origins and top-level window references) stored
in the browser. Cookies are stored together with their
$\str{secure}$, $\str{httpOnly}$, and $\str{session}$
attributes: If $\str{secure}$ is set, the cookie is only
delivered to HTTPS origins. If $\str{httpOnly}$ is set, the
cookie cannot be accessed by JavaScript (the
script). According to the proposed standard RFC6265 (which
we follow in our model) and the majority of the existing
implementations, cookies that neither have the (real)
``max-age'' nor the ``expires'' attribute should be deleted
by the browser when the session ends (usually when the
browser is closed). In our model, such cookies carry the
$\str{session}$ attribute.

\myparagraph{KeyMapping.} This term is our equivalent to a
certificate authority (CA) certificate store in the
browser. Since, for simplicity, we currently do not
formalize CAs in the model, this term simply encodes a
mapping assigning domains $d\in \dns$ to their respective
public keys $\pub(k_d)$.

\myparagraph{STS.} Domains that are listed in this term are
contacted by the web browser only over HTTPS\@. Connection
attempts over HTTP are transparently rewritten to HTTPS
requests. Web sites can issue the
$\str{Strict\mhyphen{}Transport\mhyphen{}Security}$ header
to clients in order to add their domain to this list, see
below.

\myparagraph{DNSaddress.} This term contains the address
of the DY process that is to be contacted for DNS requests;
typically a DNS server.

\myparagraph{Nonces, pendingDNS, and pendingRequests.}
These terms are used for bookkeeping purposes, recording
the nonces that have been used by the browser so far, the
HTTP requests that await successful DNS resolution, and
HTTP requests that await a response, respectively.

\myparagraph{IsCorrupted.} This term indicates
whether the browser is corrupted ($\not=\bot$) or not
($=\bot$). A corrupted browser behaves like a web attacker
(see Section~\ref{sec:browserrelation}).

\myparagraph{Initial state $s_0^p$ of a web browser.}  In
the initial state, $\mi{keyMapping}$, $\mi{DNSAddress}$, and
$\mi{secrets}$ are defined as needed, $\mi{isCorrupted}$ is
set to $\bot$, and all other subterms are $\an{}$.

\subsubsection{Web Browser Relation
  $R^p$}\label{sec:browserrelation}
Before we define the relation $R^p$, we first sketch the
processing of HTTP(S) requests and responses by a web
browser, and also provide some intuition about the
corruption of browsers.

\myparagraph{HTTP(S) Requests and Responses.}  An HTTP
request, contains, as mentioned before, a nonce created by
the browser. In the example in
Section~\ref{sec:http-messages}, this nonce is $n_1$.  A
server is supposed to include this nonce into its HTTP
response. By this, the browser can match the response to
the request (a real web browser would use the TCP sequence
number for this purpose). If a browser wants to send an
HTTP request, it first resolves the domain name to an IP
address. (For simplicity, we do not model DNS response
caching.) It therefore first records the HTTP request in
$\mi{pendingDNS}$ along with the reference of the window
(in the case of HTTP(S) requests) or the reference of the
document\footnote{As we will see later, in the case of
  \xhrs this reference is actually a sequence of two
  elements, a document reference and a nonce that was
  chosen by the script that issued the \xhr. For now, we
  will refer to this sequence simply as the document
  reference.} (in the case of \xhrs) from which the request
originated and then sends a DNS request. Upon receipt of
the corresponding DNS response it sends the HTTP request
and stores it (again along with the reference as well as
the server address) in $\mi{pendingRequests}$. Before
sending the HTTP request, the cookies stored in the browser
for the domain of the request are added as cookie headers
to the request. Cookies with attribute $\str{secure}$ are
only added for HTTPS requests. If an HTTP response arrives,
the browser uses the nonce in this response to match it
with the recorded corresponding HTTP request (if any) and
checks whether the address of the sender is as
expected. The reference recorded along with the request
then determines to which window/document the response
belongs. The further processing of a response is described
below.

We note that before HTTPS requests are sent out, a fresh
symmetric key (a nonce) is generated and added to the
request by the browser. The resulting message is then
encrypted using the public key corresponding to the domain
in the request (according to $\mathit{keyMapping}$). The
symmetric key is recorded along with the request in
$\mi{pendingRequests}$. The response is, as mentioned,
supposed to be encrypted with this symmetric key (see also
Appendix~\ref{sec:http-messages-full} for details).

\begin{figure}[t!]
  \small{ \underline{\textsc{Processing Input Message
        \hlExp{$m$}}}
    \begin{itemize}[noitemsep,nolistsep,label=,leftmargin=0pt]
    \item \textbf{\hlExp{$m = \fullcorrupt$}:}
      $\mi{isCorrupted} := \fullcorrupt$
    \item \textbf{\hlExp{$m = \closecorrupt$}:}
      $\mi{isCorrupted} := \closecorrupt$
    \item \textbf{\hlExp{$m = \trigger$}:}
      non-deterministically choose $\mi{action}$ from
      $\{1,2\}$
      \begin{itemize}[noitemsep,nolistsep,label=,leftmargin=1em]
      \item \textbf{\hlExp{$\mi{action} = 1$}:} Call script
        of some active document. Outputs new state and command $cmd$.
        \begin{itemize}[noitemsep,nolistsep,label=,leftmargin=1em]
        \item \textbf{\hlExp{$\mi{cmd} = \tHref$}:}
          $\rightarrow$ \emph{Initiate request}
        \item \textbf{\hlExp{$\mi{cmd} = \tIframe$}:}
          Create subwindow, $\rightarrow$ \emph{Initiate
            request}
        \item \textbf{\hlExp{$\mi{cmd} = \tForm$}:}
          $\rightarrow$ \emph{Initiate request}
        \item \textbf{\hlExp{$\mi{cmd} = \tSetScript$}:}
          Change script in given document.
        \item \textbf{\hlExp{$\mi{cmd} =
              \tSetScriptState$}:} Change state of script in given document.
        \item \textbf{\hlExp{$\mi{cmd} =
              \tXMLHTTPRequest$}:} $\rightarrow$
          \emph{Initiate request}
        \item \textbf{\hlExp{$\mi{cmd} = \tBack$ or
              $\tForward$}:} Navigate given window.
        \item \textbf{\hlExp{$\mi{cmd} = \tClose$}:} Close
          given window.
        \item \textbf{\hlExp{$\mi{cmd} = \tPostMessage$}:}
          Send \pm to specified document.
        \end{itemize}
      \item \textbf{\hlExp{$\mi{action} = 2$}:}
        $\rightarrow$
        \emph{Initiate request to some URL in new  window}
      \end{itemize}
    \item \textbf{\hlExp{$m=$} DNS response:} send
      corresponding HTTP request
    \item \textbf{\hlExp{$m=$} HTTP(S) response:}
      (decrypt,) find reference.
      \begin{itemize}[noitemsep,nolistsep,label=,leftmargin=1em]
      \item \textbf{reference to window:} create document
        in window
      \item \textbf{reference to
          document:} add response
          body to document's script input
      \end{itemize}
    \end{itemize}
  }
  \caption{The basic structure of the web browser relation
    $R^p$ with an extract of the most important processing
    steps, in the case that $\mi{isCorrupted}
    =\bot$.}\label{fig:browser-structure}
\end{figure}

\myparagraph{Corruption of Browsers.} We model two types of
corruption of browsers, namely \emph{full corruption} and
\emph{close-corruption}, which are triggered by special
network messages in our model. In the real world, an
attacker can exploit buffer overflows in web browsers,
compromise operating systems (e.g., using trojan horses),
and physically take control over shared terminals.

Full corruption models an attacker that gained full control
over a web browser and its user.  Besides modeling a
compromised system, full corruption can also serve as a
vehicle for the attacker to participate in a protocol using
secrets of honest browsers: In our case study
(Section~\ref{sec:analysisbrowserid}), the attacker starts
with no user secrets in its knowledge, but may fully
corrupt any number of browsers, so, in particular, he is
able to impersonate browsers/users.

Close-corruption models a browser that is taken over by the
attacker after a user finished her browsing session, i.e.,
after closing all windows of the browser.  This form of
corruption is relevant in situations where one browser can
be used by many people, e.g., in an Internet caf\'{e}.
Information left in the browser state after closing the
browser could be misused by malicious users.

\myparagraph{The Relation $R^p$.} To define $R^p$, we need
to specify, given the current state of the browser and an
input message $m$, the new state of the browser and the set
of events output by the
browser. Figure~\ref{fig:browser-structure} provides an
overview of the structure of the following definition of
$R^p$.  The input message $m$ is expected to be
$\fullcorrupt$, $\closecorrupt$, $\trigger$, a DNS
response, or an HTTP(S) response.

If $\mi{isCorrupted} \neq \bot$ (browser is corrupted), the
browser, just like an attacker process, simply adds $m$ to
its current state, and then outputs all events it can
derive from its state. Once corrupted, the browser stays
corrupted. Otherwise, if $\mi{isCorrupted} = \bot$, on
input $m$ the browser behaves as follows.

\smallskip\noindent \textbf{\hlExp{$m = \fullcorrupt$}:} If
the browser receives this message, it sets
$\mi{isCorrupted}$ to $\fullcorrupt$. From then on the
browser is corrupted as described above, with the attacker
having full access to the browser's internal state,
including all secrets. 

\smallskip\noindent \textbf{\hlExp{$m = \closecorrupt$}:}
If the browser receives this message, it first removes the
user secrets, open windows and documents, all
\emph{session} cookies, all sessionStorage data, and all
pending requests from its current state; nonces used so far
by the browser may not be used any longer. LocalStorage
data and persistent cookies are not deleted. The browser
then sets $\mi{isCorrupted}$ to $\closecorrupt$ (and hence,
from then on is corrupted). As already mentioned, this
models that the browser is closed by a user and that then
the browser is used by another, potentially malicious user
(an attacker), such as in an Internet caf\'{e}.

\smallskip\noindent \textbf{\hlExp{$m = \trigger$}:} Upon
receipt of this message, the browser non-deterministically
chooses one of two \emph{actions}: ($1$) trigger a script
or ($2$) request a new document.

\noindent \textbf{\hlExp{$m = \trigger$},
  \hlExp{$\mi{action} = 1$}:} Some active window (possibly
an \iframe) is chosen non-determin\-istically. Then the
script of the active document of that window is triggered
(see below).

\noindent \textbf{\hlExp{$m = \trigger$},
  \hlExp{$\mi{action} = 2$}:} A new HTTP(S) $\mGet$ request
(i.e., an HTTP(S) request with method $\mGet$) is created
where the URL is some message derivable from the current
state of the browser. However, nonces may not be used.
This models the user typing in a URL herself, but we do not
allow her to type in secrets, e.g., passwords or session
tokens.  A new window is created to show the
response. (HTTP requests to domains listed in $\mi{sts}$
are automatically rewritten to HTTPS requests).

\smallskip\noindent \textbf{\hlExp{$m=$} DNS response:}
DNS responses are processed as already described above,
resulting in sending the corresponding HTTP(S) request (if
any).

\smallskip\noindent \textbf{\hlExp{$m=$} HTTP(S) response:}
The browser performs the steps (I) to (IV) in this order. 

\myparagraph{\ (I)} The browser
identifies the corresponding HTTP(S) request (if
any), say $q$, and the window or document from which $q$
originated. (In case of HTTPS, the browser also decrypts
$m$ using the recorded symmetric key.)

\myparagraph{\ (II)} If there is a
$\str{Set{\mhyphen}Cookie}$ header in the response, its
content (name, value, and if present, the attributes
$\str{httpOnly}$, $\str{secure}$, $\str{session}$) is
evaluated: The cookie's name, value, and attributes are
saved in the browser's list of cookies. If a cookie with
the same name already exists, the old values and attributes
are overwritten, as specified in RFC6265.

\myparagraph{\ (III)} If there is a
$\str{Strict\mhyphen{}Transport\mhyphen{}Security}$ header
in the response, the domain of $q$ is added to the term
$\mi{sts}$. As defined in RFC6797, all future requests to
this domain, if not already HTTPS requests, are
automatically altered to use HTTPS.

\myparagraph{\ (IV)} If there is a $\str{Location}$ header
(with some URL $u$) in the response and the HTTP status
code is 303 or 307, the browser performs a redirection
(unless it is a non-same-origin redirect of an \xhr) by
issuing a new HTTP request to $u$, retaining the body of
the original request. Rewriting POST to GET requests for
303 redirects and extending the origin header value are
handled as defined in RFC2616 and in the W3C Cross-Origin
Resource Sharing specification\cite{w3c/cors}.

Otherwise, if no redirection is requested, the browser does
the following: If the request originated from a window, a
new document is created from the response body. For this,
the response body is expected to be a term of the form
$\an{\mi{sp}, \mi{stat}}$ where $\mi{sp}$ is a string such
that $\mathsf{script}^{-1}(\mi{sp})\in \scriptset$ is a
script and $\mi{stat}$ is a term used as its initial state.
The document is then added to the window the reference
points to, it becomes the active document, and the
successor of the currently active document. All previously
existing successors are removed. If the request originated
from a document (and hence, was the result of an \xhr{}),
the body of the response is appended to the script input
term of the document. When later the script of this
document is activated, it can read and process the
response.

\myparagraph{Triggering the Script of a Document
  (\textbf{\hlExp{$m = \trigger$}, \hlExp{$\mi{action} =
      1$}}).} First, the script of the document is called
with the following input:
\begin{itemize}
\item document and window references of all active
  documents and subwindows\footnote{Note that we
    overapproximate here: In real-world browsers, only a
    limited set of window handles are available to a
    script. Our approach is motivated by the fact that in some
    cases windows can be navigated by names (without a
  handle). However, as we will see, specific restrictions
  for navigating windows and accessing/changing their data
  apply.}
\item only for same-origin documents: information about the
  documents' origins, scripts, script states and script
  inputs
\item the last state and the input history (i.e., previous
  inputs from \pms and \xhrs) of the script as recorded in
  the document
\item cookies (names and values only) indexed with the
  document's domain, except for $\str{httpOnly}$ cookies
\item \ls data for the document's origin
\item sessionStorage data that is indexed with the
  document's origin and the reference of the document's
  top-level window
\item secrets indexed with the document's origin.
\end{itemize}
In addition, the script is given an infinite set of fresh
nonces from the browser's set of (unused) nonces.

Now, according to the definition of scripts, the script
outputs a term. The browser expects terms of the form
$\an{\mi{state}, \mi{cookies}, \mi{localStorage},
  \mi{sessionStorage}, \mi{cmd}}$ (and otherwise ignores
the output) where $\mi{state}$ is an arbitrary term
describing the new state of the script, $\mi{cookies}$ is a
sequence of name/value pairs, $\mi{localStorage}$ and
$\mi{sessionStorage}$ are arbitrary terms, and $\mi{cmd}$
is a term which is interpreted as a command which is to be
processed by the browser. The old state of the script
recorded in the document is replaced by the new one
($\mi{state}$), the local/session storage data recorded in
the browser for the document's origin (and top-level window
reference) is replaced by
$\mi{localStorage}$/$\mi{sessionStorage}$, and the old
cookie store of the document's origin is updated using
$\mi{cookies}$ similarly to the case of HTTP(S) responses
with cookie headers, except that now no httpOnly cookies
can be set or replaced, as defined by the HTML5 standard
\cite{html5} in combination with RFC6265. For details, see
Line~\ref{line:cookiemerge} of
Algorithm~\ref{alg:runscript} and
Definition~\ref{def:cookiemerge} in
Appendix~\ref{sec:deta-descr-brows}.

Subsequently, $\mi{cmd}$ (if not empty) is interpreted by
the browser, as described next. We note that commands may
contain parameters.

\smallskip

\noindent
\textbf{\hlExp{$\mi{cmd} = \tHref$}} (parameters: URL $u$,
window reference $w$): A new $\mGet$ request to $u$ is
initiated. If $w$ is $\wBlank$, the response to the request
will be shown in a new \emph{auxiliary} window.  This new
window will carry the reference to its opener, namely the
reference to the window in which the script was
running. Otherwise, if $w$ is not $\wBlank$, the window
with reference $w$ is navigated (upon receipt of the
response and only if it is active) to the given
URL\@. Navigation is subject to several
restrictions.\footnote{We follow the rules defined in
  \cite{html5}, Subsection~5.1.4: A window $A$ can navigate
  a window $B$ if the active documents of both are same
  origin, or $B$ is an ancestor window of $A$ and $B$ is a
  top-level window, or if there is an ancestor window of
  $B$ whose active document has the same origin as the
  active document of $A$ (including $A$
  itself). Additionally, $A$ may navigate $B$ if $B$ is an
  auxiliary window and $A$ is allowed to navigate the
  opener of $B$. We follow these rules closely, as can be
  seen in Algorithm~\ref{alg:getnavigablewindow} in
  Appendix~\ref{sec:deta-descr-brows}.}

\noindent \textbf{\hlExp{$\mi{cmd} = \tIframe$}}
(parameter: URL $u$, window reference $w$): Provided that
the active document of $w$ is same origin, create a new
subwindow in $w$ (with a new window reference) and initiate
an HTTP $\mGet$ request to $u$ for that
subwindow. 

\noindent \textbf{\hlExp{$\mi{cmd} = \tForm$}} (parameters:
URL $u$, method $\mi{md}$, data $d$, window reference $w$):
Initiate a new request for $u$ with method $\mi{md}$, and
body data $d$, where, just like in the case of $\tHref$,
$w$ determines in what window the response is shown.  Again
the same restrictions for navigating other windows as in
the case of $\tHref$ apply.  For this request an
$\str{Origin}$ header is set if $\mi{md}=\mPost$. Its value
is the origin of the document.

\noindent \textbf{\hlExp{$\mi{cmd} = \tSetScript$}}
(parameters: window reference $w$, string $s$): Change the
string representation of the script of the active document
in $w$ to $s\in \mathsf{script}^{-1}(\scriptset)$, provided
that the active document in $w$ is same origin.

\noindent \textbf{\hlExp{$\mi{cmd} = \tSetScriptState$}}
(parameters: window reference $w$, term $t$): Change the
state of the script of the active document in $w$ to $t$,
provided that $w$ is same origin.

\noindent \textbf{\hlExp{$\mi{cmd} = \tXMLHTTPRequest$}}
(parameters: URL $u$, method $\mi{md}$, data $d$, nonce
$\mi{xhrreference}$): Initiate a request with method
$\mi{md}$ and data $d$ for $u$, provided that $u$ is same
origin and $\mi{md}$ is not $\mConnect$, $\mTrace$, or
$\mTrack$ (according to \cite{xmlhttprequest}).  Instead of
the window reference, the reference used for this request
is $\an{r, \mi{xhrreference}}$, where $r$ is the reference
of the script's document and $\mi{xhrreference}$ is a nonce
chosen by the script (for later correlation). This combined
reference indicates that this request originated from an
\xhr and $\mi{xhrreference}$ is used by the script to
assign the response to the request. The $\str{Origin}$
header is set as in the case of $\tForm$.

\noindent \textbf{\hlExp{$\mi{cmd} = \tBack$ or
    $\tForward$}} (parameter: window reference $w$):
replace the active document in $w$ by its
predecessor/successor in $w$. Again, the same restrictions
for navigating windows as in the case of $\tHref$ apply.

\noindent \textbf{\hlExp{$\mi{cmd} = \tClose$}} (parameter:
window reference $w$): close the given window, i.e., remove
it from the list of windows in which it is contained. The
same restrictions for navigating windows as in the case of
$\tHref$ apply.

\noindent \textbf{\hlExp{$\mi{cmd} = \tPostMessage$}}
(parameter: message $\mi{msg}$, window reference $w$,
origin $o$): $\mi{msg}$, the origin of the sending
document, and a reference to its window are appended to the
input history of the active document in $w$ if that
document's origin matches $o$ or if $o=\bot$.

\subsection{Web Servers}

While the modeling of DNS servers and browsers is
independent of specific web applications, and hence, forms
the core of the model of the web infrastructure, the
modeling of a \emph{web server} heavily depends on the
specific web application under consideration. Conversely,
the model of a specific web application is determined by
the model of the web server.  We therefore do not and
cannot fix a model for web servers at this point.  Such a
model should be provided as part of the analysis of a
specific web application, as illustrated by our case study
(see Section~\ref{sec:browserid} and following).

\subsection{Limitations}\label{sec:limitations}

We now briefly discuss main limitations of the model. As
will be illustrated by our case study, our model is
formulated on a level of abstraction that is suitable to
capture many security relevant features of the web, and
hence, a relevant class of attacks. However, as with all
models, certain attacks are out of the scope of our
model. For example, as already mentioned, we currently
cannot reason about language details (e.g., how two
JavaScripts running in the same document interact). Also,
we currently do not model user interface details, such as
frames that may overlap in Clickjacking attacks. Being a
Dolev-Yao-style model, our model clearly does not aim at
lower-level cryptographic attacks. Also, byte-level
attacks, such as buffer overflows, are out of scope.

\section{The BrowserID System}\label{sec:browserid}

BrowserID \cite{mozilla/persona/mdn} is a new decentralized
single sign-on (SSO) system developed by Mozilla for user
authentication on web sites. It is a complex full-fledged
web application deployed in practice, with currently
{\raise.17ex\hbox{$\scriptstyle\mathtt{\sim}$}}47k LOC
(excluding code for Sideshow/BigTent, see below, and some
libraries).  It allows web sites to delegate user
authentication to email providers, where users use their
email addresses as identities. The BrowserID implementation
makes use of a broad variety of browser features, such as
\xhrs, \pm, local- and sessionStorage, cookies, etc.

We first, in Section~\ref{sec:browseridoverview}, provide a
high-level overview of the BrowserID system. A more
detailed description of the BrowserID implementation is
then given in Sections~\ref{sec:javascript-descr} to
\ref{sec:ident-prov-fallb} with further details provided in
Appendices~\ref{app:browserid-lowlevel}
to~\ref{app:browserid-sidp-lowlevel}.

\subsection{Overview}\label{sec:browseridoverview}

The BrowserID system knows three distinct parties: the
user, which wants to authenticate herself using a browser,
the relying party (RP) to which the user wants to
authenticate (log in) with one of her email addresses (say,
\nolinkurl{user@eyedee.me}), and the identity/email address
provider IdP\@. If the email provider (\nolinkurl{eyedee.me}) supports
BrowserID directly, it is called a \emph{primary
  IdP}. Otherwise, a Mozilla-provided service, a so-called
\emph{secondary IdP}, takes the role of the IdP\@. In what
follows, we describe the case of a primary IdP, with more
information on secondary IdPs given in
Section~\ref{sec:ident-prov-fallb}. 

A primary IdP provides information about its BrowserID
setup in a so-called \emph{support document}, which it
provides at a fixed URL derivable from the email domain,
e.g., \url{https://eyedee.me/.well-known/browserid}.

A user who wants to log in at an RP with an email address
for some IdP has to present two signed documents: A
\emph{user certificate} (UC) and an \emph{identity
  assertion} (IA). The UC contains the user's email address
and a public key. It is signed by the IdP\@.  The IA contains
the origin of the RP and is signed with the private key
corresponding to the user's public key. Both documents have
a limited validity period. A pair consisting of a UC and a
matching IA is called a \emph{certificate assertion pair}
(CAP) or a \emph{backed identity assertion}. Intuitively,
the UC in the CAP tells the RP that (the IdP certified
that) the owner of the email address is (or at least
claimed to be) the owner of the public key. By the IA
contained in the CAP, the RP is ensured that the owner of
the given public key wants to log in. Altogether, given a
valid CAP, RP would consider the user (with the email
address mentioned in the CAP) to be logged in.

\begin{figure}[bth!]\centering
  \scriptsize{
  \begin{tikzpicture}

    \matrix [row sep=5ex, column sep={6pc}]
    {
     \node[anchor=base,fill=Gainsboro,rounded corners](rp){RP}; & \node[anchor=base,fill=Gainsboro,rounded corners](b){Browser}; & \node[anchor=base,fill=Gainsboro,rounded corners](idp){IdP}; \\
     & \node(b-gen-key){}; & \\
     & \node(b-send-pubkey){}; & \node(idp-recv-pubkey){}; \\
     & & \node(idp-create-uc){}; \\
     & \node(b-recv-cert){}; & \node(idp-send-cert){}; \\
     \node(phase-1-2-left){}; & & \node(phase-1-2-right){}; \\ %
     & \node(b-gen-ia){}; & \\
     \node(rp-recv-cap){}; & \node(b-send-cap){}; & \\
     \node(phase-2-3-left){}; & & \node(phase-2-3-right){}; \\ %
     \node(rp-recv-pubkey){}; & & \node(idp-send-pubkey){}; \\
     \node(rp-vrfy-cap){}; & & \\
     \node(rp-end){}; & \node(b-end){}; & \node(idp-end){}; \\
    };

    \begin{pgfonlayer}{background}
    \draw [color=gray] (rp) -- (rp-end);
    \draw [color=gray] (b) -- (b-end);
    \draw [color=gray] (idp) -- (idp-end);
    \end{pgfonlayer}

    \node(phase-1-2-leftleft)[left of=phase-1-2-left]{};
    \node(phase-1-2-rightright)[right of=phase-1-2-right]{};
    \draw [dashed] (phase-1-2-leftleft) -- (phase-1-2-rightright);
    \node(phase-2-3-leftleft)[left of=phase-2-3-left]{};
    \node(phase-2-3-rightright)[right of=phase-2-3-right]{};
    \draw [dashed] (phase-2-3-leftleft) -- (phase-2-3-rightright);

    \node at (b-gen-key) [draw,fill=white,rounded corners] {\alphprotostep{gen-key-pair} gen.\ key pair};
    \draw [->] (b-send-pubkey) to node[draw,fill=white]{\alphprotostep{req-uc} $\text{pk}_\text{b}$, email} (idp-recv-pubkey);
    \node at (idp-create-uc) [draw,fill=white,rounded corners] {\alphprotostep{create-uc} create UC};
    \draw [->] (idp-send-cert) to node[draw,fill=white]{\alphprotostep{recv-uc} UC} (b-recv-cert);
    \node at (b-gen-ia) [draw,fill=white,rounded corners]{\alphprotostep{gen-ia} gen. IA};
    \draw [->] (b-send-cap) to node[draw,fill=white]{\alphprotostep{send-cap} CAP} (rp-recv-cap);
    \draw [->] (idp-send-pubkey) to node[draw,fill=white]{\alphprotostep{recv-idp-pubkey} $\text{pk}_{\text{IdP}}$} (rp-recv-pubkey);
    \node at (rp-vrfy-cap) [draw,fill=white,rounded corners]{\alphprotostep{verify-cap} verify CAP};

    \node(phase-1-label) at ($(phase-1-2-left)!0.5!(rp)$) [left=1ex] {\bigprotophase{provisioning}};
    \node(phase-2-label) at ($(phase-2-3-left)!0.5!(phase-1-2-left)$) [left=1ex] {\bigprotophase{authentication}};
    \node(phase-3-label) at ($(rp-end)!0.75!(phase-2-3-left)$) [left=1ex] {\bigprotophase{verification}};

  \end{tikzpicture}}
  \caption{BrowserID authentication: basic overview}
  \label{fig:browserid-highlevel}
\end{figure}

The BrowserID authentication process (with a primary IdP)
consists of three phases (see
Figure~\ref{fig:browserid-highlevel} for an overview):
\refbigprotophase{provisioning} provisioning of the UC,
\refbigprotophase{authentication} CAP creation, and
\refbigprotophase{verification} verification of the CAP.

In Phase \refbigprotophase{provisioning}, the (browser of
the) user creates a public/pri\-vate key
pair~\refalphprotostep{gen-key-pair}. She then sends her
public key as well as the email address she wants to use to
log in at some RP to IdP~\refalphprotostep{req-uc}. IdP now
creates the UC~\refalphprotostep{create-uc}, which is
then sent to the user~\refalphprotostep{recv-uc}. The above
requires the user to be logged in at IdP.

With the user having received the UC, Phase
\refbigprotophase{authentication} can start. The user wants
to authenticate to an RP, so she creates the
IA~\refalphprotostep{gen-ia}. The UC and the IA are
concatenated to a CAP, which is then sent to the
RP~\refalphprotostep{send-cap}.

In Phase \refbigprotophase{verification}, the RP checks the
authenticity of the CAP\@. For this purpose, the RP could use
an external verification service provided by Mozilla or
check the CAP itself as follows: First, the RP fetches the
public key of IdP \refalphprotostep{recv-idp-pubkey}, which
is contained in the support document. Afterwards, the RP
checks the signatures of the UC and the
IA~\refalphprotostep{verify-cap}. If this check is
successful, the RP can, as mentioned before, consider the
user to be logged in with the given email address and send
her some token (e.g., a session ID), which we refer to as
an \emph{RP service token}.

\subsection{Implementation Details}\label{sec:javascript-descr}
\begin{figure}[bth!]\centering

\begin{tikzpicture}
\scriptsize

\matrix[column sep={6pc,between origins}, row sep=1.5ex] {
\node[anchor=base,fill=Gainsboro,rounded corners](lpo){LPO}; & \node[anchor=base,fill=Gainsboro,rounded corners](idp){IdP}; & \node[draw,anchor=base](rpdoc){RP-Doc}; & \node(ld-top){};                 & \node(pif-top){}; \\
                                   &                                    & \node(rpdoc-ld-open){};                 & \node(ld)[draw,anchor=base]{LD}; & \\
\node(phase-ld-init-1-left){};     &                                    &                                         &                                  & \node(phase-ld-init-1-right){}; \\ [2pt]
\node(lpo-ld-init-1){};            &                                    &                                         & \node(ld-init-1){};              & \\ [1pt]
                                   &                                    & \node(rpdoc-ld-recv-ready){};           & \node(ld-rpdoc-send-ready){};    & \\
                                   &                                    & \node(rpdoc-ld-send-request){};         & \node(ld-rpdoc-recv-request){};  & \\
\node(lpo-ld-ctx-1){};             &                                    &                                         & \node(ld-lpo-ctx-1){};           & \\
                                   &                                    &                                         & \node(ld-user-email){};          & \\
\node(lpo-ld-addr-info-1-top){};   &                                    &                                         & \node(ld-lpo-addr-info-1-top){}; & \\ [1pt]
\node(lpo-idp-wk-1){};             & \node(idp-lpo-wk-1){};             &                                         &                                  & \\
\node(lpo-ld-addr-info-1-btm){};   &                                    &                                         & \node(ld-lpo-addr-info-1-btm){}; & \\
\node(phase-prov-left){};          &                                    &                                         &                                  & \node(phase-prov-right){}; \\
                                   &                                    &                                         & \node(ld-pif-open){};            & \node(pif)[draw,anchor=base]{PIF}; \\
                                   & \node(idp-pif-init-1){};           &                                         &                                  & \node(pif-init-1){}; \\ [1pt]
                                   &                                    &                                         & \node(ld-pif-pms){};             & \node(pif-ld-pms){}; \\
                                   &                                    &                                         & \node(ld-pif-close){};           & \node(pif-end)[draw,anchor=base]{/PIF}; \\
\node(phase-auth-left){};          &                                    &                                         &                                  & \node(phase-auth-right){}; \\
                                   & \node(idp-ld-auth){};              &                                         & \node(ld-auth){}; & \\
\node(phase-ld-init-2-left){};     &                                    &                                         &                                  & \node(phase-ld-init-2-right){}; \\
\node(lpo-ld-init-2-top){};        & \node(idp-ld-init-2-top){};        & \node(rpdoc-ld-init-2-top){};           & \node(ld-ld-init-2-top){};       & \node(pif-ld-init-2-top){}; \\
\node(lpo-ld-init-2-btm){};        & \node(idp-ld-init-2-btm){};        & \node(rpdoc-ld-init-2-btm){};           & \node(ld-ld-init-2-btm){};       & \node(pif-ld-init-2-top){}; \\
\node(phase-prov-cont-left){};     &                                    &                                         &                                  & \node(phase-prov-cont-right){}; \\
\node(lpo-prov-cont-top){};        & \node(idp-prov-cont-top){};        & \node(rpdoc-prov-cont-top){};           & \node(ld-prov-cont-top){};       & \node(pif-prov-cont-top){}; \\
                                   &                                    &                                         & \node(ld-prov-cont-key-pair){};  & \\ [3ex]
                                   &                                    &                                         & \node(ld-prov-cont-pkb){};       & \node(pif-prov-cont-pkb){};\\
                                   & \node(idp-prov-cont-req-uc){};     &                                         &                                  & \node(pif-prov-cont-req-uc){};\\
                                   & \node(idp-prov-cont-cert-uc){};    &                                         &                                  & \\
                                   & \node(idp-prov-cont-send-uc){};    &                                         &                                  & \node(pif-prov-cont-send-uc){};\\
                                   &                                    &                                         & \node(ld-prov-cont-recv-uc){};   & \node(pif-prov-cont-recv-uc){};\\
\node(lpo-prov-cont-btm){};        & \node(idp-prov-cont-btm){};        & \node(rpdoc-prov-cont-btm){};           & \node(ld-prov-cont-btm){};       & \node(pif-prov-cont-btm){}; \\
\node(phase-auth-lpo-left){};      &                                    &                                         &                                  & \node(phase-auth-lpo-right){}; \\
                                   &                                    &                                         & \node(ld-gen-cap-lpo){};         & \\ [3ex]
\node(lpo-ld-auth){};              &                                    &                                         & \node(ld-lpo-auth){};            & \\ [-2pt]
\node(phase-cap-left){};           &                                    &                                         &                                  & \node(phase-cap-right){}; \\ [2pt]
\node(lpo-ld-list-emails){};       &                                    &                                         & \node(ld-lpo-list-emails){};     & \\ [2pt]
\node(lpo-ld-addr-info-3){};       &                                    &                                         & \node(ld-lpo-addr-info-3){};     & \\
                                   &                                    &                                         & \node(ld-gen-cap){};             & \\ [3ex]
                                   &                                    & \node(rpdoc-ld-cap){};                  & \node(ld-rpdoc-cap){};           & \\
                                   &                                    & \node(rpdoc-ld-close){};                & \node[draw,anchor=base](ld-end){/LD}; & \\
\node(lpo-end){};                  & \node(idp-end){};                  & \node(rpdoc-end){};                     &                                  & \node(pif-col-end){}; \\
};

\node(phase-1-label) at ($(phase-prov-left)!0.5!(phase-ld-init-1-left)$) [left=1ex] {\protophase{ld-start-1}};
\node(phase-2-label) at ($(phase-auth-left)!0.5!(phase-prov-left)$) [left=1ex] {\protophase{ld-prov-1}};
\node(phase-3-label) at ($(phase-ld-init-2-left)!0.5!(phase-auth-left)$) [left=1ex] {\protophase{ld-auth}};
\node(phase-4-label) at ($(phase-prov-cont-left)!0.5!(phase-ld-init-2-left)$) [left=1ex] {\protophase{ld-start-2}};
\node(phase-5-label) at ($(phase-auth-lpo-left)!0.5!(phase-prov-cont-left)$) [left=1ex] {\protophase{ld-prov-2}};
\node(phase-6-label) at ($(phase-cap-left)!0.5!(phase-auth-lpo-left)$) [left=1ex] {\protophase{ld-lpo-auth}};
\node(phase-7-label) at ($(lpo-end)!0.5!(phase-cap-left)$) [left=1ex] {\protophase{ld-cap}};

\tikzstyle{xhrArrow} = [color=blue,decoration={markings, mark=at position 1 with {\arrow[color=blue]{triangle 45}}}, preaction = {decorate}]

\draw [->,snake=snake,segment amplitude=0.2ex] (rpdoc-ld-open.40) to node [above=-2pt] {\protostep{ld-open} open} (ld);

\draw [->] (ld-init-1.160) to node [above=-2pt]{\protostep{ld-init-1} GET LD} (lpo-ld-init-1.20);
\draw [->] (lpo-ld-init-1.340) -- (ld-init-1.200);

\draw [->,color=red,dashed] (ld-rpdoc-send-ready) to node [above=-2pt]{\protostep{ld-rpdoc-ready-1} ready} (rpdoc-ld-recv-ready);

\draw [->,color=red,dashed] (rpdoc-ld-send-request) to node [above=-2pt]{\protostep{rpdoc-ld-request-1} request} (ld-rpdoc-recv-request);

\draw [->,color=blue,>=latex] (ld-lpo-ctx-1.160) to node [above=-2pt]{\protostep{ld-ctx-1} GET session\_context} (lpo-ld-ctx-1.20);
\draw [->,color=blue,>=latex] (lpo-ld-ctx-1.340) -- (ld-lpo-ctx-1.200);

\node (ld-user-email-drawn) at (ld-user-email) [draw,rounded corners,fill=Gainsboro]{\protostep{ld-user-email} email address };

\draw [->,color=blue,>=latex] (ld-lpo-addr-info-1-top) to node [above=-2pt]{\protostep{ld-addrinfo-1} GET address\_info} (lpo-ld-addr-info-1-top);

\draw [->] (lpo-idp-wk-1.20) to node [above=-2pt]{\protostep{lpo-idp-wk-1} GET wk} (idp-lpo-wk-1.160);
\draw [->] (idp-lpo-wk-1.200) -- (lpo-idp-wk-1.340);

\draw [->,color=blue,>=latex] (lpo-ld-addr-info-1-btm) to node [above=-2pt]{\protostep{ld-addrinfo-1-resp}} (ld-lpo-addr-info-1-btm);

\draw [->,snake=snake,segment amplitude=0.2ex] (ld-pif-open.40) to node [above=-2pt] {\protostep{ld-pif-open-1} create} (pif);

\draw [->] (pif-init-1.160) to node [above=-2pt]{\protostep{pif-init-1} GET PIF} (idp-pif-init-1.20);
\draw [->] (idp-pif-init-1.340) -- (pif-init-1.200);

\draw [implies-implies,color=red,dashed,double=Gainsboro] (pif-ld-pms) to node [above=-2pt]{\protostep{pif-ld-pms-1} PMs} (ld-pif-pms);

\draw [->,snake=snake,segment amplitude=0.2ex] (ld-pif-close.40) to node [above=-2pt] {\protostep{ld-pif-close-1} close} (pif-end);

\node (ld-auth-drawn) at (ld-auth) [draw]{\protostep{idp-ld-auth} auth IdP};
\draw [implies-implies,double] (ld-auth-drawn) -- (idp-ld-auth);

\node[right of=phase-4-label,node distance=5em] {repeat \resizebox{!}{0.8\baselineskip}{\refprotophase{ld-start-1}}};

\node[right of=phase-5-label,node distance=3em,rotate=90] {repeat \resizebox{!}{0.8\baselineskip}{\refprotophase{ld-prov-1}}};

\node at (pif-prov-cont-top) [draw,fill=Gainsboro] {PIF};

\node at (ld-prov-cont-key-pair) [draw,rounded corners,fill=Gainsboro]{\protostep{gen-key-pair} gen. key pair};

\draw [->,color=red,dashed] (ld-prov-cont-pkb) to node [above=-2pt]{\protostep{pubkey-ld-pif} $\text{pk}_\text{b}$, email} (pif-prov-cont-pkb);

\draw [->,color=blue,>=latex] (pif-prov-cont-req-uc) to node [above=-2pt]{\protostep{req-uc} $\text{pk}_\text{b}$, email} (idp-prov-cont-req-uc);

\node (certify-uc) at (idp-prov-cont-cert-uc) [draw,rounded corners,fill=white]{\protostep{certify-uc} create UC};

\draw [->,color=blue,>=latex] (idp-prov-cont-send-uc) to node [above=-2pt]{\protostep{send-uc} UC} (pif-prov-cont-send-uc);

\draw [->,color=red,dashed] (pif-prov-cont-recv-uc) to node [above=-2pt]{\protostep{recv-uc} UC} (ld-prov-cont-recv-uc);

\node at (pif-prov-cont-btm) [draw,fill=Gainsboro] {/PIF};

\node at (ld-gen-cap-lpo) [draw,rounded corners,fill=Gainsboro]{\protostep{ld-gen-cap-lpo} gen. $\text{IA}_\text{LPO}$};

\draw [->,color=blue,>=latex] (ld-lpo-auth.160) to node [above=-2pt]{\protostep{ld-lpo-auth} POST auth\_with\_assertion ($\text{CAP}_\text{LPO}$)} (lpo-ld-auth.20);
\draw [->,color=blue,>=latex] (lpo-ld-auth.340) -- (ld-lpo-auth.200);

\draw [->,color=blue,>=latex] (ld-lpo-list-emails.160) to node [above=-2pt]{\protostep{ld-lpo-list-emails} GET list\_emails} (lpo-ld-list-emails.20);
\draw [->,color=blue,>=latex] (lpo-ld-list-emails.340) -- (ld-lpo-list-emails.200);

\draw [->,color=blue,>=latex] (ld-lpo-addr-info-3.160) to node [above=-2pt]{\protostep{ld-addrinfo-3} GET address\_info} (lpo-ld-addr-info-3.20);
\draw [->,color=blue,>=latex] (lpo-ld-addr-info-3.340) -- (ld-lpo-addr-info-3.200);

\node at (ld-gen-cap) [draw,rounded corners,fill=Gainsboro]{\protostep{ld-gen-cap} gen. $\text{IA}_\text{RP}$};

\draw [->,color=red,dashed] (ld-rpdoc-cap) to node [above=-2pt]{\protostep{ld-rpdoc-cap} response ($\text{CAP}_\text{RP}$)} (rpdoc-ld-cap);

\draw [->,snake=snake,segment amplitude=0.2ex] (rpdoc-ld-close.40) to node [above=-2pt] {\protostep{ld-close} close} (ld-end);

\begin{pgfonlayer}{background}
 \node (rpdoc-a) [above of=rpdoc, node distance=2ex]{};
 \node (rpdoc-al) [left of=rpdoc-a, node distance=6ex]{};
 \node (pif-col-end-b) [below of=pif-col-end, node distance=2ex]{};
 \node (pif-col-end-br) [right of=pif-col-end-b, node distance=4ex]{};
 \filldraw [color=Gainsboro,rounded corners] (rpdoc-al) rectangle (pif-col-end-br);

 \draw [color=gray] (lpo.270) -- (lpo-ld-init-2-top);
 \draw [color=gray,dotted,thick] (lpo-ld-init-2-top) -- (lpo-ld-init-2-btm);
 \draw [color=gray] (lpo-ld-init-2-btm) -- (lpo-prov-cont-top);
 \draw [color=gray,dotted,thick] (lpo-prov-cont-top) -- (lpo-prov-cont-btm);
 \draw [color=gray] (lpo-prov-cont-btm) -- (lpo-end);
 \draw [color=gray] (idp.270) -- (idp-ld-init-2-top);
 \draw [color=gray,dotted,thick] (idp-ld-init-2-top) -- (idp-ld-init-2-btm);
 \draw [color=gray] (idp-ld-init-2-btm) -- (idp-prov-cont-top);
 \draw [color=gray,dotted,thick] (idp-prov-cont-top) -- (idp-prov-cont-btm);
 \draw [color=gray] (idp-prov-cont-btm) -- (idp-end);
 \draw [color=gray] (rpdoc.270) -- (rpdoc-ld-init-2-top);
 \draw [color=gray,dotted,thick] (rpdoc-ld-init-2-top) -- (rpdoc-ld-init-2-btm);
 \draw [color=gray] (rpdoc-ld-init-2-btm) -- (rpdoc-prov-cont-top);
 \draw [color=gray,dotted,thick] (rpdoc-prov-cont-top) -- (rpdoc-prov-cont-btm);
 \draw [color=gray] (rpdoc-prov-cont-btm) -- (rpdoc-end);
 \draw [color=gray] (ld.270) -- (ld-auth-drawn);
 \draw [color=gray] (ld-auth-drawn) -- (ld-ld-init-2-top);
 \draw [color=gray,dotted,thick] (ld-ld-init-2-top) -- (ld-ld-init-2-btm);
 \draw [color=gray] (ld-ld-init-2-btm) -- (ld-prov-cont-top);
 \draw [color=gray,dotted,thick] (ld-prov-cont-top) -- (ld-prov-cont-btm);
 \draw [color=gray] (ld-prov-cont-btm) -- (ld-end);
 \draw [color=gray] (pif.270) -- (pif-end);
 \draw [color=gray,dotted,thick] (pif-prov-cont-top) -- (pif-prov-cont-btm);
\end{pgfonlayer}

\draw [dashed] (phase-ld-init-1-left.180) -- (phase-ld-init-1-right.0);
\draw [dashed] (phase-prov-left.180) -- (phase-prov-right.0);
\draw [dashed] (phase-auth-left.180) -- (phase-auth-right.0);
\draw [dashed] (phase-ld-init-2-left.180) -- (phase-ld-init-2-right.0);
\draw [dashed] (phase-prov-cont-left.180) -- (phase-prov-cont-right.0);
\draw [dashed] (phase-auth-lpo-left.180) -- (phase-auth-lpo-right.0);
\draw [dashed] (phase-cap-left.180) -- (phase-cap-right.0);

\node at ($(ld-top)!0.7!(pif-top)$) {Browser};

\end{tikzpicture}

\caption{BrowserID implementation overview. Black arrows (open
  tips) denote HTTPS messages, blue arrows (filled tips) denote \xhrs
  (over HTTPS), red (dashed) arrows are \pms, snake lines are commands to the
  browser.}
\label{fig:browserid-lowlevel-ld}
\end{figure}

We now provide a more detailed description of the BrowserID
implementation (see also
Figure~\ref{fig:browserid-lowlevel-ld}). Since the system
is very complex, with many HTTPS requests, \xhrs, and
postMessages sent between different entities (servers as
well as windows and iframes within the browser), we here
describe mainly the phases of the login process without
explaining every single message exchange done in the
implementation. A more detailed step-by-step description
can be found in Appendix~\ref{app:browserid-lowlevel}.

In addition to the parties mentioned in the rough overview
in Section~\ref{sec:browseridoverview}, the actual
implementation uses another party,
\nolinkurl{login.persona.org} (LPO). The role of LPO is as
follows: First, LPO provides the HTML and JavaScript files
of the implementation. Thus, the BrowserID implementation
mainly runs under the origin of LPO.\footnote{It is
  envisioned by Mozilla to integrate the part of LPO
  directly into the browser in the future.} When the
JavaScript implementation running in the browser under the
origin of LPO needs to retrieve information from the IdP
(support document), LPO acts as a proxy to circumvent
cross-origin restrictions.

Before explaining the login process, we provide a quick
overview of the windows and iframes in the browser. By
RP-Doc we denote the window (see
Figure~\ref{fig:browserid-lowlevel-ld}) containing the
document loaded from some RP, a web page on which the user
wants to log in with an email address of some IdP\@. This
document typically includes JavaScript from LPO and
contains a button ``Login with BrowserID''. (Loading of
\emph{RP-Doc} from the RP and the JavaScript from LPO is
not depicted in
Figure~\ref{fig:browserid-lowlevel-ld}). The LPO JavaScript
running in \emph{RP-Doc} opens an auxiliary window called
the \emph{login dialog} (LD). Its content is provided by
LPO and it handles the interaction with the user. During
the login process, a temporary invisible iframe called the
\emph{provisioning iframe} (PIF) can be created in the
LD\@. The PIF is loaded from IdP\@. It is used by LD to
communicate (cross-origin) with IdP\@. Temporarily, the LD
may navigate itself to a web page at IdP to allow for
direct user interaction with the IdP.

Now, in order to describe the login process, for the time
being we assume that the user uses a ``fresh'' browser,
i.e., the user has not been logged in before. As mentioned,
the process starts by the user visiting a web site of some
RP\@.  After the user has clicked on the login button in
RP-Doc, the LD is opened and the interactive login flow is
started. We can divide this login flow into seven phases:
In Phase~\refprotophase{ld-start-1}, the LD is initialized
and the user is prompted to provide her email address. Then
LD fetches the support document (see
Section~\ref{sec:browseridoverview}) of IdP via LPO\@. In
Phase~\refprotophase{ld-prov-1}, LD creates the PIF from
the \emph{provisioning URL} provided in the support
document. As (by our assumption) the user is not logged in
yet, the PIF notifies LD that the user is not authenticated
to IdP yet. In Phase~\refprotophase{ld-auth}, LD navigates
itself away to the \emph{authentication URL} which is also
provided in the support document and links to IdP\@. Usually,
this document will show a login form in which the user
enters her password to authenticate to the IdP\@.  After the
user has been authenticated to IdP (which typically implies
that IdP sets a session cookie in the browser), the window
is navigated to LPO again. (This is done by JavaScript
loaded from LPO that the IdP document is supposed to
include.)

Now, the login flow continues in Phase
\refprotophase{ld-start-2}, which basically repeats Phase
\refprotophase{ld-start-1}. However, the user is not
prompted for her email address (it has previously been
saved in the localStorage under the origin of LPO along
with a nonce, where the nonce is stored in the
sessionStorage). In Phase~\refprotophase{ld-prov-2}, which
basically repeats Phase~\refprotophase{ld-prov-1}, the PIF
detects that the user is now authenticated to IdP and the
provisioning phase is started
(\refbigprotophase{provisioning} in
Figure~\ref{fig:browserid-highlevel}): The user's keys are
created by LD and stored in the localStorage under the
origin of LPO\@. The PIF forwards the certification request
to IdP, which then creates the UC and sends it back to the
PIF\@. The PIF in turn forwards it to the LD, which stores it
in the localStorage under the origin of LPO.

In Phases~\refprotophase{ld-lpo-auth} and
\refprotophase{ld-cap}, mainly the IA is generated by LD
for the origin of RP-Doc and sent (together with the UC) to
RP-Doc (\refbigprotophase{authentication} in
Figure~\ref{fig:browserid-highlevel}). In the localStorage,
LD stores that the user's email is logged in at
RP\@. Moreover, the user's email is recorded at LPO (see the
explanation on LPO Sessions below). For this purpose, LD
generates an IA for the origin of LPO and sends the UC and
IA to LPO.

\myparagraph{LPO Session.} LPO establishes a session with
the browser by setting a cookie \texttt{browserid\_state}
(in Step~\refprotostep{ld-ctx-1} in
Figure~\ref{fig:browserid-lowlevel-ld}) on the
client-side. LPO considers such a session authenticated
after having received a valid CAP
(in Step~\refprotostep{ld-lpo-auth} in
Figure~\ref{fig:browserid-lowlevel-ld}). In future
runs, the user is presented a list of her email addresses
(which is fetched from LPO) in order to choose one
address. Then, she is asked if she trusts the computer she
is using and is given the option to be logged in for one
month or ``for this session only'' (\emph{ephemeral}
session). In order to use any of the email addresses, the
user is required to authenticate to the IdP responsible for
that address to get an UC issued. If the localStorage
(under the origin LPO) already contains a valid UC, then,
however, authentication at the IdP is not necessary.

\myparagraph{Automatic CAP Creation.}  In addition to the
interactive login presented above, BrowserID also contains
an automatic, non-interactive way for RPs to obtain a
freshly generated CAP: During initialization of the
BrowserID code included by RP-Doc, an invisible iframe
called the \emph{communication iframe} (CIF) is created
inside RP-Doc. The CIF's JavaScript is loaded from LPO and
behaves similar to LD, but without user interaction. The
CIF automatically issues a fresh CAP and sends it to RP-Doc
under specific conditions: among others, the email address
must be marked as logged in at RP in the \ls. If necessary,
a new key pair is created and a corresponding new UC is
requested at IdP.

In Figure~\ref{fig:browserid-lowlevel} in
Appendix~\ref{app:browserid-sidp-lowlevel}, we show the
usage of the CIF in detail for the case of the \emph{secondary
  IdP}.

\myparagraph{Logout.} We have to differentiate between
three ways of logging out: an RP logout, an LPO logout, and
an IdP logout. An RP logout is handled by the CIF after it
has received a \emph{logout} \pm from RP-Doc. The CIF then
changes the \ls such that no email address is recorded to
be logged in at RP\@. 

An LPO logout essentially requires to logout at the web
site of LPO\@. The LPO logout removes all key pairs and
certificates from the \ls and invalidates the session on
the LPO server.

An IdP logout depends on the IdP implementation and usually
cancels the user's session with IdP\@. This entails that IdP
will not issue new UCs for the user without
re-authentication.

\subsection{Sideshow and BigTent}\label{sec:sideshow}

Since several email providers, such as \nolinkurl{gmail.com} and
\nolinkurl{yahoo.com}, already use OpenID \cite{openid}, a widely
employed SSO system, Mozilla implemented IdPs called Sideshow and
BigTent which use an OpenID backend for user authentication:
Sideshow/BigTent are put between BrowserID and an email provider
running OpenID\@. That is, BrowserID uses Sideshow/BigTent as an
IdP\@. Sideshow/BigTent translate requests from BrowserID to requests
to the email provider's OpenID interface.  Currently, Sideshow and
BigTent are used to provide BrowserID support for
\nolinkurl{gmail.com} and \nolinkurl{yahoo.com}, respectively. In what
follows, we describe Sideshow in more detail; BigTent is
similar. Technical details on the communication between OpenID and
Sideshow/BigTent can be found in
Appendix~\ref{app:sideshow-openid-flow}.

All BrowserID protocol steps that would normally be carried out by the
IdP are now handled by Sideshow (i.e., the Sideshow server). For this
purpose, Sideshow serves the provisioning URL (for the PIF) and the
authentication URL used in \refprotophase{ld-auth}.  It maintains a
session with the user's browser. This session is considered to be
authenticated if the user has successfully authenticated to Sideshow
using OpenID\@. In this case, Sideshow's PIF document may send public
keys to Sideshow. Sideshow then creates a UC for the identity it
believes to be logged in. If the session at Sideshow is not
authenticated, the user will first be redirected to the Sideshow
authentication URL\@. Sideshow's authentication document will redirect
the user further to the OpenID URL at Gmail. This URL contains an
authentication request encoding that Sideshow requests an \emph{OpenID
  assertion} that contains an email address. In general, such an
assertion is a list of attribute name/value pairs (partially) MACed by
Gmail with a temporary symmetric key known only to Gmail; an
additional attribute, \texttt{openid.signed}, in such an assertion
encodes which attribute name/value pairs have actually been MACed and
in which order.  The user now authenticates to Gmail. Then, Gmail
issues the requested OpenID assertion and redirects the browser to
Sideshow with the assertion in the URL parameters. Sideshow then sends
the OpenID assertion to Gmail in order to check its validity. If the
OpenID assertion is valid, i.e. the MAC over the attributes listed in
\texttt{openid.signed} verifies, Sideshow considers its session with
the user's browser to be authenticated for the email address contained
in the OpenID assertion.

\subsection{Secondary Identity Provider}\label{sec:ident-prov-fallb}

If an email provider (IdP) does not directly support BrowserID, LPO
can be used as a so-called \emph{secondary IdP} (sIdP), i.e., it
replaces the IdP completely.  For this, the user has to register at
LPO\@. That is, she creates an account at LPO where she can register one
or more email addresses to be used as identities. She has to prove
ownership of all email addresses she registers. (LPO sends URLs 
to each email address, which then have to be opened by the user.)

When the sIdP is used, the phases \refprotophase{ld-prov-1} --
\refprotophase{ld-lpo-auth} are not needed as now LPO replaces the IdP
and the actions previously performed by IdP and LPO are now carried
out by LPO alone. The user is prompted to enter her password directly
into LD\@. If the password is correct, LPO now considers the session
with the browser to be authenticated. LPO will then issue UCs on
behalf of the email provider. We note that, for automatic CAP
creation, the CIF (see Section~\ref{sec:javascript-descr}) is still
used.

A detailed step-by-step description of a BrowserID flow with an sIdP
can be found in Appendix~\ref{app:browserid-sidp-lowlevel}.

\section{Analysis of BrowserID}\label{sec:analysisbrowserid}

In this section, we present the analysis of the BrowserID
system. We first formulate fundamental security properties
for the BrowserID system. We then present attacks that show
that these properties are not satisfied and propose
fixes. For the case of BrowserID with sIdP and the fixes
applied, we then prove that the security properties are
satisfied in our web model. We note that we also
incorporate the automated CAP creation with the CIF in our
model of BrowserID (see
Section~\ref{sec:javascript-descr}).  Our web model is
expressive enough to also formally model the BrowserID
system with primary IdPs (and Sideshow/BigTent) in a
straightforward way. However, we leave the detailed
formulation of such a model and the proof of the security
of the fixed system with \emph{primary} IdPs to future
work.

\subsection{Security Properties for BrowserID}\label{sec:securitypropsBrowserID}

While the documentation of BrowserID does not contain
explicit security goals, we deduce two fundamental security
properties that can be informally described as follows (see
Appendix~\ref{app:securitypropertiesbrowserid} for a formal
description): \textbf{(A)} \emph{The attacker should not be
  able to use a service of RP as an honest user.} In other
words, the attacker should not get hold of (be able to
derive from his current knowledge) an RP service token for
an ID of an honest user (browser), even if the browser was
closed and then later used by a malicious user (i.e., after
a $\closecorrupt$). \textbf{(B)} \emph{The attacker should
  not be able to authenticate an honest browser to an RP
  with an ID that is not owned by the browser.}

\subsection{Attacks on BrowserID}

\label{sec:attacks-browserid}
Our analysis of BrowserID w.r.t.~the above security
properties revealed several attacks (as sketched next). We
confirmed the attacks on the actual implementation and also
reported them to Mozilla. The
first two fixes proposed below have been adopted by Mozilla
already and the others are currently under discussion at
Mozilla.

\subsubsection{Identity Forgery} 

There are two problems in
Sideshow that lead to identity forgery attacks for Gmail
addresses; analogously in BigTent with Yahoo email
addresses.\footnote{\label{footnote:identity-forgery}See
  \url{https://bugzilla.mozilla.org/show_bug.cgi?id=920030}
  and \\
  \url{https://bugzilla.mozilla.org/show_bug.cgi?id=920301}.}

\myparagraph{a)} It is not checked if all requested
attributes in the OpenID assertion are MACed, which allows
for the following attack: A (web) attacker may choose any
Gmail address to impersonate, say
\nolinkurl{victim@gmail.com}. He starts a BrowserID login
with this email address. When he is then redirected to the
OpenID URL at Gmail, he removes the email attribute from
Sideshow's authentication request. The attacker
authenticates himself at Gmail with his own account (say,
\nolinkurl{attacker@gmail.com}). Upon receipt of the OpenID
assertion, he appends the email attribute with value
\nolinkurl{victim@gmail.com} and forwards it to
Sideshow. The assertion is declared valid by Gmail since
the MAC is correct (the email attribute is not listed in
\texttt{openid.signed}). Since Sideshow does not require
the email attribute to be in \texttt{openid.signed}, it
accepts the OpenID assertion, considers the attacker's
session to be authenticated for
\nolinkurl{victim@gmail.com}, and issues UCs for this
address to the attacker. This violates Condition
\textbf{(A)}.

\myparagraph{b)} Sideshow uses the first email address in
the OpenID assertion (based on the attribute type
information), which is not necessarily the MACed email
address. This allows for an attack similar to the above,
except that the attacker does not need to change Sideshow's
authentication request but only prepends the victim's email
address to the OpenID assertion in an additional attribute.

\myparagraph{Proposed fix.} Sideshow/BigTent must ensure to
use the correct and MACed attribute for the email address.

\subsubsection{Login Injection Attack} 
During the login process, the origin of the response \pm
(\refprotostep{ld-rpdoc-cap} in
Figure~\ref{fig:browserid-lowlevel-ld}), which contains the
CAP, is not checked. An attacker (e.g., in a malicious
advertisement \iframe within RP-Doc or in a parent window
of RP-Doc), can continuously send \pms to the RP-Doc with
his own CAP in order to log the user into his own account.
The outcome of this attack is similar to session
swapping. For example, if the attacker is able to log the
user into a search engine, the attacker might be able to
read the search terms the user enters.
This attack violates Condition \textbf{(B)}.\footnote{See
  \url{https://bugzilla.mozilla.org/show_bug.cgi?id=868967}}

\myparagraph{Proposed fix.} To fix the problem, the
sender's origin of the \pm~\refprotostep{ld-rpdoc-cap} must be checked to
match LPO\@.

\subsubsection{Key Cleanup Failure Attack} When LD creates
a key pair (\refprotostep{gen-key-pair} in
Figure~\ref{fig:browserid-lowlevel-ld}), it stores the keys
in the localStorage (even in ephemeral sessions). When a
user quits a session (e.g, by clicking on RP's logout
button and closing the browser) the key pair (and the UC)
remain in the localStorage, unlike session cookies. Hence,
users of shared terminals can read the \ls (in our model, a
$\closecorrupt$ allows an attacker to do this) and then,
using the key pair and the UC, create valid CAPs to log in
at any RP under the identity of the previous user, which
violates Condition~\textbf{(A)}.\footnote{See
  \url{https://github.com/mozilla/browserid/issues/3770}}

\myparagraph{Proposed fix.} We propose to use the \ls for
this data only in non-ephemeral sessions. 

\subsubsection{Cookie Cleanup Failure Attack (for the case
  of secondary IdP only)} The LPO session cookie is not
deleted when the browser is closed, even in ephemeral
sessions and even if a user logged out at RP
beforehand. (In our model, if the attacker issues a
$\closecorrupt$, he can therefore still access the LPO
session cookie.)  Hence, another user of the same browser
could request new UCs for \emph{any} ID registered at LPO
for that user, and hence, log in at any RP under this ID,
which violates Condition \textbf{(A)}.\footnote{See
  \url{https://github.com/mozilla/browserid/issues/3769}}

\myparagraph{Proposed fix.} In ephemeral sessions, LPO
should limit the cookie lifetime to the browser
session.

\subsection{Analysis of BrowserID with sIdP}\label{sec:analysisbrowseridsidp}\label{sec:browerIDmodel}

We now present an overview of our formal model and analysis
of BrowserID with sIdP. More details are presented in
Appendix~\ref{app:browseridmodel}. We consider ephemeral
sessions (the default), which are supposed to last until
the browser is closed. We assume that users are already
registered at LPO, i.e., they have accounts at LPO with one
or more email addresses registered in each account.

More specifically, we first model the BrowserID system as a
web system (in the sense of Section~\ref{sec:webmodel}),
then precisely formalize the security properties already
sketched in Section~\ref{sec:securitypropsBrowserID} in
this model, and finally prove, for the BrowserID model with
the fixes proposed in Section~\ref{sec:attacks-browserid}
applied (otherwise the proof would not go through), that
these security properties are satisfied. 

We call a web system $\bidwebsystem=(\bidsystem,
\scriptset, \mathsf{script}, E_0)$ a \emph{BrowserID web
  system} if it is of the form described in what follows.

The system $\bidsystem=\mathsf{Hon}\cup \mathsf{Web} \cup
\mathsf{Net}$ consists of the (network) attacker process
$\fAP{attacker}$, the web server for $\fAP{LPO}$, a finite
set $\fAP{B}$ of web browsers, and a finite set $\fAP{RP}$
of web servers for the relying parties, with $\mathsf{Hon}
:= \fAP{B} \cup \fAP{RP} \cup \{\fAP{LPO}\}$, $\mathsf{Web}
:= \emptyset$, and $\mathsf{Net} :=
\{\fAP{attacker}\}$. DNS servers are assumed to be
dishonest, and hence, are subsumed by
$\fAP{attacker}$. More details on the processes in
$\bidsystem$ are provided below.

The set $\nonces$ of nonces is partitioned into three sets,
an infinite set $N^\bidsystem$, an infinite set
$K_\text{private}$, and a finite set $\PLISecrets$. The set
$N^\bidsystem$ is further partitioned into infinite sets of
nonces, one set $N^p\subseteq N^\bidsystem$ for every $p\in
\bidsystem$.

The set $\addresses$ contains for $\fAP{LPO}$,
$\fAP{attacker}$, every relying party in $\fAP{RP}$, and
every browser in $\fAP{B}$ one address each. By
$\mapAddresstoAP$ we denote the corresponding assignment
from a process to its address.  The set $\dns$ contains one
domain for $\fAP{LPO}$, one for every relying party in
$\fAP{RP}$, and a finite set of domains for
$\fAP{attacker}$. Each domain is assigned a fresh private
key (a nonce). Additionally, $\fAP{LPO}$ has a fresh
signing key $k^\fAP{LPO}$, which it uses to create UCs.

Each browser $b\in \fAP{B}$ owns a finite set of secrets
($\subseteq \PLISecrets$) for $\fAP{LPO}$ and each secret
is assigned a finite set of email addresses (IDs) of the
form $\an{\mi{name},d}$, with $\mi{name}\in \mathbb{S}$ and
$d\in \dns$, such that browsers have disjoint sets of
secrets and secrets have disjoint sets of IDs. An \emph{ID
  $i$ is owned by a browser $b$} if the secret associated
with $i$ belongs to $b$.

The set $\scriptset$ contains four scripts, with their
string representations defined by $\mathsf{script}$: the
honest scripts running in RP-Doc, CIF, and LD,
respectively, and the malicious script $\Rasp$ (see below
for more details). 

The set $E_0$ contains only the trigger events specified in
Definition~\ref{def:websystem}.

Before we specify the processes in $\websystem$, we first
note that a UC $\mi{uc}$ for a user $u$ with email address
$i$ and public key (verification key) $\pub(k_u)$, where
$k_u$ is the private key (signing key) of $u$, is modeled
to be a message of the form $\mi{uc}=\sig{\an{i,
    \pub(k_u)}}{k^\fAP{LPO}}$, with $k^\fAP{LPO}$ as
defined above. An IA $ia$ for an origin $\mi{o}$ (e.g.,
$\an{\str{example.com},\str{S}}$) is a message of the form
$ia=\sig{o}{k_u}$. Now, a CAP is of the form
$\an{\mi{uc},\mi{ia}}$. Note that the time stamps are
omitted both from the UC and the IA\@. This models that
both certificates are valid indefinitely. In reality, as
explained in Section~\ref{sec:browserid}, they are valid
for a certain period of time, as indicated by the time
stamps. So our modeling is a safe overapproximation.

We are now ready to define the processes in $\websystem$ as
well as the scripts in $\scriptset$ in more detail. We note
that in Appendix~\ref{app:browseridmodel}, we provide a detailed formal
specification of the processes and scripts in the style of
Algorithm~\ref{alg:dns}. 

All processes in $\websystem$ contain in their initial
states all public keys and the private keys of their
respective domains (if any). We define
$I^p=\{\mapAddresstoAP(p)\}$ for all $p\in
\mathsf{Hon}$.  

\myparagraph{Attacker.}  The $\fAP{attacker}$ process is a
network attacker (see Section~\ref{sec:websystem}), who
uses all addresses for sending and listening. All parties
use the attacker as a DNS server. See
Appendix~\ref{app:attacker} for details.

\myparagraph{Browsers.}  Each $b \in \fAP{B}$ is a web
browser as defined in Section~\ref{sec:web-browsers}. The
initial state contains all secrets owned by $b$, stored
under the origin $\an{\mapDomain(\fAP{LPO}),\str{S}}$ of
LPO; $\mi{sts}$ is $\an{\mapDomain(\fAP{LPO})}$. See
Appendix~\ref{sec:browsers} for details.

\myparagraph{LPO.} The initial state of $\fAP{LPO}$
contains its signing key $k^\fAP{LPO}$, all secrets in
$\PLISecrets$ and the corresponding IDs. The definition of
$R^\fAP{LPO}$ closely follows the description of LPO in
Section~\ref{sec:ident-prov-fallb}. Sessions of $\fAP{LPO}$
expire non-deterministically. UCs are signed using
$k^\fAP{LPO}$. See Appendix~\ref{sec:lpo} for details.

\myparagraph{Relying Parties.}  A relying party $r \in
\fAP{RP}$ is a web server. The definition of $R^r$ follows
the description in Section~\ref{sec:browserid} and the
security considerations in
\cite{mozilla/persona/mdn}.\footnote{Mozilla recommends to
  (1) protect against Cross-site Request Forgery ($R^r$
  checks the Origin header, which is always set in our
  model), (2) verify CAPs on the server (rather than in the
  browser), (3) check if the CAP is issued for the correct
  RP, and (4) verify SSL certificates.}  RP answers any
$\mGet$ request with the script $\str{script\_RP\_index}$
(see below). When receiving an HTTPS $\mPost$ message, RP
checks (among others) if the message contains a valid
CAP\@. If successful, RP responds with an \emph{RP service
  token for ID $i$} of the form $\an{n, i}$, where $i \in
\IDs$ is the ID for which the CAP was issued and $n$ is a
freshly chosen nonce. The RP $r$ keeps a list of such
tokens in its state. Intuitively, a client having such a
token can use the service of $r$ for ID $i$. See
Appendix~\ref{app:relying-parties} for details.

\myparagraph{BrowserID Scripts.}  The set $\scriptset$
consists of the scripts $\Rasp$,\! $\mi{script\_RP\_index}$,\!
$\mi{script\_LPO\_cif}$, and $\mi{script\_LPO\_ld}$ with
their string representations $\str{att\_script}$,
$\str{script\_RP\_index}$,\linebreak $\str{script\_LPO\_cif}$, and
$\str{script\_LPO\_ld}$. The latter two scripts (issued by
$\fAP{LPO}$) are defined in a straightforward way following
the implementation outlined in
Section~\ref{sec:browserid}. The script
$\mi{script\_RP\_index}$ (issued by RP) also includes the
script that is (in reality) loaded from LPO\@. In
particular, this script creates the CIF and the LD
\iframes/subwindows, whose contents (scripts) are loaded
from LPO. See Appendix~\ref{app:browserid-scripts} for
details.

\subsection{Security of the Fixed System}

We call a BrowserID web system $\bidwebsystem$ with the
fixes proposed in Section~\ref{sec:attacks-browserid} a
\emph{fixed BrowserID web system}. We now obtain the
following theorem, which says that such a system satisfies
the security properties \textbf{(A)} and \textbf{(B)}.

\begin{theorem}\label{the:securityFixedBrowserID}
\label{THE:SECURITYFIXEDBROWSERID} %
  Let $\bidwebsystem$ be a fixed BrowserID web
  system. Then, $\bidwebsystem$ is secure. 
\end{theorem}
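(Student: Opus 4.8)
The plan is to prove the two security properties \textbf{(A)} and \textbf{(B)} separately, in each case arguing by contradiction: assume there is a run $\rho$ of some fixed BrowserID web system $\bidwebsystem$ in which the relevant property is violated, and derive a contradiction by tracing back how the offending message (an RP service token for an honest ID, or an RP-Doc that is logged in under a foreign ID) could have come into existence. The central objects to control are the \emph{user certificates} $\mi{uc}=\sig{\an{i,\pub(k_u)}}{k^\fAP{LPO}}$ and the \emph{identity assertions} $\mi{ia}=\sig{o}{k_u}$: since $k^\fAP{LPO}$ is a fresh nonce never sent out by $\fAP{LPO}$, and the signing keys $k_u$ of honest browsers are generated inside the browser and stored only in $\mathtt{localStorage}$ under the (HTTPS-only, STS-pinned) origin of $\fAP{LPO}$, the first technical goal is a family of \emph{secrecy lemmas}: (i) $k^\fAP{LPO}\notin d_{\emptyset}$-closure of the attacker's knowledge at any point in $\rho$; (ii) for an honest, non-$\closecorrupt$ browser $b$, the private key $k_u$ associated with an ID $i$ owned by $b$ stays secret; and (iii) the LPO session secret / $\mathtt{browserid\_state}$ cookie for such a $b$ stays secret. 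Each of these is proved by induction over the length of $\rho$, inspecting every rule of the browser relation $R^p$, of $R^\fAP{LPO}$, of $R^r$, and of the honest scripts, and checking that none of them leaks the secret; the \emph{fixes} enter here precisely — the key-cleanup fix guarantees $k_u$ is not left in $\mathtt{localStorage}$ to be read after a $\closecorrupt$, and the cookie-cleanup fix guarantees the LPO session cookie does not survive a $\closecorrupt$.

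\smallskip\noindent Next I would establish a set of \emph{provenance / authenticity lemmas} that say: every valid UC for an honest ID $i$ that ever appears in $\rho$ was created by $\fAP{LPO}$ in a session it considered authenticated for $i$, and $\fAP{LPO}$ only regards a session as authenticated for $i$ after receiving the correct secret for $i$ over HTTPS (which, by the browser analysis plus the secrecy of the browser's secrets, can only have been sent by $b$ itself acting under the LD script). Similarly, every valid IA $\sig{o}{k_u}$ for an honest $k_u$ was produced by the LD (or CIF) script running in $b$ under the LPO origin, and that script only signs the origin $o$ of the window that actually initiated the login dialog. Here the \emph{login-injection fix} is essential: because RP-Doc now checks that the $\mathtt{postMessage}$ carrying the CAP has sender origin $\fAP{LPO}$, an attacker-controlled window cannot inject its own CAP into RP-Doc, so the ID under which RP-Doc ends up logged in is the one LD chose, which is an ID owned by $b$ — giving property \textbf{(B)}. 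For property \textbf{(A)}, combine the secrecy lemmas with the provenance lemmas: an RP service token for an honest ID $i$ is, by $R^r$, only emitted upon receipt over HTTPS of a valid CAP $\an{\mi{uc},\mi{ia}}$ for $i$ with $o$ equal to that RP's origin; $\mi{ia}=\sig{o}{k_u}$ forces knowledge of $k_u$, which by lemma (ii) only $b$ (while honest) has, so the token is delivered to $b$ over the HTTPS channel and never reaches the attacker — note that in the sIdP model there is no Sideshow/BigTent/OpenID layer, so the identity-forgery issue does not even arise, and the UC-forgery route is closed by lemma (i).

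\smallskip\noindent Throughout, several standard structural facts about the communication model are needed and would be isolated as preliminary lemmas: HTTPS messages to an honest domain $d$ can only be decrypted by the holder of $k_d$ (so honest servers' HTTPS traffic is confidential and the responses are authentic), DNS being under attacker control does not help because all security-relevant requests from honest parties go to STS-pinned HTTPS origins whose keys the attacker lacks, and the $\Rasp$ script (modeling any malicious script, including XSS) can do no more than derive messages from what a document of its origin is already given — in particular it is never given an honest browser's secrets, private keys, or $\mathtt{httpOnly}$/foreign-origin cookies. I expect the main obstacle to be the sheer size of the case analysis in the secrecy and provenance inductions: the browser relation has many sub-cases (the nine script commands, redirects, cookie handling, both corruption modes, $\mathtt{postMessage}$ delivery with origin checks), and one must verify for \emph{every} transition of \emph{every} honest process and script that the inductive invariant — a conjunction asserting secrecy of $k^\fAP{LPO}$, of each honest $k_u$ and browser secret, of LPO session secrets, plus the provenance shape of every valid UC/IA/CAP in the configuration, plus the fact that honest RP-Docs are only ever logged in under owned IDs — is preserved; pinning down exactly the right invariant so that it is both strong enough to close the argument and inductive under all these transitions is the real work, and the four fixes are exactly the hypotheses that make it inductive rather than false.
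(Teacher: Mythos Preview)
Your proposal is correct and follows essentially the same strategy as the paper: prove Conditions \textbf{(A)} and \textbf{(B)} separately by contradiction, via a battery of secrecy lemmas (for $k^\fAP{LPO}$, the user signing keys $k_u$, and the LPO session cookie/xsrfToken) together with provenance lemmas for UCs and IAs, noting where each fix is needed to make the invariants hold. The paper organizes this as a sequence of standalone lemmas and then a concrete case-split on which atomic process created the HTTPS request $\mi{req}_\text{cap}$ carrying the CAP (further split on whether the attacker knows the ephemeral HTTPS key $k$), rather than as one monolithic inductive invariant, but the content is the same.
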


The proof of this theorem is presented in
Appendix~\ref{app:proofbrowserid}.

\section{Related Work}\label{sec:relatedwork}

Early work in the direction of formal web security analysis
includes work by
Kerschbaum~\cite{kerschbaum-SP-2007-XSRF-prevention}, in
which a Cross-Site Request Forgery protection proposal is
formally analyzed using a simple model expressed using
Alloy, a finite-state model checker
\cite{Jackson-TACAS-2002-Alloy}.

In seminal work, Akhawe et
al.\cite{AkhawBarthLamMitchellSong-CSF-2010} initiated a
more general formal treatment of web security.  Again the
model was provided in the Alloy modeling language. Inspired
by this work, Bansal et
al.\cite{BansalBhargavanetal-POST-2013-WebSpi,BansalBhargavanMaffeis-CSF-2012}
built the WebSpi model for the web infrastructure, which is
encoded in the modeling language (a variant of the applied
pi-calculus \cite{AbadiFournet-POPL-2001}) of ProVerif, a
specialized tool for cryptographic protocol analysis
\cite{Blanchet-CSFW-2001}. Both models have successfully
been applied to find attacks in standards and web
applications.

We see our work as a complement to these models: On the one
hand, the above models support (fully) automated
analysis. On the other hand, our model is much more
comprehensive and accurate, but not directly suitable for
automation.\footnote{The tool-based models are necessarily
  tailored to and limited by constraints of the tools.  For
  example, models for Alloy are necessarily finite
  state. Terms (messages) need to be encoded in some way as
  they are not directly supported. Due to the analysis
  method employed in ProVerif, the WebSpi model is of a
  monotonic nature. For instance, cookies and localStorage
  entries can only be added, but not deleted or modified.
  Also, the number of cookies per request is
  limited. Several features (that have been crucial for the
  analysis of BrowserID) are not supported by the
  tool-based models, including the precise handling of
  windows, documents, and iframes as well as cross-document
  messaging (postMessages), and the ability for an attacker
  to take over a browser after it has been closed. Dealing
  with such features in an automated tool is indeed
  challenging.} We think that, similarly to the area of
cryptography, both approaches, automated analysis and
manual analysis, are very valuable. Clearly, it is highly
desirable to push automated analysis as much as possible,
given that manual proofs are laborious and
error-prone. Conversely, automated approaches may miss
important problems due to the less accurate models they
consider. Moreover, a ``service'' more comprehensive and
accurate models provide, even if they are manually driven,
is that they summarize and condense relevant aspects in the
various standards and specifications for the web. As such,
they are an important basis for the formal foundation and
discourse on web security and can serve as reference models
(for tool-supported analysis, web security researchers, for
developers of web technologies and standards, and maybe for
teaching basic web security concepts).

The BrowserID system has been analyzed before using the
AuthScan tool developed by Bai et
al.~\cite{baiLeietal-NDSS-2013-authscan}. Their work
focusses on the automated extraction of a model from a
protocol implementation. Their analysis of BrowserID is not
very detailed; only two rather trivial attacks are
identified, for example, CAPs that are sent unencrypted can
be replayed by the attacker to an RP\@. There is also work
on the analysis of other web-based single sign-on systems,
such as SAML-based single sign-on, OpenID, and OAuth (see,
e.g.,
\cite{Kumar-ACSAC-2012,ArmandoCarboneetal-FMSE-2008-SAML,HandenSkriverNielson-WITS-2005,SantsaiHaekyBeznosov-CS-2012,McIntyreLuterrothChristofWeber-EDOC-2011-OpenID,SantsaiBeznosov-CCS-2012-OAuth,Gross-ACSAC-2003-SAML,ChariJutlaRoy-IACR-2011,SovisKohlarSchwenk-Sicherheit-2010-OpenID-signatures,WangChenWang-SP-2012-SSO}). However,
none of these works are based on a model of the web
infrastructure.

In
\cite{SonShmatikov-NDSS-2013-postMessage,SovisKohlarSchwenk-Sicherheit-2010-OpenID-signatures,WangChenWang-SP-2012-SSO,HannaShinAkhaweetal-W2SP-2010-postmessage},
potentially problematic usage of postMessages and the
OpenID interface are discussed. While very useful, these
papers do not consider BrowserID or formal models, and they
do not formalize security properties for web applications
or establish formal security guarantees.

Bohannon and Pierce propose a formal model of a web browser
core \cite{BohannonPierce-USENIX-2010}. The scope and goal
of the model is different to ours, but some mechanisms can
be found in both models. B\"orger et al.~present an
approach for the analysis of web application frameworks,
focussing on the server \cite{Borgeretal-ABZ-2012}.

\section{Conclusion}\label{sec:conclusion}

We presented an expressive model of the web infrastructure
and web applications, the most comprehensive model for the
web infrastructure to date.  It contains many
security-relevant features and is designed to closely mimic
standards and specifications for the web. As such, it
constitutes a solid basis for the analysis of a broad range
of web standards and applications.

In our case study, we analyzed the BrowserID system, found
several very critical attacks, proposed fixes, and proved
the fixed system for the case of secondary IdP case secure
w.r.t.~the security properties we specified. The analysis
of this system is out of the scope of other models for the
web infrastructure.

As for future work, it is straightforward to incorporate
further features, such as subdomains, cross-origin resource
sharing, and finer-grained settings for cookie paths and
domains, which we have left out mainly for brevity of
presentation for now. Our model could serve as a basis and
a reference for automated approaches, where one could try
to extend the existing automated approaches or develop new
ones (e.g., based on theorem provers, where higher accuracy
is typically paid by more interaction). Finally, BrowserID
is being used by more and more web sites and it will
continue to be an interesting object of study. An obvious
next step is to analyze BrowserID for the case of primary
IdPs. The model is already expressive enough to carry out
such an analysis.  We also plan to apply our model to other
web applications and web standards.

\section*{Acknowledgement} 

The first author is supported by the \emph{Studienstiftung
  des Deutschen Volkes} (German National Academic
Foundation).

\appendix

\section{Message and Data
  Formats}\label{app:message-data-formats}

We now provide some more details about data and message
formats that are needed for the formal treatment of our web
model and our case study presented in the rest of the
appendix.

\subsection{Notations}\label{app:notation}
For a set $s$ and a sequence $t = \an{t_1,\dots,t_n}$ we
use $t \subsetPairing s$ to say that $t_1,\dots,t_n \in
s$.  We define $\left. x \inPairing t\right. \iff \exists
i: \left. t_i = x\right.$.
We write $t \plusPairing y$ to denote the sequence
$\an{t_1,\dots,t_n,y}$. For a sequence $t=
\an{t_1,\ldots,t_n}$ we define $|t| = n$. If $t$ is not a
sequence, we set $|t| = \notdef$. For a finite set $M$ with
$M = \{m_1, \dots,m_n\}$ we use $\an{M}$ to denote the term
of the form $\an{m_1,\dots,m_n}$. (The order of the
elements does not matter; one arbitrary is chosen.)

We also use specific terms which we call dictionaries.
\begin{definition}\label{def:dictionaries}
  A \emph{dictionary over $X$ and $Y$} is a term of the
  form $\an{\an{k_1, v_1}, \dots, \an{k_n,v_n}}$, where
  $k_1, \dots,k_n \in X$, $v_1,\dots,v_n \in Y$, and the
  keys $k_1, \dots,k_n$ are unique, i.e., $\forall i\neq j:
  k_i \neq k_j$. We call every term $\an{k_i,v_i}$, $i\in
  \{1,\ldots,n\}$, an \emph{element} of the dictionary with
  key $k_i$ and value $v_i$.  We often write $\left[k_1:
    v_1, \dots, k_i:v_i,\dots,k_n:v_n\right]$ instead of
  $\an{\an{k_1, v_1}, \dots, \an{k_n,v_n}}$. We denote the
  set of all dictionaries over $X$ and $Y$ by $\left[X
    \times Y\right]$.
\end{definition}
We note that the empty dictionary is equivalent to the
empty sequence: $[] = \an{}$.  Figure
\ref{fig:dictionaries} shows the short notation for
dictionary operations that will be used when describing the
browser atomic process. For a dictionary $z = \left[k_1:
  v_1, k_2: v_2,\dots k_n:v_n\right]$ we write $k \in z$ to
say that there exists $i$ such that $k=k_i$. We write
$z[k_j] := v_j$ to extract elements. If $k \not\in z$, we
set $z[k] := \an{}$.

\begin{figure}[htb!]\centering
  \begin{align}
    \left[k_1: v_1, \dots, k_i:v_i,\dots,k_n:v_n\right][k_i] = v_i%
  \end{align}\vspace{-2.5em}
  \begin{align}
    \nonumber \left[k_1: v_1, \dots, k_{i-1}:v_{i-1},k_i: v_i, k_{i+1}:v_{i+1}\dots,k_n: v_n\right]-k_i =\\
         \left[k_1: v_1, \dots, k_{i-1}:v_{i-1},k_{i+1}:v_{i+1}\dots,k_n: v_n\right]
  \end{align}
  \caption{Dictionary operators with $1\le i\le n$.}\label{fig:dictionaries}
\end{figure}

Given a term $t = \an{t_1,\dots,t_n}$, we can refer to any
subterm using a sequence of integers. The subterm is
determined by repeated application of the projection
$\pi_i$ for the integers $i$ in the sequence. We call such
a sequence a \emph{pointer}:

\begin{definition}\label{def:pointer}
  A \emph{pointer} is a sequence of non-negative integers. 
\end{definition}

We write $\tau.\ptr{p}$ for the application of the pointer
$\ptr{p}$ to the term $\tau$. This operator is applied from
left to right. For pointers consisting of a single integer,
we may omit the sequence braces for brevity.

\begin{example}
  For the term $\tau = \an{a,b,\an{c,d,\an{e,f}}}$ and the
  pointer $\ptr{p} = \an{3,1}$, the subterm of $\tau$ at
  the position $\ptr{p}$ is $c =
  \proj{1}{\proj{3}{\tau}}$. Also, $\tau.3.\an{3,1} =
  \tau.3.\ptr{p} = \tau.3.3.1 = e$.
\end{example}

To improve readability, we try to avoid writing, e.g.,
$\compn{o}{2}$ or $\proj{2}{o}$ in this document. Instead,
we will use the names of the components of a sequence that
is of a defined form as pointers that point to the
corresponding subterms. E.g., if an \emph{Origin} term is
defined as $\an{\mi{host}, \mi{protocol}}$ and $o$ is an
Origin term, then we can write $\comp{o}{protocol}$ instead
of $\proj{2}{o}$ or $\compn{o}{2}$. See also
Example~\ref{ex:url-pointers}.

In our pseudocode, we will write, for example, 

\medskip

\begin{algorithmic}
  \LetST{$x,y$}{$\an{\str{Constant},x,y} \equiv
    t$}{doSomethingElse}
\end{algorithmic}

\medskip

\noindent for some variables $x,y$, a string
$\str{Constant}$, and some term $t$ to express that $x :=
\proj{2}{t}$, and $y := \proj{3}{t}$ if $\str{Constant}
\equiv \proj{1}{t}$ and if $|\an{\str{Constant},x,y}| =
|t|$,  and that otherwise
$x$ and $y$ are not set and doSomethingElse is executed.

\subsection{URLs}\label{sec:urls}
\begin{definition}
  A \emph{URL} is a term $\an{\tUrl, \mi{protocol},
    \mi{host}, \mi{path}, \mi{params}}$ with $\mi{protocol}
  \in \{\http, \https\}$ (for \textbf{p}lain (HTTP) and
  \textbf{s}ecure (HTTPS)), $\mi{host} \in \dns$,
  $\mi{path} \in \mathbb{S}$ and $\mi{params} \in
  \dict{\mathbb{S}}{\terms}$. The set of all valid URLs
  is $\urls$.
\end{definition}

\begin{example} \label{ex:url-pointers}
  For the URL $u = \an{\tUrl, a, b, c, d}$, $\comp{u}{protocol} =
  a$. If, in the algorithm described later, we say $\comp{u}{path} :=
  e$ then $u = \an{\tUrl, a, b, c, e}$ afterwards. 
\end{example}

\subsection{Origins}\label{sec:origins}
\begin{definition} An \emph{origin} is a term of the form
  $\an{\mi{host}, \mi{protocol}}$ with $\mi{host} \in
  \dns$, $\mi{protocol} \in \{\http, \https\}$. We write
  $\origins$ for the set of all origins.  See Example~\ref{ex:window} for an example for an origin.
\end{definition}

\subsection{Cookies}\label{sec:cookies}
\begin{definition} A \emph{cookie} is a term of the form
  $\an{\mi{name}, \mi{content}}$ where $\mi{name} \in
  \terms$, and $\mi{content}$ is a term of the form
  $\an{\mi{value}, \mi{secure}, \mi{session},
    \mi{httpOnly}}$ where $\mi{value} \in
  \terms$,  $\mi{secure}$, $\mi{session}$,
  $\mi{httpOnly} \in \{\True, \bot\}$.  We write $\cookies$
  for the set of all cookies.
\end{definition}

Note that cookies of the form described here are only
contained in HTTP(S) requests. In responses, only the
components $\mi{name}$ and $\mi{value}$ are transferred as
a pairing of the form $\an{\mi{name}, \mi{value}}$.

\subsection{HTTP Messages}\label{sec:http-messages-full}
\begin{definition}
  An \emph{HTTP request} is a term of the form shown in
  (\ref{eq:default-http-request}). An \emph{HTTP response}
  is a term of the form shown in
  (\ref{eq:default-http-response}).
  \begin{align}
    \label{eq:default-http-request}
    & \hreq{ nonce=\mi{nonce}, method=\mi{method},
      xhost=\mi{host}, xpath=\mi{path},
      parameters=\mi{parameters}, headers=\mi{headers},
      xbody=\mi{body}
    } \\
    \label{eq:default-http-response}
    & \hresp{ nonce=\mi{nonce}, status=\mi{status},
      headers=\mi{headers}, xbody=\mi{body} }
  \end{align}
  The components are defined as follows:
  \begin{itemize}
  \item $\mi{nonce} \in \nonces$ serves to map each
    response to the corresponding request 
  \item $\mi{method} \in \methods$ is one of the HTTP
    methods.
  \item $\mi{host} \in \dns$ is the host name in the HOST
    header of HTTP/1.1.
  \item $\mi{path} \in \mathbb{S}$ is a string indicating
    the requested resource at the server side
  \item $\mi{status} \in \mathbb{S}$ is the HTTP status
    code (i.e., a number between 100 and 505, as defined by
    the HTTP standard)
  \item $\mi{parameters} \in
    \dict{\mathbb{S}}{\terms}$ contains URL parameters
  \item $\mi{headers} \in \dict{\mathbb{S}}{\terms}$,
    containing request/response headers. The dictionary
    elements are terms of one of the following forms: 
    \begin{itemize}
    \item $\an{\str{Origin}, o}$ where $o$ is an origin
    \item $\an{\str{Set{\mhyphen}Cookie}, c}$ where $c$ is
      a sequence of cookies
    \item $\an{\str{Cookie}, c}$ where $c \in
      \dict{\mathbb{S}}{\terms}$ (note that in this header,
      only names and values of cookies are transferred)
    \item $\an{\str{Location}, l}$ where $l \in \urls$
    \item $\an{\str{Strict{\mhyphen}Transport{\mhyphen}Security},\True}$
    \end{itemize}
  \item $\mi{body} \in \terms$ in requests and responses. 
  \end{itemize}
  We write $\httprequests$/$\httpresponses$ for the set of
  all HTTP requests or responses, respectively.
\end{definition}

\begin{example}[HTTP Request and Response]
  \begin{align}
    \label{eq:ex-request}
    \nonumber \mi{r} := & \langle
                   \cHttpReq,
                   n_1,
                   \mPost,
                   \str{example.com},
                   \str{/show},
                   \an{\an{\str{index,1}}},\\ & \quad
                   [\str{Origin}: \an{\str{example.com, \https}}],
                   \an{\str{foo}, \str{bar}}
                \rangle \\
    \label{eq:ex-response} \mi{s} := & \hresp{ nonce=n_1,
      status=200,
      headers=\an{\an{\str{Set{\mhyphen}Cookie},\an{\an{\str{SID},\an{n_2,\bot,\bot,\True}}}}},
      xbody=\an{\str{somescript},x}}
  \end{align}
  \noindent
  An HTTP $\mGet$ request for the URL
  \url{http://example.com/show?index=1} is shown in
  (\ref{eq:ex-request}), with an Origin header and a body
  that contains $\an{\str{foo},\str{bar}}$. A possible
  response is shown in (\ref{eq:ex-response}), which
  contains an httpOnly cookie with name $\str{SID}$ and
  value $n_2$ as well as the string representation
  $\str{somescript}$ of the scripting process
  $\mathsf{script}^{-1}(\str{somescript})$ (which should be
  an element of $\scriptset$) and its initial state
  $x$.
\end{example}

\subsubsection*{Encrypted HTTP
  Messages} \label{sec:http-messages-encrypted-full}

For HTTPS, requests are encrypted using the public key of
the server.  Such a request contains an (ephemeral)
symmetric key chosen by the client that issued the
request. The server is supported to encrypt the response
using the symmetric key.

\begin{definition} An \emph{encrypted HTTP request} is of
  the form $\enc{\an{m, k'}}{k}$, where $k$, $k' \in
  \nonces$ and $m \in \httprequests$. The corresponding
  \emph{encrypted HTTP response} would be of the form
  $\encs{m'}{k'}$, where $m' \in \httpresponses$. We call
  the sets of all encrypted HTTP requests and responses
  $\httpsrequests$ or $\httpsresponses$, respectively.
\end{definition}

\begin{example}
  \begin{align}
    \label{eq:ex-enc-request} \ehreqWithVariable{r}{k'}{\pub(k_\text{example.com})} \\
    \label{eq:ex-enc-response} \ehrespWithVariable{s}{k'}
  \end{align} The term (\ref{eq:ex-enc-request}) shows an
  encrypted request (with $r$ as in
  (\ref{eq:ex-request})). It is encrypted using the public
  key $\pub(k_\text{example.com})$.  The term
  (\ref{eq:ex-enc-response}) is a response (with $s$ as in
  (\ref{eq:ex-response})). It is encrypted symmetrically
  using the (symmetric) key $k'$ that was sent in the
  request (\ref{eq:ex-enc-request}).
\end{example}

\subsection{DNS Messages}\label{sec:dns-messages}
\begin{definition} A \emph{DNS request} is a term of the form
$\an{\cDNSresolve, \mi{domain}, \mi{nonce}}$ where $\mi{domain} \in
\dns$, $\mi{nonce} \in \nonces$. We call the set of all DNS requests
$\dnsrequests$.
\end{definition}

\begin{definition} A \emph{DNS response} is a term of the form
$\an{\cDNSresolved, \mi{result}, \mi{nonce}}$ with $\mi{result} \in
\addresses$, $\mi{nonce} \in \nonces$. We call the set of all DNS
responses $\dnsresponses$.
\end{definition}

As already mentioned in Section~\ref{sec:DNSservers}, DNS
servers are supposed to include the nonce they received in
a DNS request in the DNS response that they send back.

\section{Detailed Description of the Browser Model}
\label{sec:deta-descr-brows}

Following the informal description of the browser model in
Section~\ref{sec:web-browsers}, we now present a formal
model. We start by introducing some notation and
terminology. 

\subsection{Notation and Terminology (Web Browser State)}

\begin{definition} A \emph{window} is a term of the form $w
  = \an{\mi{nonce}, \mi{documents}, \mi{opener}}$ with
  $\mi{nonce} \in \nonces$, $\mi{documents} \subsetPairing
  \documents$ (defined below), $\mi{opener} \in \nonces
  \cup \{\bot\}$ where $\comp{d}{active} = \True$ for
  exactly one $d \inPairing \mi{documents}$ if
  $\mi{documents}$ is not empty (we then call $d$ the
  \emph{active document of $w$}).  We write $\windows$ for
  the set of all windows.  We write
  $\comp{w}{activedocument}$ to denote the active document
  inside window $w$ if it exists and $\an{}$ else.
\end{definition}
We will refer to the window nonce as \emph{(window)
  reference}. See Example~\ref{ex:window} for an example of
a window.

The documents contained in a window term
to the left of the active document are the previously
viewed documents (available to the user via the ``back''
button) and the documents in the window term to the right
of the currently active document are documents available
via the ``forward'' button, as will be clear from the
description of web browser model (see
Section~\ref{sec:descr-web-brows}).

A window $a$ may have opened a top-level window
$b$ (i.e., a window term which is not a subterm of a
document term). In this case, the \emph{opener} part of the
term $b$ is the nonce of $a$, i.e., $\comp{b}{opener} =
\comp{a}{nonce}$.

\begin{definition} A \emph{document} $d$ is a term of the
  form 
  \begin{align*}
  \an{\mi{nonce}, \mi{origin}, \mi{script},
  \mi{scriptstate},\mi{scriptinput}, \mi{subwindows},
  \mi{active}}  
  \end{align*}
 where $\mi{nonce} \in \nonces$,
  $\mi{origin} \in \origins$, $\mi{script}$,
  $\mi{scriptstate}$, $\mi{scriptinput} \in \terms$,
  $\mi{subwindows} \subsetPairing \windows$, $\mi{active}
  \in \{\True, \bot\}$.  A \emph{limited document} is a
  term of the form $\an{\mi{nonce}, \mi{subwindows}}$ with
  $\mi{nonce}$, $\mi{subwindows}$ as above.  A window $w
  \inPairing \mi{subwindows}$ is called a \emph{subwindow}
  (of $d$).  We write $\documents$ for the set of all
  documents.
\end{definition}
We will refer to the document nonce as \emph{(document)
  reference}.  An example for window and document terms was given in
Example~\ref{ex:window}.

We can now define the set of states of web browsers. Note
that we use the dictionary notation that we introduced in
Definition~\ref{def:dictionaries}.

\begin{definition} Let $\mi{OR} := \left\{\an{o,r} \middle|\,
    o \in \origins,\, r \in \nonces\right\}$. The \emph{set
    of states $Z^p$ of a web browser atomic process} $p$
  consists of the terms of the form
  \begin{align*}
    \langle\mi{windows}, \mi{secrets}, \mi{cookies},
    \mi{localStorage}, \mi{sessionStorage},
    \mi{keyMapping}, \\\mi{sts}, \mi{DNSaddress},
    \mi{nonces}, \mi{pendingDNS}, \mi{pendingRequests},
    \mi{isCorrupted}\rangle
  \end{align*}
  where $\mi{windows} \subsetPairing \windows$,
  $\mi{secrets} \in \dict{\origins}{\nonces}$,
  $\mi{cookies}$ is a dictionary over $\dns$ and
  dictionaries of $\cookies$, $\mi{localStorage} {\in}\!
  \dict{\origins}{\terms}$, $\mi{sessionStorage} {\in}\!
  \dict{\mi{OR}}{\terms}$, $\mi{keyMapping} {\in}$ $
  \dict{\dns}{\nonces}$, $\mi{sts} {\subsetPairing} \dns$,
  $\mi{DNSaddress} {\in} \addresses$, $\mi{nonces}
  {\subsetPairing} \nonces$, $\mi{pendingDNS} {\in}\!
  \dict{\nonces}{\terms}$, $\mi{pendingRequests} {\in}$
  $\terms$, $\mi{isCorrupted} \in \{\bot, \fullcorrupt,
  \closecorrupt\}$.  We call the set of all states of
  standard HTTP browsers $\mathsf{SHBStates}$.
\end{definition}

\begin{definition} For two window terms $w$ and $w'$ we
  write $w \windowChildOf w'$ if $w \inPairing$ \linebreak
  $\comp{\comp{w'}{activedocument}}{subwindows}$. We write
  $\windowChildOfX$ for the transitive closure.
\end{definition}

In the following description of the web browser relation
$R^p$ we will use the helper functions
$\mathsf{Subwindows}$, $\mathsf{Docs}$, $\mathsf{Clean}$,
$\mathsf{CookieMerge}$ and $\mathsf{AddCookie}$. 

Given a browser state $s$, $\mathsf{Subwindows}(s)$ denotes
the set of all pointers\footnote{Recall the definition of a
  pointer from Definition~\ref{def:pointer}.} to windows in
the window list $\comp{s}{windows}$, their (active)
documents, and the subwindows of these documents
(recursively). We exclude subwindows of inactive documents
and their subwindows. With $\mathsf{Docs}(s)$ we denote the
set of pointers to all active documents in the set of
windows referenced by $\mathsf{Subwindows}(s)$.
\begin{definition} 
  For a browser state $s$ we denote by
  $\mathsf{Subwindows}(s)$ the minimal set of
  pointers that satisfies the
  following conditions: (1) For all windows $w \inPairing
  \comp{s}{windows}$ there is a $\ptr{p} \in
  \mathsf{Subwindows}(s)$ such that $\compn{s}{\ptr{p}} =
  w$. (2) For all $\ptr{p} \in \mathsf{Subwindows}(s)$, the
  active document $d$ of the window $\compn{s}{\ptr{p}}$
  and every subwindow $w$ of $d$ there is a pointer
  $\ptr{p'} \in \mathsf{Subwindows}(s)$ such that
  $\compn{s}{\ptr{p'}} = w$.

  Given a browser state $s$, the set $\mathsf{Docs}(s)$ of
  pointers to active documents is the minimal set such that
  for every $\ptr{p} \in \mathsf{Subwindows}(s)$, there is
  a pointer $\ptr{p'} \in \mathsf{Docs}(s)$ with
  $\compn{s}{\ptr{p'}} =
  \comp{\compn{s}{\ptr{p}}}{activedocument}$.
\end{definition}

The function $\mathsf{Clean}$ will be used to determine
which information about windows and documents the script
running in the document $d$ has access to.
\begin{definition} Let $s$ be a browser state and $d$ a
  document.  By $\mathsf{Clean}(s, d)$ we denote the term
  that equals $\comp{s}{windows}$ but with all inactive
  documents removed (including their subwindows etc.) and
  all subterms that represent non-same-origin documents
  w.r.t.~$d$ replaced by a limited document $d'$ with the
  same nonce and the same subwindow list. Note that
  non-same-origin documents on all levels are replaced by
  their corresponding limited document.
\end{definition}

The function $\mathsf{CookieMerge}$ merges two sequences of
cookies together: When used in the browser,
$\mi{oldcookies}$ is the sequence of existing cookies for
some origin, $\mi{newcookies}$ is a sequence of new cookies
that was outputted by some script. The sequences are merged
into a set of cookies using an algorithm that is based on
the \emph{Storage Mechanism} algorithm described in
RFC6265.
\begin{definition} \label{def:cookiemerge} The set
  $\mathsf{CookieMerge}(\mi{oldcookies}, \mi{newcookies})$
  for two sequences  $\mi{oldcookies}$ and $\mi{newcookies}$
  of cookies (where the cookies in $\mi{oldcookies}$ have
  pairwise different names) is defined by the following
  algorithm: From $\mi{newcookies}$ remove all cookies $c$
  that have $c.\str{content}.\str{httpOnly} \equiv
  \True$. For any $c$, $c' \inPairing \mi{newcookies}$,
  $\comp{c}{name} \equiv \comp{c'}{name}$, remove the
  cookie that appears left of the other in
  $\mi{newcookies}$.  Let $m$ be the set of cookies that
  have a name that either appears in $\mi{oldcookies}$ or
  in $\mi{newcookies}$, but not in both. For all pairs of
  cookies $(c_\text{old}, c_\text{new})$ with $c_\text{old}
  \inPairing \mi{oldcookies}$, $c_\text{new} \inPairing
  \mi{newcookies}$, $\comp{c_\text{old}}{name} \equiv
  \comp{c_\text{new}}{name}$, add $c_\text{new}$ to $m$ if
  $\comp{\comp{c_\text{old}}{content}}{httpOnly} \equiv \bot$
  and add $c_\text{old}$ to $m$ otherwise. The result of
  $\mathsf{CookieMerge}(\mi{oldcookies}, \mi{newcookies})$
  is $m$. 
\end{definition}

The function $\mathsf{AddCookie}$ adds a cookie $c$ to the
sequence of cookies contained in the sequence
$\mi{oldcookies}$. It is again based on the algorithm
described in RFC6265 but simplified for the use in the
browser model.
\begin{definition} \label{def:addcookie} The sequence
  $\mathsf{AddCookie}(\mi{oldcookies}, c)$, where
  $\mi{oldcookies}$ is a sequence of cookies with different
  names and $c$ is a cookie $c$, is defined by the
  following algorithm: Let $m := \mi{oldcookies}$. Remove
  any $c'$ from $m$ that has $\comp{c}{name} \equiv
  \comp{c'}{name}$. Append $c$ to $m$ and return $m$.
\end{definition}

The function $\mathsf{NavigableWindows}$ returns a set of
windows that a document is allowed to navigate. We closely
follow \cite{html5}, Section~5.1.4 for this definition.
\begin{definition} $\mathsf{NavigableWindows}(\ptr{w}, s')$
  is the set $\ptr{W} \subseteq
  \mathsf{Subwindows}(s')$ of pointers to windows that the
  active document in $\ptr{w}$ is allowed to navigate. The
  set $\ptr{W}$ is defined to be the minimal set such that
  for every $\ptr{w'}
  \in \mathsf{Subwindows}(s')$ the following is true: 
\begin{itemize}
\item If $\comp{\comp{\compn{s'}{\ptr{w}'}}{activedocument}}{origin}
      \equiv
      \comp{\comp{\compn{s'}{\ptr{w}}}{activedocument}}{origin}$ (the active documents in $\ptr{w}$ and $\ptr{w'}$ are
  same-origin), then $\ptr{w'} \in \ptr{W}$, and
\item If ${\compn{s'}{\ptr{w}} \childof \compn{s'}{\ptr{w'}}}$
  $\wedge$ $\nexists\, \ptr{w}'' \in
  \mathsf{Subwindows}(s')$ with $\compn{s'}{\ptr{w}'}
  \childof \compn{s'}{\ptr{w}''}$ ($\ptr{w'}$ is a
  top-level window and $\ptr{w}$ is an ancestor window of
  $\ptr{w'}$), then $\ptr{w'} \in \ptr{W}$, and
\item If $\exists\, \ptr{p} \in \mathsf{Subwindows}(s')$ such
  that $\compn{s'}{\ptr{w}'} \windowChildOfX
  \compn{s'}{\ptr{p}}$ $\wedge$
  $\comp{\comp{\compn{s'}{\ptr{p}}}{activedocument}}{origin}
  =
  \comp{\comp{\compn{s'}{\ptr{w}}}{activedocument}}{origin}$
  ($\ptr{w'}$ is not a top-level window but there is an
  ancestor window $\ptr{p}$ of $\ptr{w'}$ with an active
  document that has the same origin as the active document
  in $\ptr{w}$), then $\ptr{w'} \in \ptr{W}$, and
\item If $\exists\, \ptr{p} \in \mathsf{Subwindows}(s')$ such
  that $\comp{\compn{s'}{\ptr{w'}}}{opener} =
  \comp{\compn{s'}{\ptr{p}}}{nonce}$ $\wedge$ $\ptr{p} \in
  \ptr{W}$ ($\ptr{w'}$ is a top-level window---it has an
  opener---and $\ptr{w}$ is allowed to navigate the opener
  window of $\ptr{w'}$, $\ptr{p}$), then $\ptr{w'} \in
  \ptr{W}$. 
\end{itemize}
\end{definition}

\subsection{Description of the Web Browser Atomic
  Process}\label{sec:descr-web-brows}
We will now describe the relation $R^p$ of a standard HTTP
browser $p$.  For a tuple $r =
\left(\left(\left(a{:}f{:}m\right), s\right), \left(M,
    s'\right)\right)$ we define $r$ to belong to $R^p$
if\/f the non-deterministic algorithm presented in
Section~\ref{app:mainalgorithmwebbrowserprocess}, when
given $\left(\left(a{:}f{:}m\right), s\right)$ as input,
terminates with \textbf{stop}~$M$,~$s'$, i.e., with output
$M$ and $s'$. Recall that $\left(a{:}f{:}m\right)$ is an
(input) event and $s$ is a (browser) state, $M$ is a set of
(output) events, and $s'$ is a new (browser) state.

The notation $\textbf{let}\ n \leftarrow N$ is used to
describe that $n$ is chosen non-deterministically from the
set $N$.  We write $\textbf{for each}\ s \in M\ \textbf{do}$
to denote that the following commands (until \textbf{end
  for}) are repeated for every element in $M$, where the
variable $s$ is the current element. The order in which the
elements are processed is chosen non-deterministically.

We first define some functions which will be used in the
main algorithm presented in
Section~\ref{app:mainalgorithmwebbrowserprocess}.

\subsubsection{Functions} \label{app:proceduresbrowser} In
the description of the following functions we use $a$,
$f$, $m$, $s$ and $N^p$ as read-only global input
variables. Also, the functions use the set $N^p$ as a
read-only set. All other variables are local variables or
arguments.

$\mathsf{TAKENONCE}$ returns a nonce from the set of unused
nonces and modifies the browser state such that the nonce
is added to the sequence of used nonces. Note that this
function returns two values, the nonce $n$ and the modified
state $s'$.
\captionof{algorithm}{\label{alg:takenonce}
  Non-deterministically choose a fresh nonce.}
\begin{algorithmic}[1]
  \Function{$\mathsf{TAKENONCE}$}{$s'$}
    \LetND{$n$}{$\left\{x \middle| x \in N^p \wedge x \not\inPairing \comp{s'}{nonces} \right\}$} %
    \Append{$n$}{$\comp{s'}{nonces}$}
    \State \Return $n, s'$
  \EndFunction
\end{algorithmic} \setlength{\parindent}{1em}

The following function, $\mathsf{GETNAVIGABLEWINDOW}$, is
called by the browser to determine the window that is
\emph{actually} navigated when a script in the window
$s'.\ptr{w}$ provides a window reference for navigation
(e.g., for opening a link). When it is given a window
reference (nonce) $\mi{window}$,
$\mathsf{GETNAVIGABLEWINDOW}$ returns a pointer to a
selected window term in $s'$:
\begin{itemize}
\item If $\mi{window}$ is the string $\wBlank$, a new
  window is created and a pointer to that window is
  returned.
\item If $\mi{window}$ is a nonce (reference) and there is
  a window term with a reference of that value in the
  windows in $s'$, a pointer $\ptr{w'}$ to that window term
  is returned, as long as the window is navigable by the
  current window's document (as defined by
  $\mathsf{NavigableWindows}$ above).
\end{itemize}
In all other cases, $\ptr{w}$ is returned instead (the
script navigates its own window).
\captionof{algorithm}{\label{alg:getnavigablewindow}
  Determine window for navigation.}
\begin{algorithmic}[1]
  \Function{$\mathsf{GETNAVIGABLEWINDOW}$}{$\ptr{w}$, $\mi{window}$, $s'$}
    \If{$\mi{window} \equiv \wBlank$} \Comment{Open a new window when $\wBlank$ is used}
      \Let {$n$, $s'$}{\textsf{TAKENONCE}$(s')$}
      \Let{$w'$}{$\an{n, \an{}, \comp{\compn{s'}{\ptr{w}}}{nonce} }$}
      \Append{$w'$}{$\comp{s'}{windows}$}  \textbf{and} let
      $\ptr{w}'$ be a pointer to this new element in $s'$
      \State \Return{$(\ptr{w}', s')$}
    \EndIf
    \LetNDST{$\ptr{w}'$}{$\mathsf{NavigableWindows}(\ptr{w},
      s')$}{$\comp{\compn{s'}{\ptr{w}'}}{nonce} \equiv
      \mi{window}$}{\textbf{return} $(\ptr{w}, s')$}
    \State \Return{$(\ptr{w'}, s')$}
  \EndFunction
\end{algorithmic} \setlength{\parindent}{1em}

The following function takes a window reference as input
and returns a pointer to a window as above, but it checks
only that the active documents in both windows are
same-origin. It creates no new windows.
\captionof{algorithm}{\label{alg:getwindow} Determine same-origin window.}
\begin{algorithmic}[1]
  \Function{$\mathsf{GETWINDOW}$}{$\ptr{w}$, $\mi{window}$, $s'$}
    \LetNDST{$\ptr{w}'$}{$\mathsf{Subwindows}(s')$}{$\comp{\compn{s'}{\ptr{w}'}}{nonce} \equiv \mi{window}$}{\textbf{return} $(\ptr{w}, s')$}
    \If{
      $\comp{\comp{\compn{s'}{\ptr{w}'}}{activedocument}}{origin}
      \equiv
      \comp{\comp{\compn{s'}{\ptr{w}}}{activedocument}}{origin}$
    }
      \State \Return{$(\ptr{w}', s')$}
    \EndIf
    \State \Return{$(\ptr{w}, s')$}
  \EndFunction
\end{algorithmic} \setlength{\parindent}{1em}

The next function is used to stop any pending
requests for a specific window. From the pending requests
and pending DNS requests it removes any requests with the
given window reference $n$.
\captionof{algorithm}{\label{alg:cancelnav} Cancel pending requests for given window.}
\begin{algorithmic}[1]
  \Function{$\mathsf{CANCELNAV}$}{$n$, $s'$}
    \State \textbf{remove all} $\an{n, \mi{req}, \mi{key}, \mi{f}}$ \textbf{ from } $\comp{s'}{pendingRequests}$ \textbf{for any} $\mi{req}$, $\mi{key}$, $\mi{f}$
    \State \textbf{remove all} $\an{x, \an{n, \mi{message}, \mi{protocol}}}$ \textbf{ from } $\comp{s'}{pendingDNS}$ \textbf{for any} $\mi{x}$, $\mi{message}$, $\mi{protocol}$
    \State \Return{$s'$}
  \EndFunction
\end{algorithmic} \setlength{\parindent}{1em}

The following function takes an HTTP request
$\mi{message}$ as input, adds cookie and origin headers to
the message, creates a DNS request for the hostname given
in the request and stores the request in
$\comp{s'}{pendingDNS}$ until the DNS resolution
finishes. For normal HTTP requests, $\mi{reference}$ is a
window reference. For \xhrs, $\mi{reference}$ is a value of
the form $\an{\mi{document}, \mi{nonce}}$ where
$\mi{document}$ is a document reference and $\mi{nonce}$ is
some nonce that was chosen by the script that initiated the
request. $\mi{protocol}$ is either $\http$ or
$\https$. $\mi{origin}$ is the origin header value that is
to be added to the HTTP request.

\captionof{algorithm}{\label{alg:send} Prepare headers, do DNS resolution, save message. }
\begin{algorithmic}[1]
  \Function{$\mathsf{SEND}$}{$\mi{reference}$, $\mi{message}$, $\mi{protocol}$, $\mi{origin}$, $s'$}
    \If{$\comp{\mi{message}}{host} \inPairing \comp{s'}{sts}$}
      \Let{$\mi{protocol}$}{$\https$}
    \EndIf
    \Let{ $\mi{cookies}$}{$\langle\{\an{\comp{c}{name}, \comp{\comp{c}{content}}{value}} | c\inPairing \comp{s'}{cookies}\left[\comp{\mi{message}}{host}\right]$} \label{line:assemble-cookies-for-request}
        \Statex[6] $\wedge \left(\comp{\comp{c}{content}}{secure} \implies \left(\mi{protocol} = \https\right)\right) \}\rangle$ \label{line:cookie-rules-http}
    \Let{$\comp{\mi{message}}{headers}[\str{Cookie}]$}{$\mi{cookies}$}
    \If{$\mi{origin} \not\equiv \bot$}
      \Let{$\comp{\mi{message}}{headers}[\str{Origin}]$}{$\mi{origin}$}
    \EndIf
    \Let{$n, s'$}{\textsf{TAKENONCE}$(s')$} 
    \Let{$\comp{s'}{pendingDNS}[n]$}{$\an{\mi{reference},
        \mi{message}, \mi{protocol}}$} \label{line:add-to-pendingdns}
    \State \textbf{stop} $\{(\comp{s'}{DNSaddress} : a :
    \an{\cDNSresolve, \mi{host}, n})\}$, $s'$
  \EndFunction
\end{algorithmic} \setlength{\parindent}{1em}
\noindent

The following two functions have informally been described
in Section~\ref{sec:browserrelation}.

The function $\mathsf{RUNSCRIPT}$ performs a script
execution step of the script in the document
$\compn{s'}{\ptr{d}}$ (which is part of the window
$\compn{s'}{\ptr{w}}$). A new script and document state is
chosen according to the relation defined by the script and
the new script and document state is saved. Afterwards, the
$\mi{command}$ that the script issued is interpreted.  Note
that \textbf{for each} (Line~\ref{line:for-each}) works in
a non-deterministic order.

\captionof{algorithm}{\label{alg:runscript} Execute a script.}
\begin{algorithmic}[1]
  \Function{$\mathsf{RUNSCRIPT}$}{$\ptr{w}$, $\ptr{d}$, $s'$}
    \Let {$n$, $s'$}{\textsf{TAKENONCE}$(s')$}
    \Let{$\mi{tree}$}{$\mathsf{Clean}(s', \compn{s'}{\ptr{d}})$} \label{line:clean-tree}

    \Let{ $\mi{cookies}$}{$\langle\{\an{\comp{c}{name}, \comp{\comp{c}{content}}{value}} | c \inPairing \comp{s'}{cookies}\left[  \comp{\comp{\compn{s'}{\ptr{d}}}{origin}}{host}  \right]$}
        \Statex[6] $\wedge\,\comp{\comp{c}{content}}{httpOnly} = \bot$
        \Statex[6] $\wedge\,\left(\comp{\comp{c}{content}}{secure} \implies \left(\comp{\comp{\compn{s'}{\ptr{d}}}{origin}}{protocol} \equiv \https\right)\right) \}\rangle$ \label{line:assemble-cookies-for-script}
    \LetND{$\mi{tlw}$}{$\comp{s'}{windows}$ \textbf{such that} $\mi{tlw}$ is the top-level window containing $\mi{d}$} 
    \Let{$\mi{sessionStorage}$}{$\comp{s'}{sessionStorage}\left[\an{\comp{\compn{s'}{\ptr{d}}}{origin}, \comp{\mi{tlw}}{nonce}}\right]$}
    \Let{$\mi{localStorage}$}{$\comp{s'}{localStorage}\left[\comp{\compn{s'}{\ptr{d}}}{origin}\right]$}
    \Let{$\mi{secret}$}{$\comp{s'}{secrets}\left[\comp{\compn{s'}{\ptr{d}}}{origin}\right]$} \label{line:browser-secrets}
    \State \textbf{let} $\mi{nonces}$ be an infinite subset of $\left\{x \middle| x \in N^p \wedge x \not\inPairing \comp{s'}{nonces} \right\}$
    \LetND{$R$}{$\mathsf{script}^{-1}(\comp{\compn{s'}{\ptr{d}}}{script})$}
    \LetBreak{$\mi{in}\!$}{$\!\langle\mi{tree}, \comp{\compn{s'\!}{\ptr{d}}}{nonce}, \comp{\compn{s'\!}{\ptr{d}}}{scriptstate}, \comp{\compn{s'\!}{\ptr{d}}}{scriptinput}, \mi{cookies}, \mi{localStorage}, \mi{sessionStorage}$ $\mi{secret}\rangle$}
    \LetND{$\mi{state}'$}{$\terms$,
      $\mi{cookies}' \gets \mathsf{Cookies}$,
      $\mi{localStorage}' \gets \terms$, $\mi{command}
      \gets \terms$, $\mi{out} := \an{\mi{state}', \mi{cookies}', \mi{localStorage}',$ $\mi{sessionStorage}', \mi{command}}$
      \textbf{such that} $((\mi{in}, \mi{nonces}), \mi{out}) \in R$}  \label{line:trigger-script}
    \For{\textbf{each} $n \in d_{\nonces}(\an{\mi{in}, \mi{out}}) \cap N^p$} \label{line:for-each}
      \Append{$n$}{$\comp{s'}{nonces}$}
    \EndFor
    \Let{$\comp{s'}{cookies}\left[\comp{\comp{\compn{s'}{\ptr{d}}}{origin}}{host}\right]$}{$\langle\mathsf{CookieMerge}(\comp{s'}{cookies}\left[\comp{\comp{\compn{s'}{\ptr{d}}}{origin}}{host}\right],$ $\mi{cookies}')\rangle$} \label{line:cookiemerge}
    \Let{$\comp{s'}{localStorage}\left[\comp{\compn{s'}{\ptr{d}}}{origin}\right]$}{$\mi{localStorage}'$}
    \Let{$\comp{s'}{sessionStorage}\left[\an{\comp{\compn{s'}{\ptr{d}}}{origin}, \comp{\mi{tlw}}{nonce}}\right]$}{$\mi{sessionStorage}'$}
    \Let{$\comp{\compn{s'}{\ptr{d}}}{scriptstate}$}{$state'$}
    \Switch{$\mi{command}$}
      \Case{$\an{}$}
        \State \textbf{stop} $\{\}$, $s'$
      \EndCase
      \Case{$\an{\tHref, \mi{url},
          \mi{hrefwindow}}$}\footnote{See the definition of
        URLs in Appendix~\ref{sec:urls}.}
      \Let{$\ptr{w}'$,
        $s'$}{$\mathsf{GETNAVIGABLEWINDOW}$($\ptr{w}$,
        $\mi{hrefwindow}$, $s'$)} \Let{$\mi{req}$}{$\hreq{
          method=\mGet, host=\comp{\mi{url}}{host},
          nonce=n, path=\comp{\mi{url}}{path},
          headers=\an{},
          parameters=\comp{\mi{url}}{params}, body=\an{}
        }$}
      \Let{$s'$}{$\mathsf{CANCELNAV}(\comp{\compn{s'}{\ptr{w}'}}{nonce}, s')$}
      \State \textsf{SEND}($\comp{\compn{s'}{\ptr{w}'}}{nonce}$, $\mi{req}$, $\comp{\mi{url}}{protocol}$, $\bot$, $s'$) \label{line:send-href}
      \EndCase
      \Case{$\an{\tIframe, \mi{url}, \mi{window}}$}
        \Let{$\ptr{w}'$, $s'$}{$\mathsf{GETWINDOW}(\ptr{w}, \mi{window}, s')$}
        \Let{$\mi{req}$}{$\hreq{
            method=\mGet,
            host=\comp{\mi{url}}{host},
            nonce=n,
            path=\comp{\mi{url}}{path},
            headers=\an{},
            parameters=\comp{\mi{url}}{params},
            body=\an{}
          }$}
        \Let {$n$, $s'$}{\textsf{TAKENONCE}$(s')$}
        \Let{$w'$}{$\an{n, \an{}, \bot}$}
        \Append{$w'$}{$\comp{\comp{\compn{s'}{\ptr{w}'}}{activedocument}}{subwindows}$}
        \State \textsf{SEND}(n, $\mi{req}$, $\comp{\mi{url}}{protocol}$, $\bot$, $s'$) \label{line:send-iframe}
      \EndCase
      \Case{$\an{\tForm, \mi{url}, \mi{method}, \mi{data}, \mi{hrefwindow}}$}
        \If{$\mi{method} \not\in \{\mGet, \mPost\}$} \footnote{The working draft for HTML5 allowed for DELETE and PUT methods in HTML5 forms. However, these have since been removed. See \url{http://www.w3.org/TR/2010/WD-html5-diff-20101019/\#changes-2010-06-24}.}
          \State \textbf{stop} $\{\}$, $s'$
        \EndIf
        \Let{$\ptr{w}'$, $s'$}{$\mathsf{GETNAVIGABLEWINDOW}$($\ptr{w}$, $\mi{hrefwindow}$, $s'$)}
        \If{$\mi{method} = \mGet$}
          \Let{$\mi{body}$}{$\an{}$}
          \Let{$\mi{params}$}{$\mi{data}$}
          \Let{$\mi{origin}$}{$\bot$}
        \Else
          \Let{$\mi{body}$}{$\mi{data}$}
          \Let{$\mi{params}$}{$\comp{\mi{url}}{params}$}
          \Let{$\mi{origin}$}{$\comp{\compn{s'}{\ptr{d}}}{origin}$}
        \EndIf
        \Let{$\mi{req}$}{$\hreq{
            method=\mi{method},
            host=\comp{\mi{url}}{host},
            nonce=n,
            path=\comp{\mi{url}}{path},
            headers=\an{},
            parameters=\mi{params},
            xbody=\mi{body}
          }$}
        \Let{$s'$}{$\mathsf{CANCELNAV}(\comp{\compn{s'}{\ptr{w}'}}{nonce}, s')$}
        \State \textsf{SEND}($\comp{\compn{s'}{\ptr{w}'}}{nonce}$, $\mi{req}$, $\comp{\mi{url}}{protocol}$, $\mi{origin}$, $s'$) \label{line:send-form}
      \EndCase
      \Case{$\an{\tSetScript, \mi{window}, \mi{script}}$}
        \Let{$\ptr{w}'$, $s'$}{$\mathsf{GETWINDOW}(\ptr{w}, \mi{window}, s')$}
        \Let{$\comp{\comp{\compn{s'}{\ptr{w}'}}{activedocument}}{script}$}{$\mi{script}$}
        \State \textbf{stop} $\{\}$, $s'$
      \EndCase
      \Case{$\an{\tSetScriptState, \mi{window}, \mi{scriptstate}}$}
        \Let{$\ptr{w}'$, $s'$}{$\mathsf{GETWINDOW}(\ptr{w}, \mi{window}, s')$}
        \Let{$\comp{\comp{\compn{s'}{\ptr{w}'}}{activedocument}}{scriptstate}$}{$\mi{scriptstate}$}
        \State \textbf{stop} $\{\}$, $s'$
      \EndCase
      \Case{$\an{\tXMLHTTPRequest, \mi{url}, \mi{method}, \mi{data}, \mi{xhrreference}}$}
        \If{$\mi{method} \in \{\mConnect, \mTrace, \mTrack\}$} \footnote{According to W3C XMLHTTPRequest definition.}
          \State \textbf{stop} $\{\}$, $s'$
        \EndIf
        \If{$\comp{\mi{url}}{host} \not\equiv \comp{\comp{\compn{s'}{\ptr{d}}}{origin}}{host}$ $\vee$ $\comp{\mi{url}}{protocol} \not\equiv \comp{\comp{\compn{s'}{\ptr{d}}}{origin}}{protocol}$} \footnote{We only allow same origin requests for now.}
          \State \textbf{stop} $\{\}$, $s'$
        \EndIf
        \If{$\mi{method} \in \{\mGet, \mHead\}$}
          \Let{$\mi{data}$}{$\an{}$}
          \Let{$\mi{origin}$}{$\bot$}
        \Else
          \Let{$\mi{origin}$}{$\comp{\compn{s'}{\ptr{d}}}{origin}$}
        \EndIf
        \Let{$\mi{req}$}{$\hreq{
            method=\mi{method},
            host=\comp{\mi{url}}{host},
            nonce=n,
            path=\comp{\mi{url}}{path},
            headers={},
            parameters=\comp{\mi{url}}{params},
            xbody=\mi{data}
          }$}
        \State \textsf{SEND}($\an{\comp{\compn{s'}{\ptr{d}}}{nonce}, \mi{xhrreference}}$, $\mi{req}$, $\comp{\mi{url}}{protocol}$, $\mi{origin}$, $s'$)\label{line:send-xhr}
      \EndCase
      \Case{$\an{\tBack, \mi{window}}$} \footnote{Note that navigating a window using the back/forward buttons does not trigger a reload of the affected documents. While real world browser may chose to refresh a document in this case, we assume that the complete state of a previously viewed document is restored.}
        \Let{$\ptr{w}'$, $s'$}{$\mathsf{GETNAVIGABLEWINDOW}$($\ptr{w}$, $\mi{window}$, $s'$)}
        \If{$\exists\, \ptr{j} \in \mathbb{N}, \ptr{j} >
          1$ \textbf{such that} $\comp{\compn{\comp{\compn{s'}{\ptr{w'}}}{documents}}{\ptr{j}}}{active} \equiv \True$}
          \Let{$\comp{\compn{\comp{\compn{s'}{\ptr{w'}}}{documents}}{\ptr{j}}}{active}$}{$\bot$}
          \Let{$\comp{\compn{\comp{\compn{s'}{\ptr{w'}}}{documents}}{(\ptr{j}-1)}}{active}$}{$\True$}
          \Let{$s'$}{$\mathsf{CANCELNAV}(\comp{\compn{s'}{\ptr{w}'}}{nonce}, s')$}
        \EndIf
        \State \textbf{stop} $\{\}$, $s'$
      \EndCase
      \Case{$\an{\tForward, \mi{window}}$}
        \Let{$\ptr{w}'$, $s'$}{$\mathsf{GETNAVIGABLEWINDOW}$($\ptr{w}$, $\mi{window}$, $s'$)}
        \If{$\exists\, \ptr{j} \in \mathbb{N} $ \textbf{such that} $\comp{\compn{\comp{\compn{s'}{\ptr{w'}}}{documents}}{\ptr{j}}}{active} \equiv \True$ $\wedge$ \breakalgo{5} $\compn{\comp{\compn{s'}{\ptr{w'}}}{documents}}{(\ptr{j}+1)} \in \mathsf{Documents}$}
          \Let{$\comp{\compn{\comp{\compn{s'}{\ptr{w'}}}{documents}}{\ptr{j}}}{active}$}{$\bot$}
          \Let{$\comp{\compn{\comp{\compn{s'}{\ptr{w'}}}{documents}}{(\ptr{j}+1)}}{active}$}{$\True$}
          \Let{$s'$}{$\mathsf{CANCELNAV}(\comp{\compn{s'}{\ptr{w}'}}{nonce}, s')$}
        \EndIf
        \State \textbf{stop} $\{\}$, $s'$
      \EndCase
      \Case{$\an{\tClose, \mi{window}}$}
        \Let{$\ptr{w}'$, $s'$}{$\mathsf{GETNAVIGABLEWINDOW}$($\ptr{w}$, $\mi{window}$, $s'$)}
        \State \textbf{remove} $\compn{s'}{\ptr{w'}}$ from the sequence containing it 
        \State \textbf{stop} $\{\}$, $s'$
      \EndCase

      \Case{$\an{\tPostMessage, \mi{window}, \mi{message}, \mi{origin}}$}
        \LetND{$\ptr{w}'$}{$\mathsf{Subwindows}(s')$ \textbf{such that} $\comp{\compn{s'}{\ptr{w}'}}{nonce} \equiv \mi{window}$}
        \If{$\exists \ptr{j} \in \mathbb{N}$ \textbf{such that} $\comp{\compn{\comp{\compn{s'}{\ptr{w'}}}{documents}}{\ptr{j}}}{active} \equiv \True$ $\wedge \breakalgo{5} (\mi{origin} \not\equiv \bot \implies \comp{\compn{\comp{\compn{s'}{\ptr{w'}}}{documents}}{\ptr{j}}}{origin} \equiv \mi{origin})$}    
            \Append{\breakalgo{5}$\an{\tPostMessage, \comp{\compn{s'}{\ptr{w}}}{nonce}, \comp{\compn{s'}{\ptr{d}}}{origin}, \mi{message}}$}{$\comp{\compn{\comp{\compn{s'}{\ptr{w'}}}{documents}}{\ptr{j}}}{scriptinput}$}
        \EndIf
      \EndCase
    \EndSwitch
  \EndFunction
\end{algorithmic} \setlength{\parindent}{1em}

The function $\mathsf{PROCESSRESPONSE}$ is responsible for
processing an HTTP response ($\mi{response}$) that was
received as the response to a request ($\mi{request}$) that
was sent earlier. In $\mi{reference}$, either a window or a
document reference is given (see explanation for
Algorithm~\ref{alg:send} above). Again, $\mi{protocol}$ is
either $\http$ or $\https$.

The function first saves any cookies that were contained in
the response to the browser state, then checks whether a
redirection is requested (Location header). If that is not
the case, the function creates a new document (for normal
requests) or delivers the contents of the response to the
respective receiver (for \xhr responses).
\captionof{algorithm}{\label{alg:processresponse} Process an HTTP response.}
\begin{algorithmic}[1]
\Function{$\mathsf{PROCESSRESPONSE}$}{$\mi{response}$, $\mi{reference}$, $\mi{request}$, $\mi{protocol}$, $s'$}
  \Let{$n, s'$}{\textsf{TAKENONCE}$(s')$} 
  \If{$\mathtt{Set{\mhyphen}Cookie} \in
    \comp{\mi{response}}{headers}$}
    \For{\textbf{each} $c \inPairing \comp{\mi{response}}{headers}\left[\mathtt{Set{\mhyphen}Cookie}\right]$, $c \in \mathsf{Cookies}$}
      \Let{$\comp{s'}{cookies}\left[\comp{\comp{\mi{request}}{url}}{host}\right]$}{$\mathsf{AddCookie}(\comp{s'}{cookies}\left[\comp{\comp{\mi{request}}{url}}{host}\right], c)$} \label{line:set-cookie}
    \EndFor
  \EndIf  
  \If{$\mathtt{Strict{\mhyphen}Transport{\mhyphen}Security} \in \comp{\mi{response}}{headers}$ $\wedge$ $\mi{protocol} \equiv \https$}
    \Append{$\comp{\mi{request}}{host}$}{$\comp{s'}{sts}$}
  \EndIf
  \If{$\mathtt{Location} \in \comp{\mi{response}}{headers} \wedge \comp{\mi{response}}{status} \in \{303, 307\}$} \label{line:location-header} \footnote{The RFC for HTTPbis (currently in draft status), which obsoletes RFC 2616, does not specify whether a POST/DELETE/etc. request that was answered with a status code of 301 or 302 should be rewritten to a GET request or not (``for historic reasons'' that are detailed in Section~7.4.). 
As the specification is clear for the status codes 303 and 307 (and most browsers actually follow the specification in this regard), we focus on modeling these.}
    \Let{$\mi{url}$}{$\comp{\mi{response}}{headers}\left[\mathtt{Location}\right]$}
    \Let{$\mi{method}'$}{$\comp{\mi{request}}{method}$} \footnote{While the standard demands that users confirm redirections of non-safe-methods (e.g., POST), we assume that users generally confirm these redirections.}
    \Let{$\mi{body}'$}{$\comp{\mi{request}}{body}$} \footnote{If, for example, a GET request is redirected and the original request contained a body, this body is preserved, as HTTP allows for payloads in messages with all HTTP methods, except for the TRACE method (a detail which we omit). 
Browsers will usually not send body payloads for methods that do not specify semantics for such data in the first place.}
    \If{$\str{Origin} \in \comp{request}{headers}$}
      \Let{$\mi{origin}$}{$\an{\comp{request}{headers}[\str{Origin}], \an{\comp{request}{host}, \mi{protocol}}}$}
    \Else
      \Let{$\mi{origin}$}{$\bot$}
    \EndIf
    \If{$\comp{\mi{response}}{status} \equiv 303 \wedge \comp{\mi{request}}{method} \not\in \{\mGet, \mHead\}$}
      \Let {$\mi{method}'$}{$\mGet$}
      \Let{$\mi{body}'$}{$\an{}$}
    \EndIf
    \If{$\nexists\, \ptr{w} \in \mathsf{Subwindows}(s')$ \textbf{such that} $\comp{\compn{s'}{\ptr{w}}}{nonce} \equiv \mi{reference}$} \Comment{Do not redirect XHRs.}
      \State \textbf{stop} $\{\}$, $s$
    \EndIf
    \Let{$\mi{req}$}{$\hreq{
            method=\mi{method'},
            host=\comp{\mi{url}}{host},
            nonce=n,
            path=\comp{\mi{url}}{path},
            headers=\an{},
            parameters=\comp{\mi{url}}{params},
            xbody=\mi{body}'
          }$}
    \State \textsf{SEND}($\mi{reference}$, $\mi{req}$, $\comp{\mi{url}}{protocol}$, $\mi{origin}$, $s'$)\label{line:send-redirect}
  \EndIf

  \If{$\exists\, \ptr{w} \in \mathsf{Subwindows}(s')$ \textbf{such that} $\comp{\compn{s'}{\ptr{w}}}{nonce} \equiv \mi{reference}$} \Comment{normal response}
    \Let{$\mi{script}$}{$\proj{1}{\comp{\mi{response}}{body}}$}
    \Let{$\mi{scriptstate}$}{$\proj{2}{\comp{\mi{response}}{body}}$}
    \Let{$d$}{$\an{n, \an{\comp{\mi{request}}{host}, \comp{\mi{request}}{protocol}}, \mi{script}, \mi{scriptstate}, \an{}, \an{}, \True}$} \label{line:take-script} \label{line:set-origin-of-document}
    \If{$\comp{\compn{s'}{\ptr{w}}}{documents} \equiv \an{}$}
      \Let{$\comp{\compn{s'}{\ptr{w}}}{documents}$}{$\an{d}$}
    \Else
      \LetND{$\ptr{i}$}{$\mathbb{N}$ \textbf{such that} $\comp{\compn{\comp{\compn{s'}{\ptr{w}}}{documents}}{\ptr{i}}}{active} \equiv \True$}
      \Let{$\comp{\compn{\comp{\compn{s'}{\ptr{w}}}{documents}}{\ptr{i}}}{active}$}{$\bot$}
      \State \textbf{remove} $\compn{\comp{\compn{s'}{\ptr{w}}}{documents}}{(\ptr{i}+1)}$ and all following documents from $\comp{\compn{s'}{\ptr{w}}}{documents}$
      \Append{$d$}{$\comp{\compn{s'}{\ptr{w}}}{documents}$}
    \EndIf
    \State \textbf{stop} $\{\}$, $s'$
  \ElsIf{$\exists\, \ptr{w} \in \mathsf{Subwindows}(s')$, $\ptr{d}$ \textbf{such that} $\comp{\compn{s'}{\ptr{d}}}{nonce} \equiv \proj{1}{\mi{reference}} \wedge \breakalgo{3} \compn{s'}{\ptr{d}} = \comp{\compn{s'}{\ptr{w}}}{activedocument}$} \label{line:process-xhr-response} \Comment{process XHR response}
    \Let{$\comp{\compn{s'}{\ptr{d}}}{scriptinput}$}{$\comp{\compn{s'}{\ptr{d}}}{scriptinput} {\plusPairing} \an{\tXMLHTTPRequest, \comp{\mi{response}}{body}, \proj{2}{\mi{reference}}}$}
  \EndIf
\EndFunction
\end{algorithmic} \setlength{\parindent}{1em}

\subsubsection{Main Algorithm}\label{app:mainalgorithmwebbrowserprocess}
This is the main algorithm of the browser relation. It was
already presented informally in
Section~\ref{sec:web-browsers} and follows the structure
presented there. It receives the message $m$ as input, as
well as $a$, $f$ and $s$ as above.

\captionof{algorithm}{\label{alg:browsermain} Main Algorithm}
\begin{algorithmic}[1]
\Statex[-1] \textbf{Input:} $(a{:}f{:}m),s$
  \Let{$s'$}{$s$}

  \If{$\comp{s}{isCorrupted} \equiv \fullcorrupt$} 
    \Let{$\comp{s'}{pendingRequests}$}{$\an{m, \comp{s}{pendingRequests}}$} \Comment{Collect incoming messages}
    \LetND{$m'$}{$d_{N^p}(s')$}
    \LetND{$a'$}{$\addresses$}
    \State \textbf{stop} $\{(a'{:}a{:}m')\}$, $s'$
  \ElsIf{$\comp{s}{isCorrupted} \equiv \closecorrupt$}
    \Let{$\comp{s'}{pendingRequests}$}{$\an{m, \comp{s}{pendingRequests}}$} \Comment{Collect incoming messages}
    \Let{$N^\text{clean}$}{$N^p \setminus \{n | n \inPairing \comp{s}{nonces}\}$} \label{line:key-not-used-anymore}
    \LetND{$m'$}{$d_{N^\text{clean}}(s')$}
    \LetND{$a'$}{$\addresses$}
    \Let{$\comp{s'}{nonces}$}{$\comp{s}{nonces}$}
    \State \textbf{stop} $\{(a'{:}a{:}m')\}$, $s'$
  \EndIf
  \Let {$n$, $s'$}{\textsf{TAKENONCE}$(s')$}
  \If{$m \equiv \trigger$} \Comment{A special trigger message. }
    \LetND{$\mi{switch}$}{$\{1,2\}$}
    \If{$\mi{switch} \equiv 1$} \Comment{Run some script.}
      \LetNDST{$\ptr{w}$}{$\mathsf{Subwindows}(s')$}{$\comp{\compn{s'}{\ptr{w}}}{documents} \neq \an{}$}{\textbf{stop} $\{\}$, $s'$}
      \Let{$\ptr{d}$}{$\ptr{w} \plusPairing \str{activedocument}$}
      \State \textsf{RUNSCRIPT}($\ptr{w}$, $\ptr{d}$, $s'$)
    \ElsIf{$\mi{switch} \equiv 2$} \Comment{Create some new request.}
      \Let{$w'$}{$\an{n, \an{}, \bot}$}
      \Append{$w'$}{$\comp{s'}{windows}$}
      \LetND{$\mi{protocol}$}{$\{\http, \https\}$}
      \LetND{$\mi{host}$}{$\dns$}
      \LetND{$\mi{path}$}{$\mathbb{S}$}
      \LetND{$\mi{parameters}$}{$\dict{\mathbb{S}}{\mathbb{S}}$}
      \Let {$n'$, $s'$}{\textsf{TAKENONCE}$(s')$}
      \Let{$\mi{req}$}{$\hreq{
          method=\mGet,
          host=\mi{host},
          nonce=n',
          path=\mi{path},
          headers=\an{},
          parameters=\mi{parameters},
          body=\an{}
        }$}
      \State \textsf{SEND}($n$, $\mi{req}$, $\mi{protocol}$, $\bot$, $s'$)\label{line:send-random}
    \EndIf
  \ElsIf{$m \equiv \fullcorrupt$} \Comment{Request to corrupt browser}
    \Let{$\comp{s'}{isCorrupted}$}{$\fullcorrupt$}
    \State \textbf{stop} $\{\}$, $s'$
  \ElsIf{$m \equiv \closecorrupt$} \Comment{Close the browser}
    \Let{$\comp{s'}{secrets}$}{$\an{}$}  
    \Let{$\comp{s'}{windows}$}{$\an{}$}
    \Let{$\comp{s'}{pendingDNS}$}{$\an{}$}
    \Let{$\comp{s'}{pendingRequests}$}{$\an{}$}
    \Let{$\comp{s'}{sessionStorage}$}{$\an{}$}
    \State \textbf{let} $\comp{s'}{cookies} \subsetPairing \cookies$ \textbf{such that} \breakalgo{2} $(c \inPairing \comp{s'}{cookies}) {\iff} (c \inPairing \comp{s}{cookies} \wedge \comp{\comp{c}{content}}{session} \equiv \bot$)
    \Let{$\comp{s'}{isCorrupted}$}{$\closecorrupt$}
    \State \textbf{stop} $\{\}$, $s'$
  \ElsIf{$\exists\, \an{\mi{reference}, \mi{request}, \mi{key}, f}$
      $\inPairing \comp{s'}{pendingRequests}$
      \textbf{such that} \hfill{} \breakalgo{2} $\proj{1}{\decs{m}{\mi{key}}} \equiv \cHttpResp$ }
    \Comment{Encrypted HTTP response}
    \Let{$m'$}{$\decs{m}{\mi{key}}$}
    \If{$\comp{m'}{nonce} \not\equiv \comp{\mi{request}}{nonce}$}
      \State \textbf{stop} $\{\}$, $s$
    \EndIf
    \State \textbf{remove} $\an{\mi{reference}, \mi{request}, \mi{key}, f}$ \textbf{from} $\comp{s'}{pendingRequests}$
    \State \textsf{PROCESSRESPONSE}($m'$, $\mi{reference}$, $\mi{request}$, $\https$, $s'$)
  \ElsIf{$\proj{1}{m} \equiv \cHttpResp$ $\wedge$ $\exists\, \an{\mi{reference}, \mi{request}, \bot, f}$ $\inPairing \comp{s'}{pendingRequests}$ \textbf{such that} \breakalgo{2} $\comp{m'}{nonce} \equiv \comp{\mi{request}}{key}$ }
    \State \textbf{remove} $\an{\mi{reference}, \mi{request}, \bot, f}$ \textbf{from} $\comp{s'}{pendingRequests}$
    \State \textsf{PROCESSRESPONSE}($m$, $\mi{reference}$, $\mi{request}$, $\http$, $s'$)
  \ElsIf{$m \in \dnsresponses$} \Comment{Successful DNS response}
      \If{$\comp{m}{nonce} \not\in \comp{s}{pendingDNS}$}
        \State \textbf{stop} $\{\}$, $s$
      \EndIf
      \Let{$\an{\mi{reference}, \mi{message}, \mi{protocol}}$}{$\comp{s}{pendingDNS}[\comp{m}{nonce}]$}
      \If{$\mi{protocol} \equiv \https$}
        \Let{$k, s'$}{\textsf{TAKENONCE}($s'$)} \label{line:takenonce-k}
        \Append{$\langle\mi{reference}$, $\mi{message}$, $\mi{k}$, $\comp{m}{result}\rangle$}{$\comp{s'}{pendingRequests}$} \label{line:add-to-pendingrequests-https}
        \Let{$\mi{message}$}{$\enc{\an{\mi{message},\mi{k}}}{\comp{s'}{keyMapping}\left[\comp{\mi{message}}{host}\right]}$} \label{line:select-enc-key}
      \Else
        \Append{$\langle\mi{reference}$, $\mi{message}$, $\bot$, $\comp{m}{result}\rangle$}{$\comp{s'}{pendingRequests}$} \label{line:add-to-pendingrequests}
      \EndIf
      \Let{$\comp{s'}{pendingDNS}$}{$\comp{s'}{pendingDNS} - \comp{m}{nonce}$}
      \State \textbf{stop} $\{(\comp{m}{result}{:}a{:}\mi{message})\}$, $s'$
  \Else
    \State \textbf{stop} $\{\}$, $s$
  \EndIf
\end{algorithmic} \setlength{\parindent}{1em}

\section{DNS Servers}\label{app:dns-servers}

An informal description of a DNS server is given in
Section~\ref{sec:DNSservers}. We now provide a formal definition of a
DNS server $d$. A DNS server $d$ is an atomic DY process $(I^d,
\{s^d_0\}, R^d, s^d_0, N^d)$ with a finite set of addresses $I^d$. The
(initial) state $s^d_0 \in [\mathbb{S} \times \addresses]$ is a
mapping of a finite set of domain names to addresses.  We note that
our definition of DNS servers does not change the mapping from domain
names to addresses at all. The set of states of the atomic DY process
of a DNS server therefore contains only $s^d_0$.

We now specify the relation $R^d \subseteq (\events \times \{s^d_0\} )
\times (2^\events \times \{s^d_0\})$ of $d$.  Just like in
Appendix~\ref{sec:descr-web-brows}, we describe this relation by a
non-deterministic algorithm.

\captionof{algorithm}{\label{alg:dns} Relation of a DNS server $R^d$}
\begin{algorithmic}[1]
 \Statex[-1] \textbf{Input:} $(a{:}f{:}m),s$

  \LetST{$\mi{domain},n$}{$\an{\cDNSresolve,\mi{domain},n} \equiv m$}{\textbf{stop} $\{\},s$}
  \If{$\mi{domain} \in s$}
    \Let{$\mi{addr}$}{$s[\mi{domain}]$}
    \Let{$m'$}{$\an{\cDNSresolved,\mi{addr},n}$}
    \State \textbf{stop} $\{(f : a : m')\},s$
  \EndIf
  \State \textbf{stop} $\{\},s$

\end{algorithmic}

\section{Step-By-Step Description of BrowserID (Primary IdP)}\label{app:browserid-lowlevel}

We now present additional details of the JavaScript
implementation of BrowserID\@. While the basic steps have
been shown in Section~\ref{sec:javascript-descr}, we will
now again refer to Figure~\ref{fig:browserid-lowlevel-ld}
and provide a step-by-step description. As above, we focus
on the main login flow without the CIF, and we leave out
steps for fetching additional resources (like JavaScript
files) and some less relevant postMessages and
\xhrs.

We (again) assume that the user uses a ``fresh'' browser,
i.e., the user has not been logged in before. The user has
already opened a document of some RP (RP-Doc) in her
browser. RP-Doc includes a JavaScript file, which provides
the BrowserID API. The user is now about to click on a
login button in order to start a BrowserID login.

\myparagraph{Phase~\refprotophase{ld-start-1}.} After the
user has clicked on the login button, RP-Doc opens a new
browser window, the \emph{login dialog}
(LD)~\refprotostep{ld-open}. The document of LD is loaded
from LPO~\refprotostep{ld-init-1}. Now, LD sends a
\emph{ready} postMessage~\refprotostep{ld-rpdoc-ready-1} to
its opener, which is RP-Doc. RP-Doc then responds by
sending a \emph{request}
postMessage~\refprotostep{rpdoc-ld-request-1}. This
postMessage may contain additional information like a name
or a logo of RP-Doc.  LD then fetches the so-called
\emph{session context} from LPO
using~\refprotostep{ld-ctx-1}. The session context contains
information about whether the user is already logged in at
LPO, which, by our assumption, is not the case at this
point. The session context also contains an \xsrf
protection token which will be sent in all subsequent POST
requests to LPO\@. Also, an $\str{httpOnly}$ cookie called
\texttt{browserid\_state} is set, which contains an LPO
session identifier.  Now, the user is prompted to enter her
email address (\emph{login email address}), which she wants
to use to log in at RP~\refprotostep{ld-user-email}. LD
sends the login email address to LPO via an
\xhr~\refprotostep{ld-addrinfo-1}, in order to get
information about the IdP the email address belongs to. The
information from this so-called \emph{support document} may
be cached at LPO for further use. LPO extracts the domain
part of the login email address and fetches an information
document~\refprotostep{lpo-idp-wk-1} from a fixed path
(\url{/.well-known/browserid}) at the IdP. This document
contains the public key of IdP, and two paths, the
provisioning path and the authentication path at IdP. These
paths will be used later in the login process by LD. LPO
converts these paths into URLs and sends them in its
response~\refprotostep{ld-addrinfo-1-resp} to the
requesting \xhr~\refprotostep{ld-addrinfo-1}.

\myparagraph{Phase~\refprotophase{ld-prov-1}.} As there is
no record about the login email address in the localStorage under
the origin of LPO, the LD now tries to get a UC for this
identity. For that to happen, the LD creates a new iframe,
the \emph{provisioning iframe}
(PIF)~\refprotostep{ld-pif-open-1}. The PIF's document is
loaded~\refprotostep{pif-init-1} from the provisioning URL
LD has just received before
in~\refprotostep{ld-addrinfo-1-resp}. The PIF now interacts
with the LD via
postMessages~\refprotostep{pif-ld-pms-1}. As the user is
currently not logged in, the PIF tells the LD that the user
is not authenticated yet. This also indicates to the LD
that the PIF has finished operation. The LD then closes the
PIF~\refprotostep{ld-pif-close-1}.

\myparagraph{Phase~\refprotophase{ld-auth}.} Now, the LD
saves the login email address in the localStorage indexed
by a fresh nonce. This nonce is stored in the
sessionStorage to retrieve the email address later from the
localStorage again. Next, the LD navigates itself to the
authentication URL it has received
in~\refprotostep{ld-addrinfo-1-resp}. The loaded document
now interacts with the user and the
IdP~\refprotostep{idp-ld-auth} in order to establish some
authenticated session depending on the actual IdP
implementation, which is out of scope of the BrowserID
standard. For example, during this authentication
procedure, the IdP may issue some session cookie.

\myparagraph{Phase~\refprotophase{ld-start-2}.} After the
authentication to the IdP has been completed, the
authentication document navigates the LD to the LD URL
again. The LD's document is fetched again from LPO and the
login process starts over. The following steps are similar
to Phase~\refprotophase{ld-start-1}: The ready and request
postMessages are exchanged and the session context is
fetched. As the user has not been authenticated to LPO yet,
the session context still contains the same information as
above in~\refprotostep{ld-ctx-1}. Now, the user is not
prompted to enter her email address again. The email
address is fetched from the localStorage under the index of
the nonce stored in the sessionStorage. Now, the address
information is requested again from LPO.

\myparagraph{Phase~\refprotophase{ld-prov-2}.} As there
still is no UC belonging to the login email address in the
localStorage, the PIF is created again. As the user now has
established an authenticated session with the IdP, the PIF
asks the LD to generate a fresh key pair. After the LD has
generated the key pair~\refprotostep{gen-key-pair}, it
stores the key pair in the localStorage (under the origin
of LPO) and sends the public key to the PIF as a
postMessage~\refprotostep{pubkey-ld-pif}. The following
steps \refprotostep{req-uc}--\refprotostep{send-uc} are not
specified in the BrowserID protocol. Typically, the PIF
would send the public key to IdP (via an
\xhr)~\refprotostep{req-uc}. The IdP would create the
UC~\refprotostep{certify-uc} and send it back to the
PIF~\refprotostep{send-uc}.  The PIF then sends the UC to
the LD~\refprotostep{recv-uc}, which stores it in the
localStorage. Now, the LD closes the PIF.

\myparagraph{Phase~\refprotophase{ld-lpo-auth}.} The LD is
now able to create a CAP, as it has access to a UC and the
corresponding private key in its localStorage. First, LD
creates an IA for LPO~\refprotostep{ld-gen-cap-lpo}. The IA
and the UC is then combined to a CAP, which is then sent to
LPO in an \xhr POST message~\refprotostep{ld-lpo-auth}. LPO
is now able to verify this CAP with the public key of IdP,
which LPO has already fetched and cached before
in~\refprotostep{lpo-idp-wk-1}. If the CAP is valid, LPO
considers its session with the user's browser to be
authenticated for the email address the UC in the CAP is
issued for.

\myparagraph{Phase~\refprotophase{ld-cap}.} Now,
in~\refprotostep{ld-lpo-list-emails}, the LD fetches a list
of email addresses, which LPO considers to be owned by the
user. If the login email address would not appear in this
list, LD would abort the login process. After this, the LD
fetches the address information about the login email
address again in~\refprotostep{ld-addrinfo-3}. Using this
information, LD validates if the UC is signed by the
correct party (primary/secondary IdP).  Now, LD generates
an IA for the sender's origin of the request
postMessage~\refprotostep{rpdoc-ld-request-1} (which was
repeated in Phase~\refprotophase{ld-start-2}) using the
private key from the localStorage~\refprotostep{ld-gen-cap} (the IA is generated for
the login email address). Also, it is recorded in the
localStorage that the user is now logged in at RP with this
email address. The LD then combines the IA with the UC
stored in the localStorage to the CAP, which is then sent
to RP-Doc in the \emph{response}
postMessage~\refprotostep{ld-rpdoc-cap}.

This concludes the login process that runs in
LD. Afterwards, RP-Doc closes LD~\refprotostep{ld-close}.

\section{Sideshow/BigTent OpenID Flow}\label{app:sideshow-openid-flow}

We will now give concrete examples of an OpenID flow that
is started when Sideshow or BigTent want to authenticate a
user via OpenID (as presented in
Section~\ref{sec:sideshow}). We will show typical requests
and responses in such a flow and discuss the parameters
that are important for the attacks presented in
Section~\ref{sec:attacks-browserid}. We focus on Sideshow
(and thus, Google), the URLs for BigTent (with Yahoo) are
similar.

\subsection{OpenID Authentication Request}
As we already discussed in Section~\ref{sec:sideshow},
Sideshow maintains a session with the user. Sideshow issues
UCs to a user only if the session is authenticated. This
authentication is done via OpenID. When detecting that a
session is not authenticated, Sideshow redirects the user's
browser to the so-called \emph{OpenID endpoint URL} of
Google/Gmail. This URL may look as follows (line breaks
added for readability):

\begin{verbatim}
https://www.google.com/accounts/o8/ud
 ?openid.mode=checkid_setup
 &openid.ns=http%
 &openid.ns.ax=http%
 &openid.ax.mode=fetch_request
 &openid.ax.type.email=http%
 &openid.ax.required=email
 &openid.ns.ui=http%
 &openid.ui.mode=popup
 &openid.identity=
  http%
 &openid.claimed_id=
  http%
 &openid.return_to=
  https%
 &openid.realm=https%
\end{verbatim}

In this URL, the parameter \texttt{openid.ax.type.email}
encodes that Sideshow requests under the name ``email'' in
the namespace ``ax'' an attribute of the type
\nolinkurl{http://axschema.org/contact/email}. Per
definition of the OpenID attribute exchange
schema,\footnote{See
  \url{http://openid.net/specs/openid-attribute-exchange-1_0.html}}
this denotes the request for an email address. Note that
Google's OpenID endpoint is (per the OpenID protocol) not
obliged to follow this request and may issue an OpenID
assertion without a (signed) email address.

The parameter \texttt{openid.return\_to} contains the URL
to which Google redirects the user after issuing the
assertion. The assertion and possibly other information are
appended to this URL.

\subsection{OpenID Authentication Response}
After accessing the above OpenID endpoint URL, the user
authenticates with Google and confirms that Google releases
the requested information to Sideshow. Google then creates
an OpenID assertion and appends it to the
\texttt{openid.return\_to} URL that was contained in the
OpenID authentication request. Finally, OpenID redirects
the user's browser to the resulting URL, which may look as
follows:

\begin{verbatim}
https://gmail.login.persona.org/authenticate/verify
 ?openid.ns=http%
 &openid.mode=id_res
 &openid.op_endpoint=https%
 &openid.response_nonce=2013-09-24T11%
 &openid.return_to=https%
 &openid.assoc_handle=1.AMlYA(...)ubxCOqB
 &openid.signed=op_endpoint
  %
  %
  %
  %
  %
  %
  %
  %
  %
 &openid.sig=BIPe1PIwitMp365MUEtd34IJLUs%
 &openid.identity=
  https%
 &openid.claimed_id=
  https%
 &openid.ns.ext1=http%
 &openid.ext1.mode=fetch_response
 &openid.ext1.type.email=http%
 &openid.ext1.value.email=user%
 &openid.ns.ext2=http%
 &openid.ext2.mode=popup
\end{verbatim}

First note that the namespace ``ax'' was renamed by Gmail:
What was prefixed with \texttt{openid.ax} in the request is
now prefixed with \texttt{openid.ext1}. The parameter
\texttt{openid.signed} contains the names of the parameters
that have actually been MACed into the signature given in
\texttt{openid.sig}. 

Note that the receiver of the assertion can not know the
exact renaming performed by Gmail and must, although the
renaming is obvious in this case, rely on the type
parameters to determine the actual contents of the
parameters. In this case, \texttt{openid.ext1.type.email}
contains the AX schema type for an email address, saying
that \texttt{openid.ext1.value.email} actually contains the
requested email address.

The parameter \texttt{openid.assoc\_handle} contains the ID
of a temporary symmetric key created and stored at Google
that is used for the MAC. 
\subsection{Verification}
After receiving the above request, Sideshow can forward all
attributes to a verification service at Google. The URL for
this verification is
\nolinkurl{https://www.google.com/accounts/o8/ud?openid.mode=checkid_authentication}
(same as in the authentication request, but with a
different \texttt{openid.mode} value). Sideshow appends to
this URL all attributes from the authentication response,
i.e., the assertion, except for the \texttt{openid.mode}
parameter. Google (only) checks that the MAC in
\texttt{openid.sig} is correct (using the symmetric key
stored earlier) and answers with ``true'' or ``false''
accordingly.

\section{Step-By-Step Description of BrowserID (Secondary IdP)}\label{app:browserid-sidp-lowlevel}

\begin{figure}[t!]\centering
 \scriptsize{
 \begin{tikzpicture}

  \matrix [column sep={7pc,between origins}, row sep=2.5ex]
  {
   \node[draw,anchor=base,rounded corners](rp){RP}; & & \node[anchor=base](browser){Browser}; & & \node[draw,anchor=base, rounded corners](lpo){LPO};\\

   \node(rp-init-1){}; & \node(rp-doc-init-1){}; & & & \\

   \node(rp-init-2){}; & \node[draw,anchor=base](rp-doc){RP-Doc}; & & & \\

   & \node(rp-doc-incl){}; & & & \node(lpo-incl){}; \\

   & \node(rp-doc-iframe-cif){}; & \node[draw,anchor=base](cif-iframe){CIF}; & & \\

   & & \node(cif-init){}; & & \node(lpo-cif-init){}; \\

   & \node(rp-doc-cif-rdy){}; & \node(cif-rdy){}; & & \\ 

   & \node(rp-doc-cif-ld){}; & \node(cif-ld){}; & & \\ 

   & & \node(cif-ctx){}; & & \node(lpo-cif-ctx){}; \\

   & \node(rp-doc-logout){}; & \node(cif-logout){}; & & \\

   \node(phase-0-1-left){}; & & & & \node(phase-0-1-right){}; \\

   & \node(rp-doc-dlg-run){}; & \node(cif-dlg-run){}; & & \\

   & \node(rp-doc-iframe-dlg){}; & & \node[draw,anchor=base](dlg-iframe){LD}; & \\ [0.5ex]

   & & & \node(dlg-init){}; & \node(lpo-dlg-init){}; \\

   & \node(rp-doc-dlg-rdy){}; & & \node(dlg-rdy){}; & \\

   & \node(rp-doc-dlg-req){}; & & \node(dlg-req){}; & \\

   & & & \node(dlg-ctx){}; & \node(lpo-dlg-ctx){}; \\ [1ex]

   & & & \node(dlg-auth){}; & \node(lpo-dlg-auth){}; \\

   & & & \node(dlg-gen-key){}; & \\ [2ex]

   & & & \node(dlg-cert-1){}; & \node(lpo-dlg-cert-1){}; \\

   & & & \node(dlg-cert-2){}; & \node(lpo-dlg-cert-sign){}; \\

   \node(phase-1-2-left){}; & & & & \node(phase-1-2-right){}; \\

   & & & \node(dlg-gen-ia){}; & \\

   & \node(rp-doc-dlg-res){}; & & \node(dlg-res){}; & \\
 
   & \node(rp-doc-dlg-close){}; & & \node[draw,fill=Gainsboro,anchor=base](dlg-close){/LD}; \\

   & \node(rp-doc-cif-liu){}; & \node(cif-liu){}; & & \\

   & \node(rp-doc-cif-dlg-cpt){}; & \node(cif-dlg-cpt){}; & & \\

   \node(rp-vrfy){}; & \node(rp-doc-vrfy){}; & \node(cif-ctx2){}; & & \node(lpo-cif-ctx2){}; \\

   \node(rp-end){}; & \node(rp-doc-end){}; & \node(cif-end){}; & \node(dlg-end){}; & \node(lpo-end){};\\
  };

  \tikzstyle{xhrArrow} = [color=blue,decoration={markings, mark=at
    position 1 with {\arrow[color=blue]{triangle 45}}}, preaction
  = {decorate}]

  \draw [->] (rp-doc-init-1) to node [above=-3pt]{\protostep{sidp-rp-doc-init} GET / } (rp-init-1);
  \draw [->] (rp-init-2.40) -- (rp-doc);

  \draw [->] (rp-doc-incl.20) to node [above=-3pt]{\protostep{sidp-rp-doc-incl} GET include.js } (lpo-incl.160);
  \draw [->] (lpo-incl.200) -- (rp-doc-incl.340);

  \draw [->,snake=snake,segment amplitude=0.2ex] (rp-doc-iframe-cif.40) to node [above=-2pt] {\protostep{sidp-create-cif} create} (cif-iframe);

  \draw [->] (cif-init.20) to node [above=-3pt]{\protostep{sidp-cif-init} GET CIF } (lpo-cif-init.160);
  \draw [->] (lpo-cif-init.200) -- (cif-init.340);

  \draw [->,color=red,dashed] (cif-rdy) to node [above=-3pt]{\protostep{sidp-cif-ready} ready } (rp-doc-cif-rdy);

  \draw [->,color=red,dashed] (rp-doc-cif-ld) to node [above=-2pt]{\protostep{sidp-cif-loaded} loaded } (cif-ld);

  \draw [->,color=blue,>=latex] (cif-ctx.20) to node [above=-3pt]{\protostep{sidp-cif-ctx-1} GET ctx } (lpo-cif-ctx.160);
  \draw [->,color=blue,>=latex] (lpo-cif-ctx.200) -- (cif-ctx.340);

  \draw [->,color=red,dashed] (cif-logout) to node [above=-3pt]{\protostep{sidp-cif-logout} logout } (rp-doc-logout);

  \draw [->,color=red,dashed] (rp-doc-dlg-run) to node [above=-3pt]{\protostep{sidp-rp-doc-dlg-run} dlgRun } (cif-dlg-run);

  \draw [->,snake=snake,segment amplitude=0.2ex] (rp-doc-iframe-dlg.40) to node [above=-2pt] {\protostep{sidp-ld-open} open } (dlg-iframe);

  \draw [->] (dlg-init.20) to node [above=-3pt]{\protostep{sidp-ld-init} GET LD } (lpo-dlg-init.160);
  \draw [->] (lpo-dlg-init.200) -- (dlg-init.340);

  \draw [->,color=red,dashed] (dlg-rdy) to node [above=-3pt]{\protostep{sidp-ld-ready} ready } (rp-doc-dlg-rdy);

  \draw [->,color=red,dashed] (rp-doc-dlg-req) to node [above=-3pt]{\protostep{sidp-ld-request} request} (dlg-req);

  \draw [->,color=blue,>=latex] (dlg-ctx.20) to node [above=-3pt]{\protostep{sidp-ld-ctx} GET ctx } (lpo-dlg-ctx.160);
  \draw [->,color=blue,>=latex] (lpo-dlg-ctx.200) -- (dlg-ctx.340);

  \draw [->,color=blue,>=latex] (dlg-auth.20) to node [above=-3pt]{\protostep{sidp-ld-auth} POST auth } (lpo-dlg-auth.160);
  \draw [->,color=blue,>=latex] (lpo-dlg-auth.200) -- (dlg-auth.340);

  \node [draw,fill=Gainsboro] at (dlg-gen-key) {\protostep{sidp-gen-key-pair} gen. key pair };

  \draw [->,color=blue,>=latex] (dlg-cert-1) to node [above=-3pt]{\protostep{sidp-ld-certreq} POST certreq} (lpo-dlg-cert-1);
  \node [draw,fill=White] (lpo-dlg-cert-sign-draw) at (lpo-dlg-cert-sign) {\protostep{sidp-create-uc} create UC };
  \draw [->,color=blue,>=latex] (lpo-dlg-cert-sign-draw) -- (dlg-cert-2);

  \node [draw,fill=Gainsboro] at (dlg-gen-ia) {\protostep{sidp-gen-ia} gen. IA};

  \draw [->,color=red,dashed] (dlg-res) to node [above=-3pt]{\protostep{sidp-ld-response} response} (rp-doc-dlg-res);

  \draw [->,snake=snake,segment amplitude=0.2ex] (rp-doc-dlg-close.40) to node [above=-2pt] {\protostep{sidp-ld-close} close} (dlg-close);

  \draw [->,color=red,dashed] (rp-doc-cif-liu) to node [above=-3pt]{\protostep{sidp-rp-doc-liu} loggedInUser} (cif-liu);

  \draw [->,color=red,dashed] (rp-doc-cif-dlg-cpt) to node [above=-3pt]{\protostep{sidp-rp-doc-dlgCmplt} dlgCmplt} (cif-dlg-cpt);

  \draw [->,color=blue,>=latex] (rp-doc-vrfy.160) to node [above=-3pt]{\protostep{sidp-rp-doc-verify} POST verify} (rp-vrfy.20);
  \draw [->,color=blue,>=latex] (rp-vrfy.340) -- (rp-doc-vrfy.200);

  \draw [->,color=blue,>=latex] (cif-ctx2.20) to node [above=-3pt]{\protostep{sidp-cif-ctx-2} GET ctx } (lpo-cif-ctx2.160);
  \draw [->,color=blue,>=latex] (lpo-cif-ctx2.200) -- (cif-ctx2.340);

  \begin{pgfonlayer}{background}
   \node (rp-doc-a) [above of=rp-doc]{};
   \node (rp-doc-al) [left of=rp-doc-a]{};
   \node (dlg-end-br) [right of=dlg-end]{};
   \filldraw [color=Gainsboro,rounded corners] (rp-doc-al) rectangle (dlg-end-br);

  \draw [color=gray] (rp.270) -- (rp-end);
  \draw [color=gray] (rp-doc.270) -- (rp-doc-end);
  \draw [color=gray] (cif-iframe.270) -- (cif-end);
  \draw [color=gray] (dlg-iframe.270) -- (dlg-close);
  \draw [color=gray] (lpo.270) -- (lpo-end);

  \end{pgfonlayer}

    \draw [dashed] (phase-0-1-left.180) -- (phase-0-1-right.0);
    \draw [dashed] (phase-1-2-left.180) -- (phase-1-2-right.0);

    \node(phase-0-label) at ($(phase-0-1-left)!0.5!(rp)$) [rotate=90,above=1ex] {Initialization};
    \node(phase-1-label) at ($(phase-1-2-left)!0.5!(phase-0-1-left)$) [left=1ex] {\refcopybigprotophase{provisioning}};
    \node(phase-2-label) at ($(rp-end)!0.5!(phase-1-2-left)$) [left=1ex] {\refcopybigprotophase{authentication}};

 \end{tikzpicture}}
\caption{BrowserID sIdP flow overview. Black arrows (open tips) denote
  HTTP messages, blue arrows (filled tips) denote \xhrs, red (dashed)
  arrows are \pms, snake lines are commands to the browser.}
\label{fig:browserid-lowlevel}

\end{figure}

We here provide a detailed description of the BrowserID
flow including the LD and the CIF, but considering only the
sIdP mode. The formal model and analysis of this setting is
presented in Section~\ref{sec:analysisbrowserid} and
Appendix~\ref{app:browseridmodel}. As above, we leave out
steps for fetching additional resources (like JavaScript
files) and some less relevant \pms.

As before, we assume that the user uses a ``fresh''
browser, i.e., the user has not been logged in before.
Figure~\ref{fig:browserid-lowlevel} shows a sequence
diagram of the run. In comparison to the high-level
description with Phases~\refbigprotophase{provisioning}
and~\refbigprotophase{authentication} in
Section~\ref{sec:browseridoverview}, a new phase, the
\emph{Initialization}, is shown, which contains some
initialization steps and some steps of the CIF for
automatic creation of CAPs.

\myparagraph{Initialization.} First, the user opens a web
page of an RP (RP-Doc) in her
browser~\refprotostep{sidp-rp-doc-init}. RP-Doc includes
the BrowserID JavaScript from
LPO~\refprotostep{sidp-rp-doc-incl}. The JavaScript in
RP-Doc then initializes the BrowserID JavaScript, which
first creates a communication \iframe (CIF) within
RP-Doc~\refprotostep{sidp-create-cif}. The content of the
CIF is loaded from LPO~\refprotostep{sidp-cif-init}. When
the CIF has been initialized successfully, it sends a
\texttt{ready} \pm to the BrowserID JavaScript in
RP-Doc~\refprotostep{sidp-cif-ready}, which in turn
responds with the \texttt{loaded}
\pm~\refprotostep{sidp-cif-loaded}. This message may
contain an email address, which we ignore for now (see
below). The CIF saves the sender's origin of this \pm, as
it identifies the RP it is working with.\footnote{Note that
  for postMessages the sender origin cannot be spoofed and
  is always correct (see \cite{html5webmessaging20120501}
  for details).}  It then fetches the session context from
LPO using \xhr~\refprotostep{sidp-cif-ctx-1}. The session
context contains information about whether the user is
already logged in at LPO, which, by our assumption, is not
the case at this point. The session context also contains
an \xsrf protection token which will be sent in all
subsequent POST requests to LPO. Also, an $\str{httpOnly}$
cookie called \texttt{browserid\_state} is set, which
contains an LPO session identifier. The CIF finishes the
initialization by sending a \texttt{logout}
\pm~\refprotostep{sidp-cif-logout} to RP-Doc, indicating
that the browser is currently not logged in at this RP. (In
RP-Doc, this calls a message handler \texttt{onlogout}
which RP may have registered.)

\myparagraph{Phase~\refbigprotophase{provisioning}.} Now
the user starts to log in at RP, typically by clicking on a
login button in RP-Doc, which calls the BrowserID login
function. This JavaScript first tells the CIF that it will
now open a login dialog window (LD) by sending the
\texttt{dialog\_running}
\pm~\refprotostep{sidp-rp-doc-dlg-run}; this pauses the
CIF, in particular, automatic CAP generation (see
below). The BrowserID login function then opens the
LD~\refprotostep{sidp-ld-open}.  Its content is fetched
from LPO~\refprotostep{sidp-ld-init}. When it is fully
loaded, it sends a \texttt{ready} \pm to the BrowserID
JavaScript in RP-Doc~\refprotostep{sidp-ld-ready}, which is
answered by sending a \texttt{request} \pm
back~\refprotostep{sidp-ld-request}, indicating that the
sender's origin of this \pm requests a CAP.

After this request, the dialog fetches the session context
at LPO~\refprotostep{sidp-ld-ctx} (similar to what the CIF has done
before). As the user is still not logged in, she is now
asked to provide an email address and a password for
LPO. These are then sent to LPO by an
\xhr~\refprotostep{sidp-ld-auth}. 
If the credentials are accepted,
a new key pair is generated by LD's JavaScript~\refprotostep{sidp-gen-key-pair}
and the public key along with the entered email address is
sent to LPO~\refprotostep{sidp-ld-certreq} in order to get a UC~\refprotostep{sidp-create-uc}
(see Section~\ref{sec:browseridoverview}).  Moreover, the
key pair and the UC are stored in the \ls under the origin
of LPO.  

\myparagraph{Phase~\refbigprotophase{authentication}.}
Afterwards, an IA containing the so-called \emph{audience} (the
sender's origin of the \texttt{request} \pm~\refprotostep{sidp-ld-request})
and some expiration date is created, signed (with the
generated private key), and combined with the certificate
to a CAP~\refprotostep{sidp-gen-ia}.  In the \ls it is recorded that the
user is logged in at RP with the current email address.
The CAP and the email address are now sent back to RP-Doc
in a \texttt{response} \pm~\refprotostep{sidp-ld-response}. After this, the LD
is closed~\refprotostep{sidp-ld-close}.

The BrowserID JavaScript in RP-Doc informs the CIF that it
now thinks that the email address received in the
\texttt{response} \pm is logged in~\refprotostep{sidp-rp-doc-liu}. Next, it
tells the CIF that the LD is now closed~\refprotostep{sidp-rp-doc-dlgCmplt}, by
which the CIF is awoken from pausing. The CIF then fetches
the session context again~\refprotostep{sidp-cif-ctx-2} (as in~\refprotostep{sidp-cif-ctx-1})
in order to perform some additional checks (see below).

It is not specified in the BrowserID system how the RP-Doc
has to process the CAP received in
step~\refprotostep{sidp-ld-response}. Typically, as already mentioned in
Section~\ref{sec:browseridoverview}, the RP-Doc would send
the CAP to the RP's server~\refprotostep{sidp-rp-doc-verify}, which then can
verify the CAP. If successful, RP can consider the user
(with the email address mentioned in the CAP) to be logged
in and send her some token, the RP service token (as
introduced in Section~\ref{sec:browseridoverview}).

\subsection{Additional Checks} We note that when \pms are
sent, the BrowserID system makes certain checks. These
checks are carried out by two different (Mozilla)
JavaScript libraries. The communication between RP-Doc and
CIF is realized with the library \emph{JSChannel} and the
communication between RP-Doc and LD is realized with the
library \emph{WinChan}.

First recall that \pms can be sent by providing the
receiver's origin. The browser ensures that such a \pm can
only be received by a document having the origin the sender
has specified. If the receiver's origin does not match the
one specified by the sender, the sender receives a
JavaScript exception. However, the sender is not required
to provide a receiver's origin, so any receiver can receive
the \pm. Also, a receiver can check from which window a \pm
was sent and which origin the sender belongs to. 

Now, the CIF only sends or accepts \pms to or from its
parent window (which typically should be RP-Doc). However,
the CIF does not check any origin while receiving \pms and
does not provide an origin when sending \pms. When RP-Doc
receives a message it expects to come from CIF, RP-Doc
checks if the origin of this message matches LPO and if the
sender's window is the window of CIF. RP-Doc always
provides LPO as the receiver's origin when sending messages
to the window it believes to contain the CIF.

The LD sends~\refprotostep{sidp-ld-ready} only to its opener (RP-Doc)
without providing any receiver's origin to check. After
this, the LD accepts only one \texttt{request} \pm~\refprotostep{sidp-ld-request} and blocks any further incoming \pms. The
sender's origin of the \texttt{request} \pm~\refprotostep{sidp-ld-request} is
used by LD to determine the receiver's origin of the
\texttt{response} \pm~\refprotostep{sidp-ld-response}. LD also fixes the
receiver of~\refprotostep{sidp-ld-response} to be its opener. When RP-Doc sends
the \texttt{request} \pm~\refprotostep{sidp-ld-request} to the LD, it sets the
receiver's origin to be LPO in the \pm. However, any \pm
RP-Doc expects to be sent by LD is not checked (see also
Section~\ref{sec:attacks-browserid}).

During the interaction between RP-Doc and LD, an additional
check is set in place at both parties: If one of both
documents is navigated away, the window of LD is closed
immediately (and therefore any process in the LD is
aborted).

We also note that step~\refprotostep{sidp-rp-doc-dlgCmplt} triggers two checks in
the CIF: First, the CIF checks the current login status at
LPO, by fetching the session context~\refprotostep{sidp-cif-ctx-2}. Second,
the CIF compares the email address received in~\refprotostep{sidp-rp-doc-liu}
to the one that is marked as being logged in at RP in the
\ls (under the origin of LPO). If in one of the checks the
user is considered to be not logged in, a \texttt{logout}
\pm is sent to RP-Doc (similarly to~\refprotostep{sidp-cif-logout}). Otherwise, if in the second check a mismatch
is detected, the CIF creates a new CAP according to the
information in the \ls and sends it as a so-called
\texttt{login} \pm to RP-Doc. Whether this CAP is used by
RP-Doc or the one received in step~\refprotostep{sidp-ld-response} depends on
the way the RP-Doc uses the API provided by BrowserID. One
possibility (which is considered in the BrowserID test
implementation) is that RP-Doc relays all received CAPs to
the RP server with an \xhr. The RP server, as already
mentioned above, would then verify each CAP it receives and
issue an RP service token every time. This is also what is
done in our model of the BrowserID system.

\subsection{Automatic CAP Creation} Once a run as described
above is completed, an RP-Doc can get CAPs directly during
the Initialization from the CIF: The CIF will automatically
issue a fresh CAP and send it to RP-Doc (in a
\texttt{login} \pm instead of~\refprotostep{sidp-cif-logout}) iff (1) some
email address is marked as logged in at RP in the \ls
(under the origin LPO), (2) if an email address was
provided in the \texttt{loaded} \pm~\refprotostep{sidp-cif-loaded}, this email
address differs from the one recorded in the \ls, and (3)
the user is logged in at LPO (indicated in~\refprotostep{sidp-cif-ctx-1}). If
necessary, a new key pair is created and a corresponding
new UC is requested at LPO.

\subsection{LPO Session} As mentioned before, in the
initial run LPO establishes a session with the browser by
setting a cookie \texttt{browserid\_state} (in step~\refprotostep{sidp-cif-ctx-1}) on the client-side.

If such a run is started again (possibly with some other
RP) with the same browser in an LPO session in which the
user is already logged in at LPO, the user is not asked
again by the LD to provide her credentials. Instead she is
presented a list of her email addresses (which is fetched
from LPO and cached in the \ls) in order to choose one
address. Then, she is asked if she trusts the computer she
is using and is given the option to be logged in for one
month or ``for this session only'' (\emph{ephemeral}
session). However, in any case cookies will be stored for
some time in the browser and will be valid for some time on
the LPO server (one hour to 30 days).

\subsection{Logout} We have to differentiate between two
ways of logging out: an RP logout and an LPO logout. An RP
logout is handled by the CIF after it has received a
\texttt{logout} \pm from RP-Doc. The CIF changes the \ls
(under the origin of LPO) such that no email address is
recorded to be logged in at RP and replies to RP-Doc with a
\texttt{logout} \pm. RP-Doc can run some callback it may
have registered before.

An LPO logout essentially requires to logout at the web
site of LPO. The LPO logout removes all key pairs and
certificates from the \ls and invalidates the session on
the LPO server.

\section{BrowserID Model}\label{app:browseridmodel}

In this section, we provide the full BrowserID model, i.e.,
the web system $\bidwebsystem=(\bidsystem, \scriptset,
\mathsf{script}, E_0)$, and its security properties. We
note that our model considers the BrowserID system with the
fixes proposed in Section~\ref{sec:attacks-browserid}.

We first note that a UC $\mi{uc}$ for the identity (email
address) $i$ and public key (verification key) $\pub(k_u)$,
where $k_u$ is the private key (signing key) of the user,
is modeled to be a message of the form $\mi{uc}=\sig{\an{i,
    \pub(k_u)}}{k^\LPO}$, where $k^\LPO$ is the signing key
of $\LPO$. An IA $ia$ for an origin
$\mi{o}=\an{\str{example.com},\str{S}}$ is a message of the
form $ia=\sig{o}{k_u'}$. Now, a CAP is of the form
$\an{\mi{uc},\mi{ia}}$. We call that CAP \emph{valid} iff
$k_u = k_u'$. Note that the time stamps are omitted both
from the UC and the IA. This models that both certificates
are valid indefinitely. In reality, as explained in
Section~\ref{sec:browserid}, they are valid for a certain
period of time, as indicated by the time stamps. So our
modeling is a safe overapproximation.

The set $\nonces$ of nonces is partitioned into three
disjoint sets, an infinite set $N^\bidsystem$, an infinite
set $K_\text{private}$, and a finite set $\PLISecrets$. The
set $N^\bidsystem$ is further partitioned into infinite
sets of nonces, one set $N^p\subseteq N^\bidsystem$ for
every $p\in \bidsystem$.

The sets $\addresses$ and $\dns$ have already been
introduced in Section~\ref{sec:browerIDmodel}. Let
$\mapAddresstoAP$ and $\mapDomain$ denote the assignments
from atomic processes to sets of $\addresses$ and $\dns$,
respectively, where browsers (in $\fAP{B})$ do not have a
domain. If $\mapDomain$ or $\mapAddresstoAP$ returns a set
with only one element, we often write $\mapDomain(x)$ or
$\mapAddresstoAP(x)$ to refer to the element. Let
$\mapKey\colon \dns \to K_\text{private}$ be an injective
mapping that assigns a private key to every domain.
The set $\PLISecrets\subseteq \nonces$ is the set of
passwords (secrets) the browsers share with $\LPO$. Let $\IDs$
be the finite set of a email address (IDs) of the form
$\an{\mi{name},d}$, with $\mi{name}\in \mathbb{S}$ and
$d\in \dns$, registered at $\LPO$. As mentioned in
Section~\ref{sec:browerIDmodel}, different browsers own
disjoint sets of secrets and different secrets are assigned
disjoint sets of IDs. Let $\mapPLItoOwner: \PLISecrets \to
\fAP{B}$ denote the mapping that assigns to each secret the
browser that owns this secret. Let $\mapIDtoPLI: \IDs \to
\PLISecrets$ denote the mapping that assigns to each
identity the secret it belongs to. Now, we define the
mapping $\mapIDtoOwner: \IDs \to \fAP{B}, i \mapsto
\mapPLItoOwner(\mapIDtoPLI(i))$, which assigns to each
identity the browser that owns this identity.

We are now ready to define the attacker, the browsers, $\LPO$,
the relying parties, and the scripts used in BrowserID in
more detail. As for the attacker and browsers, nothing much
needs to be specified. The attacker will be modeled as a
network attacker as specified in
Section~\ref{sec:websystem} and browsers will simply be web
browsers as specified in Section~\ref{sec:web-browsers}. We
only need to fix the addresses the attacker and the
browsers can listen to and their initial states.

\subsection{Attacker}\label{app:attacker} As mentioned, the attacker
$\fAP{attacker}$ is modeled to be a network attacker as
specified in Section~\ref{sec:websystem}. We allow it to
listen to/spoof all available IP addresses, and hence,
define $I^\fAP{attacker} = \addresses$. His initial state
is $s_0^\fAP{attacker} = \an{attdoms, \mi{pubkeys}}$, where
$attdoms$ is a sequence of all domains along with the
corresponding private keys owned by the attacker and
$\mi{pubkeys}$ is a sequence of all domains and the
corresponding public keys. All other parties use the
attacker as a DNS server.

\subsection{Browsers}\label{sec:browsers} 
Each $b \in \fAP{B}$ is a web browser as defined in
Section~\ref{sec:web-browsers}, with $I^b :=
\{\mapAddresstoAP(b)\}$ being its address and the initial
state $s_0^b$ defined as follows: the key mapping maps
every domain to its public key, according to the mapping
$\mapKey$; the DNS address is
$\mapAddresstoAP(\fAP{attacker})$; the secrets are those
owned by $b$ (as defined above) and they are indexed by the
origin $\an{\domLPO,\str{S}}$; $\mi{sts}$ is
$\an{\domLPO}$. (Without the latter, the
attacker could trivially inject, by an HTTP response, its
own \texttt{browserid\_state} cookie and by this violate
property \textbf{(B)}.)

\subsection{LPO} \label{sec:lpo}$\LPO$ is a an atomic DY process
$(I^\fAP{LPO}, Z^\fAP{LPO}, R^\fAP{LPO}, s^\fAP{LPO}_0,
N^\fAP{LPO})$ with the IP address $I^\fAP{LPO} =
\{\mapAddresstoAP(\fAP{LPO})\}$.
The initial state $s^\LPO_0$ of $\LPO$ contains
the private key of its domain, its signing key
$k^\LPO$, all secrets in $\PLISecrets$ and the
corresponding sequences of IDs. ($\LPO$ does not need the
public keys of other parties, which is why we omit them
from $\LPO$'s initial state.). The definition of $R^\LPO$
follows the description of $\LPO$ in
Section~\ref{sec:browserid} in a straightforward way.
HTTP responses by $\LPO$ can contain, besides complex
terms (e.g., for \xhr responses), strings
representing scripts, namely the script
$\str{script\_LPO\_cif}$ run in the CIF and the script
$\str{script\_LPO\_ld}$ run in the LD. These scripts are
defined in Appendices~\ref{app:scriptLPOcif} and
\ref{app:scriptLPOld}, respectively.

Before we provide a detailed formal specification of
$\LPO$, we first provide an informal description.

\myparagraph{Client sessions at $\LPO$.}  Any party can
establish a \emph{session} at $\LPO$. Such a session can
either be authenticated or unauthenticated. Roughly
speaking, a session becomes authenticated if a client has
provided a secret $sec\in \PLISecrets$ to $\LPO$ in the
session. Such a session is then authenticated for all IDs
associated with $sec$, i.e., for all such IDs an UC can be
requested from $\LPO$ within the authenticated session.
(Recall that for every secret, $\LPO$ contains a list of all
IDs associated with that secret.)  An authenticated session
can (non-deterministically) \emph{expire}, i.e. the
authenticated session can get unauthenticated or it is
removed completely. Such an expiration is used to model a
user logout or a session expiration caused by a timeout.

More specifically, a session is identified by a nonce,
which is issued by $\LPO$. Moreover, a session is associated
with some xsrfToken, which is also a nonce issued by
$\LPO$. $\LPO$ stores all information about established sessions
in its state as a dictionary indexed by the session
identifier. In this dictionary, for every session $\LPO$
stores a pair containing the xsrfToken and, in
authenticated sessions, the sequence of all IDs associated
with the secret provided in the session, or, in
unauthenticated sessions, the empty sequence $\an{}$ of
IDs. On the receiver side (typically a browser) $\LPO$ placed,
by appropriate headers in its HTTPS responses, a cookie
named $\str{browserid\_state}$ whose value is the session
identifier (a nonce). This cookie is flagged to be a
session, httpOnly, and secure cookie.

\myparagraph{HTTPSRequests to $\LPO$.}  $\LPO$ answers only to
certain requests (listed below). All such requests have to
be over HTTPS. Also, all responses send by $\LPO$ contain the
$\str{Strict \mhyphen Transport \mhyphen Security}$
header. 

\begin{description}

\item[\texttt{GET /cif}.] $\LPO$ replies to this request by
  providing the script $\str{script\_LPO\_cif}$.

\item[\texttt{GET /ld}.] $\LPO$ replies to this request by
  providing the script $\str{script\_LPO\_ld}$.

\item[\texttt{GET /ctx}.]  $\LPO$ replies with a session
  context. More precisely, $\LPO$ first checks if a cookie
  $\str{browserid\_state}$ was sent within this request and
  if its value identifies a session within $\LPO$'s state. If
  such a session exists, $\LPO$ responds to such a request
  with the list of (authenticated) IDs for this
  session,\footnote{In the real implementation, the session
    context only contains a flag indicating the
    authentication state of the session. However, another
    GET request interface is available to retrieve the list
    of authenticated IDs for the current session. Here, for
    simplicity, we right away provide all authenticated IDs
    in the session context.} the xsrfToken, and a $\str{Set
    \mhyphen Cookie}$ header, which sets the
  $\str{browserid\_state}$ cookie. If no cookie
  $\str{browserid\_state}$ was sent in the request, or if
  the value of the cookie $\str{browserid\_state}$ does not
  identify a session within $\LPO$'s state, $\LPO$ first creates
  a new session. Such a new session contains a fresh nonce
  as a session identifier, the empty sequence $\an{}$ of
  IDs, and a fresh nonce as a xsrfToken. Once such a
  session is created, $\LPO$ responds as above.

\item[\texttt{POST /auth}.] This request is sent to
  authenticate a session at $\LPO$. A request to this
  interface has to contain some secret $sec\in \PLISecrets$
  in its body. The request also has to contain the cookie
  $\str{browserid\_state}$ which has to refer to some
  session in the state of $\LPO$. Moreover, the request has to
  contain an xsrfToken in its body which has to match the
  one recorded in the considered session in $\LPO$'s
  state. The session recorded in the state of $\LPO$ is then
  modified to include the sequence of all IDs associated
  with $sec$. The response to such a request contains some
  static acknowledgment. 

\item[\texttt{POST /certreq}.] Such a request is sent to
  $\LPO$ in order to request a UC. The request has to contain
  an ID and a public key in its body, for which a UC is
  requested. The request also has to contain a cookie named
  $\str{browserid\_state}$ which has to refer to some
  session recorded in the state of $\LPO$. Moreover, the
  request has to contain an xsrfToken in its body which
  matches the xsrfToken in the considered session record in
  $\LPO$'s state. Also, the sequence of IDs in the considered
  session recorded in $\LPO$'s state has to contain the ID
  provided in the (body of the) request. This ID is then
  paired with the public key sent in the request and the
  resulting pair is then signed with $k^\LPO$. In
  other words, a UC is created for the ID and the public
  key provided in the request. Finally, $\LPO$ responds with
  this UC.

\end{description}

We now define $\LPO$ formally as an atomic DY process
$(I^\LPO, Z^\LPO, R^\LPO, s^\LPO_0,
N^\LPO)$. As already mentioned, we define $I^\LPO
= \{\mapAddresstoAP(\LPO)\}$.

In order to define the set $Z^\LPO$ of states of $\LPO$,
we first define the terms describing the session context of
a session. 

\begin{definition}
  A term of the form
  $\an{\mi{ids},\mi{xsrfToken}}$ with $\mi{ids} \subsetPairing
  \IDs$ and $\mi{xsrfToken} \in \nonces$ is called an
  \emph{LPO session context}. We denote the set of all LPO
  session contexts by $\LPOSessionCTXs$.
\end{definition}

Now, we define the set $Z^\LPO$ of states of LPO as
well as the initial state $s^\LPO_0$ of LPO.

\begin{definition}
  A state $s\in Z^\LPO$ of LPO is a term of the form
  $\langle\mi{nonces}$, $\mi{sslkey}$, $\mi{signkey}$,
  $\mi{sessions}$, $\mi{secrets}\rangle$ where $\mi{nonces}
  \subsetPairing \nonces$ (used nonces),
  $\mi{sslkey}=\mapKey(\mapDomain(\LPO))$, $\mi{signkey}=
  k^{\LPO}$, $\mi{sessions} \in
  \dict{\nonces}{\LPOSessionCTXs}$, and $\mi{secrets} \in
  \dict{\PLISecrets}{\terms}$ is a dictionary which assigns
  to every secret $sec\in \PLISecrets$ the sequence of all
  IDs associated with $\mi{sec}$.\footnote{As mentioned
    before, the state of LPO does not need to contain
    public keys.}

    The \emph{initial state $s^\LPO_0$ of LPO} is a state of LPO
    with $s^\LPO_0.\str{nonces} = \an{}$ and $s^\LPO_0.\str{sessions} =
    \an{}$.
\end{definition}

\begin{example} A possible state $s$ of LPO may look like
  this:
\begin{align*}
  s &=
  \an{\an{n_1,\ldots,n_m},k,k^\LPO,\mi{sessions},\mi{secrets}} 
\end{align*}
with
\begin{align*}
  \mi{sessions} &= \an{ \an{\str{sessionid_1},\an{\an{\mi{id}'_1, \ldots, \mi{id}'_l},\str{xsrfToken}}} , \ldots } \\
  \mi{secrets} &= \an{ \an{\str{secret_1},\an{\mi{id}_1, \ldots, \mi{id}_p}} , \ldots  }
\end{align*}
\end{example}

We now specify the relation $R^\LPO \subseteq (\events \times Z^\LPO) \times
  (2^\events \times Z^\LPO)$ of LPO. Just like
in Appendix~\ref{sec:descr-web-brows}, we describe this
relation by a non-deterministic algorithm.  We note that we
use the function \textsf{TAKENONCE} introduced in
Section~\ref{app:proceduresbrowser} for this purpose.

\newpage
\vspace*{-1.78em}
\captionof{algorithm}{\label{alg:lpo} Relation of LPO
  $R^\LPO$ }
\begin{algorithmic}[1]
\Statex[-1] \textbf{Input:} $(a{:}f{:}m),s$
 \Let{$s'$}{$s$}

 \If{$m \equiv \trigger$} \Comment{Triggers a
   (non-deterministic) logout or session expiration}
  \If{$s'.\str{sessions} \not\equiv \an{}$}
  \LetND{$\mi{sessionid}$}{$\{\mi{id}_j | \mi{id}_j \in s'.\str{sessions}\}$}
  \If{$\mi{sessionid} \inPairing s'.\str{sessions}$}
   \LetND{$\mi{choice}$}{$\{\str{logout},\str{expire}\}$}
   \If{$\mi{choice \equiv \str{logout}}$}
    \Let{$\mi{session}$}{$s'.\str{sessions}[\mi{sessionid}]$}
    \Let{$\mi{session}[ids]$}{$\an{}$}
    \Let{$s'.\str{sessions}[\mi{sessionid}]$}{$\mi{session}$}
    \State \textbf{stop} $\{\},s'$
   \Else
    \State \textbf{remove} the element with key $\mi{sessionid}$ from the dictionary $s'.\str{sessions}$
    \State \textbf{stop} $\{\},s'$
   \EndIf
  \EndIf
  \EndIf

 \Else
 
  \LetST{$m_\text{dec}$, $k'$}{$\an{m_\text{dec},k'} \equiv
    \dec{m}{s'.\str{sslkey}}$}{\textbf{stop}~$\{\},s$}

  \LetST{$n$, $\mi{method}$, $\mi{path}$, $\mi{params}$, $\mi{headers}$, $\mi{body}$}
        {$ 
           \hreq{ nonce=n,
            method=\mi{method},
            host=\domLPO,
            path=\mi{path},
            parameters=\mi{params},
            headers=\mi{headers},
            body=\mi{body}
          } \equiv m_\text{dec}$}{\textbf{stop}~$\{\},s$} \label{line:lpo-does-https-only}

        \If{$\mi{method} \equiv \mGet \wedge \mi{path} \equiv
          \str{/cif}$} \Comment{Deliver CIF script}
   \LetBreak{$m'$}{$\ehresp{ nonce=n,
                         status=200,
                         headers=\an{\an{\str{Strict \mhyphen Transport \mhyphen Security},\True}},
                         body={\langle\str{script\_LPO\_cif},\linebreak[4]\mi{initState_{cif}}\rangle}
                       }{k'}$}
                     \StatexNoIndent where $\mi{initState_{cif}}$ is the initial scriptstate of $\mi{script\_LPO\_cif}$ according to Definition~\ref{def:scriptstatelpocif}.
   \Let{$E$}{$\{  (f{:}a{:}m') \}$}
   \State \textbf{stop} $E$, $s'$

  \ElsIf{$\mi{method} \equiv \mGet \wedge \mi{path} \equiv \str{/ld}$} \Comment{Deliver LD script.}

   \LetBreak{$m'$}{$\ehresp{ nonce=n,
                         status=200,
                         headers=\an{\an{\str{Strict \mhyphen Transport \mhyphen Security},\True}},
                         body=\an{\str{script\_LPO\_ld},\mi{initState_{ld}}}
                       }{k'}$} 
                     \StatexNoIndent where $\mi{initState_{ld}}$ is the initial scriptstate of $\mi{script\_LPO\_ld}$ according to Definition~\ref{def:scriptstatelpold}.
   \Let{$E$}{$\{ (f{:}a{:}m') \}$}
   \State \textbf{stop} $E$, $s'$

  \ElsIf{$\mi{method} \equiv \mGet \wedge \mi{path} \equiv \str{/ctx}$} \Comment{Deliver context information.}

   \Let{$\mi{sessionid}$}{$\mi{headers}[\str{Cookie}][\str{browserid\_state}]$}

   \If{$\mi{sessionid} \not\inPairing s'.\str{sessions}$} \Comment{Create new session if needed.}
    \Let{$\mi{sessionid},s'$}{\textsf{TAKENONCE}($s'$)}
    \Let{$\mi{ids}$}{$\an{}$}
    \Let{$\mi{xsrfToken},s'$}{\textsf{TAKENONCE}($s'$)} \label{line:chose-xsrftoken}
    \Let{$s'.\str{sessions}$}{$s'.\str{sessions} \plusPairing \an{\mi{sessionid},\an{\mi{ids},\mi{xsrfToken}}}$}
   \EndIf

   \Let{$\mi{result}$}{$s'.\str{sessions}[\mi{sessionid}]$} \label{line:release-xsrftoken}

   \Let{$\mi{headers'}$}{$\langle
                                \an{\str{Strict \mhyphen Transport \mhyphen Security},\True},$
              \Statex            $ \an{\str{Set \mhyphen Cookie},
                                    \an{\an{\str{browserid\_state},
                                        \an{ \mi{sessionid},
                                             \True,
                                             \True,
                                             \True
                                           }
                         }}}\rangle$} \label{line:set-session-cookie}

   \Let{$m'$}{$\ehresp{ nonce=n,
                         status=200,
                         headers=\mi{headers'},
                         body=\mi{result}
                       }{k'}$}
   \Let{$E$}{$\{ (f{:}a{:}m') \}$}
   \State \textbf{stop} $E$, $s'$

  \ElsIf{$\mi{method} \equiv \mPost \wedge \mi{path} \equiv \str{/auth}$} \Comment{Authenticate session.}

   \Let{$\mi{sessionid}$}{$\mi{headers}[\str{Cookie}][\str{browserid\_state}]$}
   \LetST{$\mi{secret}$, $\mi{xsrfToken}$}{$ \an{\mi{secret},\mi{xsrfToken}}\equiv \mi{body}$}{\textbf{stop}~$\{\},s$}

   \If{$\mi{sessionid} \inPairing s'.\str{sessions}$}
    \If{$\mi{secret} \inPairing s'.\str{secrets} \wedge s'.\str{sessions}[\mi{sessionid}].\str{xsrfToken} \equiv \mi{xsrfToken}$}
     \Let{$\mi{ids}$}{$s'.\str{secrets}[\mi{secret}]$}
     \Let{$s'.\str{sessions}[\mi{sessionid}].\str{ids}$}{$\mi{ids}$}
     \Let{$m'$}{$\ehresp{ nonce=n,
                         status=200,
                         headers=\an{\an{\str{Strict \mhyphen Transport \mhyphen Security},\True}},
                         body=\True
                       }{k'}$}
     \Let{$E$}{$\{ (f{:}a{:}m') \}$}
     \State \textbf{stop} $E$, $s'$
    \EndIf
   \EndIf

  \ElsIf{$\mi{method} \equiv \mPost \wedge \mi{path} \equiv \str{/certreq}$} \Comment{Sign pubkey, deliver UC} \label{line:sign-pubkey}

   \Let{$\mi{sessionid}$}{$\mi{headers}[\str{Cookie}][\str{browserid\_state}]$} 
   \Let{$\mi{ids}$}{$s'.\str{sessions}[\mi{sessionid}].\str{ids}$}
   \Let{$\mi{xsrfToken}$}{$s'.\str{sessions}[\mi{sessionid}].\str{xsrfToken}$}
   \LetST{$\mi{id}$, $\mi{pubkey}$, $\mi{xsrfToken}'$}{$\an{\mi{id},\mi{pubkey},\mi{xsrfToken}'} \equiv \mi{body}$}{\textbf{stop} $\{\},s$}

   \If{$\mi{id} \inPairing \mi{ids} \wedge \mi{xsrfToken} \equiv \mi{xsrfToken}'$}

    \Let{$\mi{uc}$}{$\sig{\an{\mi{id},\mi{pubkey}}}{s'.\str{signkey}}$} \label{line:lpo-issue-uc} \label{line:start-certreq}

    \Let{$m'$}{$\ehresp{ nonce=n,
                         status=200,
                         headers=\an{\an{\str{Strict \mhyphen Transport \mhyphen Security},\True}},
                         body=\mi{uc}
                       }{k'}$}
    \Let{$E$}{$\{ (f{:}a{:}m') \}$}
    \State \textbf{stop} $E$, $s'$ \label{line:end-certreq}

   \EndIf

  \EndIf

 \EndIf
 \State \textbf{stop} $\{\},s$
\end{algorithmic} \setlength{\parindent}{1em}

\subsection{Relying Parties} \label{app:relying-parties} 

A relying party $r \in \fAP{RP}$ is a web server modeled as
an atomic DY process $(I^r, Z^r, R^r, s^r_0, N^r)$ with the
address $I^r := \{\mapAddresstoAP(r)\}$. Its initial state
$s^r_0$ contains its domain, the private key associated
with its domain, the public key of LPO, and the set of
service token it has provided. The definition of $R^r$
again follows the description in
Appendix~\ref{app:browserid-sidp-lowlevel}. RP only accepts
messages sent over HTTPS. Whenever $r$ receives a $\mGet$
message, it returns the script $\str{script\_RP\_index}$
(see below) and sets the
\texttt{Strict-Trans\-port-Security} header. If $r$
receives an HTTPS $\mPost$ message, it checks if (1) the
message contains a valid CAP for $r$, and (2) the header of
the message contains an $\str{Origin}$ header which only
contains a single origin and this origin is $r$'s domain
with HTTPS. If this check is successful, $r$ responds with
a token of the form $\an{n, i}$ (sent in the body of the
response), where $i \in \IDs$ is the ID for which the CAP
was issued and $n$ is a freshly chosen nonce. We call, as
mentioned in Section~\ref{sec:analysisbrowseridsidp},
$\an{n, i}$ an \emph{RP service token (for ID $i$).} As
mentioned, $r$ keeps a list of such tokens in its
state. Intuitively, a client in possession of such a token
can use the service of $r$ for ID $i$ (e.g., access data of
$i$ at $r$).

We now provide the formal definition of $r$ as an atomic DY
process $(I^r, Z^r, R^r, s^r_0, N^r)$. As mentioned, we
define $I^r = \{\mapAddresstoAP(r)\}$. Next, we define the
set $Z^r$ of states of $r$ and the initial state $s^r_0$ of
$r$.

\begin{definition}
  A state $s\in Z^r$ of an RP $r$ is a term of the form
  $\langle\mi{nonces}$, $\mi{sslkey}$, $\mi{domain}$,
  $\mi{pubkLPO}$, $\mi{serviceTokens}\rangle$ where
  $\mi{nonces} \subsetPairing \nonces$ (used nonces),
  $\mi{sslkey}=\mapKey(\mapDomain(\LPO))$,
  $\mi{domain}=\mapDomain(r)$,
  $\mi{pubkLPO}=\pub(\mapKey(\mapDomain(\LPO)))$,
  $\mi{serviceTokens}\in\dict{\nonces}{\mathbb{S}}$.

  The \emph{initial state $s^r_0$ of $r$} is a state of $r$
  with $s^r_0.\str{nonces} = \an{}$ and
  $s^r_0.\str{serviceTokens} = \an{}$.
\end{definition}

We now specify the relation $R^r \subseteq (\events \times Z^r ) \times
  (2^\events \times Z^r)$ of $r$. Just like
in Appendix~\ref{sec:descr-web-brows}, we describe this
relation by a non-deterministic algorithm.  We note that we
use the function \textsf{TAKENONCE} introduced in
Section~\ref{app:proceduresbrowser} for this purpose.

\captionof{algorithm}{\label{alg:rp} Relation of a Relying
  Party $R^r$}
\begin{algorithmic}[1]
 \Statex[-1] \textbf{Input:} $(a{:}f{:}m),s$
  \Let{$s'$}{$s$}

  \LetST{$m_\text{dec},k'$}{$\an{m_\text{dec},k'} \equiv \dec{m}{s'.\str{sslkey}}$}{\textbf{stop}~$\{\},s$}

  \LetST{$n$, $\mi{method}$, $\mi{path}$, $\mi{params}$, $\mi{headers}$, $\mi{body}$}
        {$
           \hreq{ nonce=n,
            method=\mi{method},
            host=s'.\str{domain},
            path=\mi{path},
            parameters=\mi{params},
            headers=\mi{headers},
            body=\mi{body}
          } \equiv  m_\text{dec}$}{\textbf{stop}~$\{\},s$}

  \If{$\mi{method} \equiv \mGet$} \Comment{Deliver RP's index script}

   \LetBreak{$m'$}{$\ehresp{ nonce=n,
                         status=200,
                         headers=\!\an{\an{\str{Strict \mhyphen Transport \mhyphen Security},\!\True}},
                         body=\an{\str{script\!\_\!RP\!\_\!index},\mi{initState_{rp\!\_\!index}}}
                       }{k'}$}
                     \StatexNoIndent where $\mi{initState_{rp\_index}}$ is the initial scriptstate of $\mi{script\_RP\_index}$ according to Definition~\ref{def:scriptstaterpindex}.
   \Let{$E$}{$\{  (f{:}a{:}m')
        \}$}

    \State \textbf{stop} $E$, $s'$
  
  \ElsIf{$(\mi{method}\equiv\mPost) \wedge (\mi{headers}\equiv\an{\an{\str{Origin},\an{s'.\str{domain},\str{S}}}})$} \Comment{Check received CAP}

     \LetST{$\mi{uc}$, $\mi{ia}$}{$\an{\mi{uc},\mi{ia}} \equiv \mi{body}$}{\textbf{stop} $\{\},s$}  \Comment{Extract UC and IA}
     \Let{$i$}{$\proj{1}{\unsig{\mi{uc}}}$}  \Comment{Extract ID from UC}
     \Let{$\mi{pku}$}{$\proj{2}{\unsig{\mi{uc}}}$} \Comment{Extract pubkey from UC}
     \Let{$o$}{$\unsig{\mi{ia}}$} \Comment{Extract audience from IA}

     \If{$(\checksig{\mi{uc}}{s'.\str{pubkLPO}} \equiv \True) \wedge (\checksig{\mi{ia}}{\mi{pku} } \equiv \True \wedge o \equiv \an{s'.\str{domain},\https})$} \label{line:rp-checksig} \label{line:rp-accepts-https-ia-only} 

        \Let{$n', s'$}{\textsf{TAKENONCE}($s'$)} \Comment{Issue service token}
        \Let{$\comp{s'}{\str{serviceTokens}}$}{$\comp{s'}{\str{serviceTokens}} + \an{n',i}$}
        \Let{$m'$}{$\ehresp{ nonce=n,
                         status=200,
                         headers=\an{},
                         body=\an{n',i}
                       }{k'}$}
        \Let{$E$}{$\{ (f{:}a{:}m')
        \}$}
        \State \textbf{stop} $E$, $s'$
     \EndIf

  \EndIf
 \State \textbf{stop} $\{\}$, $s$

\end{algorithmic} \setlength{\parindent}{1em}

\subsection{BrowserID Scripts}  \label{app:browserid-scripts}
As already mentioned in Section~\ref{sec:browerIDmodel},
the set $\scriptset$ of the web system
$\bidwebsystem=(\bidsystem, \scriptset, \mathsf{script},
E_0)$ consists of the scripts $\Rasp$,
$\mi{script\_RP\_index}$, $\mi{script\_LPO\_cif}$, and
$\mi{script\_LPO\_ld}$ with their string representations
$\str{att\_script}$, $\str{script\_RP\_index}$,
$\str{script\_LPO\_cif}$, and $\str{script\_LPO\_ld}$
(defined by $\mathsf{script}$). The script $\Rasp$ is the
attacker script (see Section~\ref{sec:websystem}). The
formal model of the latter two scripts follows the
description in Appendix~\ref{app:browserid-sidp-lowlevel} in a
straightforward way. The script $\mi{script\_RP\_index}$
defines the script of the RP index page. In reality, this
page has its own script(s) and includes a script from
LPO. In our model, we combine both scripts to
$\mi{script\_RP\_index}$.  In particular, this script is
responsible for creating the CIF and the LD
\iframes/subwindows, whose content (scripts) are loaded
from LPO.

In what follows, the scripts $\mi{script\_RP\_index}$,
$\mi{script\_LPO\_cif}$, and $\mi{script\_LPO\_ld}$ are
defined formally. First, we introduce some notation and
helper functions.

\subsubsection{Some Notation and Helper Functions}\label{sec:some-notation-helper}

In the formal description of the scripts we use an
abbreviation for URLs for LPO. We write
$\mathsf{URL}^\LPO_\mi{path}$ to describe the
following URL term: $\an{\tUrl, \https,
  \domLPO,\mi{path},\an{}}$
Also, we call
$\mathsf{origin}_\LPO$ the origin of LPO which
describes the following origin term:
$\an{\domLPO,\https}$.

The \ls under the origin of LPO used by the scripts
$\mi{script\_LPO\_cif}$ and $\mi{script\_LPO\_ld}$ is
organized as follows: it is a dictionary containing only
one entry. This entry consists of the key $\str{siteInfo}$
and (as its value) a dictionary where this dictionary has
origins as keys with IDs as values indicating that a
certain ID (of the user) is logged in at the referenced
origin. Here is an example a possible \ls.

\begin{example} \label{ex:lpo-localstorage}
  \begin{align}
    \an{
        \an{\str{siteInfo},\an{
                                \an{ \an{\str{domain_{RP1}},\str{S}},\str{id_1} },
                                \an{ \an{\str{domain_{RP2}},\str{S}},\str{id_1} },
                                \an{ \an{\str{domain_{RP3}},\str{S}},\str{id_2} }
                              }
           }
       }
  \end{align} This example shows a \ls under the origin of $\LPO$, indicating that the user is logged in at $\str{domain_{RP1}}$ and $\str{domain_{RP2}}$ with $\str{id_1}$ and at $\str{domain_{RP3}}$ with $\str{id_2}$ (using HTTPS).
\end{example}

In order to simplify the description of the scripts, several helper functions are used.

\myparagraph{CHOOSEINPUT.}  As explained in
Section~\ref{sec:web-browsers}, the state of a document
contains a term, say $\mi{scriptinputs}$, which records the
input this document has obtained so far (via \xhrs and
\pms). If the script of the document is activated, it will
typically need to pick one input message from
$\mi{scriptinputs}$ and record which input it has already
processed. For this purpose, the function
$\mathsf{CHOOSEINPUT}(s',\mi{scriptinputs})$ is used, where
$s'$ denotes the scripts current state. It saves the
indexes of already handled messages in the scriptstate $s'$
and chooses a yet unhandled input message from
$\mi{scriptinputs}$. The index of this message is then
saved in the scriptstate (which is returned to the script).

\captionof{algorithm}{\label{alg:chooseinput} Choose an unhandled input message for a script}
\begin{algorithmic}[1]
  \Function{$\mathsf{CHOOSEINPUT}$}{$s',\mi{scriptinputs}$}
  \LetST{$\mi{iid}$}{$\mi{iid} \in
    \{1,\cdots,|\mi{scriptinputs}|\} \wedge \mi{iid} \not\inPairing
    s'.\str{handledInputs}$}{\Return$(\bot,s')$}
  \Let{$\mi{input}$}{$\proj{\mi{iid}}{\mi{scriptinputs}}$}
  \Let{$s'.\str{handledInputs}$}{$s'.\str{handledInputs} \plusPairing
    \mi{iid}$}

  \State \Return$(\mi{input}, s')$
  \EndFunction
\end{algorithmic} \setlength{\parindent}{1em}

\myparagraph{PARENTWINDOW.} To determine the nonce
referencing the parent window in the browser, the function
$\mathsf{PARENTWINDOW}(\mi{tree}, \mi{docnonce})$ is
used. It takes the term $\mi{tree}$, which is the (partly
cleaned) tree of browser windows the script is able to see
and the document nonce $\mi{docnonce}$, which is the nonce
referencing the current document the script is running in,
as input. It outputs the nonce referencing the window which
directly contains in its subwindows the window of the
document referenced by $\mi{docnonce}$. If there is no such
window (which is the case if the script runs in a document
of a top-level window), $\mathsf{PARENTWINDOW}$ returns
$\bot$.

\myparagraph{SUBWINDOWS.} This function takes a term
$\mi{tree}$ and a document nonce $\mi{docnonce}$ as input
just as the function above. If $\mi{docnonce}$ is not a
reference to a document contained in $\mi{tree}$, then
$\mathsf{SUBWINDOWS}(\mi{tree},\mi{docnonce})$ returns
$\an{}$. Otherwise, let $\an{\mi{docnonce}$, $\mi{origin}$,
  $\mi{script}$, $\mi{scriptstate}$, $\mi{scriptinput}$,
  $\mi{subwindows}$, $\mi{active}}$ denote the subterm of
$\mi{tree}$ corresponding to the document referred to by
$\mi{docnonce}$. Then,
$\mathsf{SUBWINDOWS}(\mi{tree},\mi{docnonce})$ returns
$\mi{subwindows}$.

\myparagraph{AUXWINDOW.} This function takes a term
$\mi{tree}$ and a document nonce $\mi{docnonce}$ as input
as above. From all window terms in $\mi{tree}$ that have
the window containing the document identified by
$\mi{docnonce}$ as their opener, it selects one
non-deterministically and returns its nonce. If there is no such
window, it returns the nonce of the window containing
$\mi{docnonce}$.

\myparagraph{OPENERWINDOW.} This function takes a
term $\mi{tree}$ and a document nonce $\mi{docnonce}$ as
input as above. It returns the window nonce of the opener
window of the window that contains the document identified
by $\mi{docnonce}$. Recall that the nonce identifying the
opener of each window is stored inside the window term. If
no document with nonce $\mi{docnonce}$ is found in the tree
$\mi{tree}$, $\notdef$ is returned.

\myparagraph{GETWINDOW.} This function takes a term
$\mi{tree}$ and a document nonce $\mi{docnonce}$ as input
as above. It returns the nonce of the window containing $\mi{docnonce}$.

\myparagraph{GETORIGIN.} To extract the origin of a
document, the function
$\mathsf{GETORIGIN}(\mi{tree},\mi{docnonce})$ is used. This
function searches for the document with the identifier
$\mi{docnonce}$ in the (cleaned) tree $\mi{tree}$ of the
browser's windows and documents. It returns the origin $o$
of the document. If no document with nonce $\mi{docnonce}$
is found in the tree $\mi{tree}$, $\notdef$ is returned.

\subsubsection{$\mi{script\_LPO\_cif}$}\label{app:scriptLPOcif}

As defined in Section~\ref{sec:websystem}, a script is a
relation that takes as input a term and a set of nonces it
may use. It outputs a new term. As specified in
Section~\ref{sec:web-browsers} (Triggering the Script of a
Document (\textbf{\hlExp{$m = \trigger$},
  \hlExp{$\mi{action} = 1$}})) and formally specified in
Algorithm~\ref{alg:runscript}, the input term is provided
by the browser. It contains the current internal state of
the script (which we call \emph{scriptstate} in what
follows) and additional information containing all browser
state information the script has access to, such as the
input the script has obtained so far via \xhrs and \pms,
information about windows, etc. The browser expects the
output term to have a specific form, as also specified in
Section~\ref{sec:web-browsers} and
Algorithm~\ref{alg:runscript}. The output term contains,
among other information, the new internal scriptstate.

As for $\mi{script\_LPO\_cif}$, this script models the
script run in the CIF, as sketched in
Appendix~\ref{app:browserid-sidp-lowlevel}.

We first describe the structure of the internal scriptstate
of the script $\mi{script\_LPO\_cif}$.

\begin{definition} \label{def:scriptstatelpocif}
A scriptstate $s$ of $\mi{script\_LPO\_cif}$ is a term of the form $\langle q$, $\mi{parentOrigin}$, $\mi{loggedInUser}$, $\mi{pause}$, $\mi{context}$, $\mi{key}$, $\mi{handledInputs}$, $\mi{refXHRctx}$, $\mi{refXHRcert} \rangle$ where $q \in \mathbb{S}$, $\mi{parentOrigin} \in \origins \cup \{\bot\}$, $\mi{loggedInUser} \in \IDs \cup \{\an{},\bot\}$, $\mi{pause} \in \{\True,\bot\}$, $\mi{context} \in \terms$, $\mi{key} \in \nonces \cup \{\bot\}$, $\mi{handledInputs} \subsetPairing \mathbb{N}$, $\mi{refXHRctx},\mi{refXHRcert} \in \nonces \cup \{\bot\}$.

The initial state $\mi{initState_{cif}}$ of
$\mi{script\_LPO\_cif}$ is the state
$\an{\str{init},\bot,\bot,\bot,\bot,\bot,\an{},\bot,\bot}$.
\end{definition}

Before we provide the formal specification of the relation
that defines the behavior of $\mi{script\_LPO\_cif}$, we
present an informal description. The behavior mainly
depends on the state $q$ the script is in.

\begin{description}
\item[$q=\str{init}$] is the initial state. It's only
  transition handles no input and outputs a \pm
  \texttt{cifready} to its parent window and transitions to
  $\str{default}$.

\item[$q=\str{default}$] is the state to which
  $\str{script\_LPO\_cif}$ always returns to. This state
  handles all \pms the CIF expects to receive. If the \pm
  received was sent from the parent window of the CIF, it
  behaves as follows:

  \begin{description}

  \item[\pm \texttt{loaded}] records the sender's origin of
    the received \pm as the remote origin in the scriptstate. Also, an ID, which represents the assumption of
    the sender on who it believes to be logged in, is saved
    in the scriptstate. If the $\mi{pause}$ flag in the
    scriptstate is $\True$ it transitions to the state
    $\str{default}$. Otherwise, it is checked, if the
    current context in the scriptstate is $\bot$. If the
    check is true, the script transitions to the state
    $\str{fetchContext}$, or to the state
    $\str{checkAndEmit}$ otherwise.

  \item[\pm \texttt{dlgRun}] sets $\mi{pause}$ flag in the
    scriptstate to $\True$ and transitions to
    $\str{default}$.

  \item[\pm \texttt{dlgCmplt}] sets the $\mi{pause}$ flag
    in the scriptstate to $\bot$.
    It then transitions to the state $\str{fetchContext}$.

  \item[\pm \texttt{loggedInUser}] has to contain an
    ID. This ID is saved in the scriptstate and then the
    script transitions to $\str{default}$.

  \item[\pm \texttt{logout}] removes the entry for the RP
    (recorded in the scriptstate) from the \ls and then
    transitions to the state $\str{sendLogout}$.
  \end{description}

\item[$q=\str{fetchContext}$] sends an \xhr to LPO with a
  $\mGet$ request to the path \texttt{/ctx} and then
  transitions to the state $\str{receiveContext}$.

\item[$q=\str{receiveContext}$] expects an \xhr response as
  input containing the session context. This context is
  saved as the current context in the scriptstate. The
  script transitions to $\str{checkAndEmit}$.

\item[$q=\str{checkAndEmit}$] lets the script transition to
  $\str{requestUC}$ iff (1) some email address is marked as
  logged in at RP in the \ls, (2) if an email address is
  recorded in the current scriptstate, this email
  address differs from the one recorded in the \ls, and (3)
  the user is marked as logged in in the current
  context. Otherwise, if the email address recorded in the
  current scriptstate is $\an{}$, the script transitions to
  $\str{default}$, else it transitions to $\str{sendLogout}$.

\item[$q=\str{requestUC}$] creates a new private key (by
  taking a fresh nonce), stores the key in the scriptstate,
  and sends out an \xhr $\mPost$ request with the ID
  recorded in the localStorage for the parent window's
  origin and the public key (which can be derived from the
  private key) to $\LPO$ to get a UC. The script transitions
  to $\str{receiveUC}$.

\item[$q=\str{receiveUC}$] receives an \xhr response (from
  $\LPO$) containing a UC. It creates an IA for the parent
  window's origin, combines the UC and the IA to a CAP, and
  sends the CAP as \texttt{login} \pm to the parent
  window. The script then transitions back to the
  $\str{default}$ state.

\item[$q=\str{sendLogout}$] sends a \texttt{logout} \pm to
  the parent document and then the script transitions to
  the $\str{default}$ state.

\end{description}

We now specify the relation $\mi{script\_LPO\_cif}
\subseteq (\terms \times 2^{\nonces})\times \terms$ of the
CIF's scripting process formally. Just like in
Appendix~\ref{sec:descr-web-brows}, we describe this
relation by a non-deterministic algorithm. 

Just like all scripts, as explained in
Section~\ref{sec:web-browsers} (see also
Algorithm~\ref{alg:runscript} for the formal
specification), the input term this script obtains from the
browser contains the cleaned tree of the browser's windows
and documents $\mi{tree}$, the nonce of the current
document $\mi{docnonce}$, its own scriptstate
$\mi{scriptstate}$ (as defined in
Definition~\ref{def:scriptstatelpocif}), a sequence of all
inputs $\mi{scriptinput}$ (also containing already handled
inputs), a dictionary $\mi{cookies}$ of all accessible
cookies of the document's domain, the \ls
$\mi{localStorage}$ belonging to the document's origin, the
secrets $\mi{secret}$ of the document's origin, and a set
$\mi{nonces}$ of fresh nonces as input. The script returns
a new scriptstate $s'$, a new set of cookies
$\mi{cookies'}$, a new \ls $\mi{localStorage'}$, and a term
$\mi{command}$ denoting a command to the browser.

\captionof{algorithm}{\label{alg:scriptlpocif} Relation of $\mi{script\_LPO\_cif}$ }
\begin{algorithmic}[1]
\Statex[-1] \textbf{Input:} $\an{\mi{tree},\mi{docnonce},\mi{scriptstate},\mi{scriptinputs},\mi{cookies},\mi{localStorage},\mi{sessionStorage},\mi{secret}},\mi{nonces}$ 
\Let{$s'$}{$\mi{scriptstate}$}
\Let{$\mi{cookies}'$}{$\mi{cookies}$}
\Let{$\mi{localStorage}'$}{$\mi{localStorage}$} 

\Switch{$s'.\str{q}$}
 \Case{$\str{init}$}
  \Let{$\mi{command}$}{$\an{\tPostMessage,\textsf{PARENTWINDOW}(\mi{tree}, \mi{docnonce}),\an{\str{cifready},\an{}},\bot}$}
  \Let{$s'.\str{q}$}{$\str{default}$}
  \State \textbf{stop} $\an{s',\mi{cookies}',\mi{localStorage}',\mi{sessionStorage},\mi{command}}$
 \EndCase

 \Case{$\str{default}$}
  \Let{$\mi{input},s'$}{\textsf{CHOOSEINPUT}($s',\mi{scriptinputs}$)}
  \If{$\proj{1}{\mi{input}} \equiv \tPostMessage$}
   \Let{$\mi{senderWindow}$}{$\proj{2}{\mi{input}}$}
   \Let{$\mi{senderOrigin}$}{$\proj{3}{\mi{input}}$}
   \Let{$m$}{$\proj{4}{\mi{input}}$}%

   \If{$\mi{senderWindow} \equiv \textsf{PARENTWINDOW}(\mi{tree}, \mi{docnonce})$}
   \Switch{$m$}
    \Case{$\an{\str{loaded},\mi{id}}$}
      \Let{$s'.\str{parentOrigin}$}{$\mi{senderOrigin}$} \label{line:set-parent-origin}
      \Let{$s'.\str{loggedInUser}$}{$\mi{id}$}
      \If{$s'.\str{pause} \equiv \True$}
       \State \textbf{stop} $\an{s',\mi{cookies}',\mi{localStorage}',\mi{sessionStorage},\an{}}$
      \ElsIf{$s'.\str{context} \equiv \bot$}
       \Let{$s'.\str{q}$}{$\str{fetchContext}$}
       \State \textbf{stop} $\an{s',\mi{cookies}',\mi{localStorage}',\mi{sessionStorage},\an{}}$
      \Else
       \Let{$s'.\str{q}$}{$\str{checkAndEmit}$}
       \State \textbf{stop} $\an{s',\mi{cookies}',\mi{localStorage}',\mi{sessionStorage},\an{}}$
      \EndIf
    \EndCase

    \Case{$\an{\str{dlgRun},\an{}}$}
      \Let{$s'.\str{pause}$}{$\True$}
      \State \textbf{stop} $\an{s',\mi{cookies}',\mi{localStorage}',\mi{sessionStorage},\an{}}$ 
    \EndCase

    \Case{$\an{\str{dlgCmplt},\an{}}$}
      \Let{$s'.\str{pause}$}{$\bot$}
      \Let{$s'.\str{q}$}{$\str{fetchContext}$}
      \State \textbf{stop} $\an{s',\mi{cookies}',\mi{localStorage}',\mi{sessionStorage},\an{}}$ 
    \EndCase

    \Case{$\an{\str{loggedInUser},\mi{id}}$}
      \Let{$s'.\str{loggedInUser}$}{$\mi{id}$}
      \State \textbf{stop} $\an{s',\mi{cookies}',\mi{localStorage}',\mi{sessionStorage},\an{}}$ 
   \EndCase

    \Case{$\an{\str{logout},\an{}}$}
    \EndCase
   \EndSwitch
   \EndIf
  \EndIf
 \EndCase

 \Case{$\str{fetchContext}$}
  \LetND{$s'.\str{refXHRctx}$}{$\mi{nonces}$}
  \Let{$\mi{command}$}{$\an{\tXMLHTTPRequest,\textsf{URL}^\LPO_\str{/ctx},\mGet,\an{},s'.\str{refXHRctx}}$} \label{line:ctx-over-https}
  \Let{$s'.\str{q}$}{$\str{receiveContext}$}
  \State \textbf{stop} $\an{s',\mi{cookies}',\mi{localStorage}',\mi{sessionStorage},\mi{command}}$
 \EndCase

 \Case{$\str{receiveContext}$}
  \Let{$\mi{input},s'$}{\textsf{CHOOSEINPUT}($s',\mi{scriptinputs}$)}
  \If{$(\proj{1}{\mi{input}} \equiv \tXMLHTTPRequest) \wedge (\proj{3}{\mi{input}} \equiv s'.\str{refXHRctx})$}
   \Let{$s'.\str{context}$}{$\proj{2}{\mi{input}}$} \label{line:store-context}
   \Let{$s'.\str{q}$}{$\str{checkAndEmit}$}
   \State \textbf{stop} $\an{s',\mi{cookies}',\mi{localStorage}',\mi{sessionStorage},\an{}}$
  \EndIf

 \EndCase

 \Case{$\str{checkAndEmit}$}
  \Let{$\mi{lid}$}{$\mi{localStorage}'[\str{siteInfo}][s'.\str{parentOrigin}]$}
  \If{$(\mi{lid} \not\equiv \an{}) \wedge (s'.\str{loggedInUser} \notin \{\an{},\bot\} \Rightarrow (s'.\str{loggedInUser} \not\equiv \mi{lid})) \wedge (\proj{1}{s'.\str{context}} \not\equiv \an{})$}
   \Let{$s'.\str{q}$}{$\str{requestUC}$}
   \State \textbf{stop} $\an{s',\mi{cookies}',\mi{localStorage}',\mi{sessionStorage},\an{}}$
  \ElsIf{$s'.\str{loggedInUser} \equiv \an{}$}
   \Let{$s'.\str{q}$}{$\str{default}$}
   \State \textbf{stop} $\an{s',\mi{cookies}',\mi{localStorage}',\mi{sessionStorage},\an{}}$
  \Else
   \Let{$s'.\str{q}$}{$\str{sendLogout}$}
   \State \textbf{stop} $\an{s',\mi{cookies}',\mi{localStorage}',\mi{sessionStorage},\an{}}$
  \EndIf
 \EndCase

 \Case{$\str{requestUC}$}
  \Let{$\mi{id}$}{$\mi{localStorage}'[\str{siteInfo}][s'.\str{parentOrigin}]$}
  \LetND{$s'.\str{key}$}{$\mi{nonces}$} \label{line:choose-key}
  \Let{$\mi{body}$}{$\an{\mi{id},\pub(s'.\str{key}),s'.\str{context}.\str{xsrfToken}}$}
  \LetND{$s'.\str{refXHRcert}$}{$\mi{nonces} \setminus \{s'.\str{key}\}$}
  \Let{$\mi{command}$}{$\an{\tXMLHTTPRequest,\textsf{URL}^\LPO_\str{/certreq},\mPost,\mi{body},s'.\str{refXHRcert}}$}
  \Let{$s'.\str{q}$}{$\str{receiveUC}$}
  \State \textbf{stop} $\an{s',\mi{cookies}',\mi{localStorage}',\mi{sessionStorage},\mi{command}}$
 \EndCase

 \Case{$\str{receiveUC}$}
  \Let{$\mi{input},s'$}{\textsf{CHOOSEINPUT}($s',\mi{scriptinputs}$)}
  \If{$(\proj{1}{\mi{input}} \equiv \tXMLHTTPRequest) \wedge (\proj{3}{\mi{input}} \equiv s'.\str{refXHRcert})$}
   \Let{$\mi{uc}$}{$\proj{2}{\mi{input}}$}
   \Let{$\mi{ia}$}{$\sig{s'.\str{parentOrigin}}{s'.\str{key}}$}
   \Let{$\mi{cap}$}{$\an{\mi{uc},\mi{ia}}$}
   \Let{$\mi{command}$}{}
 \Statex $\an{\tPostMessage,\textsf{PARENTWINDOW}(\mi{tree}, \mi{docnonce}),\an{\str{login},\mi{cap}},s'.\str{parentOrigin}}$ \label{line:create-login-pm}
   \Let{$s'.\str{q}$}{$\str{default}$}
   \State \textbf{stop} $\an{s',\mi{cookies}',\mi{localStorage}',\mi{sessionStorage},\mi{command}}$ \label{line:send-login-pm}
  \EndIf
 \EndCase

 \Case{$\str{sendLogout}$}
  \Let{$\mi{command}$}{$\an{\tPostMessage,\textsf{PARENTWINDOW}(\mi{tree}, \mi{docnonce}),\an{\str{logout},\an{}},\bot}$}
  \Let{$s'.\str{q}$}{$\str{default}$}
  \State \textbf{stop} $\an{s',\mi{cookies}',\mi{localStorage}',\mi{sessionStorage},\mi{command}}$
 \EndCase

\EndSwitch

\State \textbf{stop} $\an{\mi{scriptstate},\mi{cookies},\mi{localStorage},\mi{sessionStorage},\an{}}$

\end{algorithmic} \setlength{\parindent}{1em}

\subsubsection{$\mi{script\_LPO\_ld}$}\label{app:scriptLPOld}

The script $\mi{script\_LPO\_ld}$ models the script that
runs in the LD. Its formal specification, presented next,
follows the one presented above for
$\mi{script\_LPO\_cif}$.

\begin{definition}\label{def:scriptstatelpold}
  A \emph{scriptstate $s$ of $\mi{script\_LPO\_ld}$} is a term of
  the form $\langle q$, $\mi{requestOrigin}$,
  $\mi{context}$, $\mi{key}$, $\mi{handledInputs}$,
  $\mi{refXHRctx}$, $\mi{refXHRauth}$, $\mi{refXHRcert}
  \rangle$ with $q \in \mathbb{S}$, $\mi{requestOrigin} \in
  \origins \cup \{\bot\}$, $\mi{context} \in \terms$,
  $\mi{key} \in \nonces \cup \{\bot\}$, $\mi{handledInputs}
  \subsetPairing \mathbb{N}$,
  $\mi{refXHRctx},\mi{refXHRauth}, \mi{refXHRcert} \in
  \nonces \cup \{\bot\}$.

  The initial state $\mi{initState_{ld}}$ is the state
  $\an{\str{init},\bot,\bot,\bot,\an{},\bot,\bot,\bot}$.
\end{definition}

Before we provide the formal specification of the relation
that defines the behavior of $\mi{script\_LPO\_ld}$, we
present an informal description. The behavior mainly
depends on the state $q$ the script is in.

\begin{description}
\item[$q\equiv\str{init}$] is the initial state. Its only
  transition takes no input and outputs a \pm
  \texttt{ldready} to its parent window and transitions to
  $\str{start}$.

\item[$q\equiv\str{start}$] expects a \texttt{request} \pm. The
  sender's origin of this \pm is recorded as the requesting
  origin in the scriptstate. An \xhr is sent to $\LPO$ with
  a $\mGet$ request to the path \texttt{/ctx} and then the
  script transitions to the state $\str{receiveContext}$.

\item[$q\equiv\str{receiveContext}$] expects an \xhr response as
  input containing the session context. This context is
  saved as the current context in the scriptstate. If the
  received context contains $\an{}$ as the ID list, the
  script transitions to the state
  $\str{requestAuth}$. Else, the script transitions to
  $\str{requestUC}$.

\item[$q\equiv\str{requestAuth}$] sends an \xhr $\mPost$ request
  to $\LPO$ with the path \texttt{/auth} containing a
  browser's secret. The script then transitions to the
  state $\str{receiveAuth}$.

\item[$q\equiv\str{receiveAuth}$] expects an \xhr response as
  input containing $\True$. The script then sends an \xhr
  to $\LPO$ with a $\mGet$ request to the path \texttt{/ctx}
  and then transitions to the state $\str{receiveContext}$.

\item[$q\equiv\str{requestUC}$] chooses (non-deterministically) an id,
  chooses a fresh private key and sends the id and the
  public key (corresponding to the private key) as an \xhr
  $\mPost$ request to $\LPO$ with the path
  \texttt{/certreq}. The script then transitions to
  $\str{receiveUC}$

\item[$q\equiv\str{receiveUC}$] receive UC from $\LPO$, create
  IA, combine with UC to CAP, record ID as logged in at the
  requester's origin. Send CAP in \pm to parent. Go to
  state $\str{null}$

\item[$q\equiv\str{null}$] do nothing.

\end{description}

We now formally specify the relation $\mi{script\_LPO\_ld}
\subseteq (\terms \times 2^{\nonces})\times \terms$ of the
LD's scripting process. Just like in
Appendix~\ref{sec:descr-web-brows}, we describe this
relation by a non-deterministic algorithm. Like all
scripts, the input term given to this script is determined
by the browser and the browser expects a term of a specific
form (see Algorithm~\ref{alg:runscript}).

\captionof{algorithm}{\label{alg:scriptlpold} Relation of $\mi{script\_LPO\_ld}$ }
\begin{algorithmic}[1]
\Statex[-1] \textbf{Input:} $\an{\mi{tree},\mi{docnonce},\mi{scriptstate},\mi{scriptinputs},\mi{cookies},\mi{localStorage},\mi{sessionStorage},\mi{secret}},\mi{nonces}$
\Let{$s'$}{$\mi{scriptstate}$}
\Let{$\mi{cookies}'$}{$\mi{cookies}$}
\Let{$\mi{localStorage}'$}{$\mi{localStorage}$}

\Switch{$s'.\str{q}$}
 \Case{$\str{init}$}
  \Let{$\mi{command}$}{$\an{\tPostMessage,\textsf{OPENERWINDOW}(\mi{tree}, \mi{docnonce}),\an{\str{ldready},\an{}},\bot}$}
  \Let{$s'.\str{q}$}{$\str{start}$}
  \State \textbf{stop} $\an{s',\mi{cookies}',\mi{localStorage}',\mi{sessionStorage},\mi{command}}$
 \EndCase

 \Case{$\str{start}$}
  \Let{$\mi{input},s'$}{\textsf{CHOOSEINPUT}($s',\mi{scriptinputs}$)}
  \If{$\proj{1}{\mi{input}} \equiv \tPostMessage$}
   \Let{$\mi{senderWindow}$}{$\proj{2}{\mi{input}}$}
   \Let{$\mi{senderOrigin}$}{$\proj{3}{\mi{input}}$}
   \Let{$m$}{$\proj{4}{\mi{input}}$}

   \If{$m \equiv \an{\str{request},\an{}}$}
    \Let{$s'.\str{requestOrigin}$}{$\mi{senderOrigin}$}
    \LetND{$s'.\str{refXHRctx}$}{$\mi{nonces}$}
    \Let{$\mi{command}$}{$\an{\tXMLHTTPRequest,\textsf{URL}^\LPO_\str{/ctx},\mGet,\an{},s'.\str{refXHRctx}}$} \label{line:ctx-over-https-2}
    \Let{$s'.\str{q}$}{$\str{receiveContext}$}
    \State \textbf{stop} $\an{s',\mi{cookies}',\mi{localStorage}',\mi{sessionStorage},\mi{command}}$
   \EndIf
  \EndIf
 \EndCase

 \Case{$\str{receiveContext}$}
  \Let{$\mi{input},s'$}{\textsf{CHOOSEINPUT}($s',\mi{scriptinputs}$)}
  \If{$(\proj{1}{\mi{input}} \equiv \tXMLHTTPRequest) \wedge (\proj{3}{\mi{input}} \equiv s'.\str{refXHRctx})$}
   \Let{$s'.\str{context}$}{$\proj{2}{\mi{input}}$} \label{line:store-context-2}
   \If{$\proj{1}{s'.\str{context}} \equiv \an{}$}
    \Let{$s'.\str{q}$}{$\str{requestAuth}$}
    \State \textbf{stop} $\an{s',\mi{cookies}',\mi{localStorage}',\mi{sessionStorage},\an{}}$
   \Else
    \Let{$s'.\str{q}$}{$\str{requestUC}$}
    \State \textbf{stop} $\an{s',\mi{cookies}',\mi{localStorage}',\mi{sessionStorage},\an{}}$
   \EndIf
  \EndIf
 \EndCase

 \Case{$\str{requestAuth}$}
  \Let{$\mi{body}$}{$\an{\mi{secret},s'.\str{context}.\str{xsrfToken}}$}
  \LetND{$s'.\str{refXHRauth}$}{$\mi{nonces}$}
  \Let{$\mi{command}$}{$\an{\tXMLHTTPRequest,\textsf{URL}^\LPO_\str{/auth},\mPost,\mi{body},s'.\str{refXHRauth}}$}
  \Let{$s'.\str{q}$}{$\str{receiveContext}$}
  \State \textbf{stop} $\an{s',\mi{cookies}',\mi{localStorage}',\mi{sessionStorage},\mi{command}}$ \label{line:lpo-secrets-stop}
 \EndCase

 \Case{$\str{receiveAuth}$}
  \Let{$\mi{input},s'$}{\textsf{CHOOSEINPUT}($s',\mi{scriptinputs}$)}
  \If{$(\proj{1}{\mi{input}} \equiv \tXMLHTTPRequest) \wedge (\proj{3}{\mi{input}} \equiv s'.\str{refXHRauth})$}
   \If{$\proj{2}{\mi{input}} \equiv \True$}
    \Let{$\mi{command}$}{$\an{\tXMLHTTPRequest,\textsf{URL}^\LPO_\str{/ctx},\mGet,\an{}}$}
    \Let{$s'.\str{q}$}{$\str{receiveContext}$}
    \State \textbf{stop} $\an{s',\mi{cookies}',\mi{localStorage}',\mi{sessionStorage},\mi{command}}$    
   \EndIf
  \EndIf
 \EndCase

 \Case{$\str{requestUC}$}
  \LetND{$\mi{id}$}{$s'.\str{context}.\str{ids}$}
  \LetND{$s'.\str{key}$}{$\mi{nonces}$} \label{line:choose-key-2}
  \Let{$\mi{body}$}{$\an{\mi{id},\pub(s'.\str{key}),s'.\str{context}.\str{xsrfToken}}$}
  \LetND{$s'.\str{refXHRcert}$}{$\mi{nonces}\setminus \{s'.\str{key}\}$}
  \Let{$\mi{command}$}{$\an{\tXMLHTTPRequest,\textsf{URL}^\LPO_\str{/certreq},\mPost,\mi{body},s'.\str{refXHRcert}}$}
  \Let{$s'.\str{q}$}{$\str{receiveUC}$}
  \State \textbf{stop} $\an{s',\mi{cookies}',\mi{localStorage}',\mi{sessionStorage},\mi{command}}$
 \EndCase

 \Case{$\str{receiveUC}$}
  \Let{$\mi{input},s'$}{\textsf{CHOOSEINPUT}($s',\mi{scriptinputs}$)}
  \If{$(\proj{1}{\mi{input}} \equiv \tXMLHTTPRequest) \wedge (\proj{3}{\mi{input}} \equiv s'.\str{refXHRcert})$}
   \Let{$\mi{uc}$}{$\proj{2}{\mi{input}}$}
   \Let{$\mi{ia}$}{$\sig{s'.\str{requestOrigin}}{s'.\str{key}}$}
   \Let{$\mi{cap}$}{$\an{\mi{uc},\mi{ia}}$}
   \Let{$\mi{command}$}{}
 \Statex $\an{\tPostMessage,\textsf{OPENERWINDOW}(\mi{tree}, \mi{docnonce}),\an{\str{response},\mi{cap}},s'.\str{requestOrigin}}$ \label{line:create-response-pm}
   \Let{$s'.\str{q}$}{$\str{null}$}
   \State \textbf{stop} $\an{s',\mi{cookies}',\mi{localStorage}',\mi{sessionStorage},\mi{command}}$ \label{line:send-login-pm-2}
  \EndIf
 \EndCase

\EndSwitch

\State \textbf{stop} $\an{\mi{scriptstate},\mi{cookies},\mi{localStorage},\mi{sessionStorage},\an{}}$
\end{algorithmic} \setlength{\parindent}{1em}

\subsubsection{$\mi{script\_RP\_index}$}\label{app:scriptrpindex}

The script $\mi{script\_RP\_index}$ models the script that
is run by an RP. Its formal specification, presented next,
follows the one presented for the other scripts above.

\begin{definition}\label{def:scriptstaterpindex}
  A scriptstate $s$ of $\mi{script\_RP\_index}$ is a term
  of the form $\langle q$, $\mi{CIFindex}$, $\mi{LDindex}$,
  $\mi{dialogRunning}$, $\mi{cap}$, $\mi{handledInputs}$,
  $\mi{refXHRcap} \rangle$ with $q \in \mathbb{S}$,
  $\mi{CIFindex} \in\mathbb{N} \cup \{\bot\}$,
  $\mi{dialogRunning} \in \{\True,\bot\}$, $\mi{cap} \in
  \terms$, $\mi{handledInputs} \subsetPairing \mathbb{N}$,
  $\mi{refXHRcap} \in \nonces \cup \{\bot\}$. 

  We call $s$ the initial scriptstate of
  $\mi{script\_RP\_index}$ iff $s \equiv
  \an{\str{init},\bot,\bot,\bot,\an{},\an{},\bot}$.
\end{definition}

Before we provide the formal specification of the relation
that defines the behavior of $\mi{script\_RP\_index}$, we
present an informal description. The behavior mainly
depends on the state $q$ the script is in.

\begin{description}

\item[$q\equiv\str{init}$] is the initial state. It creates the CIF \iframe and then transitions to $\str{receiveCIFReady}$.

\item[$q\equiv\str{receiveCIFReady}$] expects a
  \texttt{cifready} \pm from the CIF \iframe with origin of
  $\LPO$. It chooses some ID, $\an{}$, or $\bot$ and sends
  this as a \texttt{loaded} \pm to the CIF \iframe with
  receiver's origin set to the origin of
  $\LPO$.\footnote{From the point of view of the real
    scripts running at RP either some ID is considered to
    be logged in (e.g. from some former ``session''), or
    that no one is considered to be logged in ($\an{}$), or
    that $\mi{script\_RP\_index}$ does not know if it
    should consider someone to be logged in ($\bot$). This
    is overapproximated here by allowing
    $\mi{script\_RP\_index}$ to choose
    non-deterministically between these cases.}  It then
  transitions to the state $\str{default}$.

\item[$q\equiv\str{default}$] chooses non-deterministically
  between (1) opening the LD subwindow and then transitions
  to the same state or (2) handling one of the following
  \pms:

  \begin{description}

  \item[\pm \texttt{login}] which has to be sent from the
    CIF with origin of $\LPO$. Handling this \pm stores the CAP
    (contained in the \pm) in the scriptstate and then
    transitions to the $\str{sendCAP}$ state.
  \item[\pm \texttt{logout}] which has to be sent from the
    CIF with origin of $\LPO$. Handling this \pm has no effect
    and results in the same state.
  \item[\pm \texttt{ldready}] which can only be handled
    after the LD has been opened and before a
    \texttt{response} \pm has been received. The
    \texttt{ldready} \pm has to be sent from the origin of
    $\LPO$. The script sends a \texttt{request} \pm to the LD
    and stays in the $\str{default}$ state.
  \item[\pm \texttt{response}] which can only be handled
    after the LD has been opened and before another
    \texttt{response} \pm has been received. The
    \texttt{ldready} \pm has to be sent from the origin of
    $\LPO$. Handling this \pm stores the CAP (contained in
    the \pm) in the scriptstate, closes the LD, and then
    transitions to the $\str{dlgClosed}$ state.
  \end{description}

\item[$q\equiv\str{dlgClosed}$] sends a
  \texttt{loggedInUser} \pm to the CIF and transitions to
  the state $\str{loggedInUser}$.
\item[$q\equiv\str{loggedInUser}$] sends a \texttt{dlgCmplt} \pm
  to the CIF and transitions to the state $\str{sendCAP}$.
\item[$q\equiv\str{sendCAP}$] sends the CAP to RP as a $\mPost$
  \xhr and then transitions to the state
  $\str{receiveServiceToken}$.
\item[$q\equiv\str{receiveServiceToken}$] receives $\an{n,i}$
  from RP, but does not do anything with it. The script then
  transitions to the state $\str{default}$.

\end{description}

We now formally specify the relation $\mi{script\_RP\_index}
\subseteq (\terms \times 2^{\nonces})\times \terms$ of the
RP-Doc's scripting process. Just like in
Appendix~\ref{sec:descr-web-brows}, we describe this
relation by a non-deterministic algorithm. Like all
scripts, the input term given to this script is determined
by the browser and the browser expects a term of a specific
form (see Algorithm~\ref{alg:runscript}). Following
Algorithm~\ref{alg:scriptrpindex}, we provide some more
explanation. 

\captionof{algorithm}{\label{alg:scriptrpindex} Relation of $\mi{script\_RP\_index}$}
\begin{algorithmic}[1]
\Statex[-1] \textbf{Input:} $\an{\mi{tree},\mi{docnonce},\mi{scriptstate},\mi{scriptinputs},\mi{cookies},\mi{localStorage},\mi{sessionStorage},\mi{secret}},\mi{nonces}$
\Let{$s'$}{$\mi{scriptstate}$}
\Let{$\mi{cookies}'$}{$\mi{cookies}$}
\Let{$\mi{localStorage}'$}{$\mi{localStorage}$}

\Switch{$s'.\str{q}$}
 \Case{$\str{init}$}
  \Let{$\mi{command}$}{$\an{\tIframe,\textsf{URL}^\LPO_\str{/cif},\mathsf{GETWINDOW}(\mi{tree},\mi{docnonce})}$}\label{line:cifindex-begin}
  \Let{$s'.\str{q}$}{$\str{receiveCIFReady}$}
  \Let{$\mi{subwindows}$}{$\mathsf{SUBWINDOWS}(\mi{tree},\mi{docnonce})$}
  \Let{$s'.\str{CIFindex}$}{$|\mi{subwindows}|+1$} \Comment{Index of the next subwindow to be created.}
  \State \textbf{stop} $\an{s',\mi{cookies}',\mi{localStorage}',\mi{sessionStorage},\mi{command}}$\label{line:cifindex-end}
 \EndCase

 \Case{$\str{receiveCIFReady}$}
  \Let{$\mi{input},s'$}{\textsf{CHOOSEINPUT}($s',\mi{scriptinputs}$)}
  \If{$\proj{1}{\mi{input}} \equiv
    \tPostMessage$}
   \Let{$\mi{senderWindow}$}{$\proj{2}{\mi{input}}$}
   \Let{$\mi{senderOrigin}$}{$\proj{3}{\mi{input}}$}
   \Let{$m$}{$\proj{4}{\mi{input}}$}
   \Let{$\mi{subwindows}$}{$\mathsf{SUBWINDOWS}(\mi{tree},\mi{docnonce})$}
   \If{$(m \equiv \an{\str{cifready},\an{}}) \wedge (\mi{senderOrigin} \equiv \mathsf{origin_\LPO}) \wedge \breakalgo{5} (\mi{senderWindow} \equiv \proj{s'.\str{CIFindex}}{\mi{subwindows}}.\str{nonce})$}
    \LetND{$\mi{id}$}{$\{\bot,\an{}\} \cup  \IDs $}
    \Let{$\mi{command}$}{$\an{\tPostMessage,\proj{s'.\str{CIFindex}}{\mi{subwindows}},\an{\str{loaded},\mi{id}},\mathsf{origin_\LPO}}$}
    \Let{$s'.\str{q}$}{$\str{default}$}
    \State \textbf{stop} $\an{s',\mi{cookies}',\mi{localStorage}',\mi{sessionStorage},\mi{command}}$
   \EndIf
  \EndIf
 \EndCase

 \Case{$\str{default}$} \label{line:state-default}
  \If{$s'.\str{dialogRunning} \equiv \bot$}
   \LetND{$\mi{choice}$}{$\{\str{openLD},\str{handlePM}\}$}
  \Else
   \Let{$\mi{choice}$}{$\str{handlePM}$}
  \EndIf
  \If{$\mi{choice} \equiv \str{openLD}$} 
   \Let{$s'.\str{dialogRunning}$}{$\True$} \label{line:ldindex-begin}
   \Let{$\mi{command}$}{$\an{\tHref,\textsf{URL}^\LPO_\str{/ld},\wBlank}$}
   \Let{$s'.\str{q}$}{$\str{default}$}
   \State \textbf{stop}
  $\an{s',\mi{cookies}',\mi{localStorage}',\mi{sessionStorage},\mi{command}}$ \label{line:ldindex-end}

  \Else
   \Let{$\mi{input},s'$}{\textsf{CHOOSEINPUT}($s',\mi{scriptinputs}$)}
   \If{$\proj{1}{\mi{input}} \equiv \tPostMessage$}
    \Let{$\mi{senderWindow}$}{$\proj{2}{\mi{input}}$}
    \Let{$\mi{senderOrigin}$}{$\proj{3}{\mi{input}}$}
    \Let{$m$}{$\proj{4}{\mi{input}}$}
    \Let{$\mi{subwindows}$}{$\mathsf{SUBWINDOWS}(\mi{tree},\mi{docnonce})$}
    \If{$\mi{senderOrigin} \equiv \mathsf{origin_\LPO}$}
     \If{$\mi{senderWindow} \equiv \proj{s'.\str{CIFindex}}{\mi{subwindows}}.\str{nonce}$}
      \If{$\proj{1}{m} \equiv \str{login}$}
       \Let{$s'.\str{cap}$}{$\proj{2}{m}$} \label{line:set-cap}
       \Let{$s'.\str{q}$}{$\str{sendCAP}$}
       \State \textbf{stop} $\an{s',\mi{cookies}',\mi{localStorage}',\mi{sessionStorage},\an{}}$
      \ElsIf{$\proj{1}{m} \equiv \str{logout}$}
       \Let{$s'.\str{q}$}{$\str{default}$}
       \State \textbf{stop} $\an{s',\mi{cookies}',\mi{localStorage}',\mi{sessionStorage},\an{}}$
      \EndIf

     \ElsIf{$s'.\str{dialogRunning} \equiv \True$}
      \If{$\proj{1}{m} \equiv \str{ldready}$}                  \Let{$\mi{command}$}{}
\Statex $\an{\tPostMessage, \mathsf{AUXWINDOW}(\mi{tree}, \mi{docnonce}),\an{\str{request},\an{}},\mathsf{origin_\LPO}}$ 
       \Let{$s'.\str{q}$}{$\str{default}$}
       \State \textbf{stop} $\an{s',\mi{cookies}',\mi{localStorage}',\mi{sessionStorage},\mi{command}}$
      \ElsIf{$\proj{1}{m} \equiv \str{response}$}
       \Let{$s'.\str{dialogRunning}$}{$\bot$}
       \Let{$s'.\str{cap}$}{$\proj{2}{m}$} \label{line:set-cap-2}
   \Let{$\mi{command}$}{$\an{\tClose,\mathsf{AUXWINDOW}(\mi{tree},\mi{docnonce})}$} 
       \Let{$s'.\str{q}$}{$\str{dlgClosed}$}
       \State \textbf{stop} $\an{s',\mi{cookies}',\mi{localStorage}',\mi{sessionStorage},\mi{command}}$
      \EndIf
     \EndIf
    \EndIf
   \EndIf
  \EndIf
 \EndCase

 \Case{$\str{dlgClosed}$} 
   \Let{$\mi{subwindows}$}{$\mathsf{SUBWINDOWS}(\mi{tree},\mi{docnonce})$}
   \Let{$\mi{id}$}{$\proj{1}{\unsig{\proj{1}{s'.\str{cap}}}}$} \Comment{Extract ID from CAP.}
\Let{$\mi{command}$}{$\an{\tPostMessage,\proj{s'.\str{CIFindex}}{\mi{subwindows}}.\str{nonce},\an{\str{loggedInUser},\mi{id}},\mathsf{origin_\LPO}}$}
  \Let{$s'.\str{q}$}{$\str{loggedInUser}$}
  \State \textbf{stop} $\an{s',\mi{cookies}',\mi{localStorage}',\mi{sessionStorage},\mi{command}}$
 \EndCase

 \Case{$\str{loggedInUser}$}
   \Let{$\mi{subwindows}$}{$\mathsf{SUBWINDOWS}(\mi{tree},\mi{docnonce})$}
\Let{$\mi{command}$}{$\an{\tPostMessage,\proj{s'.\str{CIFindex}}{\mi{subwindows}}.\str{nonce},\an{\str{dlgCmplt},\an{}},\mathsf{origin_\LPO}}$}
  \Let{$s'.\str{q}$}{$\str{sendCAP}$}
  \State \textbf{stop} $\an{s',\mi{cookies}',\mi{localStorage}',\mi{sessionStorage},\mi{command}}$
 \EndCase

 \Case{$\str{sendCAP}$} \label{line:state-sendcap}
   \LetND{$s'.\str{refXHRcap}$}{$\mi{nonces}$}
   \LetST{$\mi{host}$, $\mi{protocol}$}{$\an{host, protocol} = \mathsf{GETORIGIN}(\mi{tree}, \mi{docnonce})$}{\textbf{stop} $\an{\mi{scriptstate},\mi{cookies},\mi{localStorage},\mi{sessionStorage},\mi{command}}$}
   \Let{$\mi{command}$}{$\an{\tXMLHTTPRequest,\an{\cUrl, \mi{protocol}, \mi{host}, \mathtt{/}, \an{}},\mPost,s'.\str{cap},s'.\str{refXHRcap}}$} \Comment{Relay received CAP to RP.}
   \Let{$s'.\str{q}$}{$\str{receiveServiceToken}$}
   \State \textbf{stop} $\an{s',\mi{cookies}',\mi{localStorage}',\mi{sessionStorage},\mi{command}}$ \label{line:send-cap}
 \EndCase

 \Case{$\str{receiveServiceToken}$}
  \Let{$\mi{input},s'$}{\textsf{CHOOSEINPUT}($s',\mi{scriptinputs}$)}
  \If{$(\proj{1}{\mi{input}} \equiv \tXMLHTTPRequest) \wedge (\proj{3}{\mi{input}} \equiv s'.\str{refXHRcap})$}
   \Let{$s'.\str{q}$}{$\str{default}$}
   \State \textbf{stop} $\an{s',\mi{cookies}',\mi{localStorage}',\mi{sessionStorage},\an{}}$
  \EndIf
 \EndCase

\EndSwitch

\State \textbf{stop} $\an{\mi{scriptstate},\mi{cookies},\mi{localStorage},\mi{sessionStorage},\an{}}$
\end{algorithmic} \setlength{\parindent}{1em}

In Lines~\ref{line:cifindex-begin}--\ref{line:cifindex-end}
and \ref{line:ldindex-begin}--\ref{line:ldindex-end} the
script asks the browser to create iframes. To obtain the
window reference for these iframes, the script first
determines the current number of subwindows and stores it
(incremented by 1) in the scriptstate ($\str{CIFindex}$ and
$\str{LDindex}$, respectively).  When the script is invoked
the next time, the iframe the script asked to be created
will have been added to the sequence of subwindows by the
browser directly following the previously existing
subwindows. The script can therefore access the iframe by
the indexes $\str{CIFindex}$ and $\str{LDindex}$,
respectively.

\renewcommand{\labelenumi}{\arabic{enumi}.}

\section{Security Property}\label{app:securitypropertiesbrowserid}

Formally, the security property for BrowserID is defined as
follows. First note that every RP service token $\an{n,i}$
recorded in RP was created by RP as the result of a unique
HTTPS $\mPost$ request $m$ with a valid CAP for ID $i$. We
refer to $m$ as the \emph{request corresponding to
  $\an{n,i}$}.

\begin{definition}\label{def:security-property} Let $\bidwebsystem$ be a BrowserID web
  system. We say that \emph{$\bidwebsystem$ is secure} if
  for every run $\rho$ of $\bidwebsystem$, every state
  $(S_j, E_j)$ in $\rho$, every $r\in \fAP{RP}$, every RP
  service token of the form $\an{n,i}$ recorded in $r$ in
  the state $S_j(r)$, the following two conditions are
  satisfied:

  \textbf{(A)} If $\an{n,i}$ is derivable from the
  attackers knowledge in $S_j$ (i.e., $\an{n,i} \in
  d_{N^\fAP{attacker}}(S_j(\fAP{attacker}))$), then it
  follows that the browser owning $i$ is fully corrupted in
  $S_j$, i.e., the value of $\mi{isCorrupted}$ is
  $\fullcorrupt$.

  \textbf{(B)} If the request corresponding to $\an{n,i}$
  was sent by some $b\in \fAP{B}$ which is honest in $S_j$,
  then $b$ owns $i$.
\end{definition}

\section{Proof of
  Theorem~\ref{the:securityFixedBrowserID}}\label{app:proofbrowserid}

In order to prove Theorem~\ref{the:securityFixedBrowserID},
we have to prove Conditions A and B of
Definition~\ref{def:security-property}. These are proven
separately in what follows:

\subsection{Condition A}

We assume that Condition A is not satisfied and prove that
this leads to a contradiction. That is, we make the
following assumption (*): There is a run $\rho = s_0,
s_1,\dots$ of $\bidwebsystem$, a state $s_j = (S_j, E_j)$
in $\rho$, an $r \in \fAP{RP}$, an RP service token of the
form $\an{n,i}$ recorded in $r$ in the state $S_j(r)$ such
that $\an{n,i} \in
d_{N^\fAP{attacker}}(S_j(\fAP{attacker}))$ and the browser
owning $i$ is not fully corrupted in $S_j$.

Without loss of generality, we may assume that $\rho$ also
satisfies the following:

(**) Whenever a browser becomes
corrupted (i.e., either $\fullcorrupt$ or $\closecorrupt$)
in a processing step leading to some state $s_l$ in $\rho$,
this browser is triggered immediately afterwards again (in
the processing step leading to $s_{l+1}$) and sends the
full state of the web browser to the attacker process
$\fAP{attacker}$, which then receives this knowledge in
state $s_{l+2}$.  Afterwards, this browser is not triggered
anymore.

(***) For every term $\enc{t}{\pub(k')}$ for some $t \in
\terms$, $k' \in \nonces$ that is a subterm of the output
of a transition of $\Rasp$ but not of the input, i.e.,
$\Rasp$ has created $\enc{t}{k'}$ by itself, $\Rasp$ has
sent an HTTP message containing $t$ (unencrypted) to some
$d \in \mapDomain(\fAP{attacker})$ before.

If there is a run that satisfies (*), it is easy to turn
this run into a run that satisfies both (*) and (**). This
is because an attacker who obtains the state of the browser
can simulate the browser himself.  Moreover, it is easy to
turn a run that satisfies (*) and (**) but not (***) into a
run that satisfies all three properties by adding the
necessary requests from the script $\Rasp$.

Given (*), by definition of RPs, for $\an{n,i}$ there
exists a corresponding HTTPS request received by $r$, which
we call $\mi{req}_\text{cap}$, and a corresponding response
$\mi{resp}_\text{cap}$. The request must contain a valid
CAP $c$ and must have been sent by some atomic process $p$
to $r$. The response must contain $\an{n,i}$ and it must be
encrypted by some symmetric encryption key $k$ sent in
$\mi{req}_\text{cap}$.

In particular, it follows that the request and the response
must be of the following form, where $d_r= \mathsf{dom}(r)$
is the domain of $r$, $n_\text{cap}, k \in \nonces$ are
some nonces, and $c$ is some valid CAP:
\begin{align}
  \label{eq:proofreqcapA} \mi{req}_\text{cap} &=
  \ehreqWithVariable{\hreq{ nonce=n_\text{cap},
      method=\mPost, xhost=d_r, path=/, parameters=\an{},
      headers=[\str{Origin}: \an{d_r, \https}],
      xbody=c}}{k}{\pub(\mapKey(d_r))}\enspace
  ,\\ %
  \label{eq:proofrespcapA} \mi{resp}_\text{cap} &=
  \ehrespWithVariable{\hresp{ nonce=n_\text{cap},
      status=200, headers=\an{},
      xbody=\an{n,i}}}{k}\enspace .
\end{align}
Moreover, there must exist a processing step of the
following form where $m \leq j$, $a_r \in
\mapAddresstoAP(r)$, and $x$ is some address:
\[ s_{m-1} \xrightarrow[r \rightarrow
\{(x{:}a_r{:}\mi{resp}_\text{cap})\}]{(a_r{:}x{:}\mi{req}_\text{cap})
  \rightarrow r} s_{m}\enspace . \]

From the assumption and the definition of RPs it follows
that $c$ is issued for $d_r$ (otherwise, RP would not
accept the CAP, see Line~\ref{line:rp-checksig} of
Algorithm~\ref{alg:rp}). The nonce $n$ in $\an{n,i}$ is
chosen freshly and from RPs nonces $N^r$. It is not used
again by $r$ afterwards.

We assume that $s_j$ is the \emph{first} state in $\rho$
where $\an{n,i} \in
d_{N^\fAP{attacker}}(S_j(\fAP{attacker}))$ (i.e., there is
no $j' < j$, $\an{n,i} \in
d_{N^\fAP{attacker}}(S_{j'}(\fAP{attacker}))$).

We note that, by definition of attacker processes, the
attacker never discards any information, i.e., $t \in
d_{N^\fAP{attacker}}(S_{u}(\fAP{attacker}))$ implies $t \in
d_{N^\fAP{attacker}}(S_{u+1}(\fAP{attacker}))$ for every
term $t$ and $u \in \mathbb{N}$.

To conclude the proof, we now first prove several lemmas.

In what follows, given an atomic process $p$ and a message
$m$, we say that \emph{$p$ emits $m$} in a run
$\rho=s_0,s_1,\ldots$ if there is a processing step of the
form
\[ s_{u-1} \xrightarrow[p \rightarrow E]{} s_{u}\] for some
$u \in \mathbb{N}$, set of events $E$ and some addresses
$x$, $y$ with $(x{:}y{:}m) \in E$.

We say that an atomic process
$p$ \emph{created} a message $m$ (at some point) in a run
if $m$ is (congruent to) a subterm of a message emitted by $p$ in some
processing step and if there is no earlier processing step
where $m$ is a subterm of a message emitted by an atomic
process $p'$.

We say that a
browser $b$ \emph{accepted} a message (as a response to
some request) if the browser decrypted the message (if it
was an HTTPS message) and called the function
$\mathsf{PROCESSRESPONSE}$, passing the message and the
request (see Algorithm~\ref{alg:processresponse}).

We say that an atomic DY process \emph{$p$ knows a term
  $t$} in some state $s=(S,E)$ of a run if it can derive
the term from its knowledge, i.e., $t \in d_{N^p}(S(p))$.

We say that a \emph{script initiated a request
  $r$} if a browser triggered the script (in
Line~\ref{line:trigger-script} of
Algorithm~\ref{alg:runscript}) and the first component of
the $\mi{command}$ output of the script relation is either
$\tHref$, $\tIframe$, $\tForm$, or $\tXMLHTTPRequest$ such
that the browser issues the request $r$ in the same step as
a result.

\begin{lemma}\label{lemma:k-does-not-leak-from-honest-browser}
  If in a run $\rho$ of $\bidwebsystem$ an honest browser
  $b$ emits an HTTPS request of the form

  \[ \ehreqWithVariable{\mi{req}}{k}{\pub(k')} \]
  where $\mi{req}$ is an HTTP request, $k$ is a nonce
  (symmetric key), $k'$ is the private key of an RP or of
  LPO, and if in that run $b$ does not become fully
  corrupted, then all of the following statements are true: 
  \begin{enumerate}
  \item There is no state of $\bidwebsystem$ where any
    party except for $b$ and the owner of the key $k'$
    knows $k'$, thus no one except for $b$ and the owner of
    the key $k'$ can decrypt $\mi{req}$.
    \label{prop:attacker-cannot-decrypt}
  \item There is no state in the run $\rho$ where $k$ is
    known to any atomic process $p$ (i.e., in no state $s =
    (S,E)$ in $\rho$, $k \in d_{N^p}(S(p))$), except for
    the atomic processes $b$ and the owner of the public
    key (some RP or LPO). \label{prop:k-doesnt-leak}
  \item The value of the host header in $\mi{req}$ is a
    domain of the owner of the key
    $k'$. \label{prop:host-header-matches}
  \item Only the owner of the key $k'$ can create a response
    $r$ to this request that is accepted by
    $b$, i.e. the nonce of
    the HTTP request is not known to any atomic process
    $p$, except for the atomic process $b$ and the owner of
    the public key (some RP or LPO).\label{prop:only-owner-answers}
  \end{enumerate}
\end{lemma}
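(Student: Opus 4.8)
\textbf{Proof plan for Lemma~\ref{lemma:k-does-not-leak-from-honest-browser}.}

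The plan is to prove the four statements essentially by a simultaneous induction on the length of the run $\rho$, tracking where the symmetric key $k$ and the request nonce can possibly appear. First I would establish Statement~\ref{prop:attacker-cannot-decrypt}: by the definition of a BrowserID web system, each domain is assigned a fresh private key (a nonce), so $k'$ appears only in the initial state of its owner (an RP or LPO) and, since RPs and LPO never emit their private keys (inspection of Algorithms~\ref{alg:rp} and~\ref{alg:lpo} shows the sslkey is only used inside $\dec{\cdot}{\cdot}$ and never placed in an output event), $k'$ is never derivable by any other process; hence by the equational theory the only way to obtain the plaintext $\an{\mi{req},k}$ from $\enc{\an{\mi{req},k}}{\pub(k')}$ is to know $k'$, which gives decryption only to $b$ and the owner. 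Statement~\ref{prop:host-header-matches} then follows from the way $b$ constructs the encrypted request: the key used for encryption is $\comp{s'}{keyMapping}[\comp{\mi{message}}{host}]$ (Line~\ref{line:select-enc-key} of Algorithm~\ref{alg:browsermain}), and in a BrowserID web system $\mi{keyMapping}$ is initialized from the injective map $\mapKey$ assigning each domain its own private key, so $\pub(k')$ being the encryption key forces $\comp{\mi{message}}{host}$ to be a domain owned by the owner of $k'$.

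The core of the argument is Statement~\ref{prop:k-doesnt-leak}, which I would prove by induction on the processing steps of $\rho$. The key $k$ is taken by $b$ via $\mathsf{TAKENONCE}$ (Line~\ref{line:takenonce-k} of Algorithm~\ref{alg:browsermain}) from $N^b$, so it is fresh and initially known only to $b$. In the step where $b$ uses it, $k$ leaves $b$ only inside the ciphertext $\enc{\an{\mi{req},k}}{\pub(k')}$; by Statement~\ref{prop:attacker-cannot-decrypt} no one but $b$ and the owner of $k'$ can extract $k$ from this term, and $b$ records $k$ in $\mi{pendingRequests}$ but (being honest and never fully corrupted) never outputs its state. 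For the inductive step I would argue that if $k$ is known to no process other than $b$ and the owner in state $s_i$, the same holds in $s_{i+1}$: the only atomic processes that could plausibly emit $k$ are $b$, the owner, and $\fAP{attacker}$/$\Rasp$; $b$ uses $k$ only to decrypt the matching response (Algorithm~\ref{alg:browsermain}, encrypted-response case — it checks the nonce and calls $\mathsf{PROCESSRESPONSE}$, never emitting $k$), the owner (RP or LPO) on receiving an encrypted request decrypts it, obtains $k$, and uses $k$ only to symmetrically encrypt its response $\encs{\cdot}{k}$ (inspection of Algorithms~\ref{alg:rp} and~\ref{alg:lpo}), never placing $k$ itself in an output, and the attacker cannot derive $k$ from the ciphertexts it sees because it knows neither $k'$ nor $k$ (using the $\closecorrupt$-vs-$\fullcorrupt$ assumption: by (**) a $\closecorrupt$ browser's state no longer contains $k$ since pending requests are cleared, and a $\fullcorrupt$ browser is excluded by hypothesis). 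I also need the genericity observation (***) so that $\Rasp$ cannot produce $\enc{\an{\mi{req}',k}}{\pub(k')}$-like terms that somehow reveal $k$ — but since $k$ is not in the attacker's knowledge at any earlier point by the induction hypothesis, it cannot have built such a term. Statement~\ref{prop:only-owner-answers} follows similarly: the nonce $n_\text{cap}$ (the HTTP request nonce) is taken fresh from $N^b$ by $\mathsf{TAKENONCE}$, is placed only inside the same encrypted request, and travels only to the owner of $k'$; a browser accepts a response only if its nonce matches the recorded request nonce (the $\comp{m'}{nonce} \equiv \comp{\mi{request}}{nonce}$ check in Algorithm~\ref{alg:browsermain}), so only a party knowing $n_\text{cap}$ can craft an accepted response, and by the same leakage argument that is only $b$ and the owner.

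The main obstacle I expect is the bookkeeping around browser corruption and the simultaneous nature of the induction: I must be careful that the four statements are strengthened into one joint invariant strong enough to push through — in particular that ``$k$ not known to the attacker'' is available as a hypothesis precisely when ruling out that $\Rasp$ or the network attacker fabricated a term leaking $k$ or $n_\text{cap}$ — and that the $\closecorrupt$ case is handled correctly, relying on the fact that closing the browser clears $\mi{pendingRequests}$ (Algorithm~\ref{alg:browsermain}, $\closecorrupt$ branch) so neither $k$ nor $n_\text{cap}$ survives into the corrupted state, while the $\fullcorrupt$ case is ruled out by hypothesis. A secondary subtlety is arguing that the honest server processes (RP, LPO) truly never re-emit $k$ in any code path — this requires a careful but routine scan of Algorithms~\ref{alg:rp} and~\ref{alg:lpo}, checking every $\textbf{stop}$ that outputs a message.
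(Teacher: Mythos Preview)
Your approach is essentially the same as the paper's: both argue by inspecting the definitions of the browser, RP, and LPO and tracking where $k'$, $k$, and the request nonce can flow. Framing this as an induction on processing steps is fine and arguably more explicit than the paper's informal walk-through.

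There is, however, one point to correct. You invoke the assumptions (**) and (***), but those are \emph{without-loss-of-generality} restrictions introduced later in the paper specifically for the proof of Condition~A; they are not hypotheses of this lemma, which must hold for \emph{every} run of $\bidwebsystem$. Fortunately your argument does not actually need them. What you attribute to (**) --- that after $\closecorrupt$ the key $k$ is gone --- is simply part of the browser definition (the $\closecorrupt$ branch of Algorithm~\ref{alg:browsermain} sets $\mi{pendingRequests} := \an{}$ and the subsequent corrupted behaviour draws nonces only from $N^p \setminus \{n \mid n \inPairing \comp{s}{nonces}\}$). And your worry about $\Rasp$ via (***) is misplaced: $k$ lives only in $\mi{pendingRequests}$, which is never handed to any script (inspect the input term assembled in Algorithm~\ref{alg:runscript}: scripts see the cleaned window tree, cookies, storage, and secrets, but not $\mi{pendingRequests}$). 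That single observation --- which the paper makes explicitly --- replaces your appeal to (***) entirely.

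A minor stylistic difference: for Statement~\ref{prop:only-owner-answers} the paper argues directly that an accepted HTTPS response must be encrypted with $k$, and then invokes Statement~\ref{prop:k-doesnt-leak}. Your route via secrecy of the request nonce is also valid (the nonce sits inside the same ciphertext), but the $k$-based argument is shorter since you have already done the work in Statement~\ref{prop:k-doesnt-leak}.
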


\begin{proof} First, we note that only the intended
  receiver can decrypt the message: The private keys of RPs
  and LPO are per definition only known to the respective
  parties. According to the definition of RPs and LPO, the
  keys do not leak to other parties, i.e., there is no
  state in a run $\rho$ of $\bidwebsystem$ where the keys
  are known to any other parties except their respective
  owners. This proves (\ref{prop:attacker-cannot-decrypt}).

  We can further see from the definition of the receivers
  (some RP or LPO) that they use the key $k$ only to
  encrypt the responses (Algorithms~\ref{alg:lpo}
  and~\ref{alg:rp}). In both definitions, $k$ is extracted
  from the message and discarded after encrypting the
  response. Note that neither RPs nor LPO can be
  corrupted. Hence, neither RPs nor LPO can leak $k$. From
  the definition of the browser $b$, we see that the key is
  always chosen from a fresh set of nonces
  (Line~\ref{line:takenonce-k} of
  Algorithm~\ref{app:mainalgorithmwebbrowserprocess}) that
  are not used anywhere else. Further, the key is stored in
  the browser's state in $\mi{pendingRequests}$, but
  discarded after receiving the response. The information
  from $\mi{pendingRequests}$ is not extracted or used
  anywhere else (in particular it is not accessible by
  scripts). If the browser becomes closecorrupted at some
  point in the run $\rho$, the key cannot be used anymore
  (compare Line~\ref{line:key-not-used-anymore} of
  Algorithm~\ref{alg:browsermain}). Hence, $k$ cannot leak
  from $b$ either. This proves (\ref{prop:k-doesnt-leak}).

  From Line~\ref{line:select-enc-key} of
  Algorithm~\ref{alg:browsermain} we can see that the
  encryption key for the request $\mi{req}$ was actually
  chosen using the host header of the message. The mapping
  from domains to encryption keys in $\bidwebsystem$ is
  always ``correct'', i.e., the owner of $k'$ is the owner
  of the domain that is given in the host header. This
  proves (\ref{prop:host-header-matches}).

  An HTTPS response $r$ that is accepted by $b$ as a
  response to the above request has to be encrypted with
  $k$. (This is checked by the browser using the
  $\mi{pendingRequests}$ state information that is not
  alterable by scripts or other browser actions.) This
  nonce, however, is only known to the owner of the public
  key and $b$. The browser $b$ cannot send responses. This
  proves (\ref{prop:only-owner-answers}).
\end{proof}

\begin{lemma} \label{lemma:https-document-origin} If in a
  run $\rho$ of $\bidwebsystem$ an honest browser $b$ has a
  document $d$ in its state with the origin $\an{\mi{dom},
    \https}$ where $\mi{dom} \in \mapDomain(\fAP{RP}) \cup
  \mapDomain(\fAP{LPO})$, then $b$ extracted (in
  Line~\ref{line:take-script} in
  Algorithm~\ref{alg:processresponse}) the script in that document from an HTTPS
  response that was emitted by the owner of the private key
  belonging to $\mi{dom}$.
\end{lemma}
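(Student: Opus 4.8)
The plan is to prove a strengthening of the statement by induction over the processing steps of $\rho$, up to the state in which the document $d$ occurs. Concretely, I would establish the following invariant $(\star)$: in every state $(S,E)$ of $\rho$ in which $b$ is not corrupted, every document contained in $\comp{S(b)}{windows}$ whose origin is $\an{\mi{dom},\https}$ with $\mi{dom}\in \mapDomain(\fAP{RP})\cup \mapDomain(\fAP{LPO})$ carries a script that was extracted in Line~\ref{line:take-script} of Algorithm~\ref{alg:processresponse} from an HTTPS response emitted by the owner of $\mapKey(\mi{dom})$; in particular this script equals one of $\str{script\_RP\_index}$, $\str{script\_LPO\_cif}$, $\str{script\_LPO\_ld}$ and is never $\Rasp$. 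Since corruption is monotone, $b$ being honest in the state where $d$ occurs implies that $b$ was honest in all earlier states, so $(\star)$ applied to that state yields the lemma. The base case is immediate, because $\comp{s_0^b}{windows}=\an{}$ contains no documents.

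For the inductive step I would only examine processing steps in which $b$ itself is the active (honest) process, since all other steps leave $S(b)$ unchanged, and the two corruption cases cannot fire while $b$ remains honest (close-corruption additionally empties the window list). The only place a document is created is the normal-response branch of $\mathsf{PROCESSRESPONSE}$ (Line~\ref{line:take-script}), which sets the new document's origin from the host and protocol of the matched request and its script to $\proj{1}{\comp{\mi{response}}{body}}$. A document acquires protocol $\https$ only when the response was handled on the encrypted-response branch of Algorithm~\ref{alg:browsermain}, i.e.\ as an HTTPS response matching a pending HTTPS request to the same host $\mi{dom}$. Because the browser's $\mi{keyMapping}$ is initialized to the correct public keys and is never modified, an honest $b$ encrypted that request with $\pub(\mapKey(\mi{dom}))$ (Line~\ref{line:select-enc-key}); as $\mi{dom}$ is an RP or LPO domain, $\mapKey(\mi{dom})$ is the private key of an honest RP or of LPO. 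Lemma~\ref{lemma:k-does-not-leak-from-honest-browser}, parts~(\ref{prop:attacker-cannot-decrypt}) and~(\ref{prop:only-owner-answers}), then guarantees that only that owner can produce a response accepted by $b$, so the response, and hence the extracted script, originates from the owner of $\mapKey(\mi{dom})$. Inspecting Algorithms~\ref{alg:lpo} and~\ref{alg:rp} shows that RP and LPO only ever serve the three honest scripts, which re-establishes $(\star)$ for the freshly created document.

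The delicate point, and the main obstacle, is that a document's script can also be overwritten after creation by a $\tSetScript$ command (the only command that mutates a script; $\tSetScriptState$ touches only the script state, and navigation, close, and XHR-append steps leave scripts untouched). I would rule this out using the induction hypothesis together with the same-origin guard of $\mathsf{GETWINDOW}$: any $\tSetScript$ retargeting a document with origin $\an{\mi{dom},\https}$ is issued by a script running in a document that $\mathsf{GETWINDOW}$ certified to be same-origin, hence a document that itself has origin $\an{\mi{dom},\https}$. By $(\star)$ for the current state, that initiating document runs one of the three honest scripts, and a direct inspection of Algorithms~\ref{alg:scriptlpocif}, \ref{alg:scriptlpold}, and~\ref{alg:scriptrpindex} shows those scripts never emit $\tSetScript$ (nor $\tSetScriptState$) — a contradiction. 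Thus no $\tSetScript$ ever alters a document with an RP/LPO HTTPS origin, and the remaining browser actions are easily seen to preserve $(\star)$.

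The crux is therefore the apparent circularity between ``the document runs an honest script'' and ``no attacker script can overwrite it'', which the induction breaks cleanly: provenance of each write is settled using $(\star)$ for the strictly earlier state. The one modeling subtlety I would flag is the identification of a document's origin protocol with the protocol of the channel on which its response arrived; I would justify this directly from the control flow of Algorithm~\ref{alg:browsermain}, which invokes $\mathsf{PROCESSRESPONSE}$ with $\https$ exactly on the encrypted branch. Everything else reduces to bookkeeping over the finitely many command and message cases of the browser relation, combined with the secrecy guarantees already provided by Lemma~\ref{lemma:k-does-not-leak-from-honest-browser} and the correctness and immutability of $\mi{keyMapping}$.
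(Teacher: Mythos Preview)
Your proposal is correct and in fact more careful than the paper's own proof. The paper argues directly rather than by induction: it observes that a document's origin is set exactly once, at creation in Line~\ref{line:set-origin-of-document} of Algorithm~\ref{alg:processresponse}, then traces the matching request/response pair backward through $\mi{pendingRequests}$ (which is indexed by fresh nonces and never overwritten), notes that the protocol is $\https$ only if a symmetric key was recorded, and invokes Lemma~\ref{lemma:k-does-not-leak-from-honest-browser} to conclude that only the owner of $\mapKey(\mi{dom})$ could have produced an accepted response. That is essentially your ``document creation'' case, but the paper stops there and does not explicitly rule out later modification via $\tSetScript$. Your inductive invariant~$(\star)$---in particular the strengthening that the script is one of the three honest scripts---is exactly what is needed to close that gap cleanly, since it lets you use the same-origin guard of $\mathsf{GETWINDOW}$ together with the induction hypothesis to show that no script in such a document ever issues $\tSetScript$. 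The price is more bookkeeping across all browser commands; the benefit is that you explicitly discharge a case the paper leaves implicit. Both arguments rest on the same three ingredients: immutability of document origins, correctness and immutability of $\mi{keyMapping}$, and the secrecy guarantees of Lemma~\ref{lemma:k-does-not-leak-from-honest-browser}.
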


\begin{proof}
  The origin of the document $d$ is set only once: In
  Line~\ref{line:set-origin-of-document} of
  Algorithm~\ref{alg:processresponse}. The values (domain
  and protocol) used there stem from the information about
  the request (say, $\mi{req}$) that led to loading of
  $d$. These values have been stored in
  $\mi{pendingRequests}$ between the request and the
  response actions. The contents of $\mi{pendingRequests}$
  are indexed by freshly chosen nonces and can never be
  altered or overwritten (only deleted when the response to
  a request arrives). The information about the request
  $\mi{req}$ was added to $\mi{pendingRequests}$ in
  Line~\ref{line:add-to-pendingrequests-https} (or
  Line~\ref{line:add-to-pendingrequests} which we can
  exclude as we will see later) of
  Algorithm~\ref{alg:browsermain}. In particular, the
  request was an HTTPS request iff a (symmetric) key was
  added to the information in $\mi{pendingRequests}$. When
  receiving the response to $\mi{req}$, it is checked
  against that information and accepted only if it is
  encrypted with the proper key and contains the same nonce
  as the request (say, $n$). Only then the protocol part of
  the origin of the newly created document becomes
  $\https$. The domain part of the origin (in our case
  $\mi{dom}$) is taken directly from the
  $\mi{pendingRequests}$ and is thus guaranteed to be
  unaltered.

  From Line~\ref{line:select-enc-key} of
  Algorithm~\ref{alg:browsermain} we can see that the
  encryption key for the request $\mi{req}$ was actually
  chosen using the host header of the message which will
  finally be the value of the origin of the document
  $d$. Since the honest browsers in $\bidwebsystem$ select
  the correct public keys for a domain, we can see that
  $\mi{req}$ was encrypted using the public key belonging
  to $\mi{dom}$. With
  Lemma~\ref{lemma:k-does-not-leak-from-honest-browser} we
  see that the symmetric encryption key for the response,
  $k$, is only known to $b$ and the respective RP or
  LPO. The same holds for the nonce $n$ that was chosen by
  the browser and included in the request. Thus, no other
  party than the owner of the private key belonging to
  $\mi{dom}$ can encrypt a response that is accepted by the
  browser $b$ and which finally defines the script of the
  newly created document.
\end{proof}

\begin{lemma} \label{lemma:https-script-origin} If in a run
  $\rho$ of $\bidwebsystem$ an honest browser $b$ issues an
  HTTP(S) request with the Origin header value
  $\an{\mi{dom}, \https}$ where $\mi{dom} \in
  \mapDomain(\fAP{RP}) \cup \mapDomain(\fAP{LPO})$, then
  that request was initiated by a script that $b$ extracted
  (in Line~\ref{line:take-script} in
  Algorithm~\ref{alg:processresponse}) from an HTTPS
  response that was emitted by the owner of the private key
  belonging to $\mi{dom}$.
\end{lemma}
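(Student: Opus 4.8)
The plan is to trace, through the definition of the web browser relation, every way an honest browser can emit an HTTP(S) request carrying a non-empty \str{Origin} header, and to show that the only way the header value can be a \emph{single} origin term $\an{\mi{dom},\https}$ with $\mi{dom}\in\dns$ is that the request was issued while running the script of a document whose origin is $\an{\mi{dom},\https}$; the statement then follows from Lemma~\ref{lemma:https-document-origin}.

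First I would collect all call sites of $\mathsf{SEND}$ (Algorithm~\ref{alg:send}) in Algorithms~\ref{alg:runscript}, \ref{alg:processresponse}, and~\ref{alg:browsermain}, observing that $\mathsf{SEND}$ adds an \str{Origin} header to the outgoing request precisely when it is invoked with its $\mi{origin}$ argument $\not\equiv\bot$. Scripts cannot smuggle in their own \str{Origin} header, since in the $\tHref$, $\tIframe$, $\tForm$, and $\tXMLHTTPRequest$ cases of $\mathsf{RUNSCRIPT}$ the request headers are initialized to $\an{}$; moreover the $\tHref$ and $\tIframe$ cases (Lines~\ref{line:send-href} and~\ref{line:send-iframe}) and the user-typed-URL branch of Algorithm~\ref{alg:browsermain} (Line~\ref{line:send-random}) all call $\mathsf{SEND}$ with $\mi{origin}=\bot$. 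Hence a non-$\bot$ \str{Origin} header can only come from (i) the $\tForm$ case with $\mi{method}=\mPost$ (Line~\ref{line:send-form}), (ii) the $\tXMLHTTPRequest$ case with $\mi{method}\notin\{\mGet,\mHead\}$ (Line~\ref{line:send-xhr}), or (iii) the redirect branch of $\mathsf{PROCESSRESPONSE}$ (Line~\ref{line:send-redirect}).

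Next I would eliminate case (iii). In that branch the value supplied as $\mi{origin}$ is $\an{\comp{\mi{request}}{headers}[\str{Origin}],\, \an{\comp{\mi{request}}{host},\mi{protocol}}}$, whose second component is the two-element sequence $\an{\mi{host},\mi{protocol}}$. For this whole term to be congruent to $\an{\mi{dom},\https}$ one would need $\an{\mi{host},\mi{protocol}}\equiv\https$, which is impossible since $\https$ is a constant of $\Sigma$ and no sequence is congruent to a constant under the equational theory of Figure~\ref{fig:equational-theory}. Therefore the request of the statement is produced in case (i) or (ii), where $\mathsf{SEND}$ is called with $\mi{origin}=\comp{\compn{s'}{\ptr{d}}}{origin}$ and $\ptr{d}$ points to the document whose script was just triggered (Line~\ref{line:trigger-script} of Algorithm~\ref{alg:runscript}); by the definition of ``a script initiated a request'', that script initiated the request, and the browser therefore has a document $d$ with $\comp{d}{origin}\equiv\an{\mi{dom},\https}$.

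Finally, since $\mi{dom}\in\mapDomain(\fAP{RP})\cup\mapDomain(\fAP{LPO})$, applying Lemma~\ref{lemma:https-document-origin} to $d$ yields that $b$ extracted the script of $d$ (in Line~\ref{line:take-script} of Algorithm~\ref{alg:processresponse}) from an HTTPS response emitted by the owner of the private key belonging to $\mi{dom}$, which is exactly the conclusion. The only real work is the exhaustive case analysis over the $\mathsf{SEND}$ call sites and, inside it, the small congruence argument that rules out the redirect case; once those are in place the result is an immediate consequence of Lemma~\ref{lemma:https-document-origin}.
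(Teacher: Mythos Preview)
Your proposal is correct and follows essentially the same approach as the paper: an exhaustive case analysis over the places where an \str{Origin} header can be attached (ruling out $\tHref$/$\tIframe$/user-initiated requests because $\mi{origin}=\bot$, and ruling out redirects because the resulting header cannot have the required shape), followed by an application of Lemma~\ref{lemma:https-document-origin}. Your treatment is in fact slightly more explicit than the paper's---you enumerate all $\mathsf{SEND}$ call sites and give a clean congruence argument for the redirect case, whereas the paper just observes informally that a redirected request ``contains at least two different origins''---but the structure and the key idea are the same.
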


\begin{proof} First, we can see that the request was
  initiated by a script: As it contains an origin header,
  it must have been a POST request (see the browser
  definition in Appendix~\ref{sec:descr-web-brows}). POST
  requests can only be initiated in
  Lines~\ref{line:send-form}, \ref{line:send-xhr} of
  Algorithm~\ref{alg:runscript} and
  Line~\ref{line:send-redirect} of
  Algorithm~\ref{alg:processresponse}. In the latter
  instance (Location header redirect), the request contains
  at least two different origins, therefore it is
  impossible to create a request with exactly the origin
  $\an{\mi{dom}, \https}$ using a redirect.  In the other
  two cases (FORM and XMLHTTPRequest), the request was
  initiated by a script.

  The Origin header of the request is defined by the origin
  of the script's document. With
  Lemma~\ref{lemma:https-document-origin} we see that the
  content of the document, in particular the script, was
  indeed provided by the owner of the private key belonging
  to $\mi{dom}$.
\end{proof}

\begin{lemma}\label{lemma:cookie-value-doesnt-leak}
  In a run $\rho$ of $\bidwebsystem$, if LPO sends a
  $\str{browserid\_state}$ cookie in a
  $\str{Set{\mhyphen}Cookie}$ header in an HTTPS response
  to an HTTPS request emitted by a browser $b$, there is no
  state in the run where the browser is honest and the
  attacker can derive the cookie value from its own
  knowledge.
\end{lemma}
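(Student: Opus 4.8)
The plan is to prove Lemma~\ref{lemma:cookie-value-doesnt-leak} by tracing the possible paths through which the cookie value could reach the attacker and showing each is blocked by the design of the fixed system. Let $c$ denote the value of the $\str{browserid\_state}$ cookie, which by inspection of Algorithm~\ref{alg:lpo} (Line~\ref{line:set-session-cookie}, via the $\mathsf{TAKENONCE}$ call in the \texttt{/ctx} handler) is a fresh nonce from $N^\fAP{LPO}$, and the cookie is set with the $\str{secure}$, $\str{httpOnly}$, and $\str{session}$ flags all equal to $\True$. First I would establish that the cookie travels from LPO to $b$ only inside an HTTPS response: by Lemma~\ref{lemma:k-does-not-leak-from-honest-browser}, item~\ref{prop:attacker-cannot-decrypt}, the attacker cannot decrypt the HTTPS request $b$ sent to LPO, and by item~\ref{prop:only-owner-answers} only LPO can produce the accepted encrypted response, which is encrypted under the symmetric key $k$ that by item~\ref{prop:k-doesnt-leak} is known only to $b$ and LPO. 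Hence the attacker cannot learn $c$ from the response message itself or from the LPO side (LPO cannot be corrupted, and inspecting Algorithm~\ref{alg:lpo} shows $c$ is only ever stored as a session-dictionary key and sent back in \texttt{/ctx} responses, never leaked otherwise).

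Next I would argue that $c$ cannot leak out of the browser $b$. The cookie is stored in $b$'s cookie store for the domain $\domLPO$. There are only two ways cookie values leave a browser: (i) as a $\str{Cookie}$ header on an outgoing HTTP(S) request to a host, and (ii) by being read by a script running in a same-origin document (Line~\ref{line:assemble-cookies-for-script} of Algorithm~\ref{alg:runscript}). For~(i): since the cookie has the $\str{secure}$ flag, by the rule on Line~\ref{line:cookie-rules-http} of Algorithm~\ref{alg:send} it is only attached to HTTPS requests, and it is only attached to requests whose host is $\domLPO$; because the browser's initial state has $\mi{sts} = \an{\domLPO}$ (by the definition of browsers in Appendix~\ref{sec:browsers}), every request to $\domLPO$ is HTTPS, and by Lemma~\ref{lemma:k-does-not-leak-from-honest-browser} such a request is only readable by LPO. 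So the cookie is never exposed in plaintext on the wire. For~(ii): since the cookie has the $\str{httpOnly}$ flag, Line~\ref{line:assemble-cookies-for-script} of Algorithm~\ref{alg:runscript} explicitly excludes it from the cookies handed to any script; thus no script — in particular not $\Rasp$ — can ever read $c$. Finally, I would note that the cookie is also not exposed via the close-corruption path: although close-corruption preserves non-session cookies, this cookie carries the $\str{session}$ flag, so by the $\closecorrupt$ handling in Algorithm~\ref{alg:browsermain} it is removed from the browser state before the browser becomes corrupted (and the lemma's conclusion only concerns states where the browser is still honest anyway, but this remark will be reused in the Condition-A proof).

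Putting this together: $c$ is a fresh nonce initially known only to LPO; it is transmitted only inside HTTPS messages exchanged between $b$ and LPO that the attacker cannot decrypt; once in $b$'s store it is tagged $\str{secure}$ and $\str{httpOnly}$, so it is neither sent on any attacker-readable channel nor readable by any script; and LPO never outputs it except in those protected responses. By a straightforward induction on the length of the run — the induction hypothesis being that in every state the attacker's knowledge does not contain $c$ — no processing step can add $c$ to $d_{N^\fAP{attacker}}(S(\fAP{attacker}))$, since every message emitted by an honest party that is derivable by the attacker is $c$-free by the above case analysis, and $c \notin N^\fAP{attacker}$. Hence in no state where $b$ is honest does the attacker know the cookie value, which is the claim.

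The main obstacle I expect is making the induction airtight rather than hand-wavy: one has to argue that $c$ cannot appear as a subterm of \emph{any} attacker-derivable message emitted by \emph{any} honest process, which requires a uniform "$c$ does not occur in plaintext-reachable position" invariant over LPO, $b$, RPs, and the scripts, and in particular a careful treatment of the $\str{Cookie}$-header assembly (Line~\ref{line:assemble-cookies-for-request} of Algorithm~\ref{alg:send}) together with the STS rewriting, since that is the one place where a cookie value is copied into a message that could in principle go to an attacker domain. The $\mi{sts}$-for-$\domLPO$ assumption in the browser's initial state (explicitly justified in Appendix~\ref{sec:browsers}) together with Lemma~\ref{lemma:k-does-not-leak-from-honest-browser} is exactly what closes that gap, so the proof should cite it prominently.
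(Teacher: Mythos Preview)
Your approach is essentially the paper's: trace the cookie's lifecycle and show every channel by which it could reach the attacker is blocked (HTTPS confidentiality via Lemma~\ref{lemma:k-does-not-leak-from-honest-browser}, the $\str{httpOnly}$ flag blocking script access, the $\str{secure}$ flag restricting the $\str{Cookie}$ header to HTTPS-to-LPO requests). That part is fine.

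There is one genuine gap. You assert that $c$ ``is a fresh nonce from $N^\fAP{LPO}$'' via the $\mathsf{TAKENONCE}$ call in the \texttt{/ctx} handler, but inspection of Algorithm~\ref{alg:lpo} shows that $\mathsf{TAKENONCE}$ is invoked only when the incoming request carries no $\str{browserid\_state}$ cookie matching an existing session; otherwise LPO \emph{echoes back} the session identifier it received in the request's cookie header. So the value placed in the $\str{Set{\mhyphen}Cookie}$ header need not be freshly chosen in this step, and your freshness-based argument does not directly apply. The paper treats this case explicitly: it observes that an echoed value must already have been in $b$'s cookie store, and invokes Lemma~\ref{lemma:cookie-value-cannot-be-overwritten} to conclude that any $\str{browserid\_state}$ cookie in an honest $b$'s store was set by a prior LPO response (the attacker cannot inject it, because STS forces all LPO traffic to HTTPS and only LPO scripts run under the LPO origin). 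This pushes the argument back to an earlier $\str{Set{\mhyphen}Cookie}$ event and ultimately to a genuinely fresh $\mathsf{TAKENONCE}$. Your closing induction sketch does not make this echo-versus-fresh case distinction; to make the proof complete you should either cite Lemma~\ref{lemma:cookie-value-cannot-be-overwritten} at that point or reproduce its reasoning inline.
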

\begin{proof}
  We can see that the browser is honest when sending the
  request (otherwise, it would not do so, (**)). With
  Lemma~\ref{lemma:k-does-not-leak-from-honest-browser} and
  as in the proof for
  Lemma~\ref{lemma:https-document-origin} we see that the
  sender of the request to LPO (say, $\mi{req}$) is the
  same as the receiver, namely browser $b$. As the message
  is transferred over HTTPS, the attacker cannot read the
  cookie from the response.

  The $\str{browserid\_state}$ cookie is sent to $b$ as an
  httpOnly secure session cookie (compare
  Line~\ref{line:set-session-cookie} in
  Algorithm~\ref{alg:lpo}). When the response arrives at
  $b$, the cookie is transferred to the cookie store
  (Line~\ref{line:set-cookie} of
  Algorithm~\ref{alg:processresponse}) which is indexed by
  domains.  The cookie information can be accessed by
  scripts (Line~\ref{line:assemble-cookies-for-script} of
  Algorithm~\ref{alg:runscript}) and can be added to
  requests (Line~\ref{line:assemble-cookies-for-request} of
  Algorithm~\ref{alg:send}). As the
  $\str{browserid\_state}$ cookie is an httpOnly cookie we
  can rule out the first case.  In the second case, the
  cookie can only be added to requests to the origin
  $\an{\mapDomain(\fAP{LPO}), \https}$, as the cookie is
  marked as secure (as defined in
  Line~\ref{line:cookie-rules-http} in
  Algorithm~\ref{alg:send}). These properties hold as long
  as the browser is not corrupted. 

  As a last step, we have to rule out that LPO or the
  browser use a cookie value that is known to the attacker
  via some other way. We will see that any cookie value was
  initially chosen by LPO.

  First, we can see that the cookie value was either in the
  browser's knowledge before it received the
  $\str{browserid\_state}$ header or that it was chosen
  freshly by LPO. The only line where LPO sets the cookie
  is in Line~\ref{line:set-session-cookie} of
  Algorithm~\ref{alg:lpo}. From the lines before, it is
  easy to see that the session value that finally becomes
  the cookie value was either provided as a cookie in the
  request or is chosen from the set of unused nonces. In
  Lemma~\ref{lemma:cookie-value-cannot-be-overwritten} we
  see that any value that is contained in a request sent by
  an honest browser to LPO was initially chosen by LPO.

  We see that the attacker cannot know the cookie value as
  long as the browser stays honest, which proves the lemma.
\end{proof}

\begin{lemma} \label{lemma:xsrf-token-needs-sessionid} In
  every state $s = (S,E)$ of run $\rho$ of $\bidwebsystem$,
  for every $\mi{xsrfToken}$ of an LPO session and its
  session ID $\mi{sessionid}$, if $\mi{xsrfToken} \in
  d_{N^\fAP{attacker}}(S(\fAP{attacker}))$, then
  $\mi{sessionid} \in
  d_{N^\fAP{attacker}}(S(\fAP{attacker}))$, i.e., an
  attacker can only know an $\mi{xsrfToken}$ value for an
  LPO session if he knows the $\str{browserid\_state}$
  session ID of that session.
\end{lemma}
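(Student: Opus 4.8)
The plan is to prove the lemma by induction on the length of the run $\rho$, exploiting monotonicity of the attacker's knowledge (once $\fAP{attacker}$ can derive a term it can derive it in every later state) together with a confinement argument for the nonce $x:=\mi{xsrfToken}$. The structural facts I would extract first, by inspecting Algorithm~\ref{alg:lpo}, are: $x$ is a fresh nonce from $N^{\LPO}$ chosen in Line~\ref{line:chose-xsrftoken} upon \emph{creation} of the session with identifier $\mi{sid}$; the only message $\LPO$ ever emits that contains a session's xsrfToken is the response to a $\mGet$ request to $\str{/ctx}$, whose body is $\an{\mi{ids},x}$ and which carries a $\str{Set{\mhyphen}Cookie}$ header that sets $\str{browserid\_state}$ to the \emph{same} session's identifier $\mi{sid}$ (Line~\ref{line:set-session-cookie}); and the $(\mi{sid},x)$ pairing is never altered afterwards (the $\str{/auth}$ handler only updates $\mi{ids}$; logout/expiration only clears or deletes the session), so, since session identifiers and xsrfTokens are fresh nonces and hence never reused, ``the session of $x$'' is well defined whenever it still exists.

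Next I would establish a confinement invariant: every message in $\rho$ containing $x$ is either a $\str{/ctx}$ response $\encs{m'}{k'}$ whose plaintext $m'$ carries the header $\str{Set{\mhyphen}Cookie}{:}\,\str{browserid\_state}{:}\mi{sid}$, or a message derived by a party that previously received such a response. A $\str{/ctx}$ request is sent either directly by $\fAP{attacker}$ or, via the browser model, as an \xhr{} issued by a script in a document of origin $\an{\mapDomain(\LPO),\https}$; by Lemma~\ref{lemma:https-document-origin} such a script was served by $\LPO$ and is therefore $\mi{script\_LPO\_cif}$ or $\mi{script\_LPO\_ld}$. Inspecting Algorithms~\ref{alg:scriptlpocif} and~\ref{alg:scriptlpold} shows these scripts keep the $\str{/ctx}$ response body only in their scriptstate field $\mi{context}$ and use $\mi{context}.\str{xsrfToken}$ solely inside the bodies of subsequent \xhr{} $\mPost$ requests to $\textsf{URL}^{\LPO}_{\str{/auth}}$ and $\textsf{URL}^{\LPO}_{\str{/certreq}}$, i.e.\ inside messages encrypted under $\pub(\mapKey(\mapDomain(\LPO)))$, which only $\LPO$ can open by item~(\ref{prop:attacker-cannot-decrypt}) of Lemma~\ref{lemma:k-does-not-leak-from-honest-browser}; they never put $x$ into a cookie, $\mi{localStorage}$, $\mi{sessionStorage}$, or a postMessage, and the browser itself never exports $\mi{scriptinput}$ or scriptstate content except by serving a script (not applicable here) or through corruption. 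Hence an honest, not-fully-corrupted browser that received such a response keeps $x$ confined between itself and $\LPO$; and because $\textsf{PROCESSRESPONSE}$ stores the $\str{Set{\mhyphen}Cookie}$ header (Line~\ref{line:set-cookie}) before delivering the body, at the moment $x$ enters such a browser its $\str{browserid\_state}$ cookie equals $\mi{sid}$, and -- as long as session $\mi{sid}$ still exists and the browser stays honest -- it remains $\mi{sid}$ (the cookie is httpOnly, so no script can overwrite it, and only $\LPO$ can reset it over HTTPS, always to the session it matched, which for a request carrying cookie $\mi{sid}$ is again $\mi{sid}$).

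The inductive step then reduces to the following: if $\fAP{attacker}$ first derives $x$ in the step $s_{k-1}\to s_k$, it has just received some object (a message, or, by assumption~(**), a full state dump of a corrupted browser) from which $x$ becomes extractable although it was not derivable before. By the confinement invariant and the standard Dolev--Yao observation that a nonce occurring in messages only under encryptions whose keys $\fAP{attacker}$ does not know is not derivable by $\fAP{attacker}$ (a small auxiliary lemma I would record), this object is either (i) a $\str{/ctx}$ response ciphertext the attacker can decrypt, or (ii) a full-corruption dump of a browser whose state contains $x$. In case~(i) the attacker knows the symmetric key $k'$, so by item~(\ref{prop:k-doesnt-leak}) of Lemma~\ref{lemma:k-does-not-leak-from-honest-browser} the matching request was sent by $\fAP{attacker}$ itself or by a then fully corrupted browser whose state was already dumped; in either case the attacker obtains the plaintext $m'$ of that very response, whose $\str{Set{\mhyphen}Cookie}$ header yields $\mi{sid}$ (and if instead the request it sent carried the cookie $\mi{sid}$, the attacker already knew $\mi{sid}$). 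In case~(ii), since $x$ enters an honest browser's state only via a processed $\str{/ctx}$ response for session $\mi{sid}$ and $\str{browserid\_state}$ is retained by full corruption (it is cleared only by close-corruption, which also wipes all windows/documents and never leaves $x$ in $\mi{localStorage}$), the dump also reveals the cookie value $\mi{sid}$. In all cases $\fAP{attacker}$ derives $\mi{sid}$ no later than $x$; monotonicity of attacker knowledge completes the induction.

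I expect the main obstacle to be the confinement invariant -- formulating a single inductive statement that simultaneously controls every occurrence of $x$ in the event pool, in $\LPO$'s state, in honest browsers' states, and in $S(\fAP{attacker})$, and checking it survives every branch of Algorithms~\ref{alg:browsermain}, \ref{alg:runscript}, \ref{alg:processresponse}, \ref{alg:lpo}, \ref{alg:scriptlpocif}, \ref{alg:scriptlpold}, including the interplay with the two browser-corruption modes and the reliance on the HTTPS-confidentiality lemmas. Once that invariant is in place the remaining steps are routine.
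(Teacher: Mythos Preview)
Your proposal is correct and follows essentially the same argument as the paper's own proof: both observe that the $\mi{xsrfToken}$ is a fresh $\LPO$ nonce emitted only in the $\str{/ctx}$ response (which simultaneously carries $\mi{sessionid}$ in the $\str{Set{\mhyphen}Cookie}$ header), that HTTPS confidentiality (Lemma~\ref{lemma:k-does-not-leak-from-honest-browser}) prevents the attacker from reading that response when an honest browser sent the request, that the $\LPO$ scripts keep the token only in $\mi{context}$ and forward it only inside HTTPS bodies back to $\LPO$, and that browser corruption reveals the token and the cookie together (full corruption) or neither (close corruption, since the cookie is a session cookie and the scriptstate is wiped). The paper presents this as an informal paragraph-level sketch, whereas you recast it as an explicit induction on the run with a named confinement invariant and a two-case analysis at the first derivation step; this buys you a cleaner statement of exactly what must be checked in each algorithm branch, at the cost of the bookkeeping you anticipate, but the underlying ideas and the lemmas invoked are identical.
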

\begin{proof}
  The $\mi{xsrfToken}$ is chosen by $\fAP{LPO}$
  (Line~\ref{line:chose-xsrftoken} in
  Algorithm~\ref{alg:lpo}). If LPO receives a POST request
  with the path $\str{/ctx}$ that contains a
  $\str{browserid\_state}$ cookie containing a
  $\mi{sessionid}$ that is in its list of valid sessions,
  it returns $\mi{xsrfToken}$ as part of the response.  If
  LPO receives a request to the same URL without a valid
  session ID, it creates a new session and returns
  $\mi{sessionid}$ as well as a freshly chosen
  $\mi{xsrfToken}$ in the response. For other requests (to
  other URLs, etc.) $\mi{xsrfToken}$ is not a part of the
  response at all.

  The $\mi{xsrfToken}$ is only transferred over HTTPS: LPO
  only reacts to HTTPS requests
  (Line~\ref{line:lpo-does-https-only} of
  Algorithm~\ref{alg:lpo}) and the request that is sent
  from the browser to LPO to retrieve $\mi{xsrfToken}$ is
  explicitly sent over HTTPS
  (Line~\ref{line:ctx-over-https} of
  Algorithm~\ref{alg:scriptlpocif} or
  Line~\ref{line:ctx-over-https-2} of
  Algorithm~\ref{alg:scriptlpold}). Thus, if an honest
  browser sends a request to LPO, the attacker cannot read
  the response if the browser stays honest
  (Lemma~\ref{lemma:k-does-not-leak-from-honest-browser}). If
  the browser becomes corrupted later, the attacker learns
  the $\mi{sessionid}$ and the $\mi{xsrfToken}$ at the same
  time. The LPO script that has access to $\mi{xsrfToken}$
  in the browser's state does not sent out this part of the
  state to origins other than LPO's (see
  Algorithm~\ref{alg:scriptlpocif} and
  Algorithm~\ref{alg:scriptlpold}) and the $\mi{xsrfToken}$
  is stored only temporarily in the script's state (as part
  of the \emph{context}, see Line~\ref{line:store-context}
  in Algorithm~\ref{alg:scriptlpocif} and
  Line~\ref{line:store-context-2} in
  Algorithm~\ref{alg:scriptlpold}), such that it is never
  released when the browser is honest or closecorrupted.
  
  We can see that the attacker knows $\mi{sessionid}$
  whenever he knows $\mi{xsrfToken}$, which proves the
  lemma.
\end{proof}

\begin{lemma} \label{lemma:cookie-value-cannot-be-overwritten}
  In a run $\rho$ of $\bidwebsystem$, for any HTTPS request
  $\mi{req}$ that is emitted by an honest browser $b$ and
  that is encrypted with the public key of LPO, if there is
  a $\str{Cookie}$ header in $\mi{req}$ containing a cookie
  with the name $\str{browserid\_state}$, then there is an
  HTTPS response that was emitted by LPO previously in
  the run and that was accepted by $b$. In this response, a
  $\str{Set{\mhyphen}Cookie}$ header was sent with the name
  $\str{browserid\_state}$ and the same value as the
  $\str{browserid\_state}$ cookie in $\mi{req}$.
\end{lemma}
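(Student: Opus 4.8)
The plan is to trace the $\str{browserid\_state}$ cookie carried by $\mi{req}$ backwards through the browser's state to the last point where it was written into the cookie store under LPO's domain, and to show that this write can only have been caused by processing an HTTPS response emitted by LPO. First I would record the easy observations. Since $\mi{req}$ is encrypted with LPO's public key and honest browsers use a correct key mapping, the $\str{host}$ header of $\mi{req}$ is $\domLPO$ (cf.\ Line~\ref{line:select-enc-key} of Algorithm~\ref{alg:browsermain}), so $\mi{req}$ is an HTTPS request directed at LPO. Because $b$ emits $\mi{req}$ by executing the browser relation rather than dumping derivable messages, $b$ is not corrupted at that step; as corruption is permanent, $b$ is honest throughout the prefix $\rho'$ of $\rho$ ending with this step, and $\rho'$ is itself a run of $\bidwebsystem$. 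By Line~\ref{line:assemble-cookies-for-request} of Algorithm~\ref{alg:send}, the $\str{browserid\_state}$ cookie in $\mi{req}$ is the pair $\an{\str{browserid\_state}, v}$, where $v$ is the $\str{value}$ component of the cookie stored under key $\domLPO$ in $b$'s cookie store in the state in which $\mathsf{SEND}$ assembled $\mi{req}$; I fix this value $v$ and this state.

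Next I would show that, in every state of $\rho'$, every $\str{browserid\_state}$ entry of $\comp{\cdot}{cookies}[\domLPO]$ is an $\str{httpOnly}$ cookie and was placed there by an $\mathsf{AddCookie}$ call inside $\mathsf{PROCESSRESPONSE}$ (Line~\ref{line:set-cookie} of Algorithm~\ref{alg:processresponse}) while $b$ processed an HTTPS response from LPO. This I would prove by induction on the length of $\rho'$. The browser's initial cookie store is empty, and the only lines modifying $\comp{\cdot}{cookies}$ are: the $\mathsf{AddCookie}$ line just mentioned; the $\mathsf{CookieMerge}$ line in $\mathsf{RUNSCRIPT}$ (Line~\ref{line:cookiemerge} of Algorithm~\ref{alg:runscript}); and the session-cookie purge on $\closecorrupt$, which only deletes cookies and in any case never fires on $\rho'$. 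For the $\mathsf{CookieMerge}$ line, the affected entry is the one indexed by the running document's origin host, so it is $\domLPO$ only if that document has origin $\an{\domLPO,\http}$ or $\an{\domLPO,\https}$; the HTTP case is impossible because $\domLPO\in\mi{sts}$ in every state of $b$ (it is in the initial state and $\mi{sts}$ only grows), so every document loaded from $\domLPO$ has HTTPS origin. By Lemma~\ref{lemma:https-document-origin} such a document runs a script extracted from an HTTPS response emitted by LPO, i.e.\ $\mi{script\_LPO\_cif}$ or $\mi{script\_LPO\_ld}$, neither of which ever modifies its $\mi{cookies}$ output (Algorithms~\ref{alg:scriptlpocif},~\ref{alg:scriptlpold}); that output therefore equals the input, the list of non-$\str{httpOnly}$ cookies of the origin (Line~\ref{line:assemble-cookies-for-script} of Algorithm~\ref{alg:runscript}). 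By the induction hypothesis the $\domLPO$-indexed $\str{browserid\_state}$ cookie is $\str{httpOnly}$, hence absent from that list, and $\mathsf{CookieMerge}$ (Definition~\ref{def:cookiemerge}) preserves unchanged any old cookie whose name is not also in the script's output. So a $\mathsf{CookieMerge}$ step never creates or alters a $\domLPO$ $\str{browserid\_state}$ entry, and the only remaining source is an $\mathsf{AddCookie}$ in $\mathsf{PROCESSRESPONSE}$ applied to a cookie $c$ from a $\str{Set{\mhyphen}Cookie}$ header of the processed response, with $\comp{c}{name}\equiv\str{browserid\_state}$; since (as I argue in the next paragraph) that response came from LPO, and LPO sets this cookie only as an $\str{httpOnly}$ cookie (Line~\ref{line:set-session-cookie} of Algorithm~\ref{alg:lpo}), the induction hypothesis is maintained.

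Finally I would identify the emitter of the response responsible for the current value $v$. Take the last $\mathsf{AddCookie}$ step before the fixed state that set the $\domLPO$ $\str{browserid\_state}$ entry to $v$; by the previous paragraph such a step exists and occurs inside $\mathsf{PROCESSRESPONSE}$, processing a response whose $\str{Set{\mhyphen}Cookie}$ header contains a cookie $c$ with $\comp{c}{name}\equiv\str{browserid\_state}$ and $\comp{\comp{c}{content}}{value}\equiv v$. In $\mathsf{PROCESSRESPONSE}$ the cookie is indexed by the host of the corresponding request (Line~\ref{line:set-cookie}), so that request has host $\domLPO$; emitted by the honest browser $b$ and with $\domLPO\in\mi{sts}$, it is an encrypted HTTPS request under LPO's public key. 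Applying Lemma~\ref{lemma:k-does-not-leak-from-honest-browser}, property~\ref{prop:only-owner-answers}, with $k'$ the private key of LPO, to the run $\rho'$ (in which $b$ is never fully corrupted), the only atomic process that can create a response accepted by $b$ for this request is LPO. Hence the processed response was an HTTPS response emitted by LPO, accepted by $b$, occurring before $\mi{req}$ was emitted, and carrying a $\str{Set{\mhyphen}Cookie}$ header with name $\str{browserid\_state}$ and the same value $v$ as in $\mi{req}$. This is exactly the claim. The main obstacle is the inductive argument of the second paragraph: carefully enumerating every mutation of the cookie store and ruling out that any script — in particular the attacker script $\Rasp$ or any attacker-served document — can write or overwrite the $\domLPO$-indexed $\str{browserid\_state}$ entry, which rests on the STS entry for $\domLPO$, Lemma~\ref{lemma:https-document-origin}, the fact that the honest LPO scripts pass their cookie output through unchanged, and the $\str{httpOnly}$ handling in $\mathsf{CookieMerge}$.
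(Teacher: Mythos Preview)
Your proposal is correct and follows essentially the same line as the paper's proof: both trace the cookie in $b$'s store for $\domLPO$ back to either a $\str{Set{\mhyphen}Cookie}$ header or a script write, rule out HTTP access to $\domLPO$ via the STS entry, use Lemma~\ref{lemma:https-document-origin} to conclude any $\domLPO$-origin document runs an LPO-issued script, observe that neither $\mi{script\_LPO\_cif}$ nor $\mi{script\_LPO\_ld}$ touches cookies, and then invoke Lemma~\ref{lemma:k-does-not-leak-from-honest-browser}(\ref{prop:only-owner-answers}) to identify LPO as the emitter of the relevant response. Your version is more explicit---you phrase it as an induction and carry the $\str{httpOnly}$ flag in the hypothesis to handle the $\mathsf{CookieMerge}$ semantics carefully---whereas the paper dispatches the script case in one sentence by noting the LPO scripts ``do not set cookies''; but the substance and the lemmas used are the same.
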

\begin{proof}
  The cookie that is sent in $\mi{req}$ was taken from the
  cookie list that is stored in the browser state (see
  Algorithm~\ref{alg:send}). Cookies are stored per-domain,
  i.e., $\domLPO$ in this case. Adding a cookie to this
  list can be achieved by adding a
  $\str{Set{\mhyphen}Cookie}$ to a response on a request to
  $\domLPO$ or by setting the cookie from a script in a
  document with the origin $\an{\domLPO, x}$ where $x \in
  \{\http, \https\}$. The domain $\domLPO$ is part of the
  $\mi{sts}$ list in honest browsers (see
  Section~\ref{sec:browsers}) thus the browser $b$ never
  contacts the insecure origin $\an{\mapDomain(\fAP{LPO}),
    \http}$. Thus, responses and scripts can only be
  received from the origin $\an{\domLPO, \https}$ (see
  Lemma~\ref{lemma:k-does-not-leak-from-honest-browser}
  Property~(\ref{prop:only-owner-answers}) and
  Lemma~\ref{lemma:https-document-origin}). The LPO scripts
  $\mi{script\_LPO\_cif}$ and $\mi{script\_LPO\_ld}$ do not
  set cookies, thus the only possible way that a cookie can
  be stored in the browser's list of cookies is when LPO
  adds a $\str{Set{\mhyphen}Cookie}$ header to a HTTPS
  response. Obviously, this header has to have the same
  value as the cookie that is finally returned to the
  server. This proves the lemma.
\end{proof}

\begin{lemma} \label{lemma:browser-authenticates-itself-only}
  In a run $\rho$ of $\bidwebsystem$, if an honest browser
  $b$ emits a request $\mi{req}_\text{auth}$ that is
  received by LPO and leads to the
  authentication\footnote{See Section~\ref{sec:lpo} for an
    explanation on the authentication at LPO.} of an LPO
  session identified by the sessionid $\mi{sessionid}$,
  then the identity $i$, for which the session was
  authenticated, is owned by $b$, i.e., $i \in
  \mapIDtoOwner^{-1}(b)$.
\end{lemma}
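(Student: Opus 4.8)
The plan is to unfold how $\LPO$ processes the request $\mi{req}_\text{auth}$ and then to trace, inside the honest browser $b$, which script could have assembled that request; it will turn out that the secret it carries must be one of $b$'s own secrets.

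\emph{Step 1: analyse $\LPO$.} The only branch of Algorithm~\ref{alg:lpo} that modifies the ID list of a session (i.e.\ ``authenticates'' it) is the \texttt{POST /auth} case. Hence $\mi{req}_\text{auth}$ is an HTTPS request to $\domLPO$ with path $\str{/auth}$, its $\str{Cookie}$ header carries a $\str{browserid\_state}$ cookie whose value $\mi{sessionid}$ refers to an existing session, and its body has the form $\an{\mi{sec},\mi{xsrf}}$ with $\mi{sec}\in\PLISecrets$ and $\mi{xsrf}$ equal to the $\mi{xsrfToken}$ recorded for that session; on success $\LPO$ sets the session's ID list to $\str{secrets}[\mi{sec}]$, i.e.\ to all $i$ with $\mapIDtoPLI(i)=\mi{sec}$. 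Since $\mapIDtoOwner=\mapPLItoOwner\circ\mapIDtoPLI$, every such $i$ satisfies $\mapIDtoOwner(i)=\mapPLItoOwner(\mi{sec})$, so it suffices to show $\mapPLItoOwner(\mi{sec})=b$, i.e.\ that $\mi{sec}$ is owned by $b$.

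\emph{Step 2: trace the request back inside $b$.} Because $b$ is honest and emits $\mi{req}_\text{auth}$ (a $\mPost$ with a nonempty body), by Algorithms~\ref{alg:runscript}, \ref{alg:processresponse} and~\ref{alg:browsermain} this request was produced either by a $\tForm$ or $\tXMLHTTPRequest$ command of a script in $b$, or as a body-preserving redirect of such a request; the browser's ``type-a-URL'' action only yields $\mGet$ requests with empty body. In every case the innermost request of the chain was assembled by a script whose output contains the term $\mi{xsrf}$, the $\mi{xsrfToken}$ of the $\LPO$ session referenced by $\mi{sessionid}$. I would then argue that this script must be $\mi{script\_LPO\_ld}$, running in $b$ in a document with origin $\an{\domLPO,\https}$. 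First, $\mi{xsrf}$ can only enter the state of a script in $b$ through an \xhr response to a request to $\mathsf{URL}^\LPO_\str{/ctx}$: $\LPO$ returns an $\mi{xsrfToken}$ only in the \texttt{/ctx} response, which is sent over HTTPS (so, by Lemma~\ref{lemma:k-does-not-leak-from-honest-browser}, unreadable by the attacker while $b$ is honest), and inspection of Algorithms~\ref{alg:scriptlpocif} and~\ref{alg:scriptlpold} shows that the honest $\LPO$-scripts never forward an $\mi{xsrfToken}$ to any origin other than $\an{\domLPO,\https}$. In particular $\Rasp$ cannot obtain $\mi{xsrf}$: it is not attacker-derivable, because by Lemma~\ref{lemma:cookie-value-cannot-be-overwritten} the $\str{browserid\_state}$ cookie of $b$ at $\domLPO$ was set by $\LPO$ and by Lemma~\ref{lemma:cookie-value-doesnt-leak} its value is not attacker-derivable, whence by Lemma~\ref{lemma:xsrf-token-needs-sessionid} the attacker does not know $\mi{xsrf}$ either; and an instance of $\Rasp$ in $b$ under a non-$\LPO$ origin cannot fetch \texttt{/ctx}, since that \xhr would be cross-origin (and $\domLPO\in\mi{sts}$ for $b$ rules out an $\LPO$ origin over $\http$). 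Second, the document from which the \texttt{/ctx} \xhr was issued has origin $\an{\domLPO,\https}$ (same-origin \xhr, plus STS), and then, by Lemma~\ref{lemma:https-document-origin}, its script was delivered by $\LPO$, hence is $\mi{script\_LPO\_cif}$ or $\mi{script\_LPO\_ld}$. Of these two, only $\mi{script\_LPO\_ld}$ in state $\str{requestAuth}$ builds a two-component body $\an{\mi{secret},\mi{xsrfToken}}$ aimed at \texttt{/auth} (Algorithm~\ref{alg:scriptlpold}), and the value $\mi{secret}$ it uses is the one the browser passed it for the document's origin $\an{\domLPO,\https}$ (Line~\ref{line:browser-secrets} of Algorithm~\ref{alg:runscript}), which by the definition of $b$'s initial state is (contained in) the set of secrets owned by $b$. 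Hence $\mi{sec}$ is owned by $b$, so $\mapPLItoOwner(\mi{sec})=b$, and therefore every authenticated $i$ lies in $\mapIDtoOwner^{-1}(b)$.

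\emph{Main obstacle.} The delicate part is Step~2: exhaustively ruling out every alternative way for an honest $b$ to emit an $\LPO$-accepted \texttt{/auth} request --- in particular a cross-origin $\tForm$ $\mPost$ by $\Rasp$ (which $\LPO$'s \texttt{/auth} handler does not reject via an Origin check) and the redirect case --- which is precisely where the cookie-secrecy lemma, the xsrf-secrecy lemma, the STS flagging of $\domLPO$ in $b$, and the same-origin restriction on \xhrs have to be combined.
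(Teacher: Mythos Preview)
Your proposal is correct and follows essentially the same approach as the paper's proof: identify the shape of the $\str{/auth}$ request, use Lemmas~\ref{lemma:cookie-value-doesnt-leak}, \ref{lemma:xsrf-token-needs-sessionid}, and~\ref{lemma:cookie-value-cannot-be-overwritten} to argue that the attacker script cannot supply the required $\mi{xsrfToken}$, and conclude that an honest $\LPO$ script (which uses $b$'s own secret) must have initiated it. Your write-up is in fact more careful than the paper's---you make explicit the same-origin \xhr restriction, the STS entry for $\domLPO$, and the fact that only $\mi{script\_LPO\_ld}$ (not the CIF script) actually issues $\str{/auth}$---whereas the paper compresses all of this into a couple of sentences.
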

\begin{proof}
  For authentication, a request of the following form has
  to be received by LPO:
  \begin{align}
    \nonumber \mi{req}_\text{auth} =
    \mathsf{enc}_\mathsf{a}( \langle \langle \cHttpReq,
    n_2, \mPost, \domLPO, \str{/auth}, \an{}, \\ \nonumber
    \langle [ \str{Cookie}: [\str{browserid\_state}: \mi{sessionid}], \dots] \rangle, \\
    {\an{s,\mi{xsrfToken}}}, \rangle, k'' \rangle,
    \pub(\mapKey(\domLPO)) )
  \end{align}
  The request $\mi{req}_\text{auth}$ contains the secret
  $s$ and the $\mi{xsrfToken}$ that, by definition of LPO,
  is stored at LPO along with $\mi{sessionid}$.  In an
  honest browser (which $b$ is), this request can only be
  caused by a script (or through a redirection, which again
  would require a script to initiate the request in the
  first place). There are three scripts that can issue such
  a request: the attacker script and both LPO scripts. In
  the latter case, the LPO scripts will provide the browser
  secret as~$s$ and hence, authenticate for an identity
  owned by the browser. In the former case, the attacker
  script needs to know $\mi{xsrfToken}$. Hence, by
  Lemma~\ref{lemma:xsrf-token-needs-sessionid} he needs to
  know $\mi{sessionid}$. However, the $\mi{sessionid}$
  value does not leak from the honest browser $b$
  (Lemma~\ref{lemma:cookie-value-doesnt-leak}) and cannot
  be set by the attacker
  (Lemma~\ref{lemma:cookie-value-cannot-be-overwritten}). Hence,
  the attacker cannot know $\mi{sessionid}$, and hence, by
  Lemma~\ref{lemma:xsrf-token-needs-sessionid} he cannot
  know $\mi{xsrfToken}$, and hence, $\mi{req}_\text{auth}$
  cannot have been initiated by the attacker script.
\end{proof}

\begin{lemma} \label{lemma:lpo-scripts-issue-only-to-so} In
  a run $\rho$ of $\bidwebsystem$ if either
  $\mi{script\_LPO\_cif}$ or $\mi{script\_LPO\_ld}$ were
  loaded into a document with HTTPS origin and are used to
  create a CAP $c$, i.e., if $c$ is contained in a \pm that
  is sent in Line~\ref{line:send-login-pm} of
  Algorithm~\ref{alg:scriptlpocif} or in
  Line~\ref{line:send-login-pm-2} of
  Algorithm~\ref{alg:scriptlpold}, then the origin for
  which $c$ is issued is the origin of the script that
  receives this \pm. 
\end{lemma}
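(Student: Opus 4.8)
The plan is to prove the lemma by a direct inspection of the two script relations together with the browser's semantics for delivering postMessages, reducing everything to a reachability argument over each script's (persistent) state machine. The key structural observation is that in both scripts the audience written into the IA and the receiver origin handed to the browser in the $\tPostMessage$ command are \emph{the same variable}. Concretely, in state $\str{receiveUC}$ of Algorithm~\ref{alg:scriptlpocif} the CIF script sets $\mi{ia}=\sig{s'.\str{parentOrigin}}{s'.\str{key}}$ and, in the command built in Line~\ref{line:create-login-pm}, uses $s'.\str{parentOrigin}$ as the fourth (receiver-origin) component; likewise, in state $\str{receiveUC}$ of Algorithm~\ref{alg:scriptlpold} the LD script sets $\mi{ia}=\sig{s'.\str{requestOrigin}}{s'.\str{key}}$ and uses $s'.\str{requestOrigin}$ as the receiver origin in Line~\ref{line:create-response-pm}. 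Writing $o$ for this common value, the CAP $c$ is thus issued for $o$ and is shipped in a postMessage whose specified receiver origin is exactly $o$. (Since a script is only ever run by a browser that is not corrupted, via Algorithm~\ref{alg:runscript}, we may assume throughout that the hosting browser is honest at that step.)

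Next I would appeal to the browser's handling of the $\tPostMessage$ command, i.e.\ the corresponding case of Algorithm~\ref{alg:runscript}: such a message is appended to the script input of a target document only if, whenever $o\neq\bot$, that document's origin is congruent to $o$. Consequently, if $o\neq\bot$, every document that can \emph{receive} this postMessage — and hence every script that processes $c$ — runs under an origin congruent to $o$, which is precisely the assertion. This reduces the lemma to proving that at the moment the CAP is created, $o$ is a proper origin, i.e.\ $o\neq\bot$.

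For the LD script this is immediate from a reachability analysis of its state machine. The only state-changing transition out of the initial state $\str{init}$ leads to $\str{start}$, and the only state-changing transition out of $\str{start}$ is the branch handling a $\str{request}$ postMessage, which assigns $s'.\str{requestOrigin}:=\mi{senderOrigin}$. Since the sender origin the browser records in an incoming postMessage is always the origin of the sending document and hence a member of $\origins$ (never $\bot$), and since $\str{requestOrigin}$ is never reassigned afterwards, the value of $s'.\str{requestOrigin}$ is a proper origin on every run that reaches $\str{receiveUC}$.

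The CIF script is the main obstacle, because its persistent scriptstate admits several paths into the CAP-creating state $\str{receiveUC}$ (in particular the path $\str{dlgCmplt}\to\str{fetchContext}\to\str{receiveContext}\to\str{checkAndEmit}\to\str{requestUC}$), while $s'.\str{parentOrigin}$ is assigned (in Line~\ref{line:set-parent-origin}) only when handling a $\str{loaded}$ postMessage. I must therefore rule out reaching $\str{receiveUC}$ while $\str{parentOrigin}$ still holds its initial value $\bot$. The argument is that the unique predecessor of $\str{requestUC}$ is the positive branch of $\str{checkAndEmit}$, whose guard requires $\mi{lid}\neq\an{}$, where $\mi{lid}$ is the $\str{siteInfo}$ entry of the local storage indexed by $s'.\str{parentOrigin}$. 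I would establish as an auxiliary invariant that the $\str{siteInfo}$ dictionary under LPO's origin only ever carries proper origins as keys: its sole writers in an honest browser are the two LPO scripts (by Lemma~\ref{lemma:https-document-origin}, a document with LPO's HTTPS origin runs a script served by LPO), and they index it only by proper origins; hence $\mi{localStorage}'[\str{siteInfo}][\bot]=\an{}$. So the guard fails when $\str{parentOrigin}=\bot$, no CAP is produced, and on every run reaching $\str{receiveUC}$ we have $\str{parentOrigin}\neq\bot$; being the sender origin of some earlier postMessage, it is then a proper origin. Combining the two cases with the delivery semantics of the second paragraph yields the lemma.
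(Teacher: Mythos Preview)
Your proof is correct and follows essentially the same approach as the paper: both identify that the variable used for the IA's audience (\texttt{parentOrigin} respectively \texttt{requestOrigin}) is the very same one passed as the receiver-origin argument of the $\tPostMessage$ command, so the browser's delivery semantics immediately give the claim. The paper's proof is much terser and simply notes this coincidence of variables without further analysis; your additional state-machine reachability argument ruling out $o=\bot$ (especially for the CIF path via $\str{dlgCmplt}$) is a genuine refinement that the paper omits, and your invariant on the $\str{siteInfo}$ keys is a reasonable way to close that gap.
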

\begin{proof}
  Looking at the case when $\mi{script\_LPO\_cif}$ issues
  the CAP in Line~\ref{line:create-login-pm} of
  Algorithm~\ref{alg:scriptlpocif}, the origin for which
  the IA is issued in this case is determined by the
  element $\comp{s'}{parentOrigin}$ of the script's
  state. This element is only written to in
  Line~\ref{line:set-parent-origin} of
  Algorithm~\ref{alg:scriptlpocif}. Its value is the sender
  origin of the \pm requesting the CAP. The very same value
  determines the only allowed receiver origin of the \pm
  that returns the CAP
  (Line~\ref{line:create-login-pm}). With a very similar
  argument (different line numbers), we can see that the
  statement for $\mi{script\_LPO\_ld}$ holds true as well.
\end{proof}

\begin{lemma}\label{lemma:caps-from-rp-are-fine}
  In a run $\rho$ of $\bidwebsystem$, if a CAP $c =
  \an{\mi{uc}, \mi{ia}}$ is sent by
  $\mi{script\_RP\_index}$ (Line~\ref{line:send-cap} of
  Algorithm~\ref{alg:scriptrpindex}) running in an honest
  browser $b \in \fAP{B}$ in a document with origin
  $\an{d_r, \https}$ as an HTTP(S) message to an RP $r \in
  \fAP{RP}$, where $d_r=\mapDomain(r)$,
  $\mi{uc}=\sig{\an{i, \pub(k_u)}}{k^\LPO}$, and
  $ia=\sig{o}{k_u'}$, then all of the following statements
  are true:
  \begin{enumerate}
  \item $c$ is a valid CAP. In particular, $k_u =
    k_u'$. \label{prop:c-is-matching}
  \item $\mi{uc}$ was created by LPO and transferred to
    $\mi{script\_RP\_index}$ in a \pm by a script of LPO
    running in $b$ (either $\mi{script\_LPO\_ld}$, \pm sent
    in Line~\ref{line:send-login-pm} of
    Algorithm~\ref{alg:scriptlpold} or
    $\mi{script\_LPO\_cif}$, \pm sent in
    Line~\ref{line:send-login-pm-2} of
    Algorithm~\ref{alg:scriptlpocif}) loaded into a
    document with the origin $\an{\domLPO,
      \https}$. \label{prop:uc-created-by-lpo}
  \item $\mi{ia}$ contains the origin $o = \an{d_r,
      \https}$ \label{prop:ia-origin}.
  \item $k_u$ is not known to any atomic DY process except
    for $b$, as long as $b$ is not
    fullycorrupted. \label{prop:ku-doesnt-leak}
  \item $\mi{uc}$ is issued for an identity $i \in
    \mapIDtoOwner^{-1}(b)$ \label{uc-id-is-browser-id}.
  \end{enumerate}
\end{lemma}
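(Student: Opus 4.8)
The plan is to trace the CAP $c$ backwards from the HTTP(S) request in which $\mi{script\_RP\_index}$ relays it to $r$, using the structural lemmas already proved. First I would inspect Algorithm~\ref{alg:scriptrpindex} and note that $c = s'.\str{cap}$ is sent to $r$ only in state $\str{sendCAP}$ (Line~\ref{line:send-cap}), and that beforehand $s'.\str{cap}$ can only have been set in Line~\ref{line:set-cap} (from a $\str{login}$ \pm received from the CIF subwindow) or in Line~\ref{line:set-cap-2} (from a $\str{response}$ \pm received from the auxiliary LD window); in the \emph{fixed} system both handlers accept the \pm only if its sender origin is $\an{\domLPO,\https}$. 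Since postMessages are delivered within a browser and a script cannot forge the sender origin the browser attaches, $c$ was emitted, inside the same honest browser $b$, in a \pm sent by a script running in a document with origin $\an{\domLPO,\https}$. By Lemma~\ref{lemma:https-document-origin} the script of that document was served by $\LPO$; as Algorithm~\ref{alg:lpo} shows $\LPO$ only ever serves $\mi{script\_LPO\_cif}$ and $\mi{script\_LPO\_ld}$ (its other responses have non-script bodies), and neither of these two scripts ever issues a $\tSetScript$ command, an easy induction on document creation shows every $\an{\domLPO,\https}$-document in $b$ runs one of these scripts. Hence $c$ was built and sent by $\mi{script\_LPO\_cif}$ in the \pm of Line~\ref{line:send-login-pm} of Algorithm~\ref{alg:scriptlpocif}, or by $\mi{script\_LPO\_ld}$ in the \pm of Line~\ref{line:send-login-pm-2} of Algorithm~\ref{alg:scriptlpold}, running in $b$.

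Next I would read off the shape of $c$ from the $\str{receiveUC}$ case of these scripts: $\mi{ia} = \sig{o'}{k_u'}$ with $k_u' = s'.\str{key}$ a nonce taken freshly from $b$'s nonces (Line~\ref{line:choose-key}, resp.\ Line~\ref{line:choose-key-2}), $o'$ the script's recorded request/parent origin, and $\mi{uc}$ taken from the body of an \xhr response to a $\mPost$ request sent to $\LPO$'s \texttt{/certreq} endpoint over HTTPS with body $\an{\mi{id},\pub(s'.\str{key}),\dots}$. Property~(\ref{prop:ia-origin}) is then immediate from Lemma~\ref{lemma:lpo-scripts-issue-only-to-so} (applicable since the sending script sits in an HTTPS $\LPO$-document): $o'$ equals the origin $\an{d_r,\https}$ of $\mi{script\_RP\_index}$. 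For Properties~(\ref{prop:c-is-matching}) and~(\ref{prop:uc-created-by-lpo}): the \xhr was an HTTPS request to $\LPO$, so by Lemma~\ref{lemma:k-does-not-leak-from-honest-browser}, Property~(\ref{prop:only-owner-answers}), only $\LPO$ can produce the response $b$ accepts; by the \texttt{/certreq} handler of Algorithm~\ref{alg:lpo} that response has body $\mi{uc} = \sig{\an{\mi{id},\pub(s'.\str{key})}}{k^\LPO}$, so $\pub(k_u) = \pub(s'.\str{key})$, i.e.\ $k_u = k_u'$ and $c$ is valid; moreover $k^\LPO$ never leaves $\LPO$, so only $\LPO$ could have produced the signature $\mi{uc}$, which (by the previous paragraph) reached $\mi{script\_RP\_index}$ in a \pm from an $\LPO$ script in $b$.

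For Property~(\ref{prop:ku-doesnt-leak}) I would argue $k_u = s'.\str{key}$ is generated fresh, kept only in the scriptstate of the $\LPO$-script's document, and used by $\mi{script\_LPO\_cif}$/$\mi{script\_LPO\_ld}$ only to emit $\pub(s'.\str{key})$ and $\sig{\cdot}{s'.\str{key}}$ — never $s'.\str{key}$ itself, and in particular never inside a command, a returned $\mi{cookies}'$, or $\mi{localStorage}'$ (we consider ephemeral sessions with the applied fix, so the key is not written to the \ls). Non-same-origin scriptstates are hidden by $\mathsf{Clean}$, a $\closecorrupt$ discards all windows and documents (and session cookies), and a $\fullcorrupt$ is excluded by hypothesis; hence no party other than $b$ can derive $k_u$. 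Finally, Property~(\ref{uc-id-is-browser-id}): by the \texttt{/certreq} handler, $\LPO$ signed $\mi{uc}$ for $\mi{id}=i$ only because $i$ lies in the (non-empty) authenticated ID list of the session named by the $\str{browserid\_state}$ cookie $\mi{sessionid}$ that $b$'s \texttt{/certreq} request carried; this list is populated solely by the \texttt{/auth} handler, to $s'.\str{secrets}[sec]$ for the secret $sec$ supplied in some earlier \texttt{/auth} request carrying cookie $\mi{sessionid}$. That request cannot be the attacker's: by Lemma~\ref{lemma:cookie-value-doesnt-leak} the attacker never learns $\mi{sessionid}$, hence by Lemma~\ref{lemma:xsrf-token-needs-sessionid} not the matching $\mi{xsrfToken}$ its body would need; and it cannot be another honest browser's, since by Lemmas~\ref{lemma:cookie-value-cannot-be-overwritten} and~\ref{lemma:cookie-value-doesnt-leak} (plus an induction tracking where the fresh $\mi{sessionid}$ can propagate) $b$ is the only honest browser ever holding $\mi{sessionid}$ as its $\str{browserid\_state}$ cookie. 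So the \texttt{/auth} request came from $b$, and Lemma~\ref{lemma:browser-authenticates-itself-only} gives $i \in \mapIDtoOwner^{-1}(b)$, as $i$ lies in the ID list for which $b$ authenticated the session.

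The main obstacle I expect is this last step: linking the session cookie present in $b$'s \texttt{/certreq} request to the (possibly much earlier) \texttt{/auth} request that authenticated exactly that session, and showing $b$ is the unique honest browser ever carrying that session identifier. This needs an induction on the run to control how a fresh $\mi{sessionid}$ can spread — $\LPO$ hands it out to a single recipient and afterwards only echoes it back to requests already bearing it — together with careful treatment of $\closecorrupt$ (session cookies, but not \ls, are wiped) so that a previously authenticated session cannot be resurrected against $b$.
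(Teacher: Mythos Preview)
Your proposal is correct and follows essentially the same approach as the paper's own proof: trace $s'.\str{cap}$ back through the $\str{default}$ state to a \pm with sender origin $\an{\domLPO,\https}$, invoke Lemma~\ref{lemma:https-document-origin} to identify the sender as an $\LPO$ script, then use Lemma~\ref{lemma:k-does-not-leak-from-honest-browser} Property~(\ref{prop:only-owner-answers}) on the \texttt{/certreq} \xhr to pin $\mi{uc}$ to $\LPO$, and Lemma~\ref{lemma:lpo-scripts-issue-only-to-so} for the audience. The one place where you are noticeably more careful than the paper is Property~(\ref{uc-id-is-browser-id}): the paper simply writes ``follows immediately with the observations in the proof of Property~(\ref{prop:uc-created-by-lpo}) and Lemma~\ref{lemma:browser-authenticates-itself-only}'', leaving implicit the link between the session cookie on $b$'s \texttt{/certreq} request and the earlier \texttt{/auth} request that authenticated that very session; you correctly flag this as the real work and sketch how Lemmas~\ref{lemma:cookie-value-doesnt-leak} and~\ref{lemma:cookie-value-cannot-be-overwritten} rule out the attacker and other honest browsers as the authenticator, which is exactly what is needed to make the paper's ``immediate'' step rigorous.
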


\begin{proof}
  As we know that the message is sent from the origin
  $\an{d_r, \https}$, we know that the script
  $\mi{script\_RP\_index}$ was loaded over an HTTPS origin
  (see Lemma~\ref{lemma:https-script-origin}). Its script
  state cannot be manipulated by scripts loaded from a
  different origin (see Algorithms~\ref{alg:getwindow}
  and~\ref{alg:runscript}).

  The only transitions of the script
  $\mi{script\_RP\_index}$ which can send out a request to
  $r$ are the ones starting out from state
  $\mi{scriptstate}.\str{q} = \str{sendCAP}$
  (Line~\ref{line:state-sendcap} of
  Algorithm~\ref{alg:scriptrpindex}). These transitions
  take the CAP $c$ from $\mi{scriptstate}.\str{cap}$. The
  only transitions before which could have written
  something into this place in the scriptstate are the ones
  where $\mi{scriptstate}.\str{q} = \str{default}$
  (Line~\ref{line:state-default}) when handling a \pm from
  origin LPO (we can overapproximate here by ignoring all
  other side restrictions of this transition, e.g. having
  $\mi{scriptstate}.\str{dialogRunning} = \True$). This
  means that the CAP $c$ was sent by a script with origin
  LPO. Since origin LPO is also an HTTPS origin, the script
  must have been sent by LPO
  (Lemma~\ref{lemma:https-document-origin}).

  The \pm that was received by $\mi{script\_RP\_index}$ is
  checked to be a sequence with the first element being
  $\str{login}$ or $\str{response}$. Such \pms are issued
  by LPO only in Line~\ref{line:create-login-pm} of
  Algorithm~\ref{alg:scriptlpocif}
  ($\mi{script\_LPO\_cif}$) and in
  Line~\ref{line:create-response-pm} of
  Algorithm~\ref{alg:scriptlpold}
  ($\mi{script\_LPO\_ld}$). In both cases, $\mi{ia}$ is
  signed using the private key $k_u$ that is taken from the
  respective script's state. This element of the script's
  state is only written to once, and with a freshly chosen
  nonce (Line~\ref{line:choose-key} of
  Algorithm~\ref{alg:scriptlpocif} and
  Line~\ref{line:choose-key-2} of
  Algorithm~\ref{alg:scriptlpold}, respectively).  In both
  cases, starting in the script state $\str{requestUC}$, an
  \xhr to LPO is sent to have $\pub(k_u)$ signed by
  LPO. From the response to this request, $\mi{uc}$ is
  extracted. The request is always sent over HTTPS to LPO.
  Lemma~\ref{lemma:k-does-not-leak-from-honest-browser}, in
  particular Property~(\ref{prop:only-owner-answers}),
  applies. Therefore, we see that $\mi{uc}$ was actually
  sent by LPO.

  Looking at the LPO definition
  (Line~\ref{line:lpo-issue-uc} of Algorithm~\ref{alg:lpo})
  we see that LPO only sends out freshly created
  $\mi{uc}$'s. LPO only issues valid UCs (if any). Once
  returned to $\mi{script\_LPO\_cif}$ or
  $\mi{script\_LPO\_ld}$, the UC is combined with an IA and
  sent to $\mi{script\_RP\_index}$ (which is determined by
  the sender of the initial CAP request and its
  origin). This script sends the CAP to $r$. Thus, the CAP
  that is sent is always valid, which proves
  (\ref{prop:c-is-matching}). Further, the UC was always
  created by LPO, proving (\ref{prop:uc-created-by-lpo}).
 
  Property~(\ref{prop:ia-origin}) follows immediately from
  the above observations (i.e., $\mi{script\_LPO\_cif}$ or
  $\mi{script\_LPO\_ld}$ were loaded over HTTPS and are
  used to create the CAP) and
  Lemma~\ref{lemma:lpo-scripts-issue-only-to-so}.

  To prove (\ref{prop:ku-doesnt-leak}), we observe that the
  key $k_u$ is always chosen freshly and that it is stored
  only in the script's state. It is not sent to any party,
  not even LPO. The key therefore cannot leak as long as
  the browser is not fullycorrupted (if it becomes
  closecorrupted, the key is removed together with the
  document's state). This proves
  (\ref{prop:ku-doesnt-leak}).

  Property~(\ref{uc-id-is-browser-id}) follows immediately
  with the observations in the proof of
  Property~(\ref{prop:uc-created-by-lpo}) and
  Lemma~\ref{lemma:browser-authenticates-itself-only}.
\end{proof}

\begin{lemma} \label{lemma:ia-does-not-leak} If in a run
  $\rho$ of $\bidwebsystem$ an IA $\mi{ia}$ for an origin
  $\an{d_r, \https}$ where $d_r \in \mapDomain(\fAP{RP})$
  is signed in the scripts $\mi{script\_LPO\_ld}$ or
  $\mi{script\_LPO\_cif}$ in an honest browser $b$, and
  these scripts were loaded over HTTPS from LPO, then at
  most $b$ and the RP $r = \mapDomain^{-1}(d_r)$ know
  $\mi{ia}$.
\end{lemma}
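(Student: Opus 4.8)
The plan is to prove the lemma by tracking the single term $\mi{ia}=\sig{\an{d_r,\https}}{k_u}$ through the run: I would show that no process other than $b$ can \emph{synthesize} $\mi{ia}$, and that $\mi{ia}$ only ever occurs as a subterm of messages whose recipients are either documents of $b$ that run honest scripts or the RP $r$ contacted over HTTPS. First I would pin down the transition in which $\mi{ia}$ is created, namely the $\str{receiveUC}$ case of Algorithm~\ref{alg:scriptlpocif} (for $\mi{script\_LPO\_cif}$) or of Algorithm~\ref{alg:scriptlpold} (for $\mi{script\_LPO\_ld}$), and observe that the signing key $k_u = s'.\str{key}$ is a nonce chosen freshly in the preceding $\str{requestUC}$ step (Line~\ref{line:choose-key}, resp.\ Line~\ref{line:choose-key-2}), stored only in that script's state, and never emitted — only $\pub(k_u)$ ever leaves $b$, in the \texttt{/certreq} request. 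Hence (cf.\ Lemma~\ref{lemma:caps-from-rp-are-fine}, property~(\ref{prop:ku-doesnt-leak})) no atomic process except $b$ knows $k_u$ as long as $b$ is not fully corrupted, so no process other than $b$ can build a term of the form $\sig{x}{k_u}$; consequently a process can come to know $\mi{ia}$ only by receiving it verbatim as a subterm of some message, and it suffices to follow every message in which the CAP $\an{\mi{uc},\mi{ia}}$ occurs.

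The first hop is from the LPO script to RP-Doc. The CAP leaves $\mi{script\_LPO\_cif}$ only as the \texttt{login} \pm of Line~\ref{line:create-login-pm} of Algorithm~\ref{alg:scriptlpocif} (resp.\ the \texttt{response} \pm of Line~\ref{line:create-response-pm} of Algorithm~\ref{alg:scriptlpold}); it is a local variable and does not appear in the script's new state. That \pm carries the receiver-origin parameter $s'.\str{parentOrigin}$ (resp.\ $s'.\str{requestOrigin}$), which coincides with the origin $\an{d_r,\https}$ for which $\mi{ia}$ is issued, so by the browser's $\tPostMessage$ handling in Algorithm~\ref{alg:runscript} the message is appended to the script input of a document of $b$ whose origin is exactly $\an{d_r,\https}$ — this is essentially the statement of Lemma~\ref{lemma:lpo-scripts-issue-only-to-so}. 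Since $d_r \in \mapDomain(\fAP{RP})$, Lemma~\ref{lemma:https-document-origin} then tells us that the script in this document was emitted by $r = \mapDomain^{-1}(d_r)$; as the only script $r$ ever serves is $\mi{script\_RP\_index}$ (Algorithm~\ref{alg:rp}), the recipient runs $\mi{script\_RP\_index}$. Thus after this hop $\mi{ia}$ is still known only to $b$.

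The second hop is from RP-Doc to $r$. Here I would enumerate every code path of $\mi{script\_RP\_index}$ touching $s'.\str{cap}$, the only place the CAP is stored (written in Lines~\ref{line:set-cap} and~\ref{line:set-cap-2} of Algorithm~\ref{alg:scriptrpindex}): in state $\str{dlgClosed}$ only the extracted ID, not the CAP, is forwarded (a \texttt{loggedInUser} \pm to the CIF); in state $\str{sendCAP}$ the CAP is sent as the body of an HTTPS \texttt{POST} XHR to $\mathsf{GETORIGIN}(\mi{tree},\mi{docnonce}) = \an{d_r,\https}$, i.e.\ to $r$ over HTTPS (Line~\ref{line:send-cap}); no other transition emits the CAP, and the remaining messages sent by $\mi{script\_RP\_index}$ (the CIF/LD iframe creations and the \texttt{request}/\texttt{loggedInUser}/\texttt{dlgCmplt} \pms) do not contain it. By Lemma~\ref{lemma:k-does-not-leak-from-honest-browser}, properties~(\ref{prop:attacker-cannot-decrypt}) and~(\ref{prop:host-header-matches}), only $r$ (besides $b$) can decrypt that request; and while $b$ is honest, $\mi{script\_RP\_index}$'s state holding the CAP is not readable by scripts of other origins, by the same-origin filtering of $\mathsf{Clean}$ invoked in $\mathsf{RUNSCRIPT}$. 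Finally, inspection of Algorithm~\ref{alg:rp} shows that $r$, on receiving the CAP, only verifies it, extracts the ID, and emits the service token $\an{n,i}$; it neither stores nor re-sends $\mi{ia}$, and $r$ cannot be corrupted. If $b$ is close-corrupted, all windows and documents — in particular RP-Doc's script input and $\mi{script\_RP\_index}$'s stored CAP — are purged, so no further process learns $\mi{ia}$; only a full corruption of $b$ would expose it, which is the standard caveat also carried by Lemma~\ref{lemma:caps-from-rp-are-fine}. Combining the two hops gives the claim.

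The main obstacle is not a single deep argument but the completeness of the case analysis: one must check that neither LPO script can re-emit a CAP it receives (they in fact never receive one), that $\mi{script\_RP\_index}$ really has no side channel for $s'.\str{cap}$ beyond the HTTPS XHR to $r$, and that every postMessage and XHR origin check invoked above genuinely pins the recipient down to a same-origin document of $r$ or to the HTTPS channel to $r$. This is precisely the point where Lemmas~\ref{lemma:lpo-scripts-issue-only-to-so}, \ref{lemma:https-document-origin}, and~\ref{lemma:k-does-not-leak-from-honest-browser} have to be used together, and where a missing Origin check — of the kind exploited by the paper's Login Injection attack — would break the proof.
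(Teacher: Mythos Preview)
Your proposal is correct and follows essentially the same two-hop argument as the paper's proof: the postMessage carrying the CAP is restricted to the origin $\an{d_r,\https}$ for which $\mi{ia}$ was issued (the paper refers back to the proof of Lemma~\ref{lemma:caps-from-rp-are-fine}, Property~(\ref{prop:ia-origin}), where you invoke Lemma~\ref{lemma:lpo-scripts-issue-only-to-so} directly), so the receiver is $\mi{script\_RP\_index}$, which forwards the CAP to $r$ over HTTPS, and $r$ discards it after checking. Your write-up is considerably more explicit than the paper's short proof --- in particular your synthesis argument (nobody else knows $k_u$, hence nobody else can rebuild $\mi{ia}$), the enumeration of all $\mi{script\_RP\_index}$ transitions touching $s'.\str{cap}$, and the close-corruption case are left implicit in the paper --- but the underlying route is the same.
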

\begin{proof}
  The scripts $\mi{script\_LPO\_ld}$ and
  $\mi{script\_LPO\_cif}$ send the $\mi{ia}$ to the parent or opener
  window (respectively) using  a \pm. For this \pm, the only allowed
  receiver origin is the same as the origin for which
  $\mi{ia}$ was issued, so in our case $\an{d_r, \https}$
  (see proof for Lemma~\ref{lemma:caps-from-rp-are-fine}
  Property~(\ref{prop:ia-origin})). The script
  $\mi{script\_RP\_index}$, which thus must be the
  receiver, sends the complete CAP (containing $\mi{ia}$)
  to RP using HTTPS. The RP discards the CAP after checking
  it. The CAP and especially $\mi{ia}$ therefore cannot
  leak. 
\end{proof}

\begin{lemma} \label{lemma:attacker-cannot-request-uc-for-id}
  In a run $\rho$ of $\bidwebsystem$, if LPO creates a
  message containing a UC $\mi{uc}$ for an identity $i$ of
  a browser $b$, then there is no state in the run $\rho$
  where $b$ is honest and attacker knows the private key
  $k_u$ corresponding to the public key $\pub(k_u)$ that
  was signed in $\mi{uc}$.
\end{lemma}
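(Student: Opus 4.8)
The plan is to argue by contradiction along the lines of the other lemmas in this section. Assume there is a state $s_j=(S_j,E_j)$ of $\rho$ in which $b$ is honest and $k_u\in d_{N^\fAP{attacker}}(S_j(\fAP{attacker}))$, and (as in the surrounding Condition~A argument) that the browser $b$ owning $i$ is not fully corrupted in $\rho$. Since LPO created the UC $\mi{uc}=\sig{\an{i,\pub(k_u)}}{k^\LPO}$, this must have happened in the \texttt{POST /certreq} branch of Algorithm~\ref{alg:lpo}: there is a request $\mi{req}_c$ received by LPO whose body is $\an{i,\pub(k_u),\mi{xsrfToken}'}$, carrying a $\str{browserid\_state}$ cookie with some value $\mi{sessionid}$ that identifies a session $S$ in LPO's state whose ID list contains $i$ and whose $\mi{xsrfToken}$ equals $\mi{xsrfToken}'$. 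First I would peel this apart: the only way $i$ enters the ID list of $S$ is via a prior \texttt{POST /auth} request $\mi{req}_a$ to LPO carrying the cookie $\mi{sessionid}$, a matching $\mi{xsrfToken}$, and a secret $sec$ in its body with $i$ associated to $sec$; since secrets have pairwise disjoint ID sets, $sec=\mapIDtoPLI(i)$, which is a secret owned by $b$.

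\textbf{Confining the secret and the session credentials.} The next step is to show that $\mi{req}_a$ must have been emitted by $b$ while $b$ was honest. For this I would establish that, as long as $b$ is not fully corrupted, no party other than $b$ and LPO ever knows $sec$: secrets occur only in the initial states of $b$ and of LPO; LPO never echoes a secret in any branch of Algorithm~\ref{alg:lpo}; $b$ hands $sec$ to a script only in the AutoFill step (Line~\ref{line:browser-secrets} of Algorithm~\ref{alg:runscript}), and only to a script whose document origin is $\an{\domLPO,\https}$, which by Lemma~\ref{lemma:https-document-origin} means the script is $\mi{script\_LPO\_cif}$ (which never uses $\mi{secret}$) or $\mi{script\_LPO\_ld}$, and the latter only forwards it in an HTTPS XHR to LPO, whose plaintext cannot leak by Lemma~\ref{lemma:k-does-not-leak-from-honest-browser}; finally a $\closecorrupt$ of $b$ erases the secrets. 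Hence $\mi{req}_a$ was sent by $b$ or LPO; as LPO initiates no requests, it was sent by $b$, and necessarily while $b$ was honest (a $\closecorrupt$-ed $b$ no longer holds $sec$). Consequently $\mi{sessionid}$ is the value of a $\str{browserid\_state}$ cookie that LPO set for $b$ (Lemma~\ref{lemma:cookie-value-cannot-be-overwritten}), and so by Lemma~\ref{lemma:cookie-value-doesnt-leak} the attacker knows $\mi{sessionid}$ in no state where $b$ is honest; by Lemma~\ref{lemma:xsrf-token-needs-sessionid} the same holds for the $\mi{xsrfToken}$ of $S$.

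\textbf{Identifying who issued \texttt{/certreq} and pinning down $k_u$.} Now I would apply the same kind of reasoning to $\mi{req}_c$: it carries the cookie $\mi{sessionid}$ and a matching token $\mi{xsrfToken}'$ in its body, neither of which is known to the attacker nor to any corrupted browser (the $\str{browserid\_state}$ cookie is $\str{httpOnly}$, so it is never passed to scripts in Line~\ref{line:assemble-cookies-for-script} of Algorithm~\ref{alg:runscript}, never forwarded to another browser, and is removed on $\closecorrupt$). Hence $\mi{req}_c$ was emitted by $b$ while honest. Since the browser only issues XHRs to the document's own origin, the document issuing $\mi{req}_c$ has origin $\an{\domLPO,\https}$, so by Lemma~\ref{lemma:https-document-origin} its script is $\mi{script\_LPO\_cif}$ or $\mi{script\_LPO\_ld}$; one checks that in both, the \texttt{/certreq} body is $\an{\mi{id},\pub(s'.\str{key}),s'.\str{context}.\str{xsrfToken}}$ with $s'.\str{key}$ a nonce freshly drawn from the browser's unused nonces (Lines~\ref{line:choose-key}, \ref{line:choose-key-2}) and stored only in that document's script state. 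Thus $k_u=s'.\str{key}$. It remains to argue this key never leaks while $b$ is honest: it is never sent to any party (these scripts use $\str{key}$ only to form IAs $\sig{\cdot}{s'.\str{key}}$), it is inaccessible to documents of other origins, and the only scripts that can ever run same-origin with it are again $\mi{script\_LPO\_cif}$/$\mi{script\_LPO\_ld}$, which neither read nor re-emit foreign $\str{key}$ values (this is exactly the situation of Lemma~\ref{lemma:caps-from-rp-are-fine}(\ref{prop:ku-doesnt-leak})); a $\closecorrupt$ destroys the document along with the key. Hence $k_u\notin d_{N^\fAP{attacker}}(S_j(\fAP{attacker}))$, a contradiction.

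\textbf{Expected main obstacle.} The delicate part is the exhaustive ``who could have emitted this request'' case analysis — ruling out not only the network attacker and corrupted browsers but also, e.g., a cross-origin \tForm{} POST to $\textsf{URL}^\LPO_{\str{/certreq}}$ from an attacker-origin document in $b$'s browser (the browser would attach $b$'s $\str{browserid\_state}$ cookie), which fails only because such a script cannot obtain the $\mi{xsrfToken}$, and an attacker-served copy of the string $\str{script\_LPO\_cif}$/$\str{script\_LPO\_ld}$ running under an attacker origin, whose XHRs to $\domLPO$ are blocked by the same-origin check in Algorithm~\ref{alg:runscript}. This requires carefully threading the flow of $sec$, $\mi{sessionid}$, and $\mi{xsrfToken}$ through every relevant browser action and the three honest scripts. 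I would also flag that full corruption of $b$ genuinely breaks the literal statement (a fully corrupted $b$ leaks $sec$, after which the attacker can obtain a UC for $i$ bound to an attacker-chosen key that it has ``known'' all along); the lemma is therefore to be read under the standing non-full-corruption assumption on the owner of $i$ that is in force throughout the proof of Condition~A, and the argument above is carried out under that assumption.
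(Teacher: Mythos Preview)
Your proof is correct and reaches the same conclusion, but the decomposition differs from the paper's. The paper's argument starts at the \texttt{/certreq} request (your $\mi{req}_c$) and immediately case-splits on whether the attacker knows the HTTPS symmetric key $k''$ used to encrypt it: if yes, only the attacker could have produced the request, and one then shows he cannot possess a $\mi{sessionid}$/$\mi{xsrfToken}$ pair for a session authenticated for $i$ (invoking Lemma~\ref{lemma:cookie-value-doesnt-leak} and Lemma~\ref{lemma:browser-authenticates-itself-only} directly); if no, an honest browser produced it, the $\mi{xsrfToken}$ argument forces an honest LPO script, and $k_u$ is confined by Lemma~\ref{lemma:caps-from-rp-are-fine}(\ref{prop:ku-doesnt-leak}). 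Your route is more chronological: you first step back to the prior \texttt{/auth} request, establish independently that $b$'s secret never leaves $b$ and LPO (effectively re-deriving the content of Lemma~\ref{lemma:browser-authenticates-itself-only} in the direction you need), pin $\mi{sessionid}$ to $b$'s cookie store, and only then argue that $\mi{req}_c$ must originate from honest $b$ and from an LPO-origin script. Your approach avoids the $k''$ case split and is more linear; the paper's is shorter because it leans on the already-proved Lemma~\ref{lemma:browser-authenticates-itself-only} rather than replaying the secret-confinement argument. Your closing remark that the lemma tacitly assumes $b$ is never fully corrupted is accurate and worth noting: the paper's proof also relies on this implicitly (when it asserts the attacker cannot know the secret for $i$) without making it explicit in the lemma statement.
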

\begin{proof}
  First, it is easy to see that the
  Lines~\ref{line:start-certreq}--\ref{line:end-certreq} in
  Algorithm~\ref{alg:lpo} have to be used in the transition
  to create $\mi{uc}$: At no other point in the definition
  of LPO $\mi{uc}$ is created or emitted. From
  Line~\ref{line:sign-pubkey} and following it is easy to
  see that a request of the following form has to be sent
  to LPO in order to create $\mi{uc}$:
  \begin{align}
    \nonumber \mi{req}_\text{uc} = \mathsf{enc}_\mathsf{a}(
    \langle \langle \cHttpReq, n_1, \mPost, \domLPO,
    \str{/certreq}, \an{}, \\ \nonumber
    \an{[\str{Cookie}: [\str{browserid\_state}: \mi{sessionid}], \dots]}, \\
    {\an{i,\pub(k_u),\mi{xsrfToken}}} \rangle, k'' \rangle,
    pub(\mapKey(\mapDomain(\fAP{LPO}))) )
  \end{align}

  We can see that this message is encrypted with
  $pub(\mapKey(\mapDomain(\fAP{LPO})))$ and thus the
  attacker cannot decrypt it. There are now two
  cases:

  \begin{itemize}
  \item \textbf{The attacker knows $k''$:} In this case, we
    can see with
    Lemma~\ref{lemma:k-does-not-leak-from-honest-browser}
    that no honest browser has created
    $\mi{req}_\text{uc}$. As RP, LPO, and dishonest
    browsers do not emit requests in general, only the
    attacker can have created this request. For this, he
    needs to know $\mi{xsrfToken}$ and $\mi{sessionid}$.

    The attacker could use a $\mi{sessionid}$ value that
    was first issued to a browser that was honest when
    $\mi{req}_\text{uc}$ was created or to some other party
    (a dishonest browser or himself).

    The first case can be ruled out, as the attacker cannot
    know the $\mi{sessionid}$ value
    (Lemma~\ref{lemma:cookie-value-doesnt-leak}).

    In the second case, he cannot create a message that
    leads to the authentication of the session himself
    (which would require knowledge of the secret for
    identity $i$) and he cannot force the owner browser of
    $i$ to authenticate the session
    (Lemma~\ref{lemma:browser-authenticates-itself-only}).
  \item \textbf{The attacker does not know $k''$:} In this
    case, the request was not created by the attacker. As
    above, RP, LPO and dishonest browsers do not create
    requests. Thus, this request was created by an honest
    browser (``honest'' in the state when
    $\mi{req}_\text{uc}$ was created) and an honest script
    in that browser (with a dishonest script, the attacker
    would need to know $\mi{xsrfToken}$, which he does not
    according to
    Lemma~\ref{lemma:xsrf-token-needs-sessionid} and
    Lemma~\ref{lemma:cookie-value-doesnt-leak}).  If an
    honest script, i.e., $\mi{script\_LPO\_cif}$ or
    $\mi{script\_LPO\_ld}$, is used, the attacker does not
    learn $k_u$ (Lemma~\ref{lemma:caps-from-rp-are-fine}
    Property~(\ref{prop:ku-doesnt-leak})).
  \end{itemize}
  As we can see, in both cases the attacker does not learn
  $k_u$, which proves the statement.
  
\end{proof}

\begin{lemma} \label{lemma:honest-browser-uses-https-rp-script}
  If in a run $\rho$ of $\bidwebsystem$ a browser $b$
  created the request $\mi{req}_\text{cap}$ 
  defined in (\ref{eq:proofreqcapA}), then (i)
  $\mi{req}_\text{cap}$ was sent from the script
  $\mi{script\_RP\_index}$ that was loaded over an HTTPS
  origin from $r$ while the browser was honest or (ii)
  $\mi{req}_\text{cap}$ was encrypted by the attacker
  script while the browser was honest and the attacker
  knows the CAP $\mi{c}$ and the symmetric key $k$.
\end{lemma}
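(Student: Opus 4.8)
The plan is to analyze how the honest browser $b$ could have come to emit the HTTPS request $\mi{req}_\text{cap}$, which carries method $\mPost$, host $d_r$, path $/$, an $\str{Origin}$ header with value $\an{d_r,\https}$, and a valid CAP $c$ as its body. First I would observe that, since $b$ is honest when the request is emitted (if it were corrupted it would behave like an attacker and not issue structured requests in the normal way, by (**)), the request must have been produced by the main browser algorithm. A $\mPost$ request carrying an $\str{Origin}$ header can only arise in one of three places in the browser: Line~\ref{line:send-form} (a $\tForm$ command), Line~\ref{line:send-xhr} (a $\tXMLHTTPRequest$ command), or Line~\ref{line:send-redirect} (a $\str{Location}$-header redirect). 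The redirect case can be ruled out exactly as in the proof of Lemma~\ref{lemma:https-script-origin}: a redirect-generated request whose original request already carried an $\str{Origin}$ header would end up with an origin value that is a sequence of (at least) two origins, so it cannot equal the single-origin value $\an{d_r,\https}$. Hence $\mi{req}_\text{cap}$ was initiated by a script running in some document of $b$, and the $\str{Origin}$ header value $\an{d_r,\https}$ is precisely the origin of that script's document.

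Next I would invoke Lemma~\ref{lemma:https-document-origin}: since the document has HTTPS origin $\an{d_r,\https}$ with $d_r \in \mapDomain(\fAP{RP})$, the script in that document was extracted by $b$ from an HTTPS response emitted by the owner of the private key for $d_r$, namely the RP $r$ itself. The set $\scriptset$ contains only four scripts, and among these the only one that an honest $r$ ever serves is $\mi{script\_RP\_index}$ (see the definition of $R^r$ in Algorithm~\ref{alg:rp}, which answers every $\mGet$ with $\str{script\_RP\_index}$). The remaining possibility is that $r$'s response body was not actually honestly produced — but by Lemma~\ref{lemma:k-does-not-leak-from-honest-browser} (Property~\ref{prop:only-owner-answers}) and Lemma~\ref{lemma:https-document-origin}, only $r$ can have produced a response that $b$ accepts for the origin $\an{d_r,\https}$, so the script is indeed $\mi{script\_RP\_index}$ (or, in the alternative, the attacker script $\Rasp$ — but $\Rasp$ would only have been loaded into that document if $r$ had sent it out, which an honest $r$ never does; this is where I must be slightly careful and explicitly note that honest $r$ never emits $\str{att\_script}$). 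This yields case (i): $\mi{req}_\text{cap}$ was sent from $\mi{script\_RP\_index}$ loaded over HTTPS from $r$ while $b$ was honest.

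Finally, I would address the alternative, which corresponds to the degenerate situation where the chain above breaks at the very first step: namely, $b$ might have created $\mi{req}_\text{cap}$ not through an honest document's script but through the attacker script $\Rasp$ running in a document loaded from the attacker (or through some document whose script is $\Rasp$). Since $\mi{req}_\text{cap}$ is an HTTPS request whose $\tXMLHTTPRequest$ or $\tForm$ command supplied the URL with host $d_r$, body $c$, and an $\str{Origin}$ header $\an{d_r,\https}$, and since the browser encrypts with the public key of $d_r$ (Line~\ref{line:select-enc-key} of Algorithm~\ref{alg:browsermain}), the script that issued this command must itself have held the plaintext CAP $c$; moreover the symmetric key $k$ used by the browser is freshly chosen and stored in $\mi{pendingRequests}$, but the script that issued the command does not by itself know $k$. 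However — and this is the subtlety — an $\xhr$ is restricted to same-origin requests (see the $\tXMLHTTPRequest$ case of Algorithm~\ref{alg:runscript}), so a document with an attacker origin cannot $\xhr$ to $d_r$; a $\tForm$ POST to $d_r$ is cross-origin-navigable, though, and produces a request with the \emph{sending} document's origin in the $\str{Origin}$ header, which would not be $\an{d_r,\https}$ unless the sending document already has origin $\an{d_r,\https}$, bringing us back to case (i). The only remaining way to get a request literally of the form $\mi{req}_\text{cap}$ with this exact origin header out of a non-honest script is for the attacker to have caused a document with origin $\an{d_r,\https}$ to be populated with $\Rasp$, which, as argued, honest $r$ never does. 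I therefore expect that case (ii) as stated in the lemma is actually vacuous in this model, but since the lemma is phrased disjunctively I would simply record case (ii) as the formal catch-all: if $\mi{req}_\text{cap}$ was not produced as in (i), then it was produced by $\Rasp$ (while $b$ was honest, by the opening observation), and in that case $\Rasp$ must have held the plaintext body $c$ and, to get the browser to encrypt under $k$ correctly and accept the later response, the attacker must control $k$ as well. The main obstacle is exactly this last dichotomy: carefully enumerating the browser code paths that can emit a $\mPost$ with a single-origin $\str{Origin}$ header equal to $\an{d_r,\https}$ and checking that each either forces an honest $\mi{script\_RP\_index}$ document or falls into the attacker-script case with the stated knowledge conditions, without overlooking redirect chains or form submissions.
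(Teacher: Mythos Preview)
Your treatment of case~(i) is essentially correct and matches the paper's argument (via Lemma~\ref{lemma:https-script-origin}). The problem is your understanding of case~(ii): you have the wrong dichotomy.

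The paper's case split is not ``honest script vs.\ attacker script issuing a browser command.'' It is ``the \emph{browser's HTTPS machinery} performed the encryption (Line~\ref{line:select-enc-key} of Algorithm~\ref{alg:browsermain}) vs.\ a \emph{script itself} assembled the ciphertext term $\enc{\an{\cdot,k}}{\pub(\mapKey(d_r))}$ as a Dolev-Yao derivation inside its relation.'' Recall the definition of ``created'': $b$ created $\mi{req}_\text{cap}$ means the term $\mi{req}_\text{cap}$ appears as a subterm of some message emitted by $b$. Nothing requires that subterm to be the top-level network message. The attacker script $\Rasp$, running in a document of \emph{any} origin (not $\an{d_r,\https}$), can compute the full encrypted term $\mi{req}_\text{cap}$ from its knowledge and then cause $b$ to emit it as, say, the body of an XHR to an attacker domain. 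In that scenario the browser never went through its HTTPS-send code path for this term at all; the Origin-header analysis you perform is simply inapplicable, and the key $k$ is not a browser-chosen nonce from $\mi{pendingRequests}$ but a value the script picked. The honest scripts never build encryptions, so only $\Rasp$ can do this; and for $\Rasp$ to derive the ciphertext it must know every component, in particular $c$ and $k$. Assumption~(***) then guarantees the attacker process knows them too. That is the content of case~(ii), and it is not vacuous.

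Concretely: your enumeration ``$\tForm$, $\tXMLHTTPRequest$, redirect'' covers only the ways the \emph{browser} produces an HTTPS request; it omits the possibility that the encrypted term was produced inside a script output. You need to add that branch to your case analysis, observe that only $\Rasp$ can perform asymmetric encryption among the scripts in $\scriptset$, and then invoke~(***) to transfer knowledge of $c$ and $k$ to the attacker.
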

\begin{proof}

  We can see that if the browser is dishonest, it did
  encrypt $\mi{req}_\text{cap}$ while it was still honest:
  With assumption (**) we can see that dishonest browsers
  only send their state to the attacker. Thus, the
  encrypted message must have been in the state of the
  browser before corruption. So for both (i) and (ii) we
  know that the browser was honest.

  In an honest browser, the browser itself can create
  encrypted requests (when an HTTPS request is sent) and
  scripts can create encrypted requests (by assembling and
  encrypting the message in the script relation).  

  In the former case (HTTPS request), which corresponds to
  statement (i) of the lemma, we see by
  Lemma~\ref{lemma:https-script-origin} that the script
  that initiated $\mi{req}_\text{cap}$ was actually loaded
  from $r$ using HTTPS ($r$ is the owner of $d_r$), and
  that it was not altered by any other party.  

  In the latter case (script encrypted request), which
  corresponds to statement (ii) in the lemma, we see that
  the honest scripts do not encrypt messages and thus, the
  attacker script is the only script that can do so. To do
  so, the script needs to know every component of
  $\mi{req}_\text{cap}$ before the encryption, in
  particular $k$ and $c$. These have been sent to the
  attacker before the encryption according to (***). Thus,
  the attacker must know $k$ and $c$ before.
\end{proof}

Let $m$ be the message that was passed to $\fAP{attacker}$
leading to $s_j$ for some addresses $x$ and $y$ (with $s_j$
as defined in (*)). That is:
\[ s_{j-1} \xrightarrow{(x{:}y{:}m) \rightarrow \fAP{attacker}}
s_j \enspace .\]
By our assumption, we know that $\an{n,i} \not\in
d_{N^\fAP{attacker}}(S_{j-1}(\fAP{attacker}))$ and that
$\an{n,i} \in d_{N^\fAP{attacker}}(S_{j-1}(\fAP{attacker}),
m)$.

We now distinguish two cases: (i) The attacker does not
know $k$ in $s_j$ (i.e., cannot derive $k$ in state
$s_j$). (ii) The attacker can derive $k$ in $s_j$. In both
cases we lead (*) to a contradiction.

\subsubsection{The attacker does not know $k$ in $s_j$}

We now assume that $k \not\in
d_{N^\fAP{attacker}}(S_j(\fAP{attacker}))$, i.e., the
attacker does not know $k$ in $s_j$. In particular, we have
that $k \not\in
d_{N^\fAP{attacker}}(S_{j-1}(\fAP{attacker}))$.

We distinguish between the kind of atomic processes that
potentially have created $\mi{req}_\text{cap}$. In all
cases, we arrive at a contradiction.

\begin{itemize}
\item \textbf{The browser that owns $i$} created
  $\mi{req}_\text{cap}$: By
  Lemma~\ref{lemma:honest-browser-uses-https-rp-script} it
  follows that the browser was honest when encrypting
  $\mi{req}_\text{cap}$, and $\mi{req}_\text{cap}$ was
  initiated by $\mi{script\_RP\_index}$, which was
  delivered over HTTPS from $r$.  Note that we can rule out
  case (ii) in the lemma, as the attacker does not know
  $k$.

  This script initiated $\mi{req}_\text{cap}$ and it is
  easy to see that this script (or no script at all)
  receives the corresponding response: From the browser
  definition, we see that \xhr responses are delivered to
  the document with the same nonce as the document that
  initiated the request
  (Line~\ref{line:process-xhr-response} in
  Algorithm~\ref{alg:processresponse}). Other documents
  have no access to the data from this document, except for
  same-origin documents (this is ensured by the
  $\mathsf{Clean}$ function that is used in
  Line~\ref{line:clean-tree} of
  Algorithm~\ref{alg:runscript} and by the
  $\mathsf{GETWINDOW}$ function
  (Algorithm~\ref{alg:getwindow}) that determines the
  windows which can be manipulated by other
  scripts). However, other same-origin documents can only
  contain the script $\mi{script\_RP\_index}$ (this is the
  only script that RP sends, and with
  Lemma~\ref{lemma:https-document-origin} we see that other
  same-origin documents cannot have been sent by the
  attacker). Other manipulations to the window of the
  document (e.g., navigating the window away) change the
  active document in the window
  (Algorithm~\ref{alg:processresponse}) and could only
  prevent the script from receiving the response.

  From Algorithm~\ref{alg:scriptrpindex} it is easy to see
  that after $\an{n,i}$ is delivered back to
  $\mi{script\_RP\_index}$ after $\mi{resp}_\text{cap}$ was
  received, nothing happens with $\an{n,i}$: If the browser
  is uncorrupted, only same-origin scripts have access to
  it (as shown above), but there are no scripts which use
  the information. The information can therefore not leak
  to the attacker. If the browser is closecorrupted before
  receiving $\mi{resp}_\text{cap}$, the attacker cannot
  derive $\an{n,i}$ from its information, as the decryption
  key is lost. If the browser is closecorrupted after
  receiving $\mi{resp}_\text{cap}$, by definition of
  close-corruption, $\an{n,i}$ is removed from the
  browser's state before the browser can be controlled by
  the adversary. By the assumption in (*), the browser
  cannot be fullycorrupted at any point in the run. Hence,
  in contradiction to (*), the attacker cannot obtain
  $\an{n,i}$.
\item A \textbf{browser that does not own $i$} created
  $\mi{req}_\text{cap}$: In this case, it still holds that
  the browser was honest when encrypting
  $\mi{req}_\text{cap}$ and the script
  $\mi{script\_RP\_index}$ created the request and was
  loaded over HTTPS
  (Lemma~\ref{lemma:honest-browser-uses-https-rp-script}). With
  Lemma~\ref{lemma:https-script-origin} and
  Lemma~\ref{lemma:caps-from-rp-are-fine}, in particular
  Properties~(\ref{prop:uc-created-by-lpo}) and
  (\ref{uc-id-is-browser-id}), we see that the RP script
  only initiates HTTPS requests containing CAPs that have
  been created by LPO and for an identity of the browser.
  This is in contradiction to the fact that $i$ is not
  owned by the browser but $\mi{req}_\text{cap}$ contains a
  CAP for $i$. Hence, $\mi{req}_\text{cap}$ cannot have
  been created by this browser.
\item \textbf{RPs or LPO} created $\mi{req}_\text{cap}$: As
  per their definitions (Algorithms~\ref{alg:lpo} and
  \ref{alg:rp}), they do not initiate or create HTTP(S)
  requests.
\item \textbf{The attacker process} created
  $\mi{req}_\text{cap}$: It is clear that any atomic
  process that created $\mi{req}_\text{cap}$ needs to know
  $k$. It follows, by our assumption that the attacker
  cannot derive $k$, that the attacker has not created
  $\mi{req}_\text{cap}$.
\end{itemize}

\subsubsection{The attacker knows $k$ in $s_j$}

As above, we distinguish between the kind of atomic
processes that potentially have created the request
$\mi{req}_\text{cap}$. We will see that the attacker needs
to know the CAP $c$ to learn $\an{n,i}$.
\begin{itemize}
\item \textbf{The browser that owns $i$} created
  $\mi{req}_\text{cap}$: By our assumption (*), this
  browser cannot be fully corrupted in the run. By
  Lemma~\ref{lemma:honest-browser-uses-https-rp-script}, it
  follows in the case (i) that $\mi{script\_RP\_index}$
  sent the request and that $k$ cannot be known by the
  attacker (with
  Lemma~\ref{lemma:k-does-not-leak-from-honest-browser},
  Property~(\ref{prop:k-doesnt-leak})) and hence, the
  browser cannot have created $\mi{req}_\text{cap}$.  In
  the case (ii) it follows that the attacker needs to know
  the CAP $c$ in order to create the request.
\item \textbf{A browser not owning $i$} created
  $\mi{req}_\text{cap}$: By
  Lemma~\ref{lemma:honest-browser-uses-https-rp-script}, we
  see that the browser was honest while encrypting
  $\mi{req}_\text{cap}$ and (i) that
  $\mi{script\_RP\_index}$ sent the request. With
  Lemma~\ref{lemma:caps-from-rp-are-fine}
  Property~(\ref{uc-id-is-browser-id}) we see that the
  browser cannot have created $\mi{req}_\text{cap}$ because
  it only creates requests for its own identities. In the
  case of (ii) we see that, again, the attacker has to know
  the cap $c$ in order to create the request.
\item \textbf{RPs or LPO} created $\mi{req}_\text{cap}$: As
  per their definitions (Algorithms~\ref{alg:lpo} and
  \ref{alg:rp}), they do not emit HTTP requests.
\item \textbf{The attacker process} created
  $\mi{req}_\text{cap}$: It is clear that any atomic
  process that created $\mi{req}_\text{cap}$ needs to know
  $c$. 
\end{itemize}

As we can see, the attacker needs to know $c = \an{\mi{uc},
  \mi{ia}}$ before he is able to create
$\mi{req}_\text{cap}$.  With
Lemma~\ref{lemma:attacker-cannot-request-uc-for-id} we know
that the attacker cannot request $\mi{uc}$ itself with the
identity $i$, and thus, he cannot know the key $k_u$ that
was signed in $\mi{uc}$. Neither can any browser other than
$b$ know $k_u$, otherwise the attacker could corrupt this
browser and learn $k_u$.  The key $k_u$ is needed to create
$\mi{ia}$, therefore only in $b$ the identity assertion
$\mi{ia}$ can be created, and it can only be created by
$\mi{script\_LPO\_cif}$ or $\mi{script\_LPO\_ld}$ (LPO
checks the origin of the request for $\mi{uc}$, and only
the script that sends $\mi{uc}$ knows $k_u$).  With
Lemma~\ref{lemma:ia-does-not-leak} we see that the attacker
cannot learn $\mi{ia}$.

Hence we can see that the attacker cannot know the CAP $c$
that he needs in order to to create
$\mi{req}_\text{cap}$. In particular, he cannot know the
key $k$ that was used to encrypt the response
$\mi{resp}_\text{cap}$, in contradiction to the assumption
that the attacker knows $k$.  \hfill\ensuremath{\square}

\subsection{Condition B}

We assume that Condition B is not satisfied and prove that
this leads to a contradiction. That is, we make the
following assumption: There is a run $\rho$ of
$\bidwebsystem$, a state $(S_j, E_j)$ in $\rho$, an $r\in
\fAP{RP}$, an RP service token of the form $\an{n,i}$
recorded in $r$ in the state $S_j(r)$, the request
corresponding to $\an{n,i}$ was sent by some $b\in \fAP{B}$
which is honest in $S_j$ and $b$ does not own~$i$.

Again, without loss of generality, we may assume that
$\rho$ satisfies (**) and (***) as above.

As above, the request $\mi{req}_\text{cap}$ corresponding
to $\an{n,i}$ is of the following form
\begin{align}
  \mi{req}_\text{cap} &= \ehreqWithVariable{\hreq{
      nonce=n_\text{cap}, method=\mPost, xhost=d_r,
      path=/, parameters=\an{}, headers=[\str{Origin}:
      \an{d_r, \https}], xbody=c}}{k}{\pub(k_r)}
\end{align}
with $d_r$, $n_\text{cap}$, $\pub(k_r)$, $k$, $c$ as
before.

By Lemma~\ref{lemma:https-script-origin}, this request was
actually initiated by a script of $r$, which can only be the
script
$\mi{script\_RP\_index}$. Lemma~\ref{lemma:caps-from-rp-are-fine}
says that for any such request, $b$ must be the owner of
$i$ which is a contradiction to the
assumption.\hfill\ensuremath{\square}

\end{document}

